\date{\today}
\title{On-line Chain Partitioning Approach to Scheduling\\
\begin{small}Pokrycie Łańcuchowe On-line w Kolejkowaniu Zadań\end{small}}
\author{Bart\l{}omiej Emil Bosek}
\dedicatory{Wydzia\l{} Matematyki i Informatyki, Uniwersytet Jagiello\'{n}ski\\
PhD Thesis, Supervisor: Professor Paweł M. Idziak}
\begin{document}

\begin{titlepage}
\begin{center}
$ $

\vspace{0.5mm}

\begin{LARGE}
\textbf{On-line Chain Partitioning Approach}

\vspace{2.5mm}

\textbf{to Scheduling}
\end{LARGE}

\vspace{5mm}

\begin{normalsize}
Pokrycie Łańcuchowe On-line w Kolejkowaniu Zadań
\end{normalsize}

\vspace{12mm}

\begin{LARGE}Bartłomiej Emil Bosek\end{LARGE}

\vspace{2.5mm}

\begin{small}
\textit{Algorithmics Research Group\\
Faculty of Mathematics and Computer Science\\
Jagiellonian University}
\end{small}

\vspace{125mm}

\begin{large}
Ph.D. Thesis
\end{large}

\vspace{1.6mm}

\begin{large}
Supervisor: Paweł M. Idziak
\end{large}

\vspace{3mm}

\begin{small}
Kraków, July 2008
\end{small}

\end{center}
\end{titlepage}

%\fontsize{11pt}{13pt}\selectfont

%\fontfamily{ptm}\selectfont

%\maketitle

%\maxtocdepth{ssclause}

\begin{titlepage}
\begin{footnotesize}
\begin{center}
\begin{minipage}{100mm}

\vspace{5mm}

\noindent \textsc{Abstract.}
An on-line chain partitioning algorithm receives the points of the poset from some externally determined list. Being presented with a new point the algorithm learns the comparability status of this new point to all previously presented ones. As each point is received, the algorithm assigns this new point to a chain in an irrevocable manner and  this assignment is made without knowledge of future points. Kierstead presented an algorithm using $(5^w-1)/4$ chains to cover each poset of width $w$. Felsner proved that width $2$ posets can be partitioned on-line into $5$ chains. We present an algorithm using $16$ chains on posets of width $3$. This result significantly narrows down the previous bound of $31$. 
Moreover, we address the on-line chain partitioning problem for interval orders.
Kierstead and Trotter presented an algorithm using $3w-2$ chains. We deal with an up-growing version of an on-line chain partition of interval orders, i.e. we restrict possible inputs by the rule that each new point is maximal at the moment of its arrival. We present an algorithm using $2w-1$ chains and show that there is no better one.
These problems come from a need for better algorithms that can be applied to scheduling.
Each on-line chain partitioning algorithm schedules tasks in a multiprocessor environment, and therefore can be applied  in order to minimize number of processors.

\vspace{1cm}

\noindent \textsc{Streszczenie.} Algorytm pokrycia łańcuchowego on-line otrzymuje na wejściu punkty zbioru częściowo uporządkowanego z zewnętrznie usta\-lonej listy. Podczas prezentowania nowego pun\-ktu algorytm poznaje porównywalności pomiędzy nowo zapre\-zen\-to\-wa\-nym a resztą punktów. Następnie algorytm, bez wiedzy o przyszłych punktach, nieodwoływalnie przydziela nowo przybyły punkt do pewnego łańcucha. Kierstead przedstawił algorytm używający $(5^w-1)/4$ łań\-cu\-chów w pokryciu do\-wol\-ne\-go porządku szerokości $w$.
Felsner z kolei dowiódł, że porządki szerokości $2$ mogą być pokryte on-line $5$ łańcuchami. W pracy przedstawiony jest algorytm używający $16$ łańcuchów na porządkach szero\-kości $3$. Wynik ten poprawia dotychczasowe górne ograniczenie równe $31$. Następnie rozważamy po\-kry\-cie łańcuchowe porządków przedzia\-łowych. \mbox{Kierstead} wraz z \mbox{Trotterem} przedstawili algorytm używający \mbox{$3w-2$} łańcuchy.
Skupiamy się tu na pokryciu on-line porządków przedziałowych w przypadku, gdy każdy nowo przybyły punkt jest maksymalny w chwili swojego pojawienia się. 
Przedstawiamy algorytm używający $2w-1$ łańcuchów i pokazujemy, że nie istnieje lepszy.
Problemy te motywację swą czerpią z faktu, iż algorytmy pokrycia łańcuchowego on-line modelują kolejkowanie zadań w maszynach wieloprocesorowych. Tym samym uzyskiwane algorytmy mogą być wykorzystane w kolejkowaniu w celu zminimalizowania liczby procesorów. 
\end{minipage}
\end{center}
\end{footnotesize}
\end{titlepage}

\begin{titlepage}
$ $

\vspace{15mm}

\begin{center}\textbf{Acknowledgment}\end{center}

\vspace{5mm}

My dissertation is based on my research on on-line partial orders.
I have worked in this area since 2003 when I join with Algorithmics Research Group during my M.Sc. studies at Jagiellonian University.
I was introduced to this subject during the seminar which is led by Paweł M. Idziak.
I would like to thank him for motivation, encouragement, guidance and many hours spent with me.

The group of people, who have created friendly, creative atmosphere and have motivated me to work, has been very helpfull.
I am especially indebted to Kamil Kloch, Tomasz Krawczyk, Grzegorz Matecki and Piotr Micek.

Moreover I want to thank also Izabela for help and  patience.

\end{titlepage}

\begin{titlepage}
\end{titlepage}

\setcounter{page}{4}
\setcounter{tocdepth}{2}
\tableofcontents

\chapter{Introduction}\label{Ch:Introduction}

\section{Preview}
We consider the on-line chain partitioning problem. This problem goes back to  the late 1970s when James Schmerl formulated it in a slightly different setting. For our purposes we assume that an on-line chain partitioning algorithm receives on input a partially ordered set, point by point,  in an on-line way. This means that the elements of the poset are taken one by one from some externally determined, but unknown to the algorithm, list. Being presented with a new element the algorithm learns its comparability status to all previously presented elements. Based on this knowledge the algorithm assigns the new element to a chain in an irrevocable manner. The performance of an on-line chain partitioning algorithm is measured by comparing the number of chains used with the number of chains needed by an optimal off-line algorithm which is always the width of the poset.

Another way to look at on-line problems is to treat them as two-person games. We call the players Algorithm and Spoiler. Algorithm represents an on-line algorithm and Spoiler represents an adaptive adversary who uncover the external list, or simply creates it according to his ``malicious'' will. The game is played in rounds. During each round Spoiler introduces a new point $x$ to the poset and describes comparabilities between $x$ and all points from all previous rounds. Algorithm responds by assigning $x$ to a chain. The most important feature of on-line games is that Algorithm's previous moves (decisions) restrict his present possibilities. The goal of Algorithm is to
minimize the number of chains, while Spoiler tries to force Algorithm to use as many of them as possible.

\smallskip

Now we give a brief overview of the structure of this dissertation. More information are included at the beginning of each chapter.

In this chapter we present a practical motivation for several variants of the on-line chain partitioning problem. We briefly describe problem of scheduling tasks in a multiprocessor environment and its connection with posets. Next we introduce the basic concepts, notation and results on partial orders and interval orders.

In Chapter \ref{Ch:On-line Orders} we deal with on-line algorithms for partial orders. Kierstead \cite{Kierstead} presented an algorithm using $(5^w-1)/4$ chains on posets of width $w$. On the other hand Endre Szemer\'{e}di \cite{Kierstead86} proves that any algorithm has to use at least $\binom{w+1}{2}$ chains. This is already a complicated result, and no progress has been made on this problem in general setting for the last 20 years. However in some special cases better bounds are known. Felsner \cite{Felsner} showed an algorithm using $5$ chains on posets of width $2$. We present an algorithm using $16$ chains on posets of width $3$. This result narrows down the previous bound of $31$.

In Chapter \ref{Ch:On-line Interval} we address the on-line chain partitioning problem for a special, but important, class of posets that can be represented on the real line by intervals. Kierstead and Trotter \cite{KiersteadTrotter} presented an algorithm using $3w-2$ chains and showed that this is the best possible result. We deal with an up-growing version of an on-line chain partition of interval orders, i.e. we restrict possible inputs by assuming that each new point is maximal at the moment of its arrival. We present an algorithm using $2w-1$ chains and show that there is no better one.

\section{Motivation}
The main underlying motivation for on-line chain partitioning comes from scheduling tasks in a multiprocessor environment. We are looking for algorithms which schedule, in an effective way, the consecutive tasks incoming on-line. Some tasks need as an input the outputs from previous tasks. Thus the tasks have to be performed in the order that follows the require data flow. Figure \ref{F:T.example}
\begin{figure}[hbt]
\begin{tabular}{|c|l|}
    \hline
    \begin{minipage}{8mm}task\end{minipage}
    & 
    \begin{small}\begin{minipage}{23mm}\vspace{1mm}needs source\\data from\vspace{1mm} \end{minipage}\end{small}\\
    \hline
    \hline
    $t_1$ & \hspace{1mm} -- \\
    $t_2$ & \hspace{1mm} -- \\
    $t_3$ & \hspace{1mm} $t_1,\ \ t_2$ \\
    $t_4$ & \hspace{1mm} $t_2$ \\
    \hline
\end{tabular}
\caption{Dependencies between tasks $t_1, t_2, t_3, t_4$.}\label{F:T.example}
\end{figure}
presents an example of such situation. Define the relation $<$ between tasks, so that $t_i<t_j$ if and only if $t_j$ needs some data from the output of task $t_i$. In manageable situations there are no dependencies that form cycles so that $<$ is a partial order on the set of tasks. The poset arising from our example is presented on Figure \ref{F:Order dep}.
\begin{figure}[hbt]
\centering\input{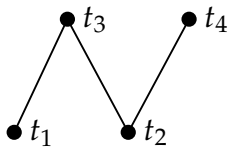}
\caption{The order of dependencies between $t_1, t_2, t_3, t_4$.}\label{F:Order dep}
\end{figure}
At the moment a new task appears, scheduler must immediately assign a processor which will execute this task. It is not known how long the execution of any single job will take and when the next task appears. Thus for each processor the set of tasks, scheduled to it, must form a chain. In other way we would lost optimality of the execution time. To see this, note that tasks which form a chain in the dependency order have to be executed in this order. Thus a task $t_i$ can not start earlier than a total amount of time needed to execute any chain of tasks on which $t_i$ depends. In consequence the fastest solution is obtained when each processor computes jobs from a chain. In our example with tasks $t_1, t_2, t_3, t_4$, one of the possibilities for the scheduler is presented on Figure \ref{F:Sol1}.
\begin{figure}[hbt]
\newcommand{\e}[1]{\begin{small}enqueue $t_{#1}$\end{small}}
\renewcommand{\b}[1]{\begin{small}begin $t_{#1}$\end{small}}
\newcommand{\n}[1]{\begin{small}end $t_{#1}$\end{small}}
\renewcommand{\d}[1]{\begin{small}data for $t_{#1}$\end{small}}
\newcommand{\sch}{\begin{small}scheduler\end{small}}
\newcommand{\p}[1]{\begin{small}$p_{#1}$\end{small}}
\newcommand{\tf}{\begin{small}time flow\end{small}}
\centering\ifx\JPicScale\undefined\def\JPicScale{1}\fi
\psset{unit=\JPicScale mm}
\psset{linewidth=0.3,dotsep=1,hatchwidth=0.3,hatchsep=1.5,shadowsize=1,dimen=middle}
\psset{dotsize=0.7 2.5,dotscale=1 1,fillcolor=black}
\psset{arrowsize=1 2,arrowlength=1,arrowinset=0.25,tbarsize=0.7 5,bracketlength=0.15,rbracketlength=0.15}
\begin{pspicture}(0,0)(113,73.5)
\psline[linewidth=0.15](15,48)(15,0)
\psline[linewidth=0.15](100,18)(100,0)
\psline[linewidth=0.15](100,43.5)(100,28)
\psline[linewidth=0.15,fillcolor=white,fillstyle=solid](15,68)(15,65.5)
\psline[linewidth=0.15,fillcolor=white,fillstyle=solid](60,68)(60,63)
\psline[linewidth=0.15,fillcolor=white,fillstyle=solid](100,68)(100,58)
\pspolygon[linewidth=0.15,fillcolor=white,fillstyle=solid](14,65.5)(16,65.5)(16,63)(14,63)
\psline[linewidth=0.15,fillcolor=white,fillstyle=solid](15,63)(15,60.5)
\pspolygon[linewidth=0.15,fillcolor=white,fillstyle=solid](14,60.5)(16,60.5)(16,58)(14,58)
\psline[linewidth=0.15,fillcolor=white,fillstyle=solid](15,58)(15,55.5)
\pspolygon[linewidth=0.15,fillcolor=white,fillstyle=solid](14,55.5)(16,55.5)(16,53)(14,53)
\psline[linewidth=0.15,fillcolor=white,fillstyle=solid](15,53)(15,50.5)
\pspolygon[linewidth=0.15,fillcolor=white,fillstyle=solid](14,50.5)(16,50.5)(16,48)(14,48)
\psline[linewidth=0.15,fillcolor=white,fillstyle=solid](15,63)(59,63)
\psline[linewidth=0.15,fillcolor=white,fillstyle=solid](15,58)(58,58)
\pspolygon[linewidth=0.15,fillcolor=white,fillstyle=solid](59,63)(61,63)(61,28)(59,28)
\pspolygon[linewidth=0.15,fillcolor=white,fillstyle=solid](99,58)(101,58)(101,43.5)(99,43.5)
\psline[linewidth=0.15,fillcolor=white,fillstyle=solid](62,58)(99,58)
\rput[B](35,64){\e1}
\rput[B](35,59){\e2}
\psline[linewidth=0.15,fillcolor=white,fillstyle=solid](15,53)(58,53)
\psline[linewidth=0.15,fillcolor=white,fillstyle=solid](15,48)(59,48)
\rput[B](35,54){\e3}
\rput[B](35,49){\e4}
\psline[linewidth=0.15,fillcolor=white,fillstyle=solid](62,53)(99,53)
\psline[linewidth=0.15,linestyle=dashed,dash=1 1,fillcolor=white,fillstyle=solid](61,43.5)(99,43.5)
\pspolygon[linewidth=0.15,fillcolor=white,fillstyle=solid](59,28)(61,28)(61,2.5)(59,2.5)
\pspolygon[linewidth=0.15,fillcolor=white,fillstyle=solid](99,28)(101,28)(101,18)(99,18)
\psline[linewidth=0.15,linestyle=dashed,dash=1 1,fillcolor=white,fillstyle=solid](61,28)(99,28)
\psline[linewidth=0.15](60,2.5)(60,0)
\rput[l](102,58){\b2}
\rput[l](102,28){\b3}
\rput[l](62,63){\b1}
\rput[tr](58,27.5){\b4}
\rput[br](58,28.5){\n1}
\rput[l](102,44){\n2}
\rput[l](102,18){\n3}
\rput[l](62,2.5){\n4}
\psline[linewidth=0.15](97,29)(99,28)
\psline[linewidth=0.15](97,27)(99,28)
\psline[linewidth=0.15](61,43.5)(63,42.5)
\psline[linewidth=0.15](61,43.5)(63,44.5)
\psline[linewidth=0.15](59,63)(57,62)
\psline[linewidth=0.15](59,63)(57,64)
\psline[linewidth=0.15](59,48)(57,47)
\psline[linewidth=0.15](59,48)(57,49)
\psline[linewidth=0.15](99,53)(97,52)
\psline[linewidth=0.15](99,53)(97,54)
\psline[linewidth=0.15](99,58)(97,57)
\psline[linewidth=0.15](99,58)(97,59)
\rput[B](80,29){\d3}
\rput[B](80,44.5){\d4}
\rput[B](15,70){\sch}
\rput[B](60,70){\p1}
\rput[B](100,70){\p2}
\psline[linewidth=0.2,linecolor=white,fillcolor=white,fillstyle=solid](20,73.5)(60.5,73.5)
\rput{90}(1,52.5){\tf}
\psline[linewidth=0.15](4,64)(4,4)
\psline[linecolor=white](113,50)(113,43)
\psline[linewidth=0.15](5,6)(4,4)
\psline[linewidth=0.15](3,6)(4,4)
\end{pspicture}
\caption{Data flow diagram -- unoptimal scheduler.}\label{F:Sol1}
\end{figure}
The processor $p_2$ is inactive most of the time. Scheduler could have worked better, e.g. as presented on Figure \ref{F:Sol2}.
\begin{figure}[hbt]
   \newcommand{\e}[1]{\begin{small}enqueue $t_{#1}$\end{small}}
   \renewcommand{\b}[1]{\begin{small}begin $t_{#1}$\end{small}}
   \newcommand{\n}[1]{\begin{small}end $t_{#1}$\end{small}}
   \renewcommand{\d}[1]{\begin{small}data for $t_{#1}$\end{small}}
   \newcommand{\sch}{\begin{small}scheduler\end{small}}
   \newcommand{\p}[1]{\begin{small}$p_{#1}$\end{small}}
   \newcommand{\tf}{\begin{small}time flow\end{small}}
   \centering\ifx\JPicScale\undefined\def\JPicScale{1}\fi
\psset{unit=\JPicScale mm}
\psset{linewidth=0.3,dotsep=1,hatchwidth=0.3,hatchsep=1.5,shadowsize=1,dimen=middle}
\psset{dotsize=0.7 2.5,dotscale=1 1,fillcolor=black}
\psset{arrowsize=1 2,arrowlength=1,arrowinset=0.25,tbarsize=0.7 5,bracketlength=0.15,rbracketlength=0.15}
\begin{pspicture}(0,0)(113,57.5)
\psline[linewidth=0.15](15,32.5)(15,0)
\psline[linewidth=0.15](100,2.5)(100,0)
\psline[linewidth=0.15,fillcolor=white,fillstyle=solid](15,52.5)(15,50)
\psline[linewidth=0.15,fillcolor=white,fillstyle=solid](60,52.5)(60,47.5)
\psline[linewidth=0.15,fillcolor=white,fillstyle=solid](100,52.5)(100,42.5)
\pspolygon[linewidth=0.15,fillcolor=white,fillstyle=solid](14,50)(16,50)(16,47.5)(14,47.5)
\psline[linewidth=0.15,fillcolor=white,fillstyle=solid](15,47.5)(15,45)
\pspolygon[linewidth=0.15,fillcolor=white,fillstyle=solid](14,45)(16,45)(16,42.5)(14,42.5)
\psline[linewidth=0.15,fillcolor=white,fillstyle=solid](15,42.5)(15,40)
\pspolygon[linewidth=0.15,fillcolor=white,fillstyle=solid](14,40)(16,40)(16,37.5)(14,37.5)
\psline[linewidth=0.15,fillcolor=white,fillstyle=solid](15,37.5)(15,35)
\pspolygon[linewidth=0.15,fillcolor=white,fillstyle=solid](14,35)(16,35)(16,32.5)(14,32.5)
\psline[linewidth=0.15,fillcolor=white,fillstyle=solid](15,47.5)(59,47.5)
\psline[linewidth=0.15,fillcolor=white,fillstyle=solid](15,42.5)(58,42.5)
\pspolygon[linewidth=0.15,fillcolor=white,fillstyle=solid](59,47.5)(61,47.5)(61,12.5)(59,12.5)
\pspolygon[linewidth=0.15,fillcolor=white,fillstyle=solid](99,42.5)(101,42.5)(101,28)(99,28)
\psline[linewidth=0.15,fillcolor=white,fillstyle=solid](62,42.5)(99,42.5)
\rput[B](35,48.5){\e1}
\rput[B](35,43.5){\e2}
\psline[linewidth=0.15,fillcolor=white,fillstyle=solid](15,32.5)(58,32.5)
\psline[linewidth=0.15,fillcolor=white,fillstyle=solid](15,37.5)(59,37.5)
\rput[B](35,38.5){\e3}
\rput[B](35,33.5){\e4}
\psline[linewidth=0.15,fillcolor=white,fillstyle=solid](62,32.5)(99,32.5)
\psline[linewidth=0.15,linestyle=dashed,dash=1 1,fillcolor=white,fillstyle=solid](61,28)(99,28)
\pspolygon[linewidth=0.15,fillcolor=white,fillstyle=solid](99,28)(101,28)(101,2.5)(99,2.5)
\pspolygon[linewidth=0.15,fillcolor=white,fillstyle=solid](59,12.5)(61,12.5)(61,2.5)(59,2.5)
\psline[linewidth=0.15](60,2.5)(60,0)
\rput[l](102,42.5){\b2}
\rput[tl](102,27.5){\b4}
\rput[l](62,47.5){\b1}
\rput[tl](62,12){\b3}
\rput[bl](62,13){\n1}
\rput[bl](102,28.5){\n2}
\rput[l](102,2.5){\n4}
\rput[l](62,2.5){\n3}
\psline[linewidth=0.15](61,28)(63,27)
\psline[linewidth=0.15](61,28)(63,29)
\psline[linewidth=0.15](59,47.5)(57,46.5)
\psline[linewidth=0.15](59,47.5)(57,48.5)
\psline[linewidth=0.15](59,37.5)(57,36.5)
\psline[linewidth=0.15](59,37.5)(57,38.5)
\psline[linewidth=0.15](99,32.5)(97,31.5)
\psline[linewidth=0.15](99,32.5)(97,33.5)
\psline[linewidth=0.15](99,42.5)(97,41.5)
\psline[linewidth=0.15](99,42.5)(97,43.5)
\rput[B](80,29){\d3}
\rput[B](15,54.5){\sch}
\rput[B](60,54.5){\p1}
\rput[B](100,54.5){\p2}
\psline[linewidth=0.15,linecolor=white,fillcolor=white,fillstyle=solid](21,57.5)(70,57.5)
\rput{90}(1,37){\tf}
\psline[linewidth=0.15](4,48.5)(4,4)
\psline[linewidth=0.15](5,6)(4,4)
\psline[linewidth=0.15](3,6)(4,4)
\psline[linecolor=white](113,47)(113,40)
\end{pspicture}
   \caption{Data flow diagram -- an optimal scheduler.}\label{F:Sol2}
\end{figure}
Obviously the task $t_4$ has to wait for the data from $t_2$. However, the first solution forces $t_4$ to unnecessary wait for the processor $p_1$ to finish task $t_1$. The second solution avoids this unnecessary delay. Thus if jobs scheduled to a processor form a chain we do not waste time, reaching an optimal solution. Using order diagrams we present both solutions by Figure \ref{F:both sol}.
\begin{figure}[hbt]
   \centering\input{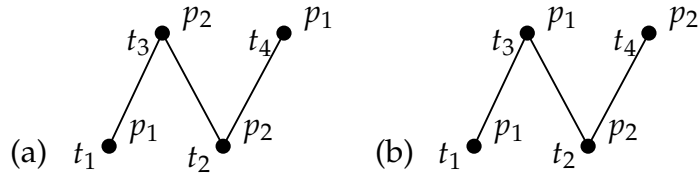}
   \caption{Makespan for $t_1, t_2, t_3, t_4$ for (a) slow and \mbox{(b) fast solutions}.}\label{F:both sol}
\end{figure}

Obviously the number of processors needed to work out all jobs with dependency poset of width $w$ has to be at least $w$. However $w$ may not be achieved if the jobs are incoming on-line, one by one, and they have be assigned to some processors before the next job come in. For example consider dependencies between tasks, presented on Figure \ref{F:N}.a, at the moment task $t_3$  appears.
\begin{figure}[hbt]
   \centering\input{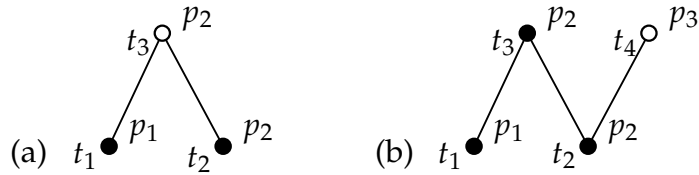}
   \caption{Forcing third processor.}\label{F:N}
\end{figure}
Now, scheduler either assigns $t_3$ to a third processor or, without loss of generality, to the already used $p_2$ (see Figure \ref{F:N}.a).
Next, task $t_4$ depending only on $t_2$  appears. To avoid a delay caused by unnecessary waiting of $p_1$ to perform $t_4$, scheduler has to use a third processor $p_3$ anyway (see Figure \ref{F:N}.b). This means that the fastest solution may be achieved at the cost of a bigger number of processors. On the other hand, in the off-line setting, when scheduler would know the future, a better solution can be applied (as on Figure \ref{F:both sol}.b). This example shows a need of on-line scheduling algorithms that perform as well as possible, i.e. use the number of processors as close, to the off-line optimum (of $w$), as possible.

Unfortunately the only known algorithm has to use exponential number of processors $(5^w-1)/4$ (\cite{Kierstead}). 
This exponential bound is not the best possible. 
In particular by Felsner's result \cite{Felsner}, $5$ processors  suffice for $w=2$. Moreover by our result presented in Chapter \ref{Ch:On-line Orders}, if all the tasks need $3$ processors in the off-line setting then scheduler can correctly assign them on-line to at most $16$ processors. 

One of natural additional restrictions  is that, at the moment of arrival of a new job, all tasks it depends on have been already scheduled. In this (so called up-growing) setting scheduler is in a much better position. Actually Felsner \cite{Felsner} showed that instead of \linebreak \mbox{$(5^w-1)/4$} only $\binom{w+1}{2}$ processors suffice.

In some applications, including real-time systems, each particular task has to be executed in a prescribed interval of time. This obviously restricts the choice for scheduler, but on the other hand gives him some additional knowledge. As all we are interested in, is to compare the effectiveness of on-line and off-line schedulers, it will turn out that linear (in $w$) upper bounds are possible instead of exponential or even quadratic ones. Indeed, if the time intervals (presented e.g. on real line) are known, Kierstead and Trotter \cite{KiersteadTrotter}  presented an algorithm using $3w-2$ processors in general, not necessarily up-growing setting, and Broniek \cite{Broniek} observed that $w$ processors suffice in up-growing setting.

In other applications (e.g. when the incoming tasks come from the source which is not properly time synchronized with the executing multiprocessor environment) we do now how jobs are related in time, i.e. we do know that $t_i$ has to be executed after $t_j$ is completed, but we do not know which exact time slots they have to use. Again we do know that the jobs can be represented by time intervals but this time we do not know any particular interval representation for them. This setting can be modelled by the on-line partition for interval orders without representation. The mentioned algorithm of Kiearsted and Trotter \cite{KiersteadTrotter} appears to be the best possible when using $3w-2$ processors. On the other hand Broniek's algorithm does not survive in this setting when jobs are coming in an  up-growing manner but without precise knowledge about their time frames. Indeed, as we show  in Chapter 3 each scheduler has to use at least $2w-1$ processors. Moreover we show then that $2w-1$ of them actually always suffice.

\clearpage

\section{Partial Orders}

A \emph{partially ordered set} (or \emph{ordered set}, or \emph{poset}) is an ordered pair $\brackets{P,\leq}$, where $P$ is a set and $\leq$ is a binary relation on $P$, which for all $x,y,z\in P$ satisfies:
\begin{align*}
x&\leq x, &\textrm{(reflexivity)}\\
x&\leq y,\ y\leq x \quad\textrm{imply}\quad x=y, &\textrm{(asymmetricity)}\\
x&\leq y,\ y\leq z \quad\textrm{imply}\quad x\leq z. &\textrm{(transitivity)}
\end{align*}
The relation $\leq$ is called a \emph{partial order} (or \emph{order}) on $P$. Sometimes, if the underlying set $P$ is covered by $P=X_1\cup \ldots\cup X_k$ we write $\brackets{X_1,\ldots,X_k,\leq}$ instead $\brackets{P,\leq}$.

We say that $x$ is \emph{above} (or \emph{greater} than) $y$ and $y$ is \emph{below} (or \emph{smaller} than) $x$ if \mbox{$x\leqslant y$} and \mbox{$x\neq y$}. We denote this fact by $y<x$. Sometimes we will use the symbol $<$ in a more general context. We extend the relation $<$ (from single elements) to subsets of elements so that \mbox{$X<Y$} means \mbox{$x<y$} for all $x\in X$ and all $y\in Y$. If one of the sets $X,Y$ is a singleton, for example $Y=\set{y}$, we simply write $X<y$. Points $x,y$ are \emph{comparable} in $P$ if either $x\leqslant y$ or $y\leqslant x$ and we denote it by $x\gtrless y$. Otherwise, we say $x$ and $y$ are \emph{incomparable} and write $x\parallel y$.
Moreover we say that $z$ \emph{lies between} $x$ and $y$ if $x<z<y$. Otherwise, if there is no such $z$ between $x$ and $y$, $x$ is \emph{immediate successor}  of $x<y$ (or $y$ \emph{covers} $x$) which is to be denoted by $x\prec y$. The relation $\prec$ is a subrelation of $\leq$. On the other hand, $\leq$ can be recovered from $\prec$ by taking transitive and reflexive closure of $\prec$.

A partially ordered set is often displayed by its \emph{Hasse diagram} \mbox{(or \emph{diagram})}, i.e. via graph of the relation $\prec$ on the vertically oriented plane. A vertex is drawn for each point of the poset. A line segment (or a curve) between $x$ and $y$ appears only if $x \prec y$ and $y$ lies higher in the plane than $x$.  An example of a Hasse diagram is presented on Figure \ref{F:Ex poset}.
\begin{figure}[hbt]
   \centering\ifx\JPicScale\undefined\def\JPicScale{1}\fi
\psset{unit=\JPicScale mm}
\psset{linewidth=0.3,dotsep=1,hatchwidth=0.3,hatchsep=1.5,shadowsize=1,dimen=middle}
\psset{dotsize=0.7 2.5,dotscale=1 1,fillcolor=black}
\psset{arrowsize=1 2,arrowlength=1,arrowinset=0.25,tbarsize=0.7 5,bracketlength=0.15,rbracketlength=0.15}
\begin{pspicture}(0,0)(28,30)
\rput{0}(6,-1){\psellipse[linestyle=none,fillstyle=solid](0,0)(1,1)}
\psline[linewidth=0.25](1,9)(6,-1)
\rput{0}(1,9){\psellipse[linestyle=none,fillstyle=solid](0,0)(1,1)}
\rput{0}(11,9){\psellipse[linestyle=none,fillstyle=solid](0,0)(1,1)}
\rput{0}(21,-1){\psellipse[linestyle=none,fillstyle=solid](0,0)(1,1)}
\psline[linewidth=0.25](21,-1)(11,19)
\psline[linewidth=0.25](11,9)(6,-1)
\rput{0}(26,9){\psellipse[linestyle=none,fillstyle=solid](0,0)(1,1)}
\rput{0}(11,29){\psellipse[linestyle=none,fillstyle=solid](0,0)(1,1)}
\rput{0}(21,19){\psellipse[linestyle=none,fillstyle=solid](0,0)(1,1)}
\psline[linewidth=0.25](21,-1)(26,9)
\psline[linewidth=0.25](6,-1)(26,9)
\psline[linewidth=0.25](21,19)(26,9)
\psline[linewidth=0.25](11,9)(11,19)
\newrgbcolor{userLineColour}{0.6 0.6 1}
\rput[Bl](13,8){$d$}
\newrgbcolor{userLineColour}{0.6 0.6 1}
\rput[Bl](28,8){$e$}
\newrgbcolor{userLineColour}{0.6 0.6 1}
\rput[Bl](13,18){$f$}
\newrgbcolor{userLineColour}{0.6 0.6 1}
\rput[Bl](23,18){$g$}
\newrgbcolor{userLineColour}{0.6 0.6 1}
\rput[Bl](13,28){$h$}
\rput{0}(11,19){\psellipse[linestyle=none,fillstyle=solid](0,0)(1,1)}
\psline[linewidth=0.25](1,9)(11,19)
\psline[linewidth=0.25](1,9)(21,19)
\psline[linewidth=0.25](11,29)(21,19)
\psline[linewidth=0.25](11,29)(11,19)
\newrgbcolor{userLineColour}{0.6 0.6 1}
\rput[Bl](8,-2){$a$}
\newrgbcolor{userLineColour}{0.6 0.6 1}
\rput[Bl](23,-2){$b$}
\newrgbcolor{userLineColour}{0.6 0.6 1}
\rput[Bl](3,8){$c$}
\end{pspicture}
   \caption{Hasse diagram of \mbox{$\poset{P}_0=\brackets{\set{a,\ldots,h},\leq}$} with \mbox{$a<\set{c,d,e,f,g,h}$},  \mbox{$b<\set{e,f,g,h}$}, \mbox{$c<\set{f,g,h}$}, \mbox{$d<\set{f,h}$}, \mbox{$e<\set{g,h}$}, \mbox{$f<h$}, \mbox{$g<h$} and no other comparabilities.}\label{F:Ex poset}
\end{figure}

\pagebreak

We say that $p$ is a \emph{maximal} (\emph{minimal}) element in a subset $X\subseteq P$ of some poset $\poset{P}=\brackets{P,\leq}$ if in $X$ there is no points above (below) $p$. We denote a set of all maximal (minimal) elements of $X$ by $\max (X)$ ($\min (X)$). By a greatest (smallest) element of $X$ we mean an element which lies above (below) all other points from $X$. By a \emph{supremum} (\emph{infimum}) of $x$ and $y$ we mean the least common upper bound (greatest common lower bound) of $x$ and $y$, i.e. the least element $r \geq x,y$ (\mbox{the greatest} element $r \leq x,y$). If every two points $x,y\in L$ in a poset $\poset{L}=\brackets{L,\leq}$ have the supremum and the infimum, $\poset{L}$ is called to be a lattice. In a lattice the supremum and infimum of $x,y$ is denoted by $x\vee y$ and $x\wedge y$, respectively. We will refer to them as \emph{join} and \emph{meet}, respectively.

Each $X\subseteq P$ determines two subsets of $P$, namely the closed upset of $X$ and the open upset of $X$ defined by 
\begin{eqnarray*}
X\upsetc_{\poset{P}}&:=&\set{p\in P: \textrm{ there is $x\in X$ so that $x\leqslant p$} },\\
X\upseto_{\poset{P}}&:=&\set{p\in P: \textrm{ there is $x\in X$ so that $x<p$} },
\end{eqnarray*}
respectively.
\emph{Downsets} $X\downsetc_{\poset{P}}$ and $X\downseto_{\poset{P}}$ of $X$ in $\poset{P}$ are defined dually. If $X=\set{x}$, we prefer to write $x\upseto_{\poset{P}}$ instead of $\set{x}\upseto_{\poset{P}}$. The reference to the poset $\poset{P}$ is often omitted whenever $\poset{P}$ is clear from the context.

A subset $A\subseteq P$ is an \emph{antichain} in $\poset{P}$ if each two distinct points of $A$ are incomparable. For example sets $\set{b,c,d}$ or $\set{d,e}$ are antichains in $\poset{P}_0$ presented on Figure \ref{F:Ex poset}. The order $\leq$ on elements determines the order $\aleq$ on antichains in such a way that $A\aleq B$ if and only if for any element $a\in A$ there exists an element $b\in B$ such that $a\leq b$. Note that $A\aleq B$ can be also expressed by $A\subseteq B\downsetc$ or $A\downsetc\subseteq B\downsetc$. For example, $\set{a,b}\aleq\set{d,e}$ but $\set{a,b}\not\!\!\aleq\set{c,d}$ in the poset $\poset{P}_0$ of Figure \ref{F:Ex poset}. Moreover, we write $A\al B$ when $A\aleq B$ and $A\neq B$. Of course, if $A<B$ then $A\al B$ but the converse does not hold, as it can be seen in our example $\set{a,b}\nless\set{d,e}$.

The \emph{width} of $X \subseteq P$, denoted by $\fWidth{X}$, is the largest size of all antichains in $X$. An antichain containing exactly $\fWidth{X}$ elements is called a \emph{maximum antichain of $X$}. The family of all maximum antichains in $X$ is denoted by $\fMA{X}$. For maximum antichains the relation $\aleq$ is equivalent to its dual, as stated in the next observation.

\begin{obs}\label{O:another def order ant}
Let $A,B\in\fMA{X}$, where $X$ is a subset of some poset $\poset{P}=\brackets{P,\leq}$. Then $A\subseteq B\downsetc$ if and only if $B \subseteq A\upsetc $.
\end{obs}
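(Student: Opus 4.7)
The plan is to prove the two implications by a symmetric argument, relying on the maximality of $A$ and $B$ to turn a witness-of-failure into an antichain that is too large.

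I would begin by assuming $A \subseteq B\downsetc$ and try to show $B \subseteq A\upsetc$ by contradiction. Pick $b \in B$ and suppose there is no $a \in A$ with $a \leq b$. First observe $b \notin A$ (otherwise $b \leq b$ works), so every $a \in A$ is distinct from $b$. The key claim is that $b$ is incomparable with every element of $A$. Indeed, if some $a \in A$ were comparable with $b$, then by our assumption we must have $b < a$; but by the hypothesis $A \subseteq B\downsetc$ there would be some $b' \in B$ with $a \leq b'$, giving $b < b'$ and contradicting that $B$ is an antichain.

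Once this claim is established, $\{b\} \cup A \subseteq X$ is an antichain of size $\fWidth{X}+1$, contradicting the definition of the width. Hence every $b \in B$ does lie above some $a \in A$, proving $B \subseteq A\upsetc$. The reverse implication is fully symmetric: assuming $B \subseteq A\upsetc$, for any $a \in A$ with no $b \in B$ above it, one checks that $a$ is incomparable with all of $B$ (using the dual of the same reasoning, where an element $b \in B$ comparable with $a$ would give $b < a$ and hence an $a' \in A$ with $a' \leq b < a$, contradicting $A$ being an antichain), and then $\{a\} \cup B$ again violates maximality of $\fWidth{X}$.

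There is no real obstacle here; the only subtle point is that the witness $b$ (or $a$) might a priori be comparable with some element of the other antichain in the ``wrong'' direction, and one must rule that out by combining the hypothesis with the antichain property of $A$ and $B$. The argument essentially never uses anything beyond transitivity of $\leq$ and the fact that $|A| = |B| = \fWidth{X}$.
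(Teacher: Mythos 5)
Your proof is correct and takes essentially the same approach as the paper's: both show that a $b \in B$ not above $A$ cannot be incomparable to all of $A$ (else $A\cup\{b\}$ is too large an antichain), and cannot lie below some $a\in A$ (else $A\subseteq B\downsetc$ yields a $3$-chain inside $B$). The paper states only the forward implication and leaves the converse to the symmetric/dual argument, exactly as you describe.
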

\begin{proof}
Assume that $A\subseteq B\downsetc$.
We will show that any $b\in B\setminus A$ is above some point from $A$. If $b$ was incomparable with all points from $A$ then $A\cup \set{b}$ would be an antichain of $\fWidth{X}+1$ elements. Suppose  $b<a$ for some $a\in A$. Since $A\subseteq B\downsetc$, there is $b'\in B$ with $a\leq b'$. This yields to $b<a\leq b'$ contradicting that both $b$ and $b'$ lie in the same antichain $B$.
\end{proof}

\pagebreak

\begin{obs}\label{O: max=sup min=inf}
For any two maximum antichains $A,B$ in a finite poset $\poset{P}=\brackets{P,\leq}$ the following hold:
\begin{enumerate}
\item $A$ and $B$ have the supremum in $\fMA{P}$, namely  $\max \brackets{A\cup B}$,\label{O: max=sup min=inf P:max}
\item $A$ and $B$ have the infimum in $\fMA{P}$, namely  $\min \brackets{A\cup B}$.\label{O: max=sup min=inf P:min}
\end{enumerate} 
\end{obs}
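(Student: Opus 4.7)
The plan is to verify that $M := \max(A \cup B)$ itself lies in $\fMA{P}$ and serves as the supremum of $A$ and $B$; part~(2) will then follow by the symmetric argument. My main tool will be Dilworth's theorem: I fix a partition of $P$ into $w := \fWidth{P}$ chains $C_1, \dots, C_w$. Since $|A| = |B| = w$ and an antichain meets a chain in at most one point, this forces $A \cap C_i = \{a_i\}$ and $B \cap C_i = \{b_i\}$ to be singletons. The elements $a_i$ and $b_i$ are comparable (both lie in $C_i$), so I can set $m_i := \max(a_i, b_i)$.

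The main step I need is the identity $M = \{m_1, \dots, m_w\}$, which will give $|M| = w$ at once since the $m_i$ lie in distinct chains. For the inclusion $\{m_1, \dots, m_w\} \subseteq M$ I will argue by contradiction: assume without loss of generality that $m_i = a_i \geq b_i$ and that $m_i < x$ for some $x \in A \cup B$. If $x \in A$, this violates $A$ being an antichain. Otherwise $x = b_j$ for some $j \neq i$ (the case $j = i$ is excluded since $b_i \leq a_i$), and then $b_i \leq a_i < b_j$ contradicts $B$ being an antichain. The reverse inclusion is routine: any $c \in M$ lies in $A \cup B$ and hence in some chain $C_j$, so $c \in \{a_j, b_j\}$, and $c \leq m_j \in A \cup B$ forces $c = m_j$ by maximality of $c$. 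I expect this antichain-contradiction to be the only delicate point, since it requires bouncing between the two antichain hypotheses through the chain structure.

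Once $M$ is known to be a maximum antichain, the supremum property comes almost for free. First, $a_i \leq m_i \in M$ for every $i$ gives $A \aleq M$, and similarly $B \aleq M$. Second, for any common upper bound $C \in \fMA{P}$ of $A$ and $B$, each $m \in M$ belongs to $A \cup B$ and is thus dominated by some element of $C$, yielding $M \aleq C$. Part~(2) will be handled by the completely parallel argument with $m'_i := \min(a_i, b_i)$: the identity $\min(A \cup B) = \{m'_1, \dots, m'_w\}$ is derived with all inequalities reversed, and the infimum property drops out the same way.
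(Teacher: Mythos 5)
Your proof is correct but takes a genuinely different route from the paper. The paper establishes $|\max(A\cup B)|=|\min(A\cup B)|=w$ directly, by splitting $A\cup B$ into $A\cap B$ and the symmetric difference $A_0\cup B_0$ (with $|A_0|=|B_0|=u$), then observing that every point of $A_0\cup B_0$ is either maximal or minimal in it (since $A_0\cup B_0$ is covered by two antichains), so $|\max(A_0\cup B_0)|+|\min(A_0\cup B_0)|\geq 2u$, which together with the trivial upper bounds forces both to equal $u$. You instead invoke Dilworth's Theorem to fix a chain partition $C_1,\dots,C_w$, identify $a_i=A\cap C_i$ and $b_i=B\cap C_i$, and prove the clean identity $\max(A\cup B)=\{\max(a_i,b_i):i=1,\dots,w\}$. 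Your argument for this identity is sound: the contradiction via bouncing between the two antichain hypotheses handles the forward inclusion, and maximality handles the reverse. One thing worth noting: in the paper this Observation appears \emph{before} Dilworth's Theorem is stated and proved, so the author deliberately chose an argument that does not depend on it. There is no circularity in your approach (Perles's proof of Dilworth does not use this Observation), but the paper's counting argument is self-contained and slightly more elementary, while yours is arguably more transparent once Dilworth is available — the explicit description of $\max(A\cup B)$ in terms of the chain decomposition makes the cardinality count and the supremum verification essentially immediate.
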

\begin{proof}
Of course both $\max\brackets{A\cup B}$ and $\min\brackets{A\cup B}$ are anti\-chains.
First we will show that $\abs{\max \brackets{A\cup B}}=\abs{\min \brackets{A\cup B}}=w$, where $w=\fWidth{\poset{P}}$. 
Put $v=\abs{A\cap B}$ and let $A_0:=A\setminus B$ and $B_0:=B\setminus A$. Of course, the antichains $A_0$ and $B_0$ have the same cardinality $u=w-v$.
%Because $A,B$ are antichains, if $x\in A\cap B$ then $x\parallel y$ for all $y\in A\cup B \setminus \set{x}$. A set of maximal and minimal points cover $ A_0\cup B_0$.
We will show that
\begin{equation}\label{E:A0=B0=u}
\abs{\max (A_0 \cup B_0)}\ =\ \abs{\min(A_0 \cup B_0 )}\ =\ u.
\end{equation}
This, together with the fact that each point $x\in A_0\cup B_0$ satisfies \mbox{$x\parallel y$} for all $y\in A\cap B$, will imply
\begin{align*}
\max \brackets{A \cup B}\ &=\ \brackets{A \cap B}\ \cup\ \max\brackets{A_0 \cup B_0}\quad \textrm{and}\\
\min \brackets{A \cup B}\ &=\ \brackets{A \cap B}\ \cup\ \min \brackets{A_0 \cup B_0}\!,
\end{align*}
so that  $\abs{\max (A \cup B)} = \abs{\min (A \cup B)} = v + u = w$, as required for $\max(A \cup B)$, $\min (A \cup B) \in \fMA{P}$.
 
To see (\ref{E:A0=B0=u}) first note that $\abs{\max(A_0 \cup B_0)}\leq u$, as otherwise the antichain $\max (A_0 \cup B_0)\cup (A \cap B)$ would have more than $u+v=w$ elements. By the same token $\abs{\min(A_0 \cup B_0)} \leq u$.
Moreover, note that each point $m \in A_0 \cup B_0$ is maximal or minimal in $A_0 \cup B_0$, as otherwise we would have a $3$-element chain $l < m < t$ with \mbox{$l, m, t \in A_0 \cup B_0$}, which contradicts the fact that $A_0 \cup B_0$ is covered by two antichains $A_0$, $B_0$. Now, if one of \mbox{$\max(A_0 \cup B_0)$} or \mbox{$\min(A_0 \cup B_0)$} has fewer than $u$ elements we would have
\begin{multline*}
2u\ >\ \abs{\max(A_0 \cup B_0)}+\abs{\min(A_0 \cup B_0)} \geq\\
\geq \abs{A_0 \cup B_0}\ =\ \abs{A_0} + \abs{B_0}\ =\ 2u,
\end{multline*}
a contradiction that proves (\ref{E:A0=B0=u}).

To see that $\max\brackets{A\cup B}$ is the supremum of $A$ and $B$ in $\fMA{P}$ note that finiteness of $A \cup B$ yields that each $x\in A\cup B$ lies below some maximal point of $A \cup B$. Thus $\max\brackets{A\cup B}\ageq A,B$. For any other upper bound $C\ageq A,B$, the downset $C\downsetc$ contains all points of $A\cup B$, in particular $\max\brackets{A\cup B}\aleq C$, making $\max(A \cup B)$ the lowest upper bound of  $A$ and $B$.

Thanks to Observation \ref{O:another def order ant} a similar argument can be applied to  show (\ref{O: max=sup min=inf P:min}).
\end{proof}

Observation \ref{O: max=sup min=inf} can be generalized to the following:

\begin{thm}[Dilworth \cite{Dilworth2}]\label{T:lattice} The family of all maximum antichains of a poset form a distributive lattice.\end{thm}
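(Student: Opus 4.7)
The plan is to embed $\fMA{P}$ as a sublattice of the distributive lattice of all downsets of $\poset{P}$ (ordered by inclusion) via the map $\phi(A) := A\downsetc$. The equivalence $A \aleq B \iff A\downsetc \subseteq B\downsetc$ noted just before Observation \ref{O:another def order ant} makes $\phi$ an order embedding, while Observation \ref{O: max=sup min=inf} identifies the lattice operations in $\fMA{P}$ as $A \vee B = \max\brackets{A \cup B}$ and $A \wedge B = \min\brackets{A \cup B}$. It thus suffices to verify that $\phi\brackets{A \vee B} = \phi(A) \cup \phi(B)$ and $\phi\brackets{A \wedge B} = \phi(A) \cap \phi(B)$, after which sublattice inheritance delivers the distributive law.

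The join identity follows easily from finiteness of $P$: every element of $A \cup B$ lies below some maximal element of $A \cup B$, so $\max\brackets{A \cup B}\downsetc = \brackets{A \cup B}\downsetc = A\downsetc \cup B\downsetc$.

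The meet identity is the heart of the matter. For $\min\brackets{A \cup B}\downsetc \subseteq A\downsetc \cap B\downsetc$ I would reuse the non-isolation argument from the proof of Observation \ref{O: max=sup min=inf}: any $m \in \min\brackets{A \cup B}$ either belongs to $A \cap B$, or else, say, to $A \setminus B$, in which case some $b^* \in B \setminus A$ satisfies $m < b^*$, so $m$ lies in both $A\downsetc$ and $B\downsetc$ and hence so does every element below $m$. The converse inclusion $A\downsetc \cap B\downsetc \subseteq \min\brackets{A \cup B}\downsetc$ is the main obstacle. Given $y \leq a \in A$ and $y \leq b \in B$, suppose toward a contradiction that $y \not\leq m$ for every $m \in M := \min\brackets{A \cup B}$. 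The strict inequality $y > m$ can be ruled out: if $m \in A$, then $m < y \leq a$ forces $m < a$ inside the antichain $A$, and the case $m \in B$ is symmetric using $b$. The equality $y = m$ is excluded by the standing hypothesis $y \not\leq m$. Hence $y$ is incomparable with every element of $M$, so $M \cup \set{y}$ is an antichain of size $w+1$, contradicting $\fWidth{P} = w$.

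With both identities established, $\phi$ is an injective lattice homomorphism of $\fMA{P}$ into a distributive lattice; its image is therefore a distributive sublattice, and $\fMA{P}$ itself inherits distributivity.
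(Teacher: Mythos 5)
The paper itself gives no proof of this theorem — it simply cites Dilworth after establishing, in Observation~\ref{O: max=sup min=inf}, that $\fMA{P}$ is a lattice with $A\vee B=\max(A\cup B)$ and $A\wedge B=\min(A\cup B)$. So there is no paper argument to compare against; what you have supplied is a self-contained completion. Your embedding $\phi(A)=A\downsetc$ into the distributive lattice of downsets is the right tool, and every step checks out: injectivity holds because $A=\max(A\downsetc)$; the join identity is immediate from finiteness; for the meet, the inclusion $\min(A\cup B)\downsetc\subseteq A\downsetc\cap B\downsetc$ uses that a minimal element in $A\setminus B$ must be dominated by some point of $B$ (else $B\cup\set{m}$ is too big), and the reverse inclusion rules out both $y>m$ (would violate antichain) and $y=m$, leaving incomparability — which contradicts the maximality of $\min(A\cup B)\in\fMA{P}$ established by Observation~\ref{O: max=sup min=inf}. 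Sublattice inheritance of distributivity then finishes it. This is exactly the standard proof of Dilworth's result and correctly exploits the observations the paper already has in place.
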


From now on  we assume that all posets under consideration are finite. In particular the lattice of maximum antichains is finite in this setting. We say that an antichain $A\subseteq P$ is \begin{em}high\end{em} in $\poset{P}=\brackets{P,\leq}$ if for any antichain $B\subseteq A\upsetc$ we have $A=B$ or $\abs{B}<\abs{A}$.
Of course, the top of the lattice $\fMA{\poset{P}}$ is high in $\poset{P}$. This unique high antichain of maximal size is denoted by $\fHMA{P}$.
The relation of maximum and high antichains are 
described in the following observation.

\begin{obs}\label{O:HM}
For a high antichain $H$ and a maximum antichain $M$ in a poset $\poset{P}=\brackets{P,\leqslant}$ we have $H\subseteq M\upsetc$.
\end{obs}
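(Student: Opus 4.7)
The plan is to argue by contradiction. Suppose some $h_0\in H$ does not lie in $M\upsetc$. Since $\poset{P}$ has width $w=\abs{M}$, Dilworth's theorem provides a partition of $P$ into $w$ chains $C_1,\dots,C_w$; because $M$ is an antichain of size $w$ and each chain meets $M$ in at most one point, each $C_i$ contains exactly one $m_i\in M$. Label so that $m_i\in C_i$. Since $H$ is itself an antichain, each $C_i$ contains at most one element of $H$, so we may write $H=\set{h_i:i\in I}$ with $h_i\in C_i$ for some index set $I\subseteq\set{1,\dots,w}$.

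For every $i\in I$ the elements $h_i$ and $m_i$ lie in the common chain $C_i$, hence are comparable. Call $i$ \emph{good} when $h_i\geq m_i$ (equivalently $h_i\in M\upsetc$) and \emph{bad} when $h_i<m_i$. Since $h_0\notin M\upsetc$, at least one index is bad. The construction is the ``swap'' antichain
\[
H'\ =\ \set{h_i:i\text{ good}}\ \cup\ \set{m_i:i\text{ bad}}.
\]
The goal is to show $H'$ is an antichain with $H'\neq H$, $\abs{H'}=\abs{H}$, and $H'\subseteq H\upsetc$, which contradicts the highness of $H$ and forces $H\subseteq M\upsetc$.

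The cardinality equality is clear as $H'$ is indexed injectively by $I$; the inclusion $H'\subseteq H\upsetc$ follows because good members of $H'$ already lie in $H$ while a bad $m_i$ satisfies $m_i>h_i\in H$; and $H'\neq H$ because any bad $m_i$ lies only in the chain $C_i$ and therefore cannot coincide with some $h_j\in C_j$ for $j\neq i$, while $m_{i}=h_i$ is excluded by $h_i<m_i$.

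The main obstacle is verifying that $H'$ is an antichain. The good/good pairs inherit incomparability from the antichain $H$ and the bad/bad pairs from the antichain $M$; the delicate case is a mixed pair consisting of $h_i$ (good) and $m_j$ (bad). A relation $h_i>m_j$ would chain into $h_i>m_j>h_j$, contradicting the antichain $H$; a relation $h_i<m_j$ would chain into $m_i\leq h_i<m_j$, contradicting the antichain $M$. Hence $h_i\parallel m_j$. Once antichainness is established, $H'$ violates the defining property of the high antichain $H$, closing the contradiction.
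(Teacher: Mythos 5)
Your proof is correct, but it takes a genuinely different route from the paper's. The paper argues directly from the definitions: it sets $H_1 = M\downseto\cap H$, $H_2 = H - H_1$, $M_1 = H_1\upseto\cap M$, $M_2 = M - M_1$, verifies that $M_1\cup H_2$ and $H_1\cup M_2$ are antichains, and then splits on whether $\abs{M_1}\geq\abs{H_1}$ — the first case violates highness of $H$ via $M_1\cup H_2\subseteq H\upsetc$, and the second violates maximality of $M$ via $H_1\cup M_2$. You instead invoke Dilworth's theorem to get a chain decomposition $C_1,\ldots,C_w$, pair each $h_i\in H$ with the unique $m_i\in M$ sharing its chain, and perform a clean one-to-one swap to build $H'$. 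Your approach buys a simpler cardinality argument (the swap is automatically bijective thanks to the disjointness of the $C_i$'s, so $\abs{H'}=\abs{H}$ with no case analysis) at the cost of bringing in Dilworth as a tool; the paper's approach is more elementary but pays for it with a two-case analysis and a slightly fussier bookkeeping of the four parts $H_1,H_2,M_1,M_2$. Both correctly exploit the same underlying tension between $H$ being high and $M$ being maximum, and your verification of antichainness for the mixed pairs in $H'$ is airtight.
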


\begin{proof}
Since $M$ is a maximum antichain in $P$, we know that each point in $P$ is comparable
to some point in $M$. All we have to show is that no element from $H$ lies
strictly below $M$. Suppose to the contrary that the set $H_1=M\downseto\cap H$ is nonempty. Put
\begin{eqnarray*}
H_2&=&H-H_1,\\
M_1&=&H_1\upseto\cap M,\\
M_2&=&M-M_1.
\end{eqnarray*}
Observe that, both $M_1\cup H_2$ as well as $H_1\cup M_2$ are antichains and $M_1\cup H_2\subseteq H\upsetc$ (see Figure \ref{rys zaleznosci miedzy M a H}).
Now, if $\abs{M_1}\geqslant\abs{H_1}$ then \mbox{$M_1\cup H_2$} would have at least the same number of elements as $H$. This is impossible as $H$ is high. Thus $\abs{M_1}<\abs{H_1}$. But then the antichain \mbox{$H_1\cup M_2$} has more elements than the maximum antichain $M$ itself. This contradiction proves the observation.
\end{proof}
\begin{figure}[hbt]
\ifx\JPicScale\undefined\def\JPicScale{1}\fi
\psset{unit=\JPicScale mm}
\psset{linewidth=0.3,dotsep=1,hatchwidth=0.3,hatchsep=1.5,shadowsize=1,dimen=middle}
\psset{dotsize=0.7 2.5,dotscale=1 1,fillcolor=black}
\psset{arrowsize=1 2,arrowlength=1,arrowinset=0.25,tbarsize=0.7 5,bracketlength=0.15,rbracketlength=0.15}
\begin{pspicture}(0,0)(109,33.5)
\psline[linewidth=0.25,fillcolor=white,fillstyle=solid](15.5,3)(17.5,16)
\psline[linewidth=0.25,fillcolor=white,fillstyle=solid](33.5,16)(31.5,3)
\psline[linewidth=0.25,fillcolor=white,fillstyle=solid](31.5,3)(17.5,16)
\psline[linewidth=0.1](34,5.5)(12.5,5.5)
\psline[linewidth=0.1](34,0.5)(12.5,0.5)
\psline[linewidth=0.1](12.5,5.5)(12.5,0.5)
\psline[linewidth=0.1](34,5.5)(34,0.5)
\psline[linewidth=0.1](36,13.5)(7,13.5)
\psline[linewidth=0.1](36,18.5)(7,18.5)
\psline[linewidth=0.1](36,18.5)(36,13.5)
\psline[linewidth=0.25,fillcolor=white,fillstyle=solid](105.5,29)(81.5,16)
\psline[linewidth=0.1](109,32.5)(76.5,32.5)
\psline[linewidth=0.1](109,25.5)(77.5,25.5)
\psline[linewidth=0.1](38,19.5)(38,12.5)
\psline[linewidth=0.1](77.5,25.5)(67.5,12.5)
\psline[linewidth=0.1](67.5,12.5)(38,12.5)
\psline[linewidth=0.1](66.5,19.5)(38,19.5)
\psline[linewidth=0.1](109,32.5)(109,25.5)
\psline[linewidth=0.1](39,18.5)(39,13.5)
\psline[linewidth=0.25,fillcolor=white,fillstyle=solid](81.5,29)(73.5,16)
\psline[linewidth=0.25,fillcolor=white,fillstyle=solid](97.5,16)(97.5,29)
\psline[linewidth=0.25,fillcolor=white,fillstyle=solid](105.5,29)(97.5,16)
\psline[linewidth=0.25,fillcolor=white,fillstyle=solid](105.5,29)(105.5,16)
\psline[linewidth=0.1](108,13.5)(39,13.5)
\psline[linewidth=0.1](108,18.5)(39,18.5)
\psline[linewidth=0.1](108,18.5)(108,13.5)
\psline[linewidth=0.1](76.5,32.5)(66.5,19.5)
\psline[linewidth=0.25,fillcolor=white,fillstyle=solid](81.5,16)(81.5,29)
\psline[linewidth=0.1](7,18.5)(7,13.5)
\psline[linewidth=0.25,fillcolor=white,fillstyle=solid](15.5,3)(9.5,16)
\rput(25.5,16){$\cdots$}
\rput(23.5,3){$\cdots$}
\rput(57.5,16){$\cdots$}
\rput(89.5,16){$\cdots$}
\rput(89.5,29){$\cdots$}
\rput{0}(105.5,16){\psellipse[linewidth=0.25,linestyle=none,fillstyle=solid](0,0)(1,1)}
\rput{0}(97.5,16){\psellipse[linewidth=0.25,linestyle=none,fillstyle=solid](0,0)(1,1)}
\rput{0}(81.5,16){\psellipse[linewidth=0.25,linestyle=none,fillstyle=solid](0,0)(1,1)}
\rput{0}(73.5,16){\psellipse[linewidth=0.25,linestyle=none,fillstyle=solid](0,0)(1,1)}
\rput{0}(33.5,16){\psellipse[linewidth=0.25,linestyle=none,fillstyle=solid](0,0)(1,1)}
\rput{0}(17.5,16){\psellipse[linewidth=0.25,linestyle=none,fillstyle=solid](0,0)(1,1)}
\rput{0}(9.5,16){\psellipse[linewidth=0.25,linestyle=none,fillstyle=solid](0,0)(1,1)}
\rput[r](5,16){$M$}
\rput[Bl](35.38,-0.38){$H_1$}
\rput[br](21.5,20){$M_1$}
\rput[br](69.5,25){$H_2$}
\rput[tl](94.5,11.5){$M_2$}
\rput[Bl](82.5,-0.5){- point in $H$}
\psline[linecolor=white](0,24.5)(0,19.5)
\psline[linewidth=0.1,linecolor=white](96.5,33.5)(64,33.5)
\rput{0}(15.5,3){\psellipse[linewidth=0.25,fillcolor=white,fillstyle=solid](0,0)(0.94,0.94)}
\rput{0}(31.5,3){\psellipse[linewidth=0.25,fillcolor=white,fillstyle=solid](0,0)(0.94,0.94)}
\rput{0}(41.5,16){\psellipse[linewidth=0.25,fillcolor=white,fillstyle=solid](0,0)(0.94,0.94)}
\rput{0}(49.5,16){\psellipse[linewidth=0.25,fillcolor=white,fillstyle=solid](0,0)(0.94,0.94)}
\rput{0}(65.5,16){\psellipse[linewidth=0.25,fillcolor=white,fillstyle=solid](0,0)(0.94,0.94)}
\rput{0}(81.5,29){\psellipse[linewidth=0.25,fillcolor=white,fillstyle=solid](0,0)(0.94,0.94)}
\rput{0}(97.5,29){\psellipse[linewidth=0.25,fillcolor=white,fillstyle=solid](0,0)(0.94,0.94)}
\rput{0}(105.5,29){\psellipse[linewidth=0.25,fillcolor=white,fillstyle=solid](0,0)(0.94,0.94)}
\rput{0}(80,0.5){\psellipse[linewidth=0.25,fillcolor=white,fillstyle=solid](0,0)(0.94,0.94)}
\end{pspicture}
\caption{Interaction of antichains $H$ and $M$.}\label{rys zaleznosci miedzy M a H}
\end{figure}
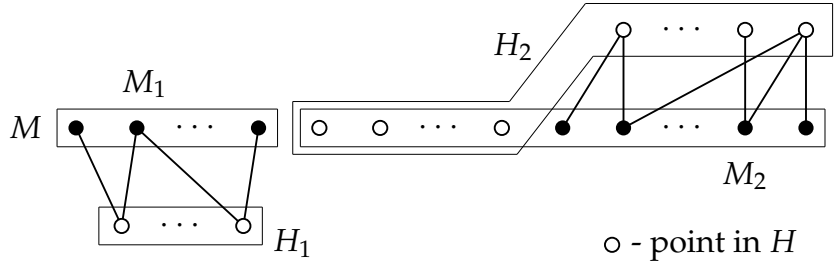
\bigskip

A subset $\alpha\subseteq P$ is a \emph{chain} in $\poset{P}$ if each two points in $\alpha$ are comparable in $\poset{P}$.
We say that $\alpha_1,\ldots,\alpha_n$ is a \emph{chain partition} of a poset $\poset{P}=\brackets{P,\leq}$ if all $\alpha_i$ are chains, $\alpha_1\cup\ldots\cup\alpha_n=P$ and $\alpha_i\cap\alpha_j=\emptyset$ for $i\neq j$.

It is useful to identify chains forming a partition of a poset \linebreak \mbox{$\poset{P}=\brackets{P,\leq}$} with colors.
We say that a function $\chains:P\tto\Gamma$ is a \emph{coloring} of $\poset{P}$ if all $\fC{-1}{\gamma}$, with $\gamma\in\Gamma$, are chains.

\begin{thm}[Dilworth \cite{Dilworth}]\label{T:Dilworth}
Let $\poset{P}=\brackets{P,\leq}$ be a poset of width $w$.
Then $P=\alpha_1\cup\ldots\cup \alpha_w$ for some chains $\alpha_1,\ldots,\alpha_w$ in $\poset{P}$.
\end{thm}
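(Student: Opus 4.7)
The plan is to argue by induction on $\abs{P}$, with the empty poset as the trivial base. For the inductive step, select a maximal chain $C$ in $\poset{P}$ and consider the subposet $\poset{P}\setminus C$. If $\fWidth{P\setminus C}\leq w-1$, the inductive hypothesis supplies at most $w-1$ chains partitioning $P\setminus C$; adjoining $C$ itself yields the required $w$ chains.

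The interesting case is $\fWidth{P\setminus C}=w$. Here there is an antichain $A=\set{a_1,\ldots,a_w}$ contained in $P\setminus C$, and being of size $w$ it is automatically a maximum antichain of $\poset{P}$. The downset $D:=A\downsetc$ and the upset $U:=A\upsetc$ together cover $P$ and meet exactly in $A$, since any point incomparable with all $a_i$ would enlarge $A$ to an antichain of size $w+1$. The key observation is that the top of $C$ cannot lie in $D$: if $c_{\max}\leq a_i$, then since $a_i\notin C$, the set $C\cup\set{a_i}$ would be a strictly larger chain, contradicting the maximality of $C$. Hence the top of $C$ belongs to $U\setminus D$, and dually the bottom of $C$ belongs to $D\setminus U$. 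Consequently both $D$ and $U$ are proper subsets of $P$.

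Both $D$ and $U$ contain the antichain $A$, so each has width exactly $w$, and induction provides chain partitions $D=\alpha_1\cup\ldots\cup\alpha_w$ and $U=\beta_1\cup\ldots\cup\beta_w$. Each $a_i$ is maximal in $D$: if $a_i\leq d\in D$ with $d\leq a_j\in A$, then $a_i\leq a_j$ forces $a_i=a_j=d$. Since the elements of $A$ are pairwise incomparable, each chain $\alpha_i$ contains at most one of them, and by pigeonhole exactly one, sitting at its top. After reindexing, $a_i$ tops $\alpha_i$ and, dually, bottoms $\beta_i$, so $\gamma_i:=\alpha_i\cup\beta_i$ is a chain and $\gamma_1,\ldots,\gamma_w$ partition $P$. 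The main obstacle is this splitting step: one must ensure that removing $C$ either drops the width or leaves a maximum antichain cleanly separating $P$ into strictly smaller pieces. The maximality of the \emph{chain} $C$, rather than of some single element, is what forces both $D$ and $U$ to miss at least one endpoint of $C$, so that the inductive hypothesis can be applied to both.
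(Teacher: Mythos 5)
Your proof is correct and uses the same Perles-style template as the paper: induct on $\abs{P}$, split the poset along a maximum antichain $A$ into $A\downsetc$ and $A\upsetc$, and glue the two inductively obtained chain partitions through $A$. Where you genuinely differ is in the case organization and the properness argument. The paper distinguishes whether every maximum antichain equals $\min(P)$ or $\max(P)$; if so, it removes a two-point chain $\set{x,y}$ with $x$ minimal and $y$ maximal, and if not, choosing a maximum antichain $A$ other than $\min(P),\max(P)$ automatically makes $A\upsetc$ and $A\downsetc$ proper subsets. You instead pick a maximal \emph{chain} $C$ and distinguish whether $\fWidth{P\setminus C}<w$: if so, $C$ itself serves as the extra chain, and if not, you find $A$ inside $P\setminus C$ and derive properness directly from the maximality of $C$ (its top cannot lie below $A$, its bottom cannot lie above $A$). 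Your version makes the properness argument more local and self-contained — no side reasoning about which maximum antichains can equal $\min(P)$ or $\max(P)$ — at the mild cost of the inductive hypothesis being applied to $P\setminus C$ rather than to $P$ minus just two points.
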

\begin{proof}(Perles \cite{Perles}).
To induct on $\abs{P}$, note that for $\abs{P}=1$ there is nothing to be shown. Now suppose first that   $\fMA{P}\subseteq\set{\min (P),\ \max (P)}$.
Pick $x\in\min (P)$ and $y\in\max (P)$ such that $x \leq y$ to get $\fWidth{P\setminus\set{x,y}}<w$.
A chain partition $\alpha_1,\ldots,\alpha_{w-1}$ of $P\setminus\set{x,y}$, supplied by the induction hypothesis, together with $\alpha_w=\set{x,y}$ form a partition of $P$.

Now we assume that there is $A \in \fMA{P}- \set{\min (P),\ \max (P)}$. Thus $\min (P)\not\subseteq A\upsetc$ and $\max (P)\not\subseteq A\downsetc$. This yields $\abs{A\upsetc}<\abs{P}$ and $\abs{A\downsetc}<\abs{P}$. 
Again, the induction hypothesis allows us to partition $A\upsetc$ and $A\downsetc$ into $w$ chains $\beta_1,\ldots,\beta_w$ and $\gamma_1,\ldots,\gamma_w$, respectively. Since the antichain $A\subseteq A\upsetc\cap A\downsetc$ is maximum it must intersect with each of the chains $\beta_1,\ldots,\beta_w,\gamma_1,\ldots,\gamma_w$ at exactly one point. This allows to renumber the chains $\gamma_1,\ldots,\gamma_w$ so that $\beta_i\cap\gamma_i\neq\emptyset$. Consequently each $\beta_i\cup\gamma_i$ is a chain, so that the chains $\beta_1\cup\gamma_1,\ldots,\beta_w\cup\gamma_w$ form  a partition of $P$.
\end{proof}

Sometimes we relax the assumption that chains covering $P$ have to be disjoint and we say that $\alpha_1,\ldots,\alpha_w$ forms a \emph{chain covering} of a poset $\poset{P}$ if all of the $\alpha_i$'s are chains and $\alpha_1\cup\ldots\cup\alpha_n=P$. 
On the coloring side this corresponds to a \emph{multicoloring} of a poset \mbox{$\poset{P}=\brackets{P,\leq}$} defined to be a function $\chains:P\tto\powerp{\Gamma}=\power{\Gamma}\setminus\set{\emptyset}$ such that $\set{p\in P:\gamma\in\fC{}{p}}$ is a chain for any $\gamma\in\Gamma$. Note that any point $p\in P$ has to be colored by at least one color, i.e. $\fC{}{p}\neq\emptyset$. An example of a coloring and a multicoloring is presented on Figure \ref{F:Ex col and multicol}.

\begin{figure}[hbt]
\begin{footnotesize}
\centering\ifx\JPicScale\undefined\def\JPicScale{1}\fi
\psset{unit=\JPicScale mm}
\psset{linewidth=0.3,dotsep=1,hatchwidth=0.3,hatchsep=1.5,shadowsize=1,dimen=middle}
\psset{dotsize=0.7 2.5,dotscale=1 1,fillcolor=black}
\psset{arrowsize=1 2,arrowlength=1,arrowinset=0.25,tbarsize=0.7 5,bracketlength=0.15,rbracketlength=0.15}
\begin{pspicture}(0,0)(81,30)
\rput{0}(10,-1){\psellipse[linestyle=none,fillstyle=solid](0,0)(1,1)}
\psline[linewidth=0.25](5,9)(10,-1)
\rput{0}(5,9){\psellipse[linestyle=none,fillstyle=solid](0,0)(1,1)}
\rput{0}(15,9){\psellipse[linestyle=none,fillstyle=solid](0,0)(1,1)}
\rput{0}(25,-1){\psellipse[linestyle=none,fillstyle=solid](0,0)(1,1)}
\psline[linewidth=0.25](25,-1)(15,19)
\psline[linewidth=0.25](15,9)(10,-1)
\rput{0}(30,9){\psellipse[linestyle=none,fillstyle=solid](0,0)(1,1)}
\rput{0}(15,29){\psellipse[linestyle=none,fillstyle=solid](0,0)(1,1)}
\rput{0}(25,19){\psellipse[linestyle=none,fillstyle=solid](0,0)(1,1)}
\psline[linewidth=0.25](25,-1)(30,9)
\psline[linewidth=0.25](10,-1)(30,9)
\psline[linewidth=0.25](25,19)(30,9)
\psline[linewidth=0.25](15,9)(15,19)
\newrgbcolor{userLineColour}{0.6 0.6 1}
\rput[Bl](17,8){$1$}
\newrgbcolor{userLineColour}{0.6 0.6 1}
\rput[Bl](32,8){$3$}
\newrgbcolor{userLineColour}{0.6 0.6 1}
\rput[Bl](17,18){$1$}
\newrgbcolor{userLineColour}{0.6 0.6 1}
\rput[Bl](27,18){$2$}
\newrgbcolor{userLineColour}{0.6 0.6 1}
\rput[Bl](17,28){$1$}
\rput{0}(15,19){\psellipse[linestyle=none,fillstyle=solid](0,0)(1,1)}
\psline[linewidth=0.25](5,9)(15,19)
\psline[linewidth=0.25](5,9)(25,19)
\psline[linewidth=0.25](15,29)(25,19)
\psline[linewidth=0.25](15,29)(15,19)
\newrgbcolor{userLineColour}{0.6 0.6 1}
\rput[Bl](12,-2.5){$1$}
\newrgbcolor{userLineColour}{0.6 0.6 1}
\rput[Bl](27,-2){$3$}
\newrgbcolor{userLineColour}{0.6 0.6 1}
\rput[Bl](7,7){$2$}
\rput{0}(59,-1){\psellipse[linestyle=none,fillstyle=solid](0,0)(1,1)}
\psline[linewidth=0.25](54,9)(59,-1)
\rput{0}(54,9){\psellipse[linestyle=none,fillstyle=solid](0,0)(1,1)}
\rput{0}(64,9){\psellipse[linestyle=none,fillstyle=solid](0,0)(1,1)}
\rput{0}(74,-1){\psellipse[linestyle=none,fillstyle=solid](0,0)(1,1)}
\psline[linewidth=0.25](74,-1)(64,19)
\psline[linewidth=0.25](64,9)(59,-1)
\rput{0}(79,9){\psellipse[linestyle=none,fillstyle=solid](0,0)(1,1)}
\rput{0}(64,29){\psellipse[linestyle=none,fillstyle=solid](0,0)(1,1)}
\rput{0}(74,19){\psellipse[linestyle=none,fillstyle=solid](0,0)(1,1)}
\psline[linewidth=0.25](74,-1)(79,9)
\psline[linewidth=0.25](59,-1)(79,9)
\psline[linewidth=0.25](74,19)(79,9)
\psline[linewidth=0.25](64,9)(64,19)
\newrgbcolor{userLineColour}{0.6 0.6 1}
\rput[Bl](66,8){$1$}
\newrgbcolor{userLineColour}{0.6 0.6 1}
\rput[Bl](81,8){$3$}
\newrgbcolor{userLineColour}{0.6 0.6 1}
\rput[Bl](66,18){$1$}
\newrgbcolor{userLineColour}{0.6 0.6 1}
\rput[Bl](76,18){$2,3$}
\newrgbcolor{userLineColour}{0.6 0.6 1}
\rput[Bl](66,28){$1,2$}
\rput{0}(64,19){\psellipse[linestyle=none,fillstyle=solid](0,0)(1,1)}
\psline[linewidth=0.25](54,9)(64,19)
\psline[linewidth=0.25](54,9)(74,19)
\psline[linewidth=0.25](64,29)(74,19)
\psline[linewidth=0.25](64,29)(64,19)
\newrgbcolor{userLineColour}{0.6 0.6 1}
\rput[Bl](61,-2.5){$1,2$}
\newrgbcolor{userLineColour}{0.6 0.6 1}
\rput[Bl](76,-2){$3$}
\newrgbcolor{userLineColour}{0.6 0.6 1}
\rput[Bl](56,7){$2$}
\rput[Bl](0,-2){\begin{normalsize}a.\end{normalsize}}
\rput[Bl](49,-2){\begin{normalsize}b.\end{normalsize}}
\end{pspicture}
\end{footnotesize}
\caption{Example of (a) coloring and (b) multicoloring.}\label{F:Ex col and multicol}
\end{figure}
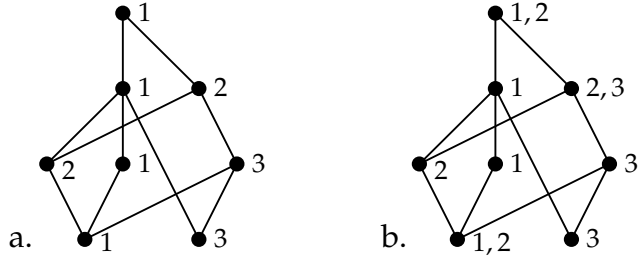

Almost all known algorithms that partition posets into chains  use matching in bipartite graph.

\begin{thm}[Feder, Motwani \cite{FederMotwani}]
There is an algorithm that partitions a poset $\poset{P}=\brackets{P,\leq}$ into $\fWidth{P}$ chains working in time $\fO{\frac{m\sqrt{n}}{\log n}}$, where $n=\abs{P}$ and $m\leq n^2$ is a number of pairs in $\leq$.
\end{thm}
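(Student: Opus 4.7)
The plan is to reduce chain partitioning to maximum bipartite matching, run a fast matching algorithm, and then translate the matching back to a chain partition. This is the classical König--Dilworth route, and the only delicate part is squeezing the extra $\log n$ factor out of the matching step.

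First I would build the bipartite graph $G=\brackets{P^{-}\cup P^{+},E}$, where $P^{-}$ and $P^{+}$ are two disjoint copies of $P$ and we join $x^{-}\in P^{-}$ to $y^{+}\in P^{+}$ whenever $x<y$ in $\poset{P}$. Thus $\abs{E}\leq m$ (the pairs in $\leq$ other than the reflexive ones). Given any matching $M$ in $G$, the relation ``$x$ is the immediate partner of $y$ along $M$'' partitions $P$ into exactly $\abs{P}-\abs{M}$ chains, each chain read off by starting at a $P^{+}$-unmatched element and following matching edges downwards. By Dilworth's Theorem~\ref{T:Dilworth} together with König's theorem for bipartite graphs, a \emph{maximum} matching corresponds to a chain partition with the minimum possible number $\fWidth{P}$ of chains, so it suffices to compute one maximum matching in $G$ and read off the chains in $\fO{n+m}$ extra time.

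Next I would compute the maximum matching. Hopcroft--Karp finds a maximum bipartite matching in $\fO{m\sqrt{n}}$ by running $\fO{\sqrt{n}}$ phases, each of which is a BFS/DFS for vertex-disjoint shortest augmenting paths and costs $\fO{m}$ time. To obtain the promised $\fO{m\sqrt{n}/\log n}$ bound I would use the Feder--Motwani graph-compression theorem: any bipartite graph on $n$ vertices with $m$ edges can be preprocessed in time $\fO{m}$ into an equivalent bipartite graph $G'$ on $\fO{n+m/\log n}$ vertices and $\fO{m/\log n}$ edges using a decomposition into bicliques, such that BFS and DFS on $G'$ still correctly discover augmenting paths of $G$. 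Running Hopcroft--Karp on $G'$ therefore costs $\fO{m\sqrt{n}/\log n}$, and the resulting matching lifts back to a maximum matching of $G$ in the same time.

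The main technical obstacle is the compression step, not the reduction: one has to verify that the biclique cover produced by the Feder--Motwani procedure preserves reachability, so that layered BFS distances and disjoint augmenting-path DFS behave identically on the compressed graph, and that the $\fO{m/\log n}$ bound on the compressed edge set really applies to the comparability graph $G$ (where $m$ counts pairs in $\leq$, not just cover relations). Once this black box is in place, the remainder is bookkeeping: extract the chains from the matching by scanning $P^{+}$ in topological order and outputting, for each $y$, either a new chain (if $y$ is unmatched in $P^{+}$) or the chain already containing its $M$-partner. The total running time is $\fO{n+m+m\sqrt{n}/\log n}=\fO{m\sqrt{n}/\log n}$ as claimed.
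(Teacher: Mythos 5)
The paper gives no proof of this theorem: it is stated with a bare citation to \cite{FederMotwani} and then used as a black box in the surrounding discussion, so there is no internal argument to compare your sketch against. Your reduction is the correct classical one: form the split bipartite graph on two copies of $P$ with an edge $x^{-}\!\to y^{+}$ for each strict pair $x<y$, observe that a matching of size $k$ reads off (by following matched edges, and using transitivity to see each $M$-path is a genuine chain) as a partition into $n-k$ chains, and conversely a $c$-chain partition gives a matching of size $n-c$, so a maximum matching realizes the Dilworth bound $\fWidth{P}$. The remaining content — the $\log n$ savings — you delegate entirely to the Feder--Motwani biclique-compression theorem and its compatibility with Hopcroft--Karp's phases, which you explicitly flag as the black box you do not open. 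Given the attribution, that division of labor is reasonable: you have correctly explained how the cited matching result specializes to chain partitioning, rather than reproved the compression machinery itself. Just be aware that, taken as a standalone proof, the hard part (the compression and the claim that layered BFS/DFS augmenting-path search on the compressed graph still finds a maximum matching of the original) is invoked, not established, so your text is a reduction to the cited theorem rather than a self-contained derivation of the running time.
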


%------------------------------------------------
\section{Interval Orders}
Since the on-line chain partitioning problem in its full generality seems to be hard enough, it is often considered for special
classes of posets. It turns out that the class of interval orders is
an interesting class for this problem.
\begin{defn}\label{definition interval}
A poset $\poset{P}$ is an interval order if there is a
 function $I:P \tto \R^2$ assigning to each point $x\in\poset{P}$ a
nondegenerate, closed interval $I(x)=[l_x,r_x]$ of the real line
$\mathbb{R}$ so that $x<y$ in $\poset{P}$ \mbox{if and only if} $r_x<l_y$
in $\mathbb{R}$. The function $I$ is called an interval
representation of the poset $\poset{P}$.
\end{defn}
The poset $\poset{Q}$ of Figure \ref{F:inter}.a 
is an interval order, as it has the interval
representation presented by Figure \ref{F:inter}.b. 

\begin{figure}[hbt]
\centering\ifx\JPicScale\undefined\def\JPicScale{1}\fi
\psset{unit=\JPicScale mm}
\psset{linewidth=0.3,dotsep=1,hatchwidth=0.3,hatchsep=1.5,shadowsize=1,dimen=middle}
\psset{dotsize=0.7 2.5,dotscale=1 1,fillcolor=black}
\psset{arrowsize=1 2,arrowlength=1,arrowinset=0.25,tbarsize=0.7 5,bracketlength=0.15,rbracketlength=0.15}
\begin{pspicture}(0,0)(116,23)
\rput{0}(11,1){\psellipse[linestyle=none,fillstyle=solid](0,0)(1,1)}
\psline[linewidth=0.25](6,11)(11,1)
\rput{0}(6,11){\psellipse[linestyle=none,fillstyle=solid](0,0)(1,1)}
\rput{0}(16,11){\psellipse[linestyle=none,fillstyle=solid](0,0)(1,1)}
\rput{0}(21,1){\psellipse[linestyle=none,fillstyle=solid](0,0)(1,1)}
\psline[linewidth=0.25](21,1)(16,11)
\psline[linewidth=0.25](16,11)(11,1)
\rput{0}(26,11){\psellipse[linestyle=none,fillstyle=solid](0,0)(1,1)}
\rput{0}(31,1){\psellipse[linestyle=none,fillstyle=solid](0,0)(1,1)}
\rput{0}(21,21){\psellipse[linestyle=none,fillstyle=solid](0,0)(1,1)}
\psline[linewidth=0.25](21,1)(26,11)
\psline[linewidth=0.25](31,1)(26,11)
\psline[linewidth=0.25](21,21)(26,11)
\psline[linewidth=0.25](16,11)(21,21)
\psline[linewidth=0.25](11,1)(26,11)
\newrgbcolor{userLineColour}{0.6 0.6 1}
\rput[Bl](13,0){$k$}
\newrgbcolor{userLineColour}{0.6 0.6 1}
\rput[Bl](23,0){$l$}
\newrgbcolor{userLineColour}{0.6 0.6 1}
\rput[Bl](33,0){$m$}
\newrgbcolor{userLineColour}{0.6 0.6 1}
\rput[Bl](8,10){$n$}
\newrgbcolor{userLineColour}{0.6 0.6 1}
\rput[Bl](18,10){$o$}
\newrgbcolor{userLineColour}{0.6 0.6 1}
\rput[Bl](28,10){$p$}
\newrgbcolor{userLineColour}{0.6 0.6 1}
\rput[Bl](23,20){$q$}
\psline[linewidth=0.1](72,13)(104,13)
\psline[linewidth=0.1](88,14)(92,14)
\psline[linewidth=0.1](88,18)(92,18)
\psline[linewidth=0.1](92,18)(92,14)
\psline[linewidth=0.1](72,9)(104,9)
\psline[linewidth=0.1](84,18)(84,14)
\psline[linewidth=0.1](64,14)(84,14)
\psline[linewidth=0.1](88,18)(88,14)
\psline[linewidth=0.1](64,18)(84,18)
\psline[linewidth=0.1](80,8)(96,8)
\psline[linewidth=0.1](80,4)(96,4)
\psline[linewidth=0.1](100,4)(116,4)
\psline[linewidth=0.1](100,8)(116,8)
\psline[linewidth=0.1](96,8)(96,4)
\psline[linewidth=0.1](100,8)(100,4)
\psline[linewidth=0.1](104,13)(104,9)
\psline[linewidth=0.1](80,8)(80,4)
\psline[linewidth=0.1](76,8)(76,4)
\psline[linewidth=0.1](56,8)(76,8)
\psline[linewidth=0.1](56,4)(76,4)
\psline[linewidth=0.1](72,13)(72,9)
\psline[linewidth=0.1](68,13)(68,9)
\psline[linewidth=0.1](60,9)(68,9)
\psline[linewidth=0.1](60,13)(68,13)
\psline[linewidth=0.1](64,18)(64,14)
\psline[linewidth=0.1](60,13)(60,9)
\psline[linewidth=0.1](56,8)(56,4)
\psline[linewidth=0.1](116,8)(116,4)
\newrgbcolor{userLineColour}{0.6 0.6 1}
\rput[Br](83,14.5){$m$}
\newrgbcolor{userLineColour}{0.6 0.6 1}
\rput[Br](67,9.5){$k$}
\newrgbcolor{userLineColour}{0.6 0.6 1}
\rput[Br](75,4.5){$l$}
\newrgbcolor{userLineColour}{0.6 0.6 1}
\rput[Br](103,9.5){$n$}
\newrgbcolor{userLineColour}{0.6 0.6 1}
\rput[Br](95,4.5){$o$}
\newrgbcolor{userLineColour}{0.6 0.6 1}
\rput[Br](91,14.5){$p$}
\newrgbcolor{userLineColour}{0.6 0.6 1}
\rput[Br](115,4.5){$q$}
\newrgbcolor{userLineColour}{0.6 0.6 1}
\rput[Bl](0,0){a.}
\newrgbcolor{userLineColour}{0.6 0.6 1}
\rput[Bl](50,0){b.}
\psline[linecolor=white](42,23)(63,23)
\end{pspicture}
\caption{(a) An interval order $\poset{Q}$ and (b) an interval representation of $\poset{Q}$.}\label{F:inter}
\end{figure}

It is worth noting that representation bears essentially more information. For example a poset $\poset{S}=\brackets{\set{d,e,f},\leq}$ can be correctly extended by $g$, as a Figure \ref{F:N int}.a shows.
But unfortunately if $\poset{S}$ is associated with the representation presented on Figure \ref{F:N int}.b this  extension can not be done.

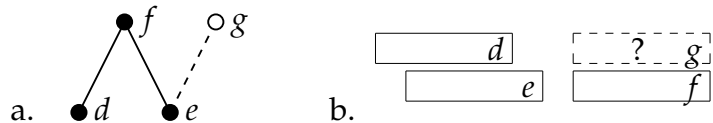
\begin{figure}[hbt]
\centering\ifx\JPicScale\undefined\def\JPicScale{1}\fi
\psset{unit=\JPicScale mm}
\psset{linewidth=0.3,dotsep=1,hatchwidth=0.3,hatchsep=1.5,shadowsize=1,dimen=middle}
\psset{dotsize=0.7 2.5,dotscale=1 1,fillcolor=black}
\psset{arrowsize=1 2,arrowlength=1,arrowinset=0.25,tbarsize=0.7 5,bracketlength=0.15,rbracketlength=0.15}
\begin{pspicture}(0,0)(92,16)
\rput{0}(9,1){\psellipse[linewidth=0.25,fillstyle=solid](0,0)(1,1)}
\rput{0}(15,13){\psellipse[linewidth=0.25,fillstyle=solid](0,0)(1,1)}
\psline[linewidth=0.25](21,1)(15,13)
\psline[linewidth=0.25](15,13)(9,1)
\rput{0}(21,1){\psellipse[linewidth=0.25,fillstyle=solid](0,0)(1,1)}
\psline[linewidth=0.25,linestyle=dashed,dash=1 1](21,1)(27,13)
\newrgbcolor{userLineColour}{0.6 0.6 1}
\rput[Bl](11,0){$d$}
\newrgbcolor{userLineColour}{0.6 0.6 1}
\rput[Bl](23,0){$e$}
\newrgbcolor{userLineColour}{0.6 0.6 1}
\rput[Bl](17,12){$f$}
\newrgbcolor{userLineColour}{0.6 0.6 1}
\rput[Bl](29,12){$g$}
\psline[linewidth=0.1,linestyle=dashed,dash=1 1](74,11.5)(92,11.5)
\psline[linewidth=0.1,linestyle=dashed,dash=1 1](74,7.5)(92,7.5)
\psline[linewidth=0.1](74,6.5)(92,6.5)
\psline[linewidth=0.1](74,2.5)(92,2.5)
\psline[linewidth=0.1](92,6.5)(92,2.5)
\psline[linewidth=0.1,linestyle=dashed,dash=1 1](92,11.5)(92,7.5)
\psline[linewidth=0.1](74,6.5)(74,2.5)
\psline[linewidth=0.1](70,6.5)(70,2.5)
\psline[linewidth=0.1](52,6.5)(70,6.5)
\psline[linewidth=0.1](52,2.5)(70,2.5)
\psline[linewidth=0.1,linestyle=dashed,dash=1 1](74,11.5)(74,7.5)
\psline[linewidth=0.1](66,11.5)(66,7.5)
\psline[linewidth=0.1](48,7.5)(66,7.5)
\psline[linewidth=0.1](48,11.5)(66,11.5)
\psline[linewidth=0.1](48,11.5)(48,7.5)
\psline[linewidth=0.1](52,6.5)(52,2.5)
\newrgbcolor{userLineColour}{0.6 0.6 1}
\rput[Br](65,8){$d$}
\newrgbcolor{userLineColour}{0.6 0.6 1}
\rput[Br](69,3){$e$}
\newrgbcolor{userLineColour}{0.6 0.6 1}
\rput[Br](91,8){$g$}
\newrgbcolor{userLineColour}{0.6 0.6 1}
\rput[Br](91,3){$f$}
\newrgbcolor{userLineColour}{0.6 0.6 1}
\rput[Bl](0,0){a.}
\newrgbcolor{userLineColour}{0.6 0.6 1}
\rput[Bl](42,0){b.}
\rput{0}(27,13){\psellipse[linewidth=0.25,fillcolor=white,fillstyle=solid](0,0)(1,1)}
\rput(82.5,9.5){?}
\psline[linecolor=white](43,16)(64,16)
\end{pspicture}
\caption{(a) An interval order $\poset{S}$ (with (b) an interval representation of $\poset{S}$) and a vain attempt of an extension.}\label{F:N int}
\end{figure}

It may be complicated to use Definition \ref{definition interval} to check whether a poset is an interval order. Fortunately there are equivalent conditions that can be checked in a polynomial time. To state them we need the following definition.
\begin{defn}
A poset is $(\II)$-free if it has no elements
$a,b,c,d$ with: $a<b$,\ \ $c<d$,\ \ $a\parallel d$\ \ and\ \
$c\parallel b$.
\end{defn}
\noindent Loosely speaking in $(\II)$-free poset there is no subposet of \mbox{Figure \ref{F:2+2}}.
\begin{figure}[hbt]
\centering\input{chapter-01-pict-07}
\caption{The poset $\II$.}\label{F:2+2}
\end{figure}

\begin{thm}[Fishburn \cite{Fishburn}]\label{theorem fisburn}
Let $\poset{P}=\brackets{P,\leq}$ be a poset. Then the following statements are equivalent:
\begin{enumerate}
\item $\poset{P}$ is an interval order.
\item $\poset{P}$ is a $(\II)$-free poset.
\item For any $p, q \in P$ either $p\downseto \subseteq q\downseto$ or $p\downseto \supseteq q\downseto$, i.e. family of open downsets is linearly ordered with respect to inclusion.
\item For any $p, q \in P$ either $p\upseto \subseteq q\upseto$ or $p\upseto \supseteq q\upseto$, i.e. family of open upsets is linearly ordered with respect to inclusion.
\end{enumerate}
\end{thm}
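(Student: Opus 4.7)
My plan is to close the cycle $(1) \Rightarrow (2) \Leftrightarrow (3) \Rightarrow (1)$ and then derive $(2) \Leftrightarrow (4)$ by a symmetric argument.

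For $(1) \Rightarrow (2)$, suppose $\poset{P}$ has an interval representation $I$ and contains a copy of $\II$ on points $a,b,c,d$ with $a<b$, $c<d$, $a\parallel d$, $c\parallel b$. Since $a<b$ gives $r_a<l_b$ and $c<d$ gives $r_c<l_d$, while incomparability of $a,d$ forces their intervals to meet ($l_d\leq r_a$) and incomparability of $c,b$ forces $l_b\leq r_c$, I get the cyclic chain $l_d\leq r_a<l_b\leq r_c<l_d$, a contradiction.

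For $(2) \Leftrightarrow (3)$, the equivalence is almost immediate from the definitions. If (3) fails, pick $p,q$ with $p\downseto\not\subseteq q\downseto$ and $q\downseto\not\subseteq p\downseto$, and choose witnesses $a\in p\downseto\setminus q\downseto$ and $c\in q\downseto\setminus p\downseto$. I would first rule out $a\geq q$ and $c\geq p$ (either would force $c<p$ or $a<q$ by transitivity), leaving $a\parallel q$ and $c\parallel p$; then $(a,p,c,q)$ witnesses $\II$. Conversely, a $\II$ with elements $a<b$, $c<d$, $a\parallel d$, $c\parallel b$ shows $a\in b\downseto\setminus d\downseto$ and $c\in d\downseto\setminus b\downseto$, so the downsets $b\downseto,d\downseto$ are incomparable. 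The equivalence $(2)\Leftrightarrow(4)$ follows by the dual argument with open upsets in place of downsets.

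For $(3) \Rightarrow (1)$, I want to build an interval representation directly from the chain of open downsets. Enumerate the distinct open downsets of $\poset{P}$ as $D_1\subsetneq D_2\subsetneq\cdots\subsetneq D_m$ (using finiteness, which is already standing). For each point $p$, let $l_p$ be the index $i$ with $p\downseto=D_i$, and let $r_p=\mu(p)-\tfrac{1}{2}$, where $\mu(p)=\min\{i : p\in D_i\}$, setting $\mu(p):=m+1$ if $p$ lies in no $D_i$ (i.e.\ $p$ is maximal). Since $p\notin p\downseto=D_{l_p}$, we have $\mu(p)>l_p$, so $l_p<r_p$ and the interval is nondegenerate. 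The key point, which I would verify, is that because the $D_i$'s form a chain, the condition $p\in D_{l_q}$ is equivalent to $\mu(p)\leq l_q$; thus $p<q \iff p\in q\downseto=D_{l_q}\iff \mu(p)\leq l_q \iff r_p<l_q$, as required.

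The main obstacle is the construction step in $(3)\Rightarrow(1)$: verifying that the chain structure of the $D_i$'s is exactly what makes $\mu(p)\leq l_q$ equivalent to $p\in D_{l_q}$, so that the assignment $p\mapsto[l_p,r_p]$ really recovers the order. The remaining implications are essentially bookkeeping once this is in hand.
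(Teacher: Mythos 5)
Note that the paper itself gives no proof of this result: Theorem~\ref{theorem fisburn} is stated as a citation to Fishburn, so there is no ``paper proof'' to compare against. Your proof is correct and self-contained, and it handles the implications cleanly. A few small observations: in $(2)\Leftrightarrow(3)$ you should perhaps also observe that the four points $a,p,c,q$ in the constructed $\II$ are automatically distinct (e.g.\ $a\neq c$ because $a<q$ would follow otherwise, and $p\neq q$ since $p\downseto\neq q\downseto$); this is implicit but worth a word since the $(\II)$-free definition in the paper does not announce distinctness explicitly. In $(3)\Rightarrow(1)$ the construction is the standard one and the verification that $p<q\iff r_p<l_q$ goes through exactly as you describe: the chain $D_1\subsetneq\cdots\subsetneq D_m$ makes membership $p\in D_j$ a threshold condition in $j$, so $p\in D_{l_q}$ iff $\mu(p)\leq l_q$ iff $\mu(p)-\tfrac12<l_q$. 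You correctly flag that finiteness (which the paper has as a standing assumption) is used here; Fishburn's theorem holds more generally but this construction would need modification for infinite posets. The logical scheme $(1)\Rightarrow(2)\Leftrightarrow(3)\Rightarrow(1)$, with $(2)\Leftrightarrow(4)$ dually, does close the full equivalence.
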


Deleting the edge $k\prec p$ from poset $\poset{Q}$ of Figure \ref{F:inter} leads to a poset (see Figure \ref{F:noninter}) which has no interval representation, as $k,n,m,p$ now form a subposet of $\poset{R}$ isomorphic to $\II$.
\begin{figure}[hbt]
\centering\input{chapter-01-pict-08}
\caption{A noninterval order $\poset{R}$.}\label{F:noninter}
\end{figure}

\chapter{Chain Partitioning of Orders}\label{Ch:On-line Orders}

In this Chapter we deal with on-line chain partitioning of orders. One way to understand on-line structures is to incorporate into off-line ones a linear ordering of type $\omega\!$, as the following definition says.

\begin{defn}\label{D:online poset}
An \emph{on-line ordered set} (or \emph{on-line poset}) is a triple \mbox{$\poset{P}^{\ll}=\brackets{P,\leq,\ll}$}, where \mbox{$\poset{P}=\brackets{P,\leq}$} is a partially ordered set, $P=\set{p_1, p_2, \ldots}$ is countable and $\ll$ is a linear ordering of $P$ of type $\omega$ such that $p_1\ll p_2\ll p_3\ll\ldots$ 
The relation $\ll$ is called a \emph{presentation order of $\poset{P}$}. The set $P^{(i)}=\set{p_1,\ldots,p_i}$ is called an \emph{initial segment} of $\poset{P}^{\ll}$ and we let $\poset{P}^{(i)}=(P^{(i)},\rest{\leq}{^{P^{(i)}}})$. 
\end{defn}

Before we discuss on-line algorithms we only note that their output, i.e. number of chains used for a covering, is being compared with the number of chains needed to cover a poset in the off-line setting. Dilworth's Theorem \ref{T:Dilworth} suggests that this off-line number \linebreak of chains coincides with the width of a finite poset. Although \linebreak\mbox{Theorem \ref{T:Dilworth}} can be generalized for posets with arbitrary many elements (e.g. by using ultraproduct argument) we decided to present its generalization only in a countable setting. One reason for that is to model an on-line behavior described in Definition \ref{D:online poset}. Another one is to notice that in fact staying with minimal number of chains may require more information than presented covered initial segments of the poset give.

\begin{thm}\label{T:countouble partition}
Any countable poset of width $w$ can be partitioned into $w$ chains.
\end{thm}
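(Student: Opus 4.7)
The plan is to reduce the countable case to the finite case (Dilworth's Theorem \ref{T:Dilworth}) by a standard compactness argument based on K\"onig's lemma. The key point I want to exploit is that any subposet of $\poset{P}=\brackets{P,\leq}$ inherits the bound $w$ on its width, so every finite initial segment $\poset{P}^{(n)}$ can be partitioned into $w$ chains off-line; the task is to glue together these finite partitions coherently into a partition of the whole (countable) poset.

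Fix the enumeration $P=\set{p_1,p_2,\ldots}$ given by $\ll$ and the initial segments $P^{(n)}=\set{p_1,\ldots,p_n}$. Let $\Gamma=\set{1,\ldots,w}$. Build a rooted tree $T$ whose nodes at level $n$ are the valid $w$-colorings $c\colon P^{(n)}\tto\Gamma$ of $\poset{P}^{(n)}$, i.e.\ functions whose color classes are chains in $\poset{P}^{(n)}$; the root sits at level $0$ and corresponds to the empty coloring, and a level-$(n{+}1)$ node $c'$ is declared a child of the level-$n$ node $c$ iff $\rest{c'}{P^{(n)}}=c$. Two properties then need to be verified. First, $T$ is finitely branching: each node has at most $w$ children, since a child is determined by the choice of color for the single new point $p_{n+1}$. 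Second, $T$ has a node at every level: the width of $P^{(n)}$ is at most $w=\fWidth{\poset{P}}$, so by Dilworth's Theorem \ref{T:Dilworth} at least one valid $w$-coloring of $\poset{P}^{(n)}$ exists.

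Hence $T$ is an infinite, finitely branching, rooted tree. K\"onig's lemma supplies an infinite branch $c_0,c_1,c_2,\ldots$ with $\rest{c_{n+1}}{P^{(n)}}=c_n$ for all $n$. Define $c\colon P\tto\Gamma$ by $c(p_i):=c_i(p_i)$; this is well-defined by the compatibility along the branch, and whenever $p_i,p_j$ satisfy $c(p_i)=c(p_j)$ with, say, $i\leq j$, the coloring $c_j$ already assigns them the same color and is a valid chain-coloring of $\poset{P}^{(j)}$, so $p_i$ and $p_j$ are comparable in $\poset{P}$. Thus each color class of $c$ is a chain and $c$ yields the required partition of $P$ into $w$ chains.

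The only non-routine step is the invocation of K\"onig's lemma; once the tree is set up, the remaining verifications reduce to an immediate use of Dilworth's Theorem \ref{T:Dilworth} on the finite segments. It is worth emphasizing (as the paragraph preceding the statement already hints) that this proof is fundamentally non-constructive: the branch selected by K\"onig's lemma does not correspond to any on-line policy, since the color assigned to $p_{n+1}$ may depend on arbitrarily late information about $\poset{P}$. This is precisely why the forthcoming on-line results in Chapter \ref{Ch:On-line Orders} cannot aspire to the bound $w$ and instead live with much larger numbers of chains.
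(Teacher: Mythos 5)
Your proof is correct and takes essentially the same approach as the paper's: the paper inlines the proof of K\"onig's lemma by maintaining the invariant that the current partial coloring extends to infinitely many valid colorings of initial segments and pigeonholing on the $w$ possible colors for $p_{n+1}$, whereas you build the tree of finite colorings explicitly and invoke K\"onig as a black box. The substance---Dilworth's Theorem on finite initial segments combined with a finitely-branching-tree compactness argument---is identical.
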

\begin{proof}
Let $\brackets{P,\leq}$ be a countable poset with $P=\set{p_1,p_2,p_3,\ldots}$.
We induct on $n$ to define chains $\cc{1}{n},\ldots,\cc{w}{n}$ satisfying
\begin{enumerate}
\item $\cc{1}{n}\cup\ldots\cup\cc{w}{n}=\set{p_1,\ldots,p_n}$,
\item $\cc{i}{j}\subseteq\cc{i}{k}$ for all $j\leq k$ and $1\leq i\leq w$,
\item for infinitely many $m\geq n$ the chains $\cc{i}{n}$ can be extended to chains $\ccc{i}{m}\supseteq\cc{i}{n}$ covering the initial segment $P^{(m)}\!$, i.e. $\ccc{1}{m}\cup\ldots\cup\ccc{w}{m}=\set{p_1,\ldots,p_m}$.
\end{enumerate}
First we put $\cc{i}{0}=\emptyset$ for all $i\leq w$. Suppose that $n\geq 0$ and $\cc{1}{n},\ldots,\cc{w}{n}$ satisfy (1)--(3). From (3) we know that there are infinitely many $m$'s such that $\set{p_1,\ldots,p_n,p_{n+1},\ldots,p_m}$ can be partitioned into $\ccc{1}{m},\ldots,\ccc{w}{m}$ that extend $\cc{1}{n},\ldots,\cc{w}{n}$, respectively. Because for any fixed $m$ there are only finitely many (exactly $w$) of the $\ccc{i}{m}$'s, there must be at least one $l=1,\ldots,w$ such that the point $p_{n+1}$ has to occur in infinitely many extensions $\ccc{l}{m}$ of $\cc{l}{n}$. After fixing one of such $l$'s we extend $\cc{i}{n}$'s to $\cc{i}{n+1}$'s in the following way.
\begin{equation*}
\cc{i}{n+1}=
\begin{cases}
\cc{i}{n}, &\text{if}\ \ i\neq l,\\
\cc{l}{n}\cup\set{p_{n+1}}, &\text{if}\ \ i=l.
\end{cases}
\end{equation*}
Of course the new chains $\cc{1}{n+1}\!,\ldots,\cc{w}{n+1}$ satisfy conditions \mbox{(1)--(3)}.

Now it is easy to see that $\bigcup_{n\in\N}\cc{1}{n}\!,\ldots,\bigcup_{n\in\N}\cc{w}{n}$ form a covering of $P$.
\end{proof}

The proof of Theorem \ref{T:countouble partition} suggests how to choose a chain into which an incoming point can be incorporated. Unfortunately, this choice is not effective and can be hardly computed without the knowledge of entire poset $\poset{P}$. We will build partial chain partitionings for consecutive initial segments using algorithms of the following form.

\begin{defn} An \emph{on-line chain partitioning algorithm} is an algorithm which, for all $i$, creates a chain partitioning $\cc{1}{i},\ldots,\cc{k}{i}$ of an initial segment $P^{(i)}$ of an on-line poset $\poset{P}^{\ll}$ by putting first $\cc{1}{1}=\set{p_1}$. 
Then the algorithm as an input gets a chain partition $\cc{1}{n},\ldots,\cc{k}{n}$ of $P^{(n)}=\set{p_1,\ldots,p_n}$ and an extension of $\poset{P}^{(n)}$ to $\poset{P}^{(n+1)}$ to return a chain partitioning $\cc{1}{n+1},\ldots,\cc{k'}{n+1}$ of $P^{(n+1)}=\set{p_1,\ldots,p_n,p_{n+1}}$ which expands $\cc{1}{n},\ldots,\cc{k}{n}$.
This can be done by one of the following two ways:
\begin{itemize}
\item by adding $p_{n+1}$ to some chain $\cc{j}{n}$, i.e. $\cc{j}{n+1}=\cc{j}{n}\cup\set{p_{n+1}}$,
\item by creating a new chain $\cc{k+1}{n+1}=\set{p_{n+1}}$,
\end{itemize}
while other chains remain unchanged.
\end{defn} 

\begin{defn}
The \emph{value of the on-line chain partitioning problem}, $\fCP{w}$, is the least integer $k$, such that there is an on-line algorithm that never uses more than $k$ chains on posets of width at most $w$.
\end{defn}

From the technical point of view it is more convenient to look at on-line chain partitioning through \emph{on-line coloring}. An \emph{on-line coloring algorithm} recursively builds colorings $\chains^{_{(n)}}:P^{(n)}\tto\Gamma$ of consecutive initial segments $P^{\brackets{n}}\!$, where $\Gamma$ is a countable fixed set of colors. The coloring being created determines a chain partitioning by putting into one chain the points that get the same color.

\bigskip

On-line chain partitioning problem for posets of width at most $w$ can be viewed as the following two-person game. We call the players \emph{Algorithm} and \emph{Spoiler}. During each round:
\begin{itemize}
\item Spoiler introduces a new point with its comparability status to the previously presented points,
\item Algorithm colors this new point.
\end{itemize}
The aim of Algorithm is to use minimal number of colors. The goal of Spoiler is to force Algorithm to use as many colors as possible.

\bigskip

An example of an on-line coloring algorithm is \emph{First-Fit Algorithm}. Roughly speaking it colors the incoming point $p$ by the oldest color $\gamma$ for which points colored by $\gamma$ still form a chain.
If there is no such color a new one is taken.
To be more precise we identify $\Gamma$ with $\N$ and we say that a number $k$ is available for $p_n\in P^{(n)}$ if, after assigning $k$ to $p_n$, all points from $P^{(n)}$ colored by $k$ form a chain.
Each $p_n$ is colored by First-Fit Algorithm by the smallest available number. Although First-Fit Algorithm works pretty fast and it seems to be good enough, it can be cheated already on posets of width $2$.
\begin{thm}\label{Thm:FFA}
There is an on-line poset 
\mbox{$\poset{P}^{\ll}=\brackets{P,\leq,\ll}$} of width $2$ for which First-Fit Algorithm has to use infinitely many colors.
\end{thm}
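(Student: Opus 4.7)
The plan is to establish, by induction on $k$, the auxiliary claim that for every $k \geq 1$ Spoiler has a strategy which forces First-Fit to use at least $k$ distinct colors on a finite on-line poset of width at most $2$. Once this is in hand, the theorem follows by concatenating the strategies for $k = 1, 2, 3, \ldots$ on pairwise disjoint point sets, each new block placed entirely above its predecessors in $\leq$ and later in $\ll$; no antichain can span two blocks, so the overall width stays at $2$, yet First-Fit is forced to open a fresh color infinitely often.

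The base cases $k = 1, 2$ are immediate (one point, then two incomparable points). For the inductive step, assume $\sigma_k$ produces a width-$2$ poset $P_k$ on which First-Fit uses colors $1, \ldots, k$. Spoiler first plays $\sigma_k$ and then presents a single further point $z$, with the goal that $z$ receive color $k+1$. First-Fit assigns $z$ the color $k+1$ exactly when every color $c \leq k$ is unavailable, i.e.\ when $z$ is incomparable to at least one already-presented point of each color class. The main obstacle is that naively choosing such a $z$ creates a $3$-antichain among $z$ and two already-incomparable witnesses, pushing the width above $2$.

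The way around this is to strengthen the inductive hypothesis: demand additionally that $\sigma_k$ leaves behind a chain of color representatives $r_1 < r_2 < \cdots < r_k$ with $r_i$ colored $i$, such that every other element of $P_k$ can be made comparable to $z$ consistently with transitivity. Spoiler then introduces $z$ incomparable to each $r_i$ and comparable (in suitable direction) to every other point of $P_k$; since $r_1, \ldots, r_k$ lie on one chain, $z$ introduces no new $3$-antichain, while being incomparable to a representative of each color forces First-Fit to open color $k+1$. The real combinatorial work of the proof is to show that this strengthened invariant can be maintained recursively: the new point that forces color $k+1$ is itself promoted to the new top representative $r_{k+1}$, and the comparabilities among previously presented points must be laid out in advance so that both the representative chain and the global width-$2$ bound survive throughout the induction. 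This bookkeeping of the recursive ``staircase'' is where the delicate part of the proof lies; the rest is then a straightforward verification that First-Fit has no color $\leq k$ available for $z$.
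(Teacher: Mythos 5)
Your block concatenation is sound and even a touch cleaner than what the paper does (the paper grows a single infinite poset on two chains), and the instinct that a chain of color representatives is the right invariant is correct. But the inductive step is where the argument collapses, and the ``bookkeeping'' you defer is not a detail --- it contains the essential difficulty, and in the form you propose it cannot be carried out.

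Two things go wrong. First, one new point per new color is impossible for a quantitative reason already visible at $k=3$. For $z$ to be forced to color $k$, at the moment $z$ arrives there must be points $p_1,\dots,p_{k-1}$ of colors $1,\dots,k-1$ all incomparable to $z$; width $2$ forces these to lie on a chain, as you say. But for $p_2$ to have been assigned color $2$ in the first place, some \emph{earlier} point $p_0$ of color $1$ must be incomparable to $p_2$, and $p_0\neq p_1$ (otherwise $p_1\parallel p_2$ and $\{p_1,p_2,z\}$ is a $3$-antichain). Hence $\lvert P_3\rvert \geq 4$, and in general the number of auxiliary witnesses grows, so $\sigma_{k+1}$ cannot be ``$\sigma_k$ plus one point.'' Your base case ``two incomparable points'' also does not satisfy your own strengthened hypothesis, since those two points are not a chain.

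Second, even with a correctly-sized $P_k$, the strengthened invariant cannot be restored after adding $z$ in the way you describe. Everything below $z$ must avoid $\{r_1,\dots,r_k\}\upsetc$ (else transitivity forces $z>r_i$), so promoting $z$ to $r_{k+1}$ requires a \emph{fresh} chain of colors $1,\dots,k$ sitting entirely below $z$ and disjoint from the old representatives. Nothing guarantees this exists; in the smallest example with a genuine representative chain (present $a$, then $b>a$, then $c>a$ with $c\parallel b$, then $z<b$ with $z\parallel a,c$; First-Fit gives colors $1,1,2,3$), the resulting $P_3$ has no chain of length $3$ at all, so $r'_1<r'_2<r'_3$ does not exist. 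Regenerating an entire palette of representatives after each forced color is precisely what the paper's groups $G_m$ do: each round adds $m$ points, presented top-down on one of two alternating chains, arranged so that $G_m$ ends up carrying colors $m,m-1,\dots,1$ from bottom to top and can serve wholesale as the $m$ blocking witnesses for color $m+1$. A single point $z$ cannot perform that regeneration, and your proposal offers no mechanism that does.
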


\begin{proof} 
We will construct $\poset{P}^{\ll}$ and show that First-Fit Algorithm uses all natural numbers, when coloring it.
We describe \mbox{$\poset{P}^{\ll}$} by two disjoint chains $P=\alpha_0 \cup \alpha_1$. We start with $\alpha_1= \set{x_1}$ and \mbox{$\alpha_0 = \emptyset$}. Then, for each $m \in \N$ we add a group $G_m$ of $m$ consecutive linearly ordered points on the top of either $\alpha_0$ or $\alpha_1$, depending on whether $m$ is even or odd.
These new points, say $y_1, \ldots , y_m$, satisfy %
%\begin{align*}
%&y_m \ll \ldots \ll y_1,\\
%&y_m \prec \ldots \prec y_1,\\
%&y_k > G_1 \cup \ldots \cup G_{m-2} &\textrm{for}\ k=1, \ldots, m,\\
%&y_k \succ x_k &\textrm{for}\ k=1, \ldots, m-1,\\
%&y_k \parallel x_{k-1}, \ldots , x_1 &\textrm{for}\ k=1, \ldots, m,
%\end{align*}
%\note{
\mbox{$y_m\ll \ldots \ll y_1$} and
\begin{small}\begin{align*}
&y_k\ >\ G_1 \cup \ldots \cup G_{m-2}\ \cup\ \set{x_{m-1},x_{m-2},\ldots,x_k}\ \cup\ \set{y_m,y_{m-1},\ldots,y_{k+1}}\! ,\\
&y_k\ \, \parallel \, \ x_{k-1}, x_{k-2}, \ldots , x_1,
\end{align*}\end{small}%
for $k=1, \ldots, m$,
%}
where $x_{m-1} \prec \ldots \prec x_1$ is the group $G_{m-1}$ (added recently to the other chain). These relations are illustrated on \mbox{Figure\! \ref{F:FF}}.
\begin{figure}[hbt]
\newcommand{\x}[1]{$x_{#1}$}
\newcommand{\y}[1]{$y_{#1}$}
\renewcommand{\a}{$\alpha_i$}
\renewcommand{\b}{$\alpha_{1-i}$}
\newcommand{\GG}{$\left.\begin{array}{c} \vspace{40mm} \end{array}\right\rbrace G_m$}
\newcommand{\GGG}{$G_{m-1} \left\lbrace\begin{array}{c}\vspace{30mm}\end{array}\right.$}
\centering\ifx\JPicScale\undefined\def\JPicScale{1}\fi
\psset{unit=\JPicScale mm}
\psset{linewidth=0.3,dotsep=1,hatchwidth=0.3,hatchsep=1.5,shadowsize=1,dimen=middle}
\psset{dotsize=0.7 2.5,dotscale=1 1,fillcolor=black}
\psset{arrowsize=1 2,arrowlength=1,arrowinset=0.25,tbarsize=0.7 5,bracketlength=0.15,rbracketlength=0.15}
\begin{pspicture}(0,0)(68,56)
\psline[linewidth=0.25](46,40)(46,54)
\rput[l](49,54){\y1}
\psline[linewidth=0.25](26,30)(26,44)
\psline[linewidth=0.25](46,54)(26,44)
\psline[linewidth=0.25](46,44)(26,34)
\psline[linewidth=0.25](46,24)(26,14)
\psline[linewidth=0.25](32,32)(26,44)
\psline[linewidth=0.25](28,30)(26,34)
\psline[linewidth=0.25](46,8)(46,28)
\psline[linewidth=0.25](26,8)(26,18)
\psline[linewidth=0.25,linestyle=dashed,dash=1 1](46,8)(46,4)
\psline[linewidth=0.25,linestyle=dashed,dash=1 1](26,8)(26,4)
\rput[l](49,44){\y2}
\rput[l](48,24){\y{m-1}}
\rput[l](49,14){\y{m}}
\rput[r](23,44){\x1}
\rput[r](23,34){\x2}
\rput[r](24,14){\x{m-1}}
\psline[linecolor=white](68,29)(68,25)
\psline[linecolor=white](0,18)(0,12)
\psline[linecolor=white](42,56)(36,56)
\rput[t](26,2){\a}
\rput[t](49,2){\b}
\rput{0}(46,14){\psellipse[linestyle=none,fillstyle=solid](0,0)(1,1)}
\rput{0}(26,14){\psellipse[linestyle=none,fillstyle=solid](0,0)(1,1)}
\rput{0}(26,34){\psellipse[linestyle=none,fillstyle=solid](0,0)(1,1)}
\rput{0}(46,44){\psellipse[linestyle=none,fillstyle=solid](0,0)(1,1)}
\rput{0}(26,44){\psellipse[linestyle=none,fillstyle=solid](0,0)(1,1)}
\rput{0}(46,54){\psellipse[linestyle=none,fillstyle=solid](0,0)(1,1)}
\psline[linewidth=0.25,linestyle=dashed,dash=1 1](46,40)(46,28)
\rput{0}(46,24){\psellipse[linestyle=none,fillstyle=solid](0,0)(1,1)}
\psline[linewidth=0.25,linestyle=dashed,dash=1 1](26,30)(26,18)
\psline[linewidth=0.25](42,12)(36,24)
\psline[linewidth=0.25](38,10)(32,22)
\psline[linewidth=0.25,linestyle=dashed,dash=1 1](32,32)(36,24)
\psline[linewidth=0.25,linestyle=dashed,dash=1 1](28,30)(32,22)
\psline[linewidth=0.25,linestyle=dashed,dash=1 1](42,12)(46,4)
\psline[linewidth=0.25,linestyle=dashed,dash=1 1](38,10)(42,2)
\rput[l](54,34){\GG}
\rput[r](18,29){\GGG}
\end{pspicture}
\caption{Top part of $G_1\cup\ldots\cup G_m$.}\label{F:FF}
\end{figure}

The inverse numbering in each $G_k$, consisting of \mbox{$u_k \prec \ldots \prec u_1$}, is intended to force the following behavior of First-Fit Algorithm:
\begin{equation}\label{Eq:InvFF}
 \fC{}{u_j}=j,\qquad\quad \textrm{for}\ \ j=1, \ldots, k.
\end{equation}
Indeed, for $m=1$ the group $G_1$ consists of one point $u_1=x_1$ which is obviously colored by $\fC{}{u_1}=1$. Now, if $G_m$ is $y_m \prec \ldots \prec y_1$ and $G_{m-1}$ is $x_{m-1} \prec \ldots \prec x_1$, then $y_j \parallel x_1, \ldots, x_{j-1}$ together with \ref{Eq:InvFF} says that colors $1, \ldots , k-1$ are not available for $y_j$. On the other hand $y_j$ is over all points in $G_1\cup \ldots \cup \big(G_{m-1}\setminus \set{x_1, \ldots , x_{j-1}}\!\big)$, so that at the moment $y_j$ has to be colored the color $j$ is available for $y_j$. Therefore $\fC{}{y_j}=\fMin{\N\setminus\fC{}{\set{x_1,\ldots,x_{j-1}}}}=j$, as required by our invariant (\ref{Eq:InvFF}).
\end{proof}

%\begin{thm}
%{\em The value of the on-line chain partitioning problem for posets of width at most $w$} is the largest integer $\fCP{w}$ so that there is a strategy of presenting points that forces any algorithm to use at least $\fCP{w}$ chains.
%\end{thm}
%\begin{proof}Included.\end{proof}

\bigskip

Despite of the total failure of First-Fit Algorithm there is a strategy for Algorithm that uses bounded (in terms of $w$) number of chains to cover on-line poset of width at most $w$. Such a strategy was presented already in 1981 by Kierstead, and no better one is known in general. On the other hand an argument of Endre \mbox{Szemer\'{e}di} (presented in \cite{Kierstead86}) proves the best known asymptotic lower bound.

\begin{thm}[Szemer\'{e}di \cite{Kierstead86}; Kierstead \cite{Kierstead}]\label{Th:Kie Sz}
\[
\binom{w+1}{2}\, \leq\, \fCP{w}\, \leq\, \frac{5^w-1}{4}.
\]
\end{thm}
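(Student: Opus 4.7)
The theorem comprises two independent halves: the lower bound $\binom{w+1}{2} \leq \fCP{w}$ is established by exhibiting an adversarial Spoiler strategy, and the upper bound $\fCP{w} \leq (5^w-1)/4$ is established by designing an on-line algorithm. I would prove both by induction on $w$.

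For the lower bound, the base case $w = 1$ is immediate. For the inductive step, assume a Spoiler strategy $\mathcal{S}_{w-1}$ forces Algorithm to use $\binom{w}{2}$ chains on some width-$(w-1)$ poset $P_0$. My $\mathcal{S}_w$ would first run $\mathcal{S}_{w-1}$ to build $P_0$ and then enter a second phase of $w$ additional rounds. In each of these rounds, Spoiler adds a new point chosen so that (i)~it is incomparable to a ``live'' representative of every old color class, forcing Algorithm to use a new color, and (ii)~its placement does not enlarge any $P_0$-antichain past size $w$. Since the $w$ new points are also chosen pairwise incomparable, $w$ distinct fresh colors are forced, giving a total of $\binom{w}{2}+w=\binom{w+1}{2}$. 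Condition~(ii) is the delicate part and would be realized by placing each new point above a judiciously chosen element of the lattice $\fMA{P_0}$ from Theorem~\ref{T:lattice}, so that the live top endpoints of old chains fall into an appropriate downset.

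For the upper bound, I would build a recursive algorithm $\mathrm{A}_w$ verifying the recurrence $f(w) \leq 5f(w-1)+1$ with $f(1)=1$, whose closed-form solution is exactly $(5^w-1)/4$. The algorithm maintains one distinguished chain $C^*$ together with up to five auxiliary sub-posets $Q_1,\ldots,Q_5$, each handled by an independent recursive copy of $\mathrm{A}_{w-1}$ on its own fresh palette of $f(w-1)$ colors. When a new point $x$ arrives, Algorithm classifies $x$ by its comparability pattern with $C^*$ and with designated extremal elements of each $Q_i$, and either appends $x$ to $C^*$ (costing the single extra color) or routes it to exactly one~$Q_i$. Summing colors yields $1+5f(w-1)=f(w)$.

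The main obstacle in both parts is the same bookkeeping task: certifying that the width-$w$ constraint is preserved. For the lower bound, one must show that the $w$ newly added points never enlarge any $P_0$-antichain past size~$w$; for the upper bound, one must verify that each sub-poset $Q_i$ stays of width at most $w-1$ throughout the on-line process. Both rely on a careful exploitation of the maximum-antichain lattice from Theorem~\ref{T:lattice}, and in particular of the canonical high antichain $\fHMA{P_0}$ as a reference structure.
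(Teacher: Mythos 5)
This theorem is stated in the thesis as a cited result --- the lower bound is attributed to Szemer\'{e}di (as presented in \cite{Kierstead86}) and the upper bound to Kierstead \cite{Kierstead} --- and the paper gives no proof of its own. So there is no in-paper argument to compare against; what follows assesses your sketch on its own terms.

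Your arithmetic is right in both halves ($\binom{w}{2}+w=\binom{w+1}{2}$, and $f(w)=5f(w-1)+1$ with $f(1)=1$ solves to $(5^w-1)/4$), but in each half the step you defer as ``bookkeeping'' is actually the entire content of the theorem, and in the lower-bound half I believe the structure you propose does not go through as stated. Concretely, consider the last new point $x_w$ in your second phase. It must be incomparable to the $w-1$ earlier new points $x_1,\ldots,x_{w-1}$, which already form a $(w-1)$-element antichain, \emph{and} to a live representative of each of the $\binom{w}{2}$ colors already used on $P_0$. Since the poset must keep width $w$, the set of all elements incomparable to $x_w$ can have width at most $w-1$; as $\set{x_1,\ldots,x_{w-1}}$ already saturates that width, every one of the $\binom{w}{2}$ old representatives incomparable to $x_w$ must be comparable to some earlier $x_j$. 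That is a strong global constraint linking all second-phase points to the first-phase coloring, and there is no reason a generic $P_0$ produced by a black-box invocation of $\mathcal{S}_{w-1}$ will admit it. Szemer\'{e}di's argument (as recounted by Kierstead) does not treat the width-$(w-1)$ construction as an opaque subroutine and then bolt on $w$ extra points; it interleaves the forcing of new colors with the maintenance of structural invariants (roughly, control over which colors appear on a tracked family of antichains) throughout, and that interleaving is precisely what discharges the constraint above. Saying ``place each new point above a judiciously chosen element of $\fMA{P_0}$'' names the right kind of object but does not exhibit a choice that works, and some natural choices (e.g.\ above $\fHMA{P_0}$) fail because they do not make the new points pairwise incomparable.

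For the upper bound the situation is similar but less clearly broken: a recursive scheme with a distinguished chain $C^*$ plus a bounded number of recursive sub-games does match the recurrence, but the assertions that five sub-games suffice and that each $Q_i$ can be maintained with width at most $w-1$ under on-line insertion are exactly Kierstead's theorem, not consequences of the setup. The ``comparability pattern with designated extremal elements of each $Q_i$'' needs to be specified, and one must show the routing rule is well-defined (never forced to put $x$ in a $Q_i$ that would then exceed width $w-1$) --- that invariant is where Kierstead's argument does its real work. As written, the proposal is a plausible outline with the right shape, but neither half contains the construction that would make it a proof.
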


Kierstead \cite{Kierstead} presented also a lower bound $4w-3\leq\fCP{w}$ which is better than $\binom{w+1}{2}$ for the first few values $w=2,3,4,5$.
In particular, this lower bound together with an algorithm of Felsner \cite{Felsner} gives the precise value for $\fCP{2}$.

\begin{thm}[Kierstead \cite{Kierstead}; Felsner \cite{Felsner}]
\[
\fCP{2}\ =\ 5.
\]
\end{thm}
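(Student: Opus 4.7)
The plan is to establish $\fCP{2} = 5$ by proving the two matching inequalities $\fCP{2} \geq 5$ and $\fCP{2} \leq 5$ separately, which correspond respectively to Kierstead's lower bound and Felsner's on-line algorithm.

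For the lower bound, I would proceed game-theoretically and describe an explicit strategy for Spoiler that forces any on-line coloring algorithm to use at least $5$ colors while keeping the presented poset of width $2$. Spoiler begins by introducing two incomparable points, which immediately forces Algorithm to use two distinct colors, say $c_1, c_2$. Thereafter Spoiler plays in phases: in each phase a new point is introduced whose comparability pattern with the previously presented points is chosen as a function of the colors Algorithm has used so far. The point is crafted to be comparable with some elements in certain color classes and incomparable with others in such a way that Algorithm cannot reuse any existing color without producing an antichain inside a color class. Because the presented poset must remain of width $2$, Spoiler never builds a $3$-element antichain; instead he exploits local patterns of the form $\II$ forbidden in interval orders, alternating between two growing ``spines'' and exploiting the tops that Algorithm has committed to particular colors. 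The analysis is a finite case analysis on Algorithm's responses, structured as a decision tree of bounded depth, and shows that along every branch Algorithm is eventually forced into a $5$th color.

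For the upper bound, I would describe Felsner's algorithm. The algorithm maintains, at each stage $n$, a chain partition of $P^{(n)}$ into at most $5$ chains $\cc{1}{n}, \ldots, \cc{5}{n}$ together with a structural invariant that classifies these chains according to their position with respect to the current high maximum antichain $\fHMA{\poset{P}^{(n)}}$ and the $2$-chain off-line decomposition guaranteed by Dilworth's Theorem \ref{T:Dilworth}. Each of the $5$ color classes plays a specific ``role'' (roughly: two roles for the two chains of the off-line decomposition, and additional roles for points that are trapped between those chains or lie strictly above the current antichain of size $2$). When a new point $p_{n+1}$ arrives, its comparability pattern with the tops of the existing chains and its position relative to $\fHMA{\poset{P}^{(n)}}$ determine which of the $5$ color classes can legally absorb it while keeping the class a chain. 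The main verification is showing that for every possible comparability pattern of $p_{n+1}$ (of which, under width $2$, there are only finitely many essentially distinct types), at least one of the $5$ chains is available.

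The main obstacle in the lower bound is orchestrating Spoiler's strategy so that every branch of Algorithm's responses eventually produces a $5$th color without ever building a $3$-antichain; in the upper bound the difficulty lies in identifying the right structural invariant on the $5$ color classes so that it is preserved under every legal move of Spoiler, and in particular so that no comparability pattern of the incoming point can exhaust all $5$ chains simultaneously. That the two bounds meet exactly at $5$ is a happy coincidence for $w = 2$: Kierstead's lower bound $4w - 3$ and whatever natural upper bound Felsner's construction yields both evaluate to $5$ in this one case.
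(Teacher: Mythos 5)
The paper does not prove this theorem at all: it simply states it as a known result, citing Kierstead's lower bound $4w-3 \leq \fCP{w}$ (which gives $5 \leq \fCP{2}$) and Felsner's width-$2$ algorithm using $5$ chains. So there is no ``paper's own proof'' to compare against; the theorem is imported wholesale from the literature.

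Your proposal correctly identifies the two required ingredients, but it remains a plan rather than a proof. For the lower bound you describe a phase-based Spoiler strategy and a ``finite case analysis, structured as a decision tree of bounded depth,'' but you never actually specify the strategy or carry out the case analysis, so the core content of Kierstead's argument is missing. For the upper bound you gesture at ``a structural invariant that classifies these chains according to their position with respect to the current high maximum antichain,'' but the invariant itself is never stated, the assignment rule for $p_{n+1}$ is never given, and the verification that every comparability pattern leaves some chain available is not performed — yet that verification is precisely where Felsner's proof lives. One further small confusion worth flagging: you say Spoiler ``exploits local patterns of the form $\II$ forbidden in interval orders,'' but the theorem here concerns arbitrary posets of width $2$, not interval orders, and width-$2$ posets need not be interval orders (the $\II$ poset is itself width $2$). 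The appeal to interval-order structure is not available for the lower bound argument and reads as a misdirection. In short, the proposal identifies the right decomposition into two citations but, unlike a genuine reproof, fills in neither.
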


The research carried out up to now puts $\fCP{3}$ between $9=\linebreak 4\cdot 3 -3$ and \mbox{$31=(5^3-1)/4$}. One aim of this chapter is to show a better upper bound:
\[
\fCP{3}\ \leq\ 16.
\]

%------------------------------------------------

\section{Reduction to Local Problem}\label{S:Main Reduction} %\subsection{Introduction} \label{SS:Intro}
In this section we reduce on-line chain partitioning problem to a special setting in which the number of points, that have some influence on the coloring, is bounded by $3w$, where $w$ is the width of the poset. Such localization will be done in three steps.

To simply our further analysis we assume that Algorithm, instead of partition the poset into disjoint chains, is allowed to cover this poset by a family of chains which are not necessarily disjoint. Obviously, to get a partition, one can simply choose one chain for each point. Our small, inessential change, can be expressed by multicoloring of a poset. In such a multicoloring each point can obtain several colors (but at least one) such that points colored by the same color form a chain. The reason for which we prefer multicoloring over coloring is that this new approach helps Algorithm remember more information for the future and keeps track about the past.

The main idea is that the multicoloring of a new point depends only on at most $3w$ points, where $w$ is the width of the poset. Loosely speaking, Algorithm chooses colors for the new point by analyzing ``the nearest neighborhood'' of the new point. In this way Algorithm plays new game, which is named a local game. After finding a strategy for Algorithm in this local game, we will reach our goal, provided the number of colors used in the created multicoloring is small enough and provided we can show that this local multicoloring can actually depend only on this local neighborhood of the incoming point. This has to be done in a very careful way, as otherwise the points colored by the same color would not form a chain. Subsection \ref{SS:main} includes the definitions that formalize presented intuitions.

Subsections \ref{SS:different} and \ref{SS:core} present two next reductions which finally lead to the following local game played by Spoiler and Algorithm on a structure $(L, T, \leq, \chains)$, where:
\begin{itemize}
\item $\brackets{L, T,\leq}$ is a regular bipartite poset, i.e.
\begin{itemize}
\item $(L\cup T,\leq)$ is a poset,
\item $L,T$ are disjoint antichains with $L\al T$,
\item $\abs{L}=\abs{T}=\fWidth{L\cup T}$;
\end{itemize}
\item moreover $\brackets{L, T,\leq}$ is a core, i.e. the set of edges of digraph $\brackets{L, T,\prec}$ is a sum of all perfect matchings between $L$ and $T$;
\item $\chains:L\cup T\tto \powerp{\Gamma}$ is the multicoloring, where $\Gamma$ is a finite set of colors fixed by Algorithm already in the first round.
\end{itemize}
In the following for a multicoloring $\chains:P\tto \powerp{\Gamma}$ and $X\subseteq P$ we will denote the set $\bigcup_{x\in X}\fC{}{x}$ simply by $\fC{}{X}$.

During the first round:
\begin{itemize}
\item Spoiler sets a natural number $w$ and then introduces two antichains $L,T\!$, each with $w$ elements such, that $L<T$.
\item Algorithm determines a finite set $\Gamma$ of colors that may be used in the entire game and then he colors each point $x\in L\cup T$ with some nonempty subset $\fC{}{x}$ of $\Gamma$ such that for each $\gamma\in\Gamma$ the points colored by $\gamma$ form a chain.
\end{itemize}
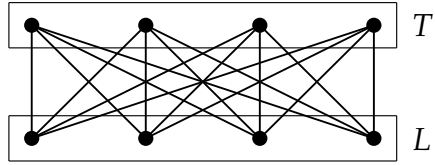
\begin{figure}[hbt]
%\begin{scriptsize}
\centering\ifx\JPicScale\undefined\def\JPicScale{1}\fi
\psset{unit=\JPicScale mm}
\psset{linewidth=0.3,dotsep=1,hatchwidth=0.3,hatchsep=1.5,shadowsize=1,dimen=middle}
\psset{dotsize=0.7 2.5,dotscale=1 1,fillcolor=black}
\psset{arrowsize=1 2,arrowlength=1,arrowinset=0.25,tbarsize=0.7 5,bracketlength=0.15,rbracketlength=0.15}
\begin{pspicture}(0,0)(53,22)
\psline[linewidth=0.25,fillcolor=white,fillstyle=solid](3,19)(3,4)
\psline[linewidth=0.25,fillcolor=white,fillstyle=solid](18,19)(18,4)
\psline[linewidth=0.25,fillcolor=white,fillstyle=solid](33,19)(33,4)
\psline[linewidth=0.25,fillcolor=white,fillstyle=solid](3,19)(18,4)
\psline[linewidth=0.25,fillcolor=white,fillstyle=solid](3,19)(33,4)
\psline[linewidth=0.25,fillcolor=white,fillstyle=solid](18,19)(3,4)
\psline[linewidth=0.25,fillcolor=white,fillstyle=solid](33,19)(3,4)
\psline[linewidth=0.25,fillcolor=white,fillstyle=solid](18,19)(33,4)
\psline[linewidth=0.25,fillcolor=white,fillstyle=solid](33,19)(18,4)
\rput[l](53,19){$T$}
\rput[l](53,4){$L$}
\psline[linewidth=0.25,fillcolor=white,fillstyle=solid](3,19)(48,4)
\psline[linewidth=0.25,fillcolor=white,fillstyle=solid](33,4)(48,19)
\psline[linewidth=0.25,fillcolor=white,fillstyle=solid](33,19)(48,4)
\psline[linewidth=0.25,fillcolor=white,fillstyle=solid](48,4)(48,19)
\psline[linewidth=0.15](51,1)(51,7)
\psline[linewidth=0.25,fillcolor=white,fillstyle=solid](48,19)(3,4)
\psline[linewidth=0.25,fillcolor=white,fillstyle=solid](18,19)(48,4)
\psline[linewidth=0.25,fillcolor=white,fillstyle=solid](48,19)(18,4)
\psline[linewidth=0.15](51,22)(0,22)
\psline[linewidth=0.15](51,16)(0,16)
\psline[linewidth=0.15](51,7)(0,7)
\psline[linewidth=0.15](51,1)(0,1)
\psline[linewidth=0.15](0,22)(0,16)
\psline[linewidth=0.15](0,7)(0,1)
\psline[linewidth=0.15](51,16)(51,22)
\rput{0}(3,19){\psellipse[linewidth=0.25,linestyle=none,fillstyle=solid](0,0)(1,1)}
\rput{0}(18,19){\psellipse[linewidth=0.25,linestyle=none,fillstyle=solid](0,0)(1,1)}
\rput{0}(33,19){\psellipse[linewidth=0.25,linestyle=none,fillstyle=solid](0,0)(1,1)}
\rput{0}(48,19){\psellipse[linewidth=0.25,linestyle=none,fillstyle=solid](0,0)(1,1)}
\rput{0}(3,4){\psellipse[linewidth=0.25,linestyle=none,fillstyle=solid](0,0)(1,1)}
\rput{0}(18,4){\psellipse[linewidth=0.25,linestyle=none,fillstyle=solid](0,0)(1,1)}
\rput{0}(33,4){\psellipse[linewidth=0.25,linestyle=none,fillstyle=solid](0,0)(1,1)}
\rput{0}(48,4){\psellipse[linewidth=0.25,linestyle=none,fillstyle=solid](0,0)(1,1)}
\end{pspicture}
%\end{scriptsize}
\caption{Poset introduced by Spoiler for $w=4$.}\label{F:LT}
\end{figure} 
During next rounds the structure $\brackets{L,T,\leq,\chains}$ from the previous round is transformed to a structure $\brackets{L^{\plus},T^{\plus},\leq,\chains^{\plus}}$ according to the following rules:
\begin{itemize}
\item Spoiler introduces $w$ new elements that form an antichain $M$ such that:
\begin{itemize}
\item the poset $\poset{B}'=\brackets{L\cup M\cup T,\leq}$ has width $w$,
\item $L\al M\al T$ in the lattice $\fMA{\poset{B}'}$,
\item both $\brackets{L, M,\leq}$ and $\brackets{M, T,\leq}$ are cores.
\end{itemize}
\item Algorithm colors each point $m\in M$ with a nonempty set of colors $\fC{\plus}{m}\subseteq\fC{}{L}\cap\fC{}{T}$ and keeps the old multicoloring on $L\cup T$, i.e. $\rest{\chains^{\plus}}{L\cup T}=\chains$.
In fact, in order for $\chains^{\plus}$ to be multicoloring, $\fC{\plus}{m}$ has to be a subset of $\fC{}{m\downseto}\cap\fC{}{m\upseto}$.
\item Finally, Spoiler redefines the levels $L, T$ to $L^{\plus}, T^{\plus}$ so that either $\brackets{L^{\plus},T^{\plus}}=\brackets{L,M}$ or $\brackets{L^{\plus},T^{\plus}}=\brackets{M,T}$.
This, after restricting to $L^{\plus}\cup T^{\plus}\!$, creates the new structure $\brackets{L^{\plus},T^{\plus},\leq,\chains^{\plus}}$.
\end{itemize}
Figure \ref{F:Example Local Disjoint Core} presents an example of Spoiler's move, for $w=4$, in some further round. First Spoiler introduces an antichain $M$ of $4$ new points. After Algorithm's move Spoiler chooses $(M,T)$ for $(L^{\plus},T^{\plus})$.
\begin{figure}[hbt]
%\begin{smnall}
\centering\ifx\JPicScale\undefined\def\JPicScale{1}\fi
\psset{unit=\JPicScale mm}
\psset{linewidth=0.3,dotsep=1,hatchwidth=0.3,hatchsep=1.5,shadowsize=1,dimen=middle}
\psset{dotsize=0.7 2.5,dotscale=1 1,fillcolor=black}
\psset{arrowsize=1 2,arrowlength=1,arrowinset=0.25,tbarsize=0.7 5,bracketlength=0.15,rbracketlength=0.15}
\begin{pspicture}(0,0)(119,30)
\psline[linewidth=0.25,fillcolor=white,fillstyle=solid](2,23)(2,13)
\psline[linewidth=0.25,fillcolor=white,fillstyle=solid](22,23)(22,13)
\psline[linewidth=0.25,fillcolor=white,fillstyle=solid](22,23)(32,13)
\psline[linewidth=0.25,fillcolor=white,fillstyle=solid](12,23)(32,13)
\psline[linewidth=0.25,fillcolor=white,fillstyle=solid](12,13)(12,23)
\psline[linewidth=0.25,fillcolor=white,fillstyle=solid](22,23)(12,13)
\psline[linewidth=0.25,fillcolor=white,fillstyle=solid](32,23)(32,13)
\rput[l](35,23){$T$}
\rput[l](35,13){$L$}
\psline[linewidth=0.25,fillcolor=white,fillstyle=solid](22,13)(12,23)
\psline[linewidth=0.25,fillcolor=white,fillstyle=solid](32,23)(22,13)
\psline[linewidth=0.1](34,15)(0,15)
\psline[linewidth=0.1](34,11)(0,11)
\psline[linewidth=0.1](0,15)(0,11)
\rput{0}(2,23){\psellipse[linewidth=0.25,linestyle=none,fillstyle=solid](0,0)(1,1)}
\rput{0}(12,13){\psellipse[linewidth=0.25,linestyle=none,fillstyle=solid](0,0)(1,1)}
\rput{0}(22,23){\psellipse[linewidth=0.25,linestyle=none,fillstyle=solid](0,0)(1,1)}
\rput{0}(32,23){\psellipse[linewidth=0.25,linestyle=none,fillstyle=solid](0,0)(1,1)}
\rput{0}(2,13){\psellipse[linewidth=0.25,linestyle=none,fillstyle=solid](0,0)(1,1)}
\rput{0}(12,23){\psellipse[linewidth=0.25,linestyle=none,fillstyle=solid](0,0)(1,1)}
\rput{0}(22,13){\psellipse[linewidth=0.25,linestyle=none,fillstyle=solid](0,0)(1,1)}
\rput{0}(32,13){\psellipse[linewidth=0.25,linestyle=none,fillstyle=solid](0,0)(1,1)}
\psline[linewidth=0.1](34,15)(34,11)
\psline[linewidth=0.1](34,21)(0,21)
\psline[linewidth=0.1](34,25)(0,25)
\psline[linewidth=0.1](34,25)(34,21)
\psline[linewidth=0.1](0,25)(0,21)
\psline[linewidth=0.25,fillcolor=white,fillstyle=solid](44,18)(44,8)
\psline[linewidth=0.25,fillcolor=white,fillstyle=solid](64,18)(64,8)
\psline[linewidth=0.25,fillcolor=white,fillstyle=solid](64,18)(74,8)
\psline[linewidth=0.25,fillcolor=white,fillstyle=solid](54,8)(54,18)
\psline[linewidth=0.25,fillcolor=white,fillstyle=solid](74,18)(74,8)
\rput[l](77,18){$M$}
\rput[l](77,8){$L$}
\psline[linewidth=0.25,fillcolor=white,fillstyle=solid](74,18)(64,8)
\psline[linewidth=0.1](76,10)(42,10)
\psline[linewidth=0.1](76,6)(42,6)
\psline[linewidth=0.1](42,10)(42,6)
\rput{0}(44,18){\psellipse[linewidth=0.25,linestyle=none,fillstyle=solid](0,0)(1,1)}
\rput{0}(54,8){\psellipse[linewidth=0.25,linestyle=none,fillstyle=solid](0,0)(1,1)}
\rput{0}(64,18){\psellipse[linewidth=0.25,linestyle=none,fillstyle=solid](0,0)(1,1)}
\rput{0}(74,18){\psellipse[linewidth=0.25,linestyle=none,fillstyle=solid](0,0)(1,1)}
\rput{0}(44,8){\psellipse[linewidth=0.25,linestyle=none,fillstyle=solid](0,0)(1,1)}
\rput{0}(54,18){\psellipse[linewidth=0.25,linestyle=none,fillstyle=solid](0,0)(1,1)}
\rput{0}(64,8){\psellipse[linewidth=0.25,linestyle=none,fillstyle=solid](0,0)(1,1)}
\rput{0}(74,8){\psellipse[linewidth=0.25,linestyle=none,fillstyle=solid](0,0)(1,1)}
\psline[linewidth=0.1](76,10)(76,6)
\psline[linewidth=0.1](76,16)(42,16)
\psline[linewidth=0.1](76,20)(42,20)
\psline[linewidth=0.1](76,20)(76,16)
\psline[linewidth=0.1](42,20)(42,16)
\psline[linewidth=0.25,fillcolor=white,fillstyle=solid](44,28)(44,18)
\psline[linewidth=0.25,fillcolor=white,fillstyle=solid](64,28)(74,18)
\psline[linewidth=0.25,fillcolor=white,fillstyle=solid](54,18)(54,28)
\psline[linewidth=0.25,fillcolor=white,fillstyle=solid](64,28)(54,18)
\psline[linewidth=0.25,fillcolor=white,fillstyle=solid](74,28)(74,18)
\rput[l](77,28){$T$}
\psline[linewidth=0.25,fillcolor=white,fillstyle=solid](64,18)(54,28)
\rput{0}(44,28){\psellipse[linewidth=0.25,linestyle=none,fillstyle=solid](0,0)(1,1)}
\rput{0}(64,28){\psellipse[linewidth=0.25,linestyle=none,fillstyle=solid](0,0)(1,1)}
\rput{0}(74,28){\psellipse[linewidth=0.25,linestyle=none,fillstyle=solid](0,0)(1,1)}
\rput{0}(54,28){\psellipse[linewidth=0.25,linestyle=none,fillstyle=solid](0,0)(1,1)}
\psline[linewidth=0.1](76,26)(42,26)
\psline[linewidth=0.1](76,30)(42,30)
\psline[linewidth=0.1](76,30)(76,26)
\psline[linewidth=0.1](42,30)(42,26)
\psline[linewidth=0.25,fillcolor=white,fillstyle=solid](64,18)(74,28)
\rput[l](119,18){$L^+$}
\rput{0}(86,18){\psellipse[linewidth=0.25,linestyle=none,fillstyle=solid](0,0)(1,1)}
\rput{0}(106,18){\psellipse[linewidth=0.25,linestyle=none,fillstyle=solid](0,0)(1,1)}
\rput{0}(116,18){\psellipse[linewidth=0.25,linestyle=none,fillstyle=solid](0,0)(1,1)}
\rput{0}(96,18){\psellipse[linewidth=0.25,linestyle=none,fillstyle=solid](0,0)(1,1)}
\psline[linewidth=0.1](118,16)(84,16)
\psline[linewidth=0.1](118,20)(84,20)
\psline[linewidth=0.1](118,20)(118,16)
\psline[linewidth=0.1](84,20)(84,16)
\psline[linewidth=0.25,fillcolor=white,fillstyle=solid](86,28)(86,18)
\psline[linewidth=0.25,fillcolor=white,fillstyle=solid](106,28)(116,18)
\psline[linewidth=0.25,fillcolor=white,fillstyle=solid](96,18)(96,28)
\psline[linewidth=0.25,fillcolor=white,fillstyle=solid](106,28)(96,18)
\psline[linewidth=0.25,fillcolor=white,fillstyle=solid](116,28)(116,18)
\rput[l](119,28){$T^+$}
\psline[linewidth=0.25,fillcolor=white,fillstyle=solid](106,18)(96,28)
\rput{0}(86,28){\psellipse[linewidth=0.25,linestyle=none,fillstyle=solid](0,0)(1,1)}
\rput{0}(106,28){\psellipse[linewidth=0.25,linestyle=none,fillstyle=solid](0,0)(1,1)}
\rput{0}(116,28){\psellipse[linewidth=0.25,linestyle=none,fillstyle=solid](0,0)(1,1)}
\rput{0}(96,28){\psellipse[linewidth=0.25,linestyle=none,fillstyle=solid](0,0)(1,1)}
\psline[linewidth=0.1](118,26)(84,26)
\psline[linewidth=0.1](118,30)(84,30)
\psline[linewidth=0.1](118,30)(118,26)
\psline[linewidth=0.1](84,30)(84,26)
\psline[linewidth=0.25,fillcolor=white,fillstyle=solid](106,18)(116,28)
\rput[B](38,0.5){\begin{footnotesize}putting an antichain $M$\end{footnotesize}}
\psline[linewidth=0.2,fillcolor=white,fillstyle=solid](19,-1)(57,-1)
\psline[linewidth=0.2,fillcolor=white,fillstyle=solid](56,0)(57.07,-1.07)
\psline[linewidth=0.2,fillcolor=white,fillstyle=solid](57.07,-0.93)(56,-2)
\rput(53,0){}
\rput[B](80,0.5){\begin{footnotesize}choosing new structure\end{footnotesize}}
\psline[linewidth=0.2,fillcolor=white,fillstyle=solid](61,-1)(99,-1)
\psline[linewidth=0.2,fillcolor=white,fillstyle=solid](98,0)(99.07,-1.07)
\psline[linewidth=0.2,fillcolor=white,fillstyle=solid](99.07,-0.94)(98,-2)
\rput(95,0){}
\end{pspicture}
%\end{small}
\caption{$ $}\label{F:Example Local Disjoint Core}
\end{figure}
 
The goal of Algorithm is to pick minimal number $\abs{\Gamma}$ of colors already during the first round so that he can play with these colors forever. If this number is too small Spoiler can accomplish his goal i.e. produce $\poset{B}'$ that cannot be colored by $\Gamma$ according to the described rules.
The smallest possible number of colors for width $w$ with which Algorithm can play forever will be denoted by $\fLCP{w}$.

Note that in the first round Algorithm actually sets an upper bound for the number of colors, say $k$, used later. This game is named local because the size of information necessary to describe the structure $\brackets{L,T,\leq,\chains}$ after every round is always bounded by $w, w, w^2,2^{2kw}$, respectively. This obviously stays in contrast to the continuously increasing size of posets in on-line chain partitioning problem.

Our effort in reducing the on-line chain partitioning problem
to the local game will pay off, as we will show in Corollary \ref{R:CP sum LCP} that $\fCP{w}\leq\fLCP{1} +\ldots+\fLCP{w}$. This in turn, allows to determine $\fCP{w}$ by considering only local games on posets of width at most $w$. Unfortunately, at present, we are able to determine $\fLCP{w}$ only for $w\leq 3$. This will be done in Section \ref{S:w<=3}, in a way that all known bounds for $\fCP{3}$ are substantially improved.

\subsection{Localizing the game} \label{SS:main} In this subsection we reduce the on-line chain partitioning problem to a more general version of a local game than the one described before. In this more general version we relax the assumptions that $L\cap T=\emptyset$ and that $\brackets{L, T,\leq}$ is a core. This allows Spoiler to perform a more relaxed move to present an antichain $M$ which is not necessarily disjoint with $L\cup T$ and does not have to form cores \mbox{$\brackets{L, M,\leq}$}, \mbox{$\brackets{M, T,\leq}$}. In particular Spoiler may actually introduce less than $w$ completely new points. However he has to discover the ``middle'' antichain $M$ of exactly $w$ points. For example, let $L$ and $T$ be like on Figure \ref{F:Example Local}.
\begin{figure}[hbt]
%\begin{smnall}
\centering\ifx\JPicScale\undefined\def\JPicScale{1}\fi
\psset{unit=\JPicScale mm}
\psset{linewidth=0.3,dotsep=1,hatchwidth=0.3,hatchsep=1.5,shadowsize=1,dimen=middle}
\psset{dotsize=0.7 2.5,dotscale=1 1,fillcolor=black}
\psset{arrowsize=1 2,arrowlength=1,arrowinset=0.25,tbarsize=0.7 5,bracketlength=0.15,rbracketlength=0.15}
\begin{pspicture}(0,0)(121.5,30)
\psline[linewidth=0.25,fillcolor=white,fillstyle=solid](22,23)(22,13)
\psline[linewidth=0.25,fillcolor=white,fillstyle=solid](22,23)(32,13)
\psline[linewidth=0.25,fillcolor=white,fillstyle=solid](12,23)(32,13)
\psline[linewidth=0.25,fillcolor=white,fillstyle=solid](12,13)(12,23)
\psline[linewidth=0.25,fillcolor=white,fillstyle=solid](22,23)(12,13)
\psline[linewidth=0.25,fillcolor=white,fillstyle=solid](32,23)(32,13)
\rput[l](35,23){$T$}
\rput[l](35.5,13){$L$}
\psline[linewidth=0.25,fillcolor=white,fillstyle=solid](22,13)(12,23)
\psline[linewidth=0.25,fillcolor=white,fillstyle=solid](32,23)(22,13)
\psline[linewidth=0.1](34.5,15.5)(10.5,15.5)
\psline[linewidth=0.1](34.5,10.5)(9.5,10.5)
\psline[linewidth=0.1](-0.5,20.5)(-0.5,15.5)
\rput{0}(12,13){\psellipse[linewidth=0.25,linestyle=none,fillstyle=solid](0,0)(1,1)}
\rput{0}(22,23){\psellipse[linewidth=0.25,linestyle=none,fillstyle=solid](0,0)(1,1)}
\rput{0}(32,23){\psellipse[linewidth=0.25,linestyle=none,fillstyle=solid](0,0)(1,1)}
\rput{0}(2,18){\psellipse[linewidth=0.25,linestyle=none,fillstyle=solid](0,0)(1,1)}
\rput{0}(12,23){\psellipse[linewidth=0.25,linestyle=none,fillstyle=solid](0,0)(1,1)}
\rput{0}(22,13){\psellipse[linewidth=0.25,linestyle=none,fillstyle=solid](0,0)(1,1)}
\rput{0}(32,13){\psellipse[linewidth=0.25,linestyle=none,fillstyle=solid](0,0)(1,1)}
\psline[linewidth=0.1](34.5,15.5)(34.5,10.5)
\psline[linewidth=0.1](34,21)(10,21)
\psline[linewidth=0.1](34,25)(10,25)
\psline[linewidth=0.1](34,25)(34,21)
\psline[linewidth=0.1](0,20)(0,16)
\psline[linewidth=0.25,fillcolor=white,fillstyle=solid](64,18)(64,8)
\psline[linewidth=0.25,fillcolor=white,fillstyle=solid](64,18)(74,8)
\psline[linewidth=0.25,fillcolor=white,fillstyle=solid](74,18)(74,8)
\rput[l](82,15.5){$M$}
\rput[l](81,8){$L$}
\psline[linewidth=0.25,fillcolor=white,fillstyle=solid](74,18)(64,8)
\psline[linewidth=0.1](79.5,11)(63,11)
\psline[linewidth=0.1](79.5,5)(61,5)
\psline[linewidth=0.1](41,21)(41,15)
\rput{0}(44,18){\psellipse[linewidth=0.25,linestyle=none,fillstyle=solid](0,0)(1,1)}
\rput{0}(64,18){\psellipse[linewidth=0.25,linestyle=none,fillstyle=solid](0,0)(1,1)}
\rput{0}(74,18){\psellipse[linewidth=0.25,linestyle=none,fillstyle=solid](0,0)(1,1)}
\rput{0}(54,13){\psellipse[linewidth=0.25,linestyle=none,fillstyle=solid](0,0)(1,1)}
\rput{0}(64,8){\psellipse[linewidth=0.25,linestyle=none,fillstyle=solid](0,0)(1,1)}
\rput{0}(74,8){\psellipse[linewidth=0.25,linestyle=none,fillstyle=solid](0,0)(1,1)}
\psline[linewidth=0.1](79.5,11)(79.5,5)
\psline[linewidth=0.1](81,15.5)(62.5,15.5)
\psline[linewidth=0.1](81,20.5)(62,20.5)
\psline[linewidth=0.1](81,20.5)(81,15.5)
\psline[linewidth=0.1](41.5,20.5)(41.5,15.5)
\psline[linewidth=0.25,fillcolor=white,fillstyle=solid](64,28)(74,18)
\psline[linewidth=0.25,fillcolor=white,fillstyle=solid](54,13)(54,28)
\psline[linewidth=0.25,fillcolor=white,fillstyle=solid](64,28)(54,13)
\psline[linewidth=0.25,fillcolor=white,fillstyle=solid](74,28)(74,18)
\rput[l](80,28){$T$}
\psline[linewidth=0.25,fillcolor=white,fillstyle=solid](64,18)(54,28)
\rput{0}(64,28){\psellipse[linewidth=0.25,linestyle=none,fillstyle=solid](0,0)(1,1)}
\rput{0}(74,28){\psellipse[linewidth=0.25,linestyle=none,fillstyle=solid](0,0)(1,1)}
\rput{0}(54,28){\psellipse[linewidth=0.25,linestyle=none,fillstyle=solid](0,0)(1,1)}
\psline[linewidth=0.1](78.5,26)(52,26)
\psline[linewidth=0.1](78.5,30)(52,30)
\psline[linewidth=0.1](78.5,30)(78.5,26)
\psline[linewidth=0.1](42,20)(42,16)
\psline[linewidth=0.25,fillcolor=white,fillstyle=solid](64,18)(64,28)
\rput[B](38,0.5){\begin{footnotesize}putting points $m_1, m_2$\end{footnotesize}}
\psline[linewidth=0.2,fillcolor=white,fillstyle=solid](19,-1)(57,-1)
\psline[linewidth=0.2,fillcolor=white,fillstyle=solid](56,0)(57.07,-1.07)
\psline[linewidth=0.2,fillcolor=white,fillstyle=solid](57.07,-0.93)(56,-2)
\rput(53,0){}
\rput[B](81,0.5){\begin{footnotesize}choosing new structure\end{footnotesize}}
\rput(97,0){}
\psline[linewidth=0.1](0,16)(4,16)
\psline[linewidth=0.1](0,20)(4,20)
\psline[linewidth=0.1](4,20)(10,25)
\psline[linewidth=0.1](4,16)(10,21)
\psline[linewidth=0.1](3.5,15.5)(9.5,10.5)
\psline[linewidth=0.1](3.5,15.5)(-0.5,15.5)
\psline[linewidth=0.1](-0.5,20.5)(4.5,20.5)
\psline[linewidth=0.1](10.5,15.5)(4.5,20.5)
\psline[linewidth=0.1](56.5,10.5)(51,10.5)
\psline[linewidth=0.1](45.5,15.5)(41.5,15.5)
\psline[linewidth=0.1](56,15.5)(52.5,15.5)
\psline[linewidth=0.1](46.5,20.5)(41.5,20.5)
\psline[linewidth=0.1](46.5,20.5)(52.5,15.5)
\psline[linewidth=0.1](45.5,15.5)(51,10.5)
\psline[linewidth=0.1](56,15.5)(62,20.5)
\psline[linewidth=0.1](56.5,10.5)(62.5,15.5)
\psline[linewidth=0.1](46,20)(52,30)
\psline[linewidth=0.1](46,16)(52,26)
\psline[linewidth=0.1](46,20)(42,20)
\psline[linewidth=0.1](46,16)(42,16)
\psline[linewidth=0.1](41,15)(45,15)
\psline[linewidth=0.1](45,15)(50.5,10)
\psline[linewidth=0.1](41,21)(47,21)
\psline[linewidth=0.1](50.5,10)(55,10)
\psline[linewidth=0.1](55,10)(61,5)
\psline[linewidth=0.1](47,21)(53,16)
\psline[linewidth=0.1](53,16)(57,16)
\psline[linewidth=0.1](57,16)(63,11)
\psline[linewidth=0.25,fillcolor=white,fillstyle=solid](108,28)(118,18)
\psline[linewidth=0.25,fillcolor=white,fillstyle=solid](98,18)(98,28)
\psline[linewidth=0.25,fillcolor=white,fillstyle=solid](108,28)(98,18)
\psline[linewidth=0.25,fillcolor=white,fillstyle=solid](118,28)(118,18)
\rput[l](121,28){$T^+$}
\rput[l](121.5,18){$L^+$}
\psline[linewidth=0.25,fillcolor=white,fillstyle=solid](108,18)(98,28)
\psline[linewidth=0.25,fillcolor=white,fillstyle=solid](108,28)(108,18)
\psline[linewidth=0.1](120.5,20.5)(96.5,20.5)
\psline[linewidth=0.1](120.5,15.5)(95.5,15.5)
\psline[linewidth=0.1](85.5,25.5)(85.5,20.5)
\rput{0}(98,18){\psellipse[linewidth=0.25,linestyle=none,fillstyle=solid](0,0)(1,1)}
\rput{0}(108,28){\psellipse[linewidth=0.25,linestyle=none,fillstyle=solid](0,0)(1,1)}
\rput{0}(118,28){\psellipse[linewidth=0.25,linestyle=none,fillstyle=solid](0,0)(1,1)}
\rput{0}(88,23){\psellipse[linewidth=0.25,linestyle=none,fillstyle=solid](0,0)(1,1)}
\rput{0}(98,28){\psellipse[linewidth=0.25,linestyle=none,fillstyle=solid](0,0)(1,1)}
\rput{0}(108,18){\psellipse[linewidth=0.25,linestyle=none,fillstyle=solid](0,0)(1,1)}
\rput{0}(118,18){\psellipse[linewidth=0.25,linestyle=none,fillstyle=solid](0,0)(1,1)}
\psline[linewidth=0.1](120.5,20.5)(120.5,15.5)
\psline[linewidth=0.1](120,26)(96,26)
\psline[linewidth=0.1](120,30)(96,30)
\psline[linewidth=0.1](120,30)(120,26)
\psline[linewidth=0.1](86,25)(86,21)
\psline[linewidth=0.1](86,21)(90,21)
\psline[linewidth=0.1](86,25)(90,25)
\psline[linewidth=0.1](90,25)(96,30)
\psline[linewidth=0.1](90,21)(96,26)
\psline[linewidth=0.1](89.5,20.5)(95.5,15.5)
\psline[linewidth=0.1](89.5,20.5)(85.5,20.5)
\psline[linewidth=0.1](85.5,25.5)(90.5,25.5)
\psline[linewidth=0.1](96.5,20.5)(90.5,25.5)
\rput(68,17.5){\begin{small}$m_1$\end{small}}
\rput(78,17.5){\begin{small}$m_2$\end{small}}
\psline[linewidth=0.2,fillcolor=white,fillstyle=solid](61,-1)(101,-1)
\psline[linewidth=0.2,fillcolor=white,fillstyle=solid](100,0)(101.07,-1.07)
\psline[linewidth=0.2,fillcolor=white,fillstyle=solid](101.07,-0.93)(100,-2)
\end{pspicture}
%\end{small}
\caption{$ $}\label{F:Example Local}
\end{figure} 
In spite of the fact that the poset $\brackets{L\cup T,\leq}$ has width $4$, Spoiler puts only $2$ new points $m_1$, $m_2$. However Spoiler indicates a $4$-element antichain $M$ which contains $m_1$ and $m_2$. After Algorithm colors $m_1$ and $m_2$ Spoiler chooses $(M,T)$ for $(L^{\plus},T^{\plus})$.

\begin{defn} \label{D:local game}
By a \emph{local on-line coloring game} we mean the following two-person game between Spoiler and Algorithm. During the first round:
\begin{itemize}
\item Spoiler sets a natural number $w$ and then introduces two antichains $L,T$, each with $w$ elements, such that $L<T$.
\item Algorithm determines a finite set $\Gamma$ of colors that may be used in the entire game and then he colors each point $x\in L\cup T$ with some nonempty subset $\fC{}{x}$ of $\Gamma$ such that for each $\gamma\in\Gamma$ the points colored by $\gamma$ form a chain.
\end{itemize}
%\begin{figure}[hbt]
%%\begin{scriptsize}
%\centering\input{chapter-02-pict-03}
%%\end{scriptsize}
%\caption{Poset introduced by Spoiler for width $4$.}\label{F:LT}
%\end{figure} 
The result of the first round (and as we will see of each other round) is a structure $\brackets{L,T,\leq,\chains}$, where:
\begin{itemize}
\item the poset $\brackets{L\cup T,\leq}$ has width $w$,
\item $L,T$ are two antichains of size $w$ such that $L\al T$,
\item $\chains:L\cup T\tto \powerp{\Gamma}$ is the multicoloring, i.e. for each color $\gamma\in\Gamma$ set of the form $C_{\gamma}:=\set{p\in L\cup T:\gamma\in\fC{}{p}}$ is a chain.
\end{itemize}
The structure $\brackets{L,T,\leq,\chains}$ introduced in the first round is called the \emph{initial board}.

During next rounds a \emph{board} $\brackets{L,T,\leq,\chains}$ from the previous round is transformed to a board $\brackets{L^{\plus},T^{\plus},\leq,\chains^{\plus}}$ according to the following rules:
\begin{itemize}
\item Spoiler introduces $v\leq w$ new points $m_1, m_2,\ldots, m_v$ and reveals a $w$-element subset $M$ of the already created poset $\poset{B}'=\brackets{L\cup T\cup\set{m_1,\ldots,m_v},\leq}$ so that:
\begin{itemize}
\item $\fWidth{\poset{B}'}=w$,
\item the revealed subset $M$ satisfies
\begin{itemize}
\item $m_1,\ldots,m_v\in M$,
\item $M$ is a maximum antichain in $\poset{B}'$, i.e. $M\in\fMA{\poset{B}'}$,
\item $L\al M\al T$ in the lattice $\fMA{\poset{B}'}$,
%\item $\fWidth{M}=w$.
\end{itemize}
\end{itemize}
\item Algorithm colors each point $m_i$ with a nonempty set of colors $\fC{\plus}{m_i}\subseteq\fC{}{L}\cap\fC{}{T}$ and keeps the old multicoloring on $L\cup T$, i.e. $\rest{\chains^{\plus}}{L\cup T}=\chains$.
\item Finally, Spoiler redefines the levels $L,T$ to $L^{\plus},T^{\plus}$ so that either $\brackets{L^{\plus},T^{\plus}}=\brackets{L,M}$ or $\brackets{L^{\plus},T^{\plus}}=\brackets{M,T}$.
This, after restricting to $L^{\plus}\cup T^{\plus}\!$, creates the new board $\brackets{L^{\plus},T^{\plus},\leq,\chains^{\plus}}$.
\end{itemize}
Once the width $w$ is set by Spoiler in the first round we refer to the rest of this game by \emph{$w$-local on-line coloring game}.

The goal of Algorithm is to pick minimal number $\abs{\Gamma}$ of colors already during the first round so that he can play with these colors forever.
\end{defn}

\begin{defn}
The \emph{value $\fLCP{w}$ of the $w$-local on-line coloring game} for posets (of width $w$) is the least integer $k$, such that there is strategy for Algorithm to play with $k$ colors forever.
\end{defn}

Our motivation for introducing local on-line coloring games lies in the following Theorem.

%\clearpage

\begin{thm}\label{L:local} 
Suppose that for each $v=1,\ldots,w$ there is an algorithm that can play forever in $v$-local on-line coloring game with $\fVal{v}$ colors. Then this algorithm can be used to build an on-line algorithm for chain partitioning using at most $\fVal{1}+\fVal{2}+\ldots+\fVal{w}$ chains.
\end{thm}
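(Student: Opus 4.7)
My plan is induction on $w$. The base case $w=1$ holds because a width-$1$ on-line poset is a chain, one color suffices, and $\fVal{1}\geq 1$. For the inductive step, assume the conclusion for widths less than $w$ and consider an on-line poset $\poset{P}^{\ll}$ of width at most $w$, together with winning strategies $\sigma_1,\ldots,\sigma_w$ for the respective $v$-local on-line coloring games using pairwise disjoint color palettes $\Gamma_1,\ldots,\Gamma_w$ of sizes $\fVal{1},\ldots,\fVal{w}$.

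I will split each newly arriving point into one of two categories: a \emph{top} point, to be colored from $\Gamma_w$ by $\sigma_w$ (interpreted as a Spoiler move in a simulated $w$-local game), or a \emph{bottom} point, to be colored by the on-line algorithm given by the inductive hypothesis, which uses only colors from $\Gamma_1\cup\ldots\cup\Gamma_{w-1}$. A natural classification rule is: declare $p_{n+1}$ top whenever the updated initial segment $P^{(n+1)}$ contains a $w$-element antichain through $p_{n+1}$, and bottom otherwise. Let $U$ and $D=P\setminus U$ be the accumulated top and bottom subsets. The easy observation is that $D$ has width at most $w-1$ throughout: if $A\subseteq D$ were a $w$-element antichain, then the last-arriving point $p$ of $A$ would, at its arrival, have found $A\subseteq P^{(n+1)}$ a $w$-antichain through $p$, so the rule would have classified $p$ as top, a contradiction. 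Hence the inductive hypothesis yields a chain partition of $D$ using $\fVal{1}+\ldots+\fVal{w-1}$ colors.

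The harder direction is to color $U$ on-line with $\sigma_w$ using only $\Gamma_w$. The plan is to maintain, as top points accumulate, a board $(L,T,\leq,\chains_w)$ of the $w$-local game, with $L,T\subseteq U$ size-$w$ antichains satisfying $L\aleq T$, and to re-interpret each arriving top point $p_{n+1}$ as a Spoiler move that introduces a single new point $p_{n+1}$ and reveals a middle antichain $M\ni p_{n+1}$ of size $w$ inside the poset $\brackets{L\cup T\cup\set{p_{n+1}},\leq}$, arranged so that $L\aleq M\aleq T$ in the lattice $\fMA{L\cup T\cup\set{p_{n+1}}}$. Then $\sigma_w$ answers with a multicoloring $\chains_w(p_{n+1})\subseteq\chains_w(L)\cap\chains_w(T)\subseteq\Gamma_w$, after which I update the board to $(L,M)$ or $(M,T)$. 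Since chains within either subposet $U$ or $D$ remain chains in $P$ and since the two palettes are disjoint, the combined algorithm colors $P$ using at most $\fVal{1}+\ldots+\fVal{w}$ chains.

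The main obstacle I foresee is precisely the simulation of the $w$-local game on $U$: I have to bootstrap the board before $U$ contains two disjoint size-$w$ antichains, and at every later step I have to locate $M\ni p_{n+1}$ of size $w$ inside $L\cup T\cup\set{p_{n+1}}$ with $L\aleq M\aleq T$ — even though the size-$w$ antichain through $p_{n+1}$ guaranteed by the classification rule may lie partly in $D$ rather than in $L\cup T$. Showing that a suitable $M$ (and a suitable initial board) can nevertheless always be produced, using Theorem \ref{T:lattice} and Observation \ref{O: max=sup min=inf} to navigate the lattice $\fMA{\cdot}$ of maximum antichains, is the technical crux, and it is precisely what Subsections \ref{SS:different} and \ref{SS:core} organize via further reductions to the canonical ``core'' form of the game.
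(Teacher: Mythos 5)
Your top/bottom split is an attractive way to set up an induction, and your argument that $D$ has width at most $w-1$ is correct. But the part you flag as the ``technical crux'' is a genuine gap, not a detail to be deferred, and it cannot be patched by citing Subsections \ref{SS:different} and \ref{SS:core} (those reduce the local game to a cleaner \emph{format}; they say nothing about how to extract valid local-game moves from an arbitrary on-line poset). Concretely, the simulation of the $w$-local game ``on $U$'' breaks for two reasons. First, $U$ can have width strictly less than $w$: present $p_1$, then $p_2\parallel p_1$; only $p_2$ is top, so $U=\{p_2\}$ has width $1$ while $w=2$, and there are no size-$w$ antichains $L,T\subseteq U$ at all. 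Second, even after adjoining artificial extremal antichains $\bot,\top$, when a new top point $p$ arrives the required middle antichain $M\ni p$ with $L\al M\al T$ in $\fMA{L\cup T\cup\{p\}}$ may be forced to contain bottom points. In the same example, the only size-$2$ antichain through $p_2$ in $\bot\cup\top\cup\{p_2\}$ does not exist ($p_2$ is comparable to all of $\bot\cup\top$), so Spoiler's move cannot be legally phrased; the paper's construction of $A_x=(A_0\vee A_d)\wedge A_u$ produces $M=\{p_1,p_2\}$, which deliberately includes the bottom point $p_1$.

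This is precisely where the paper diverges from a clean recursion: it does not partition points into disjoint levels, but maintains $w$ \emph{overlapping} families $\ff{A}_1,\ldots,\ff{A}_w$ of maximum antichains in nested subposets $P_1\subseteq\ldots\subseteq P_w$, with separate partial multicolorings $\chains_i:\bigcup\ff{A}_i\to\powerp{\Gamma_i}$. A point that is ``officially'' colored from $\Gamma_j$ (meaning its global color is drawn from $\chains_j$) may nonetheless also appear in $\bigcup\ff{A}_i$ for some $i>j$ and carry an auxiliary color from $\Gamma_i$ there; this double bookkeeping is exactly what lets the $i_0$-local game always be fed a legal board $(A_d,A_x,A_u)$. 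Your proposal's invariant $L,T\subseteq U$ with $D$ delegated to the width-$(w-1)$ algorithm forbids this overlap, and that is why the board cannot be bootstrapped or maintained. To repair the argument along your lines you would essentially be forced to re-introduce the paper's nested $P_i$'s and the per-level multicolorings, at which point the ``induction on $w$'' framing no longer buys you anything over the paper's direct treatment.
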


\begin{proof} 
To prove the theorem we built an algorithm \texttt{coloring}, which gets as an input a poset $\poset{P}^{\plus}=\brackets{P\cup\set{x},\leq}$ with coloring $\chains$ of $\poset{P}=\brackets{P,\leq}$ and returns coloring $\chains^{\plus}$ of $\poset{P}^{\plus}$ which expands $\chains$ by coloring $x$. \dd

The algorithm to be built will refer to the $v$-local algorithms secured in the assumption of the theorem. These local algorithms consist of two procedures: \ilc{} and \lc{}. \dd
The procedure \ilc{} returns multicoloring of the poset $\brackets{L,T,\leq}$ which Spoiler created in the first round, whereas \lc{} describes Algorithm's responses during next rounds i.e., \lc{}$\brackets{L,M,T,\leq}$ is a multicoloring of new points from the set $M\setminus\brackets{L\cup T}$. \dd
The local algorithm, when coloring poset of width $v$, uses a set $\Gamma_v$ of $\fVal{v}$ colors. \dd 
The sets $\Gamma_v$'s are supposed to be pairwise disjoint. \dd 

The local algorithm colors points $m_1,\ldots m_v$ localized in the middle of the poset $\poset{B}'$ and our coloring $\chains^{\plus}$ of $\poset{P}^{\plus}$ will refer to this multicoloring. \dd This reference can be done much simpler if the incoming point $x$ is neither minimal nor maximal in $\poset{P}^{\plus}$. \dd 
Obviously we can not assure such behavior of Spoiler, but instead we can artificially expand $P$ to $P\cup\bot\cup\top$ by new antichains $\bot=\set{b_1,\ldots,b_w}$ and $\top=\set{t_1,\ldots,t_w}$ of minimal and maximal elements. \dd
In fact during entire game we have
\[
b_i < p < t_j
\]
for all $b_i \in \bot$, $p\in P$, $t_j \in \top$. \dd
Obviously maintaining some coloring for $\brackets{P\cup\bot\cup\top,\leq}$ and restricting it to $P$ gives the required coloring for the original on-line poset presented by Spoiler. \dd

Our algorithm \texttt{coloring} maintains an auxiliary data structure $\str$ that, among other things, keeps the information about the poset $\brackets{P,\leq}$ already presented (by Spoiler) and about the coloring response of Algorithm. \dd
Our data structure $\str$ is described by the following properties.

\begin{inv} \label{I:main} Let 
\[
\str=\brackets{P,\leqslant,w,\bot,\top,P_0,\ldots,P_w,\ff{A}_0,\ldots,\ff{A}_w,\chains_1,\ldots,\chains_w,\chains},
\]
where
\begin{enumerateii}
\item $w=\fWidth{P}$ \label{I:main;C:width}
\item $\bot=\set{b_1,\ldots,b_w}$, $\top=\set{t_1,\ldots,t_w}$ and $b_i<p<t_j$ for all $p\in P$ \label{I:main;C:second}
\item $P_0,\ldots,P_w$ are subsets of $P\cup\bot\cup\top$, such that \label{I:main;C:first without w}\label{I:about Pi}
\begin{enumerate}
\item $\emptyset=P_0\subseteq P_1\subseteq\ldots\subseteq P_{w-1}\subseteq P_w = P\cup \bot\cup\top$ \label{I:about Pi subs}
\item $\fWidth{P_i}=i\quad$ \label{I:about Pi width}
\end{enumerate}
\item \label{I:main;C:Ai} $\ff{A}_0,\ldots,\ff{A}_w$ are families of antichains in $\poset{P}$, such that
\begin{enumerate}
\item every family $\ff{A}_i$ is a chain in the lattice $\fMA{P_i}$
\item the largest element of $\ff{A}_i$ is $\top\cap P_i=\set{t_1,\ldots,t_i}$ and the smallest element of $\ff{A}_i$ is $\bot\cap P_i=\set{b_1,\ldots,b_i}$ \label{I:main;C:tb}
\item $P_i=P_{i-1}\cup\bigcup\ff{A}_i$,\ \ \ for $i=1,\ldots,w$\label{I:main;C:Pi=Pi-1sumAi}
\end{enumerate}
\item $\chains_1,\ldots,\chains_w$ are partial multicolorings of $\poset{P}$, i.e. \label{I:main;C:ci}
\begin{enumerate}
\item $\chains_i:\bigcup\ff{A}_i\tto\powerp{\Gamma_i}$
\item for each $\gamma\in\Gamma_i$ the set $\set{p\in \bigcup\ff{A}_i: \gamma\in\fCi{}{i}{p}}$ is a chain\label{I:main;C:ci chain}
\end{enumerate}
\item $\chains$ is a global coloring of $\poset{P}$ consistent with the partial multicolorings $\chains_i$'s, i.e.: \label{I:main;C:chains}
\begin{enumerate}
\item $\chains:P\tto\bigcup_{i=1}^w\Gamma_i$ \label{I:main;C:chains;P:function}
\item for each $\gamma\in\bigcup_{i=1}^w\Gamma_i$ the set $\fC{-1}{\gamma}$ is a chain \label{I:main;C:be chain}
\item for each $p\in P$ there is $i\in\set{1,\ldots,w}$ such that $\fC{}{p}\in\fCi{}{i}{p}$ \label{I:main;C:inclusion of chains}
%\item If $P^-$ is the poset of the game from the previous round and $\str^- = \brackets{P^-,\leq,\ldots,\chains^-}$ is its auxiliary structure then $\chains^- = \rest{\chains}{P^-}$. \label{I:main;C:c-}
\end{enumerate}
\item for two antichains $\level{1}, \level{2}$, such that $\level{1}$ is an immediate predecessor of $\level{2}$ in the chain $\ff{A}_i$ of antichains, $(\level{1},\level{2},\leq,\rest{\chains_i}{\level{1}\cup \level{2}})$ is a board returned by some round of $i$-local on-line coloring game, i.e., one of the following holds \label{I:main;C:last} \label{I:main;C:local}
\begin{itemize}
\item $\level{1}\cup \level{2}$ with the multicoloring $\rest{\chains_i}{\level{1}\cup \level{2}}$ is a result of the procedure \ilc{} 
\item there is an antichain $\level{0}\al \level{1}$ in $\ff{A}_i$ such that $\brackets{\level{0}, \level{1}, \level{2}}$ with the multicoloring $\rest{\chains_i}{\level{0}\cup \level{1}\cup \level{2}}$ is a result of the procedure \lc{}
\item there is an antichain $\level{3}\ag \level{2}$ in $\ff{A}_i$ such that $\brackets{\level{1},\level{2},\level{3}}$ with the multicoloring $\rest{\chains_i}{\level{1}\cup \level{2}\cup \level{3}}$ is a result of the procedure \lc{}
\end{itemize}
\end{enumerateii}
\end{inv}

%\clearpage

%Our algorithm \texttt{coloring} keeps all the data in $\str$ in order to property maintain the required coloring $\chains$. \dd 
%
%
%Indeed, in view of \ref{I:main}.\ref{I:main;C:chains}, the function $\chains$ is a coloring of $\brackets{P,\leq}$ and uses at most $\abs{\bigcup_{i=1}^w \Gamma_i}=\fVal{1}+\fVal{2}+\ldots+\fVal{w}$ colors. \dd 
%

After Spoiler's move from $\poset{P}=\brackets{P,\leq}$ to $\poset{P}^{\plus}=\brackets{P\cup\set{x},\leq}$ our algorithm will modify the structure $\str$ to $\strp$, where %
\begin{small}%
\begin{align*}
\str &=\brackets{P,\ \leqslant,\ w,\ \bot,\ \top,\ P_0,\ \ldots,\ P_w,\ \ff{A}_0,\ \ldots,\ \ff{A}_w,\ \chains_0,\ \ldots,\ \chains_w,\ \chains}\ \ \textrm{for}\ \poset{P},\\
\strp&=\brackets{P^{\plus},\leqslant,w^{\plus},\bot^{\plus},\top^{\plus},P_0^{\plus},\ldots,P_{w^{\plus}}^{\plus},\ff{A}_0^{\plus},\ldots,\ff{A}_{w^{\plus}}^{\plus},\chains_0^{\plus},\ldots,\chains_{w^{\plus}}^{\plus},\chains^{\plus}}\ \textrm{for}\ \poset{P}^{\plus}\!.
\end{align*}%
\end{small}%
%
%
%\[
%\str=\brackets{P,\leqslant,w,\bot,\top,P_0,\ldots,P_w,\ff{A}_0,\ldots,\ff{A}_w,\chains_0,\ldots,\chains_w,\chains} \ \ \textrm{for}\ \poset{P}
%\]
%
%\vspace{-4mm}
%
%\begin{footnotesize}
%\begin{small}
%\[
%\strp=\brackets{P^{\plus},\leqslant,w^{\plus},\bot^{\plus},\top^{\plus},P_0^{\plus},\ldots,P_{w^{\plus}}^{\plus},\ff{A}_0^{\plus},\ldots,\ff{A}_{w^{\plus}}^{\plus},\chains_0^{\plus},\ldots,\chains_{w^{\plus}}^{\plus},\chains^{\plus}}\ \textrm{for}\ \poset{P}^{\plus}\!.
%\]
%\end{small}
%\end{footnotesize}
%
%
%\vspace{-2mm}

\medskip

The idea of the \texttt{coloring} algorithm is to identify several families $\ff{A}_1,\ldots,\ff{A}_w$ of antichains in $P\cup\bot\cup \top$ in a way that antichains appearing in these families cover $P\cup \bot\cup\top$, i.e.,
\[
P\cup\bot\cup \top=\bigcup\set{A: A\in\ff{A}_i\ \textrm{for some}\ i=1,\ldots,w}\!.
\]
For algorithm's purposes we also need the following partial coverings
\[
P_j=\bigcup\set{A: A\in\ff{A}_i\ \textrm{for some}\ i=1,\ldots,j}\!,
\]
so that we will have $P_w=P\cup\bot\cup\top$. 
In these coverings we allow intersections of the form $A\cap B$, for $A\in\ff{A}_i$, $B\in\ff{A}_j$ or even for $A,B\in\ff{A}_i$, to be nonempty. We do require however that each $\ff{A}_i$ is a chain with respect to $\aleq$. When the new point $x$ is presented by Spoiler we want to extend some of the $P_i$'s to $P_i\cup\set{x}$. This extension is doable only for those $P_i$'s width of which would not be increased. Actually we need more, namely we extend those $P_i$'s for which $P_i$ as well as $P_{i+1},\ldots,P_w$ preserve their width after adding $x$ to each of them. After identifying the smallest such index, say $i_0$, we know that $\fWidth{P_{i_0-1}\cup\set{x}}=\fWidth{P_{i_0-1}}+1$ and therefore $P_{i_0-1}\cup\set{x}$ has an antichain of size $i_0$. In fact each such antichain $A_0$ has to contain $x$, as $\fWidth{P_{i_0-1}}=i_0-1$. In the next step we want to include $A_0$ into $\ff{A}_{i_0}^{\plus}$ so that the new families $\ff{A}_i^{\plus}$ cover not only the old poset $P$ but also the new point $x$. However, $\ff{A}_{i_0}$ enlarged by $A_0$ need not be a chain any longer. Therefore we identify antichains $A_d,A_u\in\ff{A}_{i_0}$ that are as close to $x$ as possible and satisfy $x\in A_d\upseto\cap A_u\downseto$. Now, with the help of $A_d$ and $A_u$ we modify $A_0$ to $A_x=\brackets{A_0\vee A_d}\wedge A_u$, where $\vee$, $\wedge$ denote join and meet in the lattice $\fMA{P_{i_0}\cup\set{x}}$ of maximum antichains in $P_{i_0}\cup\set{x}$. This modification results in $A_d\al A_x\al A_u$ and allows to call the procedure \lc{} on the board $\brackets{A_d,A_x,A_u,\leq}$. This will extend the multicoloring of $A_d\cup A_u$ to $A_d\cup A_x\cup A_u$. In particular the point $x$ will get a set of colors previously used on both $A_d$ and $A_u$. One of the colors from this set will be used to color $x$.

This brief sketch of the algorithm lacks several arguments that everything will go correctly and our invariants on the data structure $\str$ will be kept. In particular during the game, especially at its beginning, the width of the poset may increase. \dd 
To keep \ref{I:main}.(\ref{I:main;C:second}) we may be forced to increase both $\bot$ and $\top$. \dd 
For a simpler description of our algorithm we assume now that elements, say $b_{w+1}, t_{w+1}$, that are forced to be added after $\poset{P}$ is expanded to $\poset{P}^{\plus}$, satisfy \mbox{$b_{w+1}<p<t_{w+1}$} for all elements $p\in P$ as well as for $p=x$. \dd 

Also, to simplify our notation, both in the description of the algorithm as well as in the forthcoming proofs, we use $X\downseto, X\downsetc, X\upseto, X\upsetc$ to denote the downsets and upsets of $X$ in the extended poset
\linebreak
$\brackets{P\cup\set{x}\cup\bot^{\plus}\cup\top^{\plus},\leq}$. \dd
Moreover we often overuse the symbol $\leq$ to express the order relation on various subsets $X$ of
$P\cup\set{x}\cup\bot^{\plus}\cup\top^{\plus}$ by writing simply $\leq$ instead of $\rest{\leq}{X}$. \dd 
Now we are ready to present our \texttt{coloring} algorithm that builds $\strp$ from $\str$ and the new point $x$ presented by Spoiler.

\begin{alg}{(\texttt{coloring} algorithm)} \label{A:main}%
\renewcommand*\thelstnumber{(A\oldstylenums{\the\value{lstnumber}})}%

%\vspace{-1mm}

\begin{lstlisting}[mathescape]
	$w^{\plus}$ := $\fWidth{P^{\plus}}$ (*@\label{A:first} @*)
	if $\fWidth{P^{\plus}}>\fWidth{P}$ then
		$\top^{\plus}$ := $\top \cup \set{t_{w+1}}$ (*@\label{A:first in if} @*)
		$\bot^{\plus}$ := $\bot \cup \set{b_{w+1}}$
		$P_{w+1}$ := $P\cup\bot^{\plus}\cup\top^{\plus}$
		$\ff{A}_{w+1}$ := $\set{\bot^{\plus},\top^{\plus}}$
		(*@$\chains_{w+1}$ is the multicoloring returned by \ilc{}$\,\brackets{\bot^{\!\plus}\!,\top^{\plus}\!, \leq}$@*) (*@\label{A:def.of cw+1} @*) (*@\label{A:last in if} @*)
	else	$\top^{\plus}$ := $\top$;     $\bot^{\plus}$ := $\bot$ (*@\label{A:end if} @*)(*@\vspace{4mm}@*)
	$i_0$ :=  $\min\set{i: \fWidth{P_j\cup\set{x}}=\fWidth{P_j}\ \textrm{for all}\ j=i,\ldots,w^{\plus}}$ (*@\label{A:def.of i0} @*)
	for $i$ := $0$ to $w^{\plus}$ do
		if $i < i_0$ then $P_i^{\plus}$ := $P_i$ else $P_i^{\plus}$ := $P_i\cup\set{x}$(*@\label{A:def.of Pi+}@*)
		if $i \neq i_0$ then $\ff{A}_i^{\plus}$ := $\ff{A}_i$
		if $i \neq i_0$ then $\chains^{\plus}_i$ := $\chains_i$ (*@\label{A:def.of ci}@*)
	$A_0$ := $\textrm{arbitrary maximum antichain in } P_{i_0-1}\cup\set{x}$ (*@\label{A:def.of A0} @*)
	$A_d$ := $\max_{\aleq\!}\set{A\in \ff{A}_{i_0}:\ x\in A\upseto}$ (*@\label{A:def.of Ad} @*)
	$A_u$ := $\min_{\aleq\!}\set{A\in \ff{A}_{i_0}:\ x\in A\downseto}$ (*@\label{A:def.of Au} @*)
	$A_x$ := $\brackets{A_0\vee A_d}\wedge A_u,$ (*@\hfill\it{where $\vee$ and $\wedge$ are operations in $\descMA\!\big(P_{i_0}^{\plus}\big)$}@*)(*@\label{A:def.of Ax}@*)
	$\ff{A}_{i_0}^{\plus}$ := $\ff{A}_{i_0}\cup\set{A_x}$ (*@\vspace{2mm}@*)
	(*@$\chains_{i_0}^{\plus}$ is the extension of $\chains_{i_0}$ by coloring the points from $A_x\setminus\brackets{A_d\cup A_u}$\\ \phantom{xxxxxi} by \lc{}$\,\big(A_d, A_x, A_u, \leq, \rest{\chains_{i_0}}{A_d\cup A_u}\big)$ @*) (*@\label{A:def.of ci0}@*)(*@\vspace{2mm}@*)
	$\fC{\plus}{p}$ := $\begin{cases}\fC{}{p},&\textrm{if}\ p\in P;\\\textrm{any member of}\ \fCi{\plus}{i_0}{x},&\textrm{if}\ p=x.\end{cases}$  (*@\label{A:def.of c+}\label{A:end}@*)
\end{lstlisting}
\end{alg}

Our first step is to analyze lines \ref{A:first}-\ref{A:end if} of the algorithm. First of all we need to argue that calling procedure \ilc{} in line \ref{A:def.of cw+1} is legal. This line is executed only if $\fWidth{P^{\plus}}>\fWidth{P}$. By our remark made in the second last paragraph before the statement of the algorithm we know that the poset $\brackets{\bot^{\plus}\cup\top^{\plus},\leq}$ satisfies \mbox{$\bot^{\plus}<\top^{\plus}$} so it can be considered as a correct first move of Spoiler in local on-line coloring game. Therefore \ilc{} returns, among other things, a multicoloring of $\bot^{\plus}\cup\top^{\plus}$ which is called $\chains_{w+1}$.

Independently of whether lines \ref{A:first in if}--\ref{A:last in if} are executed it is easy to observe that after line \ref{A:end if} almost all of our invariants are still satisfied. %\note{wersja alternatywna: for the structure $$\stro=\brackets{P,\leqslant,w^{\plus},\bot^{\plus},\top^{\plus},P_0,\ldots,P_{w^{\plus}},\ff{A}_0,\ldots,\ff{A}_{w^{\plus}},\chains_0,\ldots,\chains_{w^{\plus}},\chains}.$$}
More precisely

\begin{clm}\label{C:Sline} The structure %\note{wersja alternatywna: $\stro$} %
%\note{
\[
\brackets{P,\leqslant,w^{\plus},\bot^{\plus},\top^{\plus},P_0,\ldots,P_{w^{\plus}},\ff{A}_0,\ldots,\ff{A}_{w^{\plus}},\chains_0,\ldots,\chains_{w^{\plus}},\chains}
\]%} %
satisfies conditions \ref{I:main}.(\ref{I:main;C:second}--\ref{I:main;C:last}).\hfill$\square$
\end{clm}
The condition \ref{I:main}.(\ref{I:main;C:width}) fails in the structure of Claim \ref{C:Sline} only if $w^{\plus}>w$. However in this case we secured that $P\cup\bot^{\plus}\cup\top^{\plus}$ has the width $w^{\plus}=w+1$. Thus in both cases we are ready to incorporate the incoming point $x$ into $P\cup\bot^{\plus}\cup\top^{\plus}$ without changing width of $P\cup\bot^{\plus}\cup\top^{\plus}$.

%\note{wersja alternatywna: By Claim \ref{C:Sline},

%Now we will be based on invariants concern the structure of Claim \ref{C:Sline} instead of invariants concern $\str$.

Now we argue that the operations performed in lines \ref{A:def.of i0}, \ref{A:def.of Ad}, \ref{A:def.of Au}, \ref{A:def.of Ax}, \ref{A:def.of ci0} and \ref{A:def.of c+} are correctly defined.

For \ref{A:def.of i0} first note that
\begin{align*}
 \fWidth{P_{w^{\plus}}}\ &=\ \fWidth{P\cup\bot^{\plus}\cup\top^{\plus}}\\
&=\ \fWidth{P\cup\bot^{\plus}\cup\top^{\plus}\cup\set{x}}\\
&=\ \fWidth{P_{w^{\plus}}\cup\set{x}}.
\end{align*}
Moreover $\fWidth{P_0}=0<\fWidth{P_0\cup\set{x}}$ so that $i_0$ exists and we have $1\leq i_0\leq w^{\plus}$.

To see \ref{A:def.of Ad} and \ref{A:def.of Au} we need to argue that the sets
\[
 \set{A\in \ff{A}_{i_0}:\ x\in A\upseto }\quad \textrm{and}\quad \set{A\in \ff{A}_{i_0}:\ x\in A\downseto }
\]
--\ \ of which the largest and the smallest elements are supposed to be taken\ \ --\ \ are nonempty. But since \mbox{$\bot^{\!\plus}\!<\!x\!<\!\top^{\plus}$} we have $x\mspace{-2mu}\in\mspace{-2mu}\set{b_1,\ldots,b_{i_0}}\upseto$ and \mbox{$x\in\set{t_1,\ldots,t_{i_0}}\downseto$}. Moreover the antichains \linebreak $\set{b_1,\ldots,b_{i_0}}$ and $\set{t_1,\ldots,t_{i_0}}$ are guaranteed to lie in $\ff{A}_{i_0}$ by condition \ref{I:main}.(\ref{I:main;C:tb}) fullfiled by the structure of Claim \ref{C:Sline}. %\note{wersja alternatywna: fullfiled by $\stro$}.

Finally, for the correctness of line \ref{A:def.of Ax} we need to know that $A_d, A_u, A_0\in\fMA{P_{i_0}\cup\set{x}}$. Since $i_0$ is the smallest $i$ such that $\fWidth{P_j}=\fWidth{P_j\cup\set{x}}$ for all $j\geq i$, we know that
$$\fWidth{P_{i_0-1}\cup\set{x}}\ >\ \fWidth{P_{i_0-1}}\ =\ i_0-1,$$ and thus $\fWidth{P_{i_0-1}\cup\set{x}}=i_0$. \dd 
The set $A_0$ chosen in line \ref{A:def.of A0} is maximum antichain in $P_{i_0-1}\cup\set{x}$, and therefore $\abs{A_0}=i_0$. \dd 
Moreover, $P_{i_0-1}\cup\set{x}\subseteq P_{i_0}\cup\set{x}$ and consequently $A_0\subseteq P_{i_0}\cup\set{x}=P_{i_0}^{\plus}$. \dd 
To summarize, each antichain $A_d, A_u, A_0$ has $i_0$ elements and is contained in $P_{i_0}^{\plus}$. By the choice of $i_0$, we have $\width\!\big(P_{i_0}^{\plus}\big)=i_0$ so that $A_d,A_u,A_0\in\descMA\!\big(P_{i_0}^{\plus}\big)$, as required. \dd 

Before we proceed with the correctness of line \ref{A:def.of ci0} we need the following two claims which will be useful also when proving that our invariants are kept.

\begin{clm}\label{C:x in Ax}
 $x\in A_x$.
\end{clm}
\begin{proof}
Since $\fWidth{P_{i_0-1}}=i_0-1$ while $\fWidth{P_{i_0-1}\cup\set{x}}=i_0$, we know that the maximum antichain $A_0$ constructed in line \ref{A:def.of A0} must contain $x$. By \ref{A:def.of Ad} we know that $x$ dominates some point in $A_d$. As $A_d$ is an antichain, $x$ cannot be dominated by anything from $A_d$. This together with the fact that $x$ is incomparable with all other points in $A_0$ gives that $x$ is maximal in $A_0\cup A_d$. Recalling Observation \ref{O: max=sup min=inf} we know that $x\in\max\brackets{A_0\cup A_d}=A_0\vee A_d$. Analogously we argue that now $x$ is minimal in $\brackets{A_0\vee A_d}\cup A_u$ to get $x\in\min\brackets{\brackets{A_0\vee A_d}\cup A_u}=\brackets{A_0\vee A_d}\wedge A_u=A_x$.
\end{proof}

\begin{clm} \label{C:Ad<Ax<Au} The antichains $A_u$, $A_d$ and $A_x$ satisfy:
\begin{enumerate}
\item $A_u$ is the immediate successor of $A_d$ in the chain $\ff{A}_{i_0}$, \label{C:Ad<Ax<Au;E:AdAu}
\item $A_d\al A_x \al A_u$ in the lattice $\descMA\!\big(P_{i_0}^{\plus}\big)$. \label{C:Ad<Ax<Au;E:Ad<Ax<Au}
\end{enumerate}
\end{clm}
\begin{proof}

At the beginning, we will show that $A_u$ is the immediate successor of $A_d$ in $\ff{A}_{i_0}$. \dd 
Every $A\in\ff{A}_{i_0}\subseteq\fMA{P_{i_0}}$ contains $i_0$ elements, so that $A$ is a maximum antichain in $P_{i_0}^{\plus}=P_{i_0}\cup\set{x}$. \dd 
Thus $x\in A\downseto$ or $x\in A\upseto$, as otherwise $A\cup\set{x}$ would be an antichain of size $i_0+1$. \dd 
Now note that there are $a,b$ with $A_d\ni a<x<b \in A_u$. If $A_u\aleq A_d$ then there is $a'\in A_d$ with $b\leq a'$, so that $a<a'$, contradicting the fact that $A_d$ is an antichain.
Therefore $A_d\al A_u$. Moreover the choice of $A_d$ and $A_u$ tells us that for an antichain $A\in \ff{A}_{i_0}$ with $A_d\al A\al A_u$ we have $x\notin A\upseto$ and $x\notin A\downseto$. This means that $A\cup\set{x}$ is an antichain of size $i_0+1$ in the poset $P_{i_0}^{\plus}$ of width $i_0$. This contradiction shows (\ref{C:Ad<Ax<Au;E:AdAu}). To see (\ref{C:Ad<Ax<Au;E:Ad<Ax<Au}) note that
\[
A_d\ \aleq\ A_d\vee \brackets{A_0\wedge A_u}\ \aleq\ \brackets{A_0\vee A_d}\wedge A_u\ \aleq\ A_u,
\]
where the middle inequality is actually the equality in the distributive lattice $\descMA\!\big(P_{i_0}^{\plus}\big)$. Since $x\in A_x \setminus \brackets{A_d\cup A_u}$ the leftmost and the rightmost inequalities are strong.
\end{proof}

Now we are ready to show that line \ref{A:def.of ci0} makes sense. First of all note that Claim \ref{C:Ad<Ax<Au}.(\ref{C:Ad<Ax<Au;E:AdAu}) together with Invariant \ref{I:main}.(\ref{I:main;C:local}) guarantee that $(A_d,A_u,\leq,\rest{\chains_{i_0}}{A_d\cup A_u})$ is a board returned in some round of $i_0$-local on-line coloring game by the procedure \ilc{} or \lc{}. Trying to continue this $i_0$-local game the local coloring algorithm must be ready to respond to Spoiler's move that presents all points from the set $M_0:=A_x\setminus\brackets{A_d\cup A_u}$ contained in an antichain $M:=A_x$. Indeed:
\begin{itemize}
\item $i_0\ =\ \fWidth{A_x}\ \leq\ \fWidth{A_d\cup A_x\cup A_u}\leq$

\hfill$\leq\fWidth{P_{i_0}\cup\set{x}}\ =\ i_0$,

\item $A_x\in \fMA{A_d\cup A_x\cup A_u}$,
\item $A_d\al A_x\al A_u$ (by Claim \ref{C:Ad<Ax<Au}.(\ref{C:Ad<Ax<Au;E:Ad<Ax<Au})).
\end{itemize}
The response of the $i_0$-local algorithm is then a multicoloring of all points from the set $M_0$ by colors appearing on both $A_d$ and $A_u$. This multicoloring of $M_0$ together with $\chains_{i_0}$ that colors $\bigcup \ff{A}_{i_0}$ is supposed to be the multicoloring $\chains_{i_0}^{\plus}$ that now colors $M_0\cup\bigcup\ff{A}_{i_0}=\ff{A}_{i_0}^{\plus}$. The only thing that is missing is to show that the set $M_0$ (which gets possibly new color by \lc{}) does not intersect the set already colored by $\chains_{i_0}$, i.e., the set $\bigcup\ff{A}_{i_0}$.
\begin{clm}
$M_0\cap\bigcup\ff{A}_{i_0}=\emptyset$.
\end{clm}
\begin{proof}
We will show that if $p\in M$ and $p\in A\in\ff{A}_{i_0}$ then $p\in A_d\cup A_u$. Without loss of generality, we may assume that $M\al A$. Since $M$ lie between $A_d\al A_u$, we have $M\al A_u \aleq A$. Thus, $p\in M$ lies below some $q\in A_u$ and $q$ lies below some $r\in A$. Thus $p\leq q\leq r$, but $p$ and $r$ lie in the same antichain $A$, which yields $p=q=r \in A_u$.
\end{proof}

Finally we need to argue that using $\fCi{\plus}{i_0}{x}$ in line \ref{A:def.of c+} makes sense, i.e., that $x$ is in the domain of $\chains_{i_0}^{\plus}$. This will immediately follow from Claim \ref{C:x in Ax}.

Now, knowing that Algorithm \ref{A:main} is well defined, we proceed with proving that Invariants \ref{I:main} are kept when passing from $\str$ to $\strp$.

\begin{clm}\label{C:Pi-1 sub Pi and wPi=i}
$P_{i-1}^{\plus}\subseteq P_i^{\plus}$ and $\fWidth{P_i^{\plus}}=i$.
\end{clm} 

\begin{proof}
From the construction of the $P_i^{\plus}$'s in line \ref{A:def.of Pi+} we obtain that $P_i^{\plus}$ is either $P_i$ or $P_i\cup\set{x}$ and moreover $x\in P_{i+1}^{\plus}$ whenever $x\in P_i^{\plus}$. \dd 
By \ref{I:main}.(\ref{I:about Pi subs}) $P_{i-1}\subseteq P_i$ and therefore $P_{i-1}^{\plus}\subseteq P_i^{\plus}$. \dd 
The index $i_0$ defined in line \ref{A:def.of i0}, assures us that for $j\geq i_0$ the width of $P_j$ can not increase by adding $x$. \dd 
Thus $\fWidth{P_i^{\plus}}=\fWidth{P_i}=i$ holds for all $i$'s.
\end{proof}

\begin{clm} \label{C:A is a chain; Pi=Pi-1 cup bcupA}
$\ff{A}_0^{\plus},\ldots,\ff{A}_{w^{\plus}}^{\plus}$ are families of antichains in $\poset{P}^{\plus}$, such that
\begin{itemize}
\item every $\ff{A}_i^{\plus}$ is a chain in the lattice $\fMA{P_i^{\plus}}$,
\item $P_i^{\plus}=P_{i-1}^{\plus}\cup\bigcup\ff{A}_i^{\plus}\quad$ for $i=1,\ldots,w^{\plus}$.
\end{itemize}
\end{clm}
\begin{proof}
The only change done to the $\ff{A}_i$'s is $\ff{A}_{i_0}^{\plus}=\ff{A}_{i_0}\cup\set{A_x}$. \dd 
By Claim \ref{C:Ad<Ax<Au} we know that $A_x\in\descMA\big(P_{i_0}^{\plus}\big)$ and $A_x$ is inserted between two consecutive antichains $A_d$ and $A_u$ of $\ff{A}_{i_0}$. Therefore $\ff{A}_{i_0}^{\plus}$ is a chain in the lattice $\descMA\big(P_{i_0}^{\plus}\big)$. \dd 

To see the second item of our Claim note that for $i<i_0$ we have $P_i^{\plus}=P_i$ and $\ff{A}_i^{\plus}=\ff{A}_i$. Thus
\[
 P_i^{\plus}\ =\ P_i\ =\ P_{i-1}\cup\bigcup\ff{A}_i\ =\ P_{i-1}^{\plus}\cup\bigcup\ff{A}_i^{\plus}.
\]
Also for $i>i_0$ the situation is easy, as $P_i^{\plus}=P_i\cup\set{x}$ and $\ff{A}_i^{\plus}=\ff{A}_i$, so that
\[
 P_i^{\plus}\ =\ \set{x}\cup P_i\ =\ \set{x}\cup P_{i-1}\cup\bigcup\ff{A}_i\ =\ P_{i-1}^{\plus}\cup\bigcup\ff{A}_i^{\plus}.
\]

The only not completely trivial case is the borderline $i=i_0$. In this case however, we use Claim \ref{C:x in Ax} to get \dd 
\begin{equation}\label{E:limit.of Ax}
\set{x}\subseteq A_x \subseteq P_{i_0}\cup\set{x},
\end{equation} \dd 
and then \dd 
\begin{align*}
 P_{i_0}^{\plus}\ &=\ P_{i_0}\cup\set{x}&\textrm{by def. of $P_{i_0}^{\plus}$}\\
\ &=\ P_{i_0}\cup A_x&\textrm{by (\ref{E:limit.of Ax})}\\
\ &=\ P_{i_0-1}\cup\brackets{\bigcup\ff{A}_{i_0}}\cup A_x&\textrm{by \ref{I:main}.(\ref{I:main;C:Pi=Pi-1sumAi})}\\
\ &=\ P_{i_0-1}^{\plus}\cup\bigcup\ff{A}_{i_0}^{\plus}.&\textrm{by def. of $P_{i_0-1}^{\plus}$ and $\ff{A}_{i_0}^{\plus}$}
\end{align*}
\end{proof}

\begin{clm} Each $\chains_i^{\plus}:\bigcup\ff{A}_i^{\plus}\tto\powerp{\Gamma_i}$ is a multicoloring, i.e., for $\gamma\in\Gamma_i$ the set $\set{p\in \bigcup\ff{A}_i^{\plus}: \gamma\in\fCi{\plus}{i}{p}}$ is a chain.\label{C:ci}
\end{clm}

\begin{proof}
As for $i\neq i_0$ there is no change when passing from $\ff{A}_i$ to $\ff{A}_i^{\plus}$ and from $\chains_i$ to $\chains_i^{\plus}$, Invariant \ref{I:main}.(\ref{I:main;C:ci}) is kept and does the job.

Now we focus on $\ff{A}_{i_0}^{\plus}$ and $\chains_{i_0}^{\plus}$, where we need to show that $p \cmp q$ for all $p,q\in\bigcup \ff{A}_{i_0}^{\plus}$ such that $\fCi{\plus}{i_0}{p}\cap\fCi{\plus}{i_0}{q}\!\neq\!\emptyset$. As \mbox{$\bigcup\ff{A}_{i_0}^{\plus}\!=\!A_x\cup\bigcup\ff{A}_{i_0}$} we split our argument into cases. The first one is $p,q\in\bigcup\ff{A}_{i_0}$, when the comparability of $p$ and $q$ results from Invariant \ref{I:main}.(\ref{I:main;C:ci chain}) and the fact that $\chains_{i_0}^{\plus}\!|_{\bigcup\ff{A}_{i_0}}\!=\chains_{i_0}$. The second possibility, namely $p,q\in A_x$, is ruled out as the sets of colors assigned (by the procedure \lc{}) to the different points of the antichain $A_x$ have to be disjoint by the rules of the local on-line coloring game. Finally suppose that $p\in A_x\setminus\bigcup\ff{A}_{i_0}$ while $q\in \bigcup\ff{A}_{i_0}$. Pick $\gamma\in\fCi{\plus}{i_0}{p}\cap\fCi{\plus}{i_0}{q}$. The procedure \lc{}, while coloring $p$, has to pick for $\fCi{\plus}{i_0}{p}$ a subset of $\fCi{\plus}{i_0}{A_d}\cap\fCi{\plus}{i_0}{A_u}$. Therefore
\begin{equation} \label{E:gamma in cad cau}
\gamma\in\fCi{}{i_0}{a_d}\cap\fCi{}{i_0}{a_u}
\end{equation}
for some
\begin{equation}\label{E:ad<p<au}
A_d\ni a_d< p < a_u\in A_u.
\end{equation}
Now we switch to the coloring of $q$. We know that $\chains_{i_0}^+$ behaves on $\bigcup\ff{A}_{i_0}$ in the very same way as $\chains_{i_0}$ did. In particular $\gamma\in\fCi{}{i_0}{q}$. This together with (\ref{E:gamma in cad cau}) gives that $q, a_d, a_u$ are pairwise comparable. On the other hand the fact that $q\in \bigcup\ff{A}_{i_0}$ is witnessed by some $A_q\in\ff{A}_{i_0}$ with $q\in A_q$. By Claim \ref{C:Ad<Ax<Au}.(\ref{C:Ad<Ax<Au;E:AdAu}), the chain $\ff{A}_{i_0}$ has nothing between $A_d$ and $A_u$ so that either $A_q \aleq A_d$ or $A_u\aleq A_q$. In the first case ($A_q\aleq A_d$) the comparability of $q$ and $a_d$ gives $q\leq a_d$, while in the second one ($A_u\aleq A_q$) the comparability of $q$ and $a_u$ gives $a_u\leq q$. Combining this with (\ref{E:ad<p<au}) we get
\[
q\leq a_d\leq p \quad \textrm{or}\quad p\leq a_u \leq q,
\]
as required.
\end{proof}

With the help of Claims \ref{C:Pi-1 sub Pi and wPi=i}-\ref{C:ci} we see that \ref{I:main}.(\ref{I:main;C:width}--\ref{I:main;C:ci}) and (\ref{I:main;C:chains;P:function}) are kept when passing from $\str$ to $\strp$.

Also \ref{I:main}.(\ref{I:main;C:inclusion of chains}) for $\strp$ is obvious as in line \ref{A:def.of c+} $\fC{\plus}{x}$ is chosen to satisfy this condition. Moreover \ref{I:main}.(\ref{I:main;C:inclusion of chains}) for $\strp$ together with disjointness of the $\Gamma_i$'s gives that $\set{p\in P^{\plus}: \fC{\plus}{p}=\gamma}$ is a subset of the chain $\set{p\in\bigcup\ff{A}_i:\gamma\in\fCi{\plus}{i}{p}}$, whenever $\gamma\in\Gamma_i$. This establishes \ref{I:main}.(\ref{I:main;C:be chain}) for $\strp$.

Finally, for \ref{I:main}.(\ref{I:main;C:local}) it suffices to focus on $\level{1}, \level{2}\in\ff{A}_{i_0}^{\plus}$ with $\level{1}$ or $\level{2}$ being the only new antichain $A_x$. However in the chain $\ff{A}_{i_0}^{\plus}$ the antichains $A_d, A_x, A_u$ are neighbors, so that $\brackets{\level{1},\level{2}}$ is either $\brackets{A_d,A_x}$ or $\brackets{A_x,A_u}$. In both cases the triple $\brackets{A_d, A_x, A_u}$ together with \linebreak $\chains_{i_0}^{\plus}|_{A_d\cup A_x\cup A_u}$ witnesses \ref{I:main}.(\ref{I:main;C:local}).

\medskip

This shows that entire Invariant \ref{I:main} is kept while passing from $\str$ to $\strp$. In particular \ref{I:main}.(\ref{I:main;C:chains;P:function}) gives that $\poset{P}$ can be colored with
\(
%\abs{\bigcup_{v=1}^w \Gamma_v}=
\sum_{v=1}^w\abs{\Gamma_v}=\sum_{v=1}^w\fVal{v}
\)
colors and finishes our proof of \mbox{Theorem \ref{L:local}}.
\end{proof}

Immediately from Theorem \ref{L:local} we get the following

\begin{cor} \label{R:CP sum LCP}
In the on-line chain partitioning game Algorithm has a strategy that uses at most $\fLCP{1}+\fLCP{2}+\ldots+\fLCP{w}$ chains to partition posets of width $w$, i.e.
 \[
\fCP{w}\leq \sum_{v=1}^w \fLCP{v}.
 \]
\end{cor}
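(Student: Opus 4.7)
The plan is to obtain this corollary as an immediate consequence of Theorem \ref{L:local}. For each width $v \in \set{1, \ldots, w}$, I will invoke the definition of $\fLCP{v}$ to extract a specific strategy for Algorithm in the $v$-local on-line coloring game, and then feed these $w$ strategies into the hypothesis of Theorem \ref{L:local}.

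More concretely, my first step is to fix, for every $v = 1, \ldots, w$, a strategy $\mathcal{S}_v$ for Algorithm that plays forever in the $v$-local on-line coloring game using exactly $\fLCP{v}$ colors; such a strategy exists by the very definition of $\fLCP{v}$ as the least such integer. I would then package each $\mathcal{S}_v$ as a pair of procedures \ilc{} and \lc{} in the sense used inside the proof of Theorem \ref{L:local}, drawing the colors of $\mathcal{S}_v$ from a palette $\Gamma_v$ with $\abs{\Gamma_v} = \fLCP{v}$. Making the palettes $\Gamma_1, \ldots, \Gamma_w$ pairwise disjoint (which is harmless, since renaming colors does not affect the local game) then lets me apply Theorem \ref{L:local} with $\fVal{v} := \fLCP{v}$.

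The conclusion of Theorem \ref{L:local} gives an on-line chain partitioning algorithm that, on any on-line poset of width at most $w$, produces a chain partition using at most $\fVal{1} + \fVal{2} + \ldots + \fVal{w} = \fLCP{1} + \fLCP{2} + \ldots + \fLCP{w}$ chains. Since this bound holds for every on-line presentation, it follows directly from the definition of $\fCP{w}$ that $\fCP{w} \leq \sum_{v=1}^w \fLCP{v}$, which is exactly what has to be shown.

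There is essentially no technical obstacle here beyond correctly aligning definitions: the entire content of the corollary is the observation that Theorem \ref{L:local} is stated for an abstract hypothesis ``for each $v = 1, \ldots, w$, a $v$-local strategy using $\fVal{v}$ colors exists,'' and the definition of $\fLCP{v}$ asserts precisely the existence of such a strategy with $\fVal{v} = \fLCP{v}$. The only point that deserves explicit mention is that Theorem \ref{L:local} assumed the $\Gamma_v$'s were pairwise disjoint, a condition easy to meet by relabeling, after which the proof is a one-line substitution.
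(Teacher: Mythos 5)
Your proposal is correct and is essentially the paper's own argument: the paper obtains this corollary "immediately from Theorem~\ref{L:local}" by the same substitution $\fVal{v} := \fLCP{v}$, and your write-up simply spells out the unpacking of definitions that makes that immediacy precise.
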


The local on-line coloring game described in Definition \ref{D:local game} was defined in a way general enough to prove the inequality of Corollary \ref{R:CP sum LCP}. However it is hard to determine $\fLCP{w}$ for such general local games. It would be much easier to do this for the game described at the beginning of Section \ref{S:Main Reduction}, i.e. for games satisfying two additional properties.

\begin{prp} \label{Prp:disjoint}
Antichains $L$, $M$, $T$ are pairwise disjoint.
\end{prp}

\begin{prp}\label{Prp:core}
$\brackets{L, M,\leq}$, $\brackets{M, T,\leq}$ and $\brackets{L, T,\leq}$ are cores.
\end{prp}

In the following two subsections we will consecutively show that imposing these two restrictions on local on-line coloring game actually does not change the value $\fLCP{w}$ for these modified games.

\subsection{Disjoint levels on the board}\label{SS:different}
In this section we will show that the value of the local on-line coloring game with the the additional
\begin{pr} \label{Pr:disjoint}
Antichains $L$, $M$, $T$ are pairwise disjoint during entire game.
\end{pr}
\noindent{}is bounded from below by $\fLCP{w}$.
Property \ref{Pr:disjoint}, together with the rules of the local game, gives that \mbox{$(L,M,\leq)$}, \mbox{$(M,T,\leq)$} and \mbox{$(L,T,\leq)$}
are regular bipartite posets in the sense of the following definition.
\begin{defn}
A \emph{regular bipartite poset} is a triple $(A,B,\leq)$ such that 
\begin{itemize}
\item $(A\cup B,\leq)$ is a poset,
\item $A,B$ are disjoint antichains with $A\al B$,
\item $\abs{A}=\abs{B}=\fWidth{A\cup B}$.
\end{itemize}
\end{defn}

\bigskip

Before we proceed with the proof we summarize all rules of the local game with Property \ref{Pr:disjoint}.

\begin{defn}\label{D:disjoint game}
By a \emph{disjoint game} we mean the following two-person game between Spoiler and Algorithm. During the first round:
\begin{itemize}
\item Spoiler sets a natural number $w$ and then introduces two antichains $L,T$, each with $w$ elements, such that $L<T$.
\item Algorithm determines a finite set $\Gamma$ of colors that may be used in the entire game and then he colors each point $x\in L\cup T$ with some nonempty subset $\fC{}{x}$ of $\Gamma$ such that for each $\gamma\in\Gamma$ the points colored by $\gamma$ form a chain.
\end{itemize}
During next rounds the board $\brackets{L,T,\leq,\chains}$ from the previous round is transformed to a board $\brackets{L^{\plus},T^{\plus},\leq,\chains^{\plus}}$ according to the following rules:
\begin{itemize}
\item Spoiler introduces $w$ new elements that form an antichain $M$ such that the poset $\poset{B}'=\brackets{L\cup M\cup T,\leq}$ has width $w$ and moreover $L\al M\al T$ in the lattice $\fMA{\poset{B}'}$.
\item Algorithm colors each point $m\in M$ with a nonempty finite set of colors $\fC{\plus}{m}\subseteq\fC{}{L}\cap\fC{}{T}$ and keeps the old multicoloring on $L\cup T$, i.e. $\rest{\chains^{\plus}}{L\cup T}=\chains$.
\item Finally, Spoiler redefines the levels $L, T$ to $L^{\plus}, T^{\plus}$ so that either $\brackets{L^{\plus},T^{\plus}}=\brackets{L,M}$ or $\brackets{L^{\plus},T^{\plus}}=\brackets{M,T}$.
This, after restricting to $L^{\plus}\cup T^{\plus}\!$, creates the new board $\brackets{L^{\plus},T^{\plus},\leq,\chains^{\plus}}$.\end{itemize}
Again, the goal of Algorithm is to pick minimal number $\abs{\Gamma}$ of colors already during the first round so that he can play with these colors forever.
\end{defn}

\bigskip

The following example of a local game with Property \ref{Pr:disjoint} will be further explored to illustrate that Property \ref{Pr:disjoint} is redundant for keeping the value of the game. Let a regular bipartite poset \mbox{$(L,T,\leq)$} be like on Figure \ref{F:Example Local 2}.
In order to satisfy Property \ref{Pr:disjoint} Spoiler must provide $4=\fWidth{L\cup T}$ new points, which form an antichain $M$. In our example after Algorithm colors all points of $M$, Spoiler chooses the subposet \mbox{$(M,T,\!\rest{\leq}{M\cup T})$} for the new regular bipartite poset \mbox{$(L^{\plus}\!,T^{\plus}\!,\leq)$}.
\begin{figure}[hbt]
\centering\ifx\JPicScale\undefined\def\JPicScale{1}\fi
\psset{unit=\JPicScale mm}
\psset{linewidth=0.3,dotsep=1,hatchwidth=0.3,hatchsep=1.5,shadowsize=1,dimen=middle}
\psset{dotsize=0.7 2.5,dotscale=1 1,fillcolor=black}
\psset{arrowsize=1 2,arrowlength=1,arrowinset=0.25,tbarsize=0.7 5,bracketlength=0.15,rbracketlength=0.15}
\begin{pspicture}(0,0)(119,30)
\psline[linewidth=0.25,fillcolor=white,fillstyle=solid](2,23)(2,13)
\psline[linewidth=0.25,fillcolor=white,fillstyle=solid](22,23)(22,13)
\psline[linewidth=0.25,fillcolor=white,fillstyle=solid](22,23)(32,13)
\psline[linewidth=0.25,fillcolor=white,fillstyle=solid](12,23)(32,13)
\psline[linewidth=0.25,fillcolor=white,fillstyle=solid](12,13)(12,23)
\psline[linewidth=0.25,fillcolor=white,fillstyle=solid](22,23)(12,13)
\psline[linewidth=0.25,fillcolor=white,fillstyle=solid](32,23)(32,13)
\rput[l](35,23){$T$}
\rput[l](35,13){$L$}
\psline[linewidth=0.25,fillcolor=white,fillstyle=solid](22,13)(12,23)
\psline[linewidth=0.25,fillcolor=white,fillstyle=solid](32,23)(22,13)
\psline[linewidth=0.1](34,15)(0,15)
\psline[linewidth=0.1](34,11)(0,11)
\psline[linewidth=0.1](0,15)(0,11)
\rput{0}(2,23){\psellipse[linewidth=0.25,linestyle=none,fillstyle=solid](0,0)(1,1)}
\rput{0}(12,13){\psellipse[linewidth=0.25,linestyle=none,fillstyle=solid](0,0)(1,1)}
\rput{0}(22,23){\psellipse[linewidth=0.25,linestyle=none,fillstyle=solid](0,0)(1,1)}
\rput{0}(32,23){\psellipse[linewidth=0.25,linestyle=none,fillstyle=solid](0,0)(1,1)}
\rput{0}(2,13){\psellipse[linewidth=0.25,linestyle=none,fillstyle=solid](0,0)(1,1)}
\rput{0}(12,23){\psellipse[linewidth=0.25,linestyle=none,fillstyle=solid](0,0)(1,1)}
\rput{0}(22,13){\psellipse[linewidth=0.25,linestyle=none,fillstyle=solid](0,0)(1,1)}
\rput{0}(32,13){\psellipse[linewidth=0.25,linestyle=none,fillstyle=solid](0,0)(1,1)}
\psline[linewidth=0.1](34,15)(34,11)
\psline[linewidth=0.1](34,21)(0,21)
\psline[linewidth=0.1](34,25)(0,25)
\psline[linewidth=0.1](34,25)(34,21)
\psline[linewidth=0.1](0,25)(0,21)
\psline[linewidth=0.25,fillcolor=white,fillstyle=solid](44,18)(44,8)
\psline[linewidth=0.25,fillcolor=white,fillstyle=solid](64,18)(64,8)
\psline[linewidth=0.25,fillcolor=white,fillstyle=solid](64,18)(74,8)
\psline[linewidth=0.25,fillcolor=white,fillstyle=solid](54,8)(54,18)
\psline[linewidth=0.25,fillcolor=white,fillstyle=solid](74,18)(74,8)
\rput[l](77,18){$M$}
\rput[l](77,8){$L$}
\psline[linewidth=0.25,fillcolor=white,fillstyle=solid](74,18)(64,8)
\psline[linewidth=0.1](76,10)(42,10)
\psline[linewidth=0.1](76,6)(42,6)
\psline[linewidth=0.1](42,10)(42,6)
\rput{0}(44,18){\psellipse[linewidth=0.25,linestyle=none,fillstyle=solid](0,0)(1,1)}
\rput{0}(54,8){\psellipse[linewidth=0.25,linestyle=none,fillstyle=solid](0,0)(1,1)}
\rput{0}(64,18){\psellipse[linewidth=0.25,linestyle=none,fillstyle=solid](0,0)(1,1)}
\rput{0}(74,18){\psellipse[linewidth=0.25,linestyle=none,fillstyle=solid](0,0)(1,1)}
\rput{0}(44,8){\psellipse[linewidth=0.25,linestyle=none,fillstyle=solid](0,0)(1,1)}
\rput{0}(54,18){\psellipse[linewidth=0.25,linestyle=none,fillstyle=solid](0,0)(1,1)}
\rput{0}(64,8){\psellipse[linewidth=0.25,linestyle=none,fillstyle=solid](0,0)(1,1)}
\rput{0}(74,8){\psellipse[linewidth=0.25,linestyle=none,fillstyle=solid](0,0)(1,1)}
\psline[linewidth=0.1](76,10)(76,6)
\psline[linewidth=0.1](76,16)(42,16)
\psline[linewidth=0.1](76,20)(42,20)
\psline[linewidth=0.1](76,20)(76,16)
\psline[linewidth=0.1](42,20)(42,16)
\psline[linewidth=0.25,fillcolor=white,fillstyle=solid](44,28)(44,18)
\psline[linewidth=0.25,fillcolor=white,fillstyle=solid](64,28)(74,18)
\psline[linewidth=0.25,fillcolor=white,fillstyle=solid](54,18)(54,28)
\psline[linewidth=0.25,fillcolor=white,fillstyle=solid](64,28)(54,18)
\psline[linewidth=0.25,fillcolor=white,fillstyle=solid](74,28)(74,18)
\rput[l](77,28){$T$}
\psline[linewidth=0.25,fillcolor=white,fillstyle=solid](64,18)(54,28)
\rput{0}(44,28){\psellipse[linewidth=0.25,linestyle=none,fillstyle=solid](0,0)(1,1)}
\rput{0}(64,28){\psellipse[linewidth=0.25,linestyle=none,fillstyle=solid](0,0)(1,1)}
\rput{0}(74,28){\psellipse[linewidth=0.25,linestyle=none,fillstyle=solid](0,0)(1,1)}
\rput{0}(54,28){\psellipse[linewidth=0.25,linestyle=none,fillstyle=solid](0,0)(1,1)}
\psline[linewidth=0.1](76,26)(42,26)
\psline[linewidth=0.1](76,30)(42,30)
\psline[linewidth=0.1](76,30)(76,26)
\psline[linewidth=0.1](42,30)(42,26)
\psline[linewidth=0.25,fillcolor=white,fillstyle=solid](64,18)(64,28)
\rput[l](119,18){$L^+$}
\rput{0}(86,18){\psellipse[linewidth=0.25,linestyle=none,fillstyle=solid](0,0)(1,1)}
\rput{0}(106,18){\psellipse[linewidth=0.25,linestyle=none,fillstyle=solid](0,0)(1,1)}
\rput{0}(116,18){\psellipse[linewidth=0.25,linestyle=none,fillstyle=solid](0,0)(1,1)}
\rput{0}(96,18){\psellipse[linewidth=0.25,linestyle=none,fillstyle=solid](0,0)(1,1)}
\psline[linewidth=0.1](118,16)(84,16)
\psline[linewidth=0.1](118,20)(84,20)
\psline[linewidth=0.1](118,20)(118,16)
\psline[linewidth=0.1](84,20)(84,16)
\psline[linewidth=0.25,fillcolor=white,fillstyle=solid](86,28)(86,18)
\psline[linewidth=0.25,fillcolor=white,fillstyle=solid](106,28)(116,18)
\psline[linewidth=0.25,fillcolor=white,fillstyle=solid](96,18)(96,28)
\psline[linewidth=0.25,fillcolor=white,fillstyle=solid](106,28)(96,18)
\psline[linewidth=0.25,fillcolor=white,fillstyle=solid](116,28)(116,18)
\rput[l](119,28){$T^+$}
\psline[linewidth=0.25,fillcolor=white,fillstyle=solid](106,18)(96,28)
\rput{0}(86,28){\psellipse[linewidth=0.25,linestyle=none,fillstyle=solid](0,0)(1,1)}
\rput{0}(106,28){\psellipse[linewidth=0.25,linestyle=none,fillstyle=solid](0,0)(1,1)}
\rput{0}(116,28){\psellipse[linewidth=0.25,linestyle=none,fillstyle=solid](0,0)(1,1)}
\rput{0}(96,28){\psellipse[linewidth=0.25,linestyle=none,fillstyle=solid](0,0)(1,1)}
\psline[linewidth=0.1](118,26)(84,26)
\psline[linewidth=0.1](118,30)(84,30)
\psline[linewidth=0.1](118,30)(118,26)
\psline[linewidth=0.1](84,30)(84,26)
\psline[linewidth=0.25,fillcolor=white,fillstyle=solid](106,18)(106,28)
\rput[B](38,0.5){\begin{footnotesize}putting an antichain $M$\end{footnotesize}}
\psline[linewidth=0.2,fillcolor=white,fillstyle=solid](19,-1)(57,-1)
\psline[linewidth=0.2,fillcolor=white,fillstyle=solid](56,0)(57.07,-1.07)
\psline[linewidth=0.2,fillcolor=white,fillstyle=solid](57.07,-0.93)(56,-2)
\rput(53,0){}
\rput[B](80,0.5){\begin{footnotesize}choosing new structure\end{footnotesize}}
\psline[linewidth=0.2,fillcolor=white,fillstyle=solid](61,-1)(99,-1)
\psline[linewidth=0.2,fillcolor=white,fillstyle=solid](98,0)(99.07,-1.07)
\psline[linewidth=0.2,fillcolor=white,fillstyle=solid](99.07,-0.94)(98,-2)
\rput(95,0){}
\end{pspicture}
\caption{$ $}\label{F:Example Local 2}
\end{figure}

\bigskip

Our aim is to show how Algorithm's strategy for the local game with Property \ref{Pr:disjoint} can be transformed into a strategy for Algorithm playing an unrestricted game of Definition \ref{D:local game}. Such a transformation has to be done on-line, i.e. each round has to be transformed separately. In particular there is nothing to be changed in the first round. However, in all other rounds some changes are needed. \linebreak Roughly speaking in such a transformation:
\begin{itemize}
\item We start with an unrestricted board $\brackets{L,T, \leq \chains}$ and an antichain $M$ not necessarily disjoint with $L\cup T$.
\item We transform $L, M, T$ into three pairwise disjoint antichains $\ol{L}, \ol{M}, \ol{T}$ of size $w$ and together with $\chainso$ for $\ol{L} \cup \ol{T}$ the board $\brackets{\ol{L}, \ol{T}, \leq, \chainso}$ is presented to Algorithm working in the disjoint game to get the multicoloring $\chainso^{\plus}$ of $\ol{M}$.
\item We use the multicoloring $\chainso^{\plus}$ of $\ol{M}$ to produce a multicoloring $\chains^{\plus}$ of $M \setminus (L \cup T)$.
\end{itemize}

For a better understanding of this idea we look into the example presented on Figure \ref{F:Example Local} (page \pageref{F:Example Local}). The poset $\brackets{\set{p_1,\ldots, p_7},\leq}$ with antichains $L=\set{p_1,p_5,p_6,p_7}$ and $T=\set{p_1,p_2,p_3,p_4}$ is shown on Figure \ref{F:Example Local tr1}.a.
\begin{figure}[hbt]
\newcommand{\p}[1]{\begin{small}$p_{#1}$\end{small}}
\newcommand{\m}[1]{\begin{small}$m_{#1}$\end{small}}
\newcommand{\T}{$T$}
\renewcommand{\L}{$L$}
\newcommand{\M}{$M$}
\centering\ifx\JPicScale\undefined\def\JPicScale{1}\fi
\psset{unit=\JPicScale mm}
\psset{linewidth=0.3,dotsep=1,hatchwidth=0.3,hatchsep=1.5,shadowsize=1,dimen=middle}
\psset{dotsize=0.7 2.5,dotscale=1 1,fillcolor=black}
\psset{arrowsize=1 2,arrowlength=1,arrowinset=0.25,tbarsize=0.7 5,bracketlength=0.15,rbracketlength=0.15}
\begin{pspicture}(0,0)(120,31.5)
\psline[linewidth=0.25,fillcolor=white,fillstyle=solid](29,22.5)(29,9.5)
\psline[linewidth=0.25,fillcolor=white,fillstyle=solid](29,22.5)(42,9.5)
\psline[linewidth=0.25,fillcolor=white,fillstyle=solid](16,22.5)(42,9.5)
\psline[linewidth=0.25,fillcolor=white,fillstyle=solid](16,9.5)(16,22.5)
\psline[linewidth=0.25,fillcolor=white,fillstyle=solid](29,22.5)(16,9.5)
\psline[linewidth=0.25,fillcolor=white,fillstyle=solid](42,22.5)(42,9.5)
\rput[Bl](50.5,21){\T}
\rput[Bl](51,8){\L}
\psline[linewidth=0.25,fillcolor=white,fillstyle=solid](29,9.5)(16,22.5)
\psline[linewidth=0.25,fillcolor=white,fillstyle=solid](42,22.5)(29,9.5)
\psline[linewidth=0.1](49.5,12.5)(14,12.5)
\psline[linewidth=0.1](49.5,6.5)(12.5,6.5)
\psline[linewidth=0.1](0,19)(0,13)
\rput{0}(16,9.5){\psellipse[linewidth=0.25,linestyle=none,fillstyle=solid](0,0)(1,1)}
\rput{0}(42,22.5){\psellipse[linewidth=0.25,linestyle=none,fillstyle=solid](0,0)(1,1)}
\rput{0}(29,22.5){\psellipse[linewidth=0.25,linestyle=none,fillstyle=solid](0,0)(1,1)}
\rput{0}(3,16){\psellipse[linewidth=0.25,linestyle=none,fillstyle=solid](0,0)(1,1)}
\rput{0}(16,22.5){\psellipse[linewidth=0.25,linestyle=none,fillstyle=solid](0,0)(1,1)}
\rput{0}(42,9.5){\psellipse[linewidth=0.25,linestyle=none,fillstyle=solid](0,0)(1,1)}
\rput{0}(29,9.5){\psellipse[linewidth=0.25,linestyle=none,fillstyle=solid](0,0)(1,1)}
\psline[linewidth=0.1](49.5,12.5)(49.5,6.5)
\psline[linewidth=0.1](49,20)(13,20)
\psline[linewidth=0.1](49,25)(13,25)
\psline[linewidth=0.1](49,25)(49,20)
\psline[linewidth=0.1](0.5,18.5)(0.5,13.5)
\psline[linewidth=0.25,fillcolor=white,fillstyle=solid](94,16)(94,3)
\psline[linewidth=0.25,fillcolor=white,fillstyle=solid](94,16)(107,3)
\psline[linewidth=0.25,fillcolor=white,fillstyle=solid](107,16)(107,3)
\psline[linewidth=0.25,fillcolor=white,fillstyle=solid](107,16)(94,3)
\psline[linewidth=0.1](115,6.5)(91.5,6.5)
\psline[linewidth=0.1](115,-0.5)(90.5,-0.5)
\psline[linewidth=0.1](65,19)(65,13)
\rput{0}(68,16){\psellipse[linewidth=0.25,linestyle=none,fillstyle=solid](0,0)(1,1)}
\rput{0}(94,16){\psellipse[linewidth=0.25,linestyle=none,fillstyle=solid](0,0)(1,1)}
\rput{0}(107,16){\psellipse[linewidth=0.25,linestyle=none,fillstyle=solid](0,0)(1,1)}
\rput{0}(81,10.5){\psellipse[linewidth=0.25,linestyle=none,fillstyle=solid](0,0)(1,1)}
\rput{0}(94,3){\psellipse[linewidth=0.25,linestyle=none,fillstyle=solid](0,0)(1,1)}
\rput{0}(107,3){\psellipse[linewidth=0.25,linestyle=none,fillstyle=solid](0,0)(1,1)}
\psline[linewidth=0.1](115,6.5)(115,-0.5)
\psline[linewidth=0.1](114.5,13)(91.5,13)
\psline[linewidth=0.1](114.5,19)(90.5,19)
\psline[linewidth=0.1](114.5,19)(114.5,13)
\psline[linewidth=0.1](65.5,18.5)(65.5,13.5)
\psline[linewidth=0.25,fillcolor=white,fillstyle=solid](94,29)(107,16)
\psline[linewidth=0.25,fillcolor=white,fillstyle=solid](81,10.5)(81,29)
\psline[linewidth=0.25,fillcolor=white,fillstyle=solid](94,29)(81,10.5)
\psline[linewidth=0.25,fillcolor=white,fillstyle=solid](107,29)(107,16)
\psline[linewidth=0.25,fillcolor=white,fillstyle=solid](94,16)(81,29)
\rput{0}(107,29){\psellipse[linewidth=0.25,linestyle=none,fillstyle=solid](0,0)(1,1)}
\rput{0}(94,29){\psellipse[linewidth=0.25,linestyle=none,fillstyle=solid](0,0)(1,1)}
\rput{0}(81,29){\psellipse[linewidth=0.25,linestyle=none,fillstyle=solid](0,0)(1,1)}
\psline[linewidth=0.1](114,26.5)(79,26.5)
\psline[linewidth=0.1](114,31.5)(78.5,31.5)
\psline[linewidth=0.1](114,31.5)(114,26.5)
\psline[linewidth=0.1](64.5,19.5)(64.5,12.5)
\psline[linewidth=0.25,fillcolor=white,fillstyle=solid](94,16)(94,29)
\rput[B](99,0){}
\psline[linewidth=0.1](0.5,13.5)(8,13.5)
\psline[linewidth=0.1](0.5,18.5)(8,18.5)
\psline[linewidth=0.1](8,18.5)(13,25)
\psline[linewidth=0.1](8,13.5)(13,20)
\psline[linewidth=0.1](7.5,13)(12.5,6.5)
\psline[linewidth=0.1](7.5,13)(0,13)
\psline[linewidth=0.1](0,19)(9,19)
\psline[linewidth=0.1](14,12.5)(9,19)
\psline[linewidth=0.1](85.5,14)(79.5,14)
\psline[linewidth=0.1](73.5,13.5)(65.5,13.5)
\psline[linewidth=0.1](85.5,7.5)(77.5,7.5)
\psline[linewidth=0.1](73,18.5)(65.5,18.5)
\psline[linewidth=0.1](73,18.5)(78.5,31.5)
\psline[linewidth=0.1](73.5,13.5)(79,26.5)
\psline[linewidth=0.1](84.5,13.5)(90.5,19)
\psline[linewidth=0.1](85.5,7.5)(91.5,13)
\psline[linewidth=0.1](79.5,14)(74.5,19.5)
\psline[linewidth=0.1](77,7)(72,12.5)
\psline[linewidth=0.1](74.5,19.5)(64.5,19.5)
\psline[linewidth=0.1](72,12.5)(64.5,12.5)
\psline[linewidth=0.1](65,13)(72.5,13)
\psline[linewidth=0.1](72.5,13)(77.5,7.5)
\psline[linewidth=0.1](65,19)(74,19)
\psline[linewidth=0.1](77,7)(84.5,7)
\psline[linewidth=0.1](84.5,7)(90.5,-0.5)
\psline[linewidth=0.1](74,19)(79,13.5)
\psline[linewidth=0.1](79,13.5)(84.5,13.5)
\psline[linewidth=0.1](85.5,14)(91.5,6.5)
\rput[Bl](96,2){\p6}
\rput[Bl](108.5,15){\m2}
\rput[Bl](96,15){\m1}
\rput[Bl](109,2){\p7}
\rput[Bl](82.5,9.5){\p5}
\rput[Bl](109,28){\p4}
\rput[Bl](96.5,28){\p3}
\rput[Bl](83.5,28){\p2}
\rput[Bl](69.5,15){\p1}
\rput[Bl](31,8.5){\p6}
\rput[Bl](43.5,8.5){\p7}
\rput[Bl](18,8.5){\p5}
\rput[Bl](43.5,21.5){\p4}
\rput[Bl](31.5,21.5){\p3}
\rput[Bl](18.5,21.5){\p2}
\rput[Bl](4.5,15){\p1}
\rput[Bl](115.5,27.5){\T}
\rput[Bl](116.5,1.5){\L}
\rput[Bl](116,14.5){\M}
\psline[linewidth=0.1,linecolor=white](120,24.5)(120,18.5)
\rput[Bl](2,2){a.}
\rput[Bl](67,2){b.}
\end{pspicture}
\caption{Example of Spoiler's move, when Property \ref{Pr:disjoint} isn't in force.}\label{F:Example Local tr1}
\end{figure}
Spoiler introduces $2$ new points $m_1, m_2$ and reveals an antichain $M=\set{p_1, p_5, m_1, m_2}$ of the new poset \linebreak $\brackets{\set{p_1,\ldots, p_7,m_1,m_2},\leq}$. For $q \in L \cup M \cup T$ we know that $q$ may belong to more than one level $A \in \set{L, M, T}$. In our example $p_1\in L\cap M\cap T$ and $p_5\in L\cap M$. The idea is to inflate $p_1$ to a $3$-element chain $p_1^L < p_1^M < p_1^T$ and to inflate $p_5$ to a $2$-element chain $p_5^L < p_5^M$, as presented by Figure \ref{F:Example Local tr2}. The points that in $L \cup M \cup T$ belong to exactly one of antichains $L$, $M$ or $T$ get only an upper index of this antichain. Now, $\ol{L}, \ol{M}, \ol{T}$ are the antichains of points with the corresponding upper indices. The multicoloring $\chainso$ of $\ol{L} \cup \ol{T}$ is set so that $\fCo{}{q^A} = \fC{}{q}$. After presenting the board $\brackets{\ol{L}, \ol{T}, \leq, \chainso}$ together with an antichain $\ol{M}$ that is disjoint with $\ol{L} \cup \ol{T}$, Algorithm returns the multicoloring $\chainso^{\plus}$ of $\ol{M}=\set{p_1^M,p_5^M,m_1^M,m_2^M}$. 
\begin{figure}[hbt]
\newcommand{\p}[2]{\begin{small}$p_{#1}^{#2}$\end{small}}
\newcommand{\m}[2]{\begin{small}$m_{#1}^{#2}$\end{small}}
\newcommand{\T}{$\ol{T}$}
\renewcommand{\L}{$\ol{L}$}
\newcommand{\M}{$\ol{M}$}
\centering\ifx\JPicScale\undefined\def\JPicScale{1}\fi
\psset{unit=\JPicScale mm}
\psset{linewidth=0.3,dotsep=1,hatchwidth=0.3,hatchsep=1.5,shadowsize=1,dimen=middle}
\psset{dotsize=0.7 2.5,dotscale=1 1,fillcolor=black}
\psset{arrowsize=1 2,arrowlength=1,arrowinset=0.25,tbarsize=0.7 5,bracketlength=0.15,rbracketlength=0.15}
\begin{pspicture}(0,0)(120,32)
\psline[linewidth=0.25,fillcolor=white,fillstyle=solid](29,22.5)(29,9.5)
\psline[linewidth=0.25,fillcolor=white,fillstyle=solid](29,22.5)(42,9.5)
\psline[linewidth=0.25,fillcolor=white,fillstyle=solid](16,22.5)(42,9.5)
\psline[linewidth=0.25,fillcolor=white,fillstyle=solid](16,9.5)(16,22.5)
\psline[linewidth=0.25,fillcolor=white,fillstyle=solid](29,22.5)(16,9.5)
\psline[linewidth=0.25,fillcolor=white,fillstyle=solid](42,22.5)(42,9.5)
\rput[Bl](50.5,21){\T}
\rput[Bl](51,8){\L}
\psline[linewidth=0.25,fillcolor=white,fillstyle=solid](29,9.5)(16,22.5)
\psline[linewidth=0.25,fillcolor=white,fillstyle=solid](42,22.5)(29,9.5)
\psline[linewidth=0.1](49.5,12.5)(0,12.5)
\psline[linewidth=0.1](49.5,6.5)(0,6.5)
\psline[linewidth=0.1](0,12.5)(0,6.5)
\rput{0}(16,9.5){\psellipse[linewidth=0.25,linestyle=none,fillstyle=solid](0,0)(1,1)}
\rput{0}(42,22.5){\psellipse[linewidth=0.25,linestyle=none,fillstyle=solid](0,0)(1,1)}
\rput{0}(29,22.5){\psellipse[linewidth=0.25,linestyle=none,fillstyle=solid](0,0)(1,1)}
\rput{0}(3,9.5){\psellipse[linewidth=0.25,linestyle=none,fillstyle=solid](0,0)(1,1)}
\rput{0}(16,22.5){\psellipse[linewidth=0.25,linestyle=none,fillstyle=solid](0,0)(1,1)}
\rput{0}(42,9.5){\psellipse[linewidth=0.25,linestyle=none,fillstyle=solid](0,0)(1,1)}
\rput{0}(29,9.5){\psellipse[linewidth=0.25,linestyle=none,fillstyle=solid](0,0)(1,1)}
\psline[linewidth=0.1](49.5,12.5)(49.5,6.5)
\psline[linewidth=0.1](49.5,19.5)(0,19.5)
\psline[linewidth=0.1](49.5,25.5)(0,25.5)
\psline[linewidth=0.1](49.5,25.5)(49.5,19.5)
\psline[linewidth=0.25,fillcolor=white,fillstyle=solid](94,16)(94,3)
\psline[linewidth=0.25,fillcolor=white,fillstyle=solid](94,16)(107,3)
\psline[linewidth=0.25,fillcolor=white,fillstyle=solid](107,16)(107,3)
\psline[linewidth=0.25,fillcolor=white,fillstyle=solid](107,16)(94,3)
\psline[linewidth=0.1](115,6)(65,6)
\psline[linewidth=0.1](115,0)(65,0)
\psline[linewidth=0.1](65,19)(65,13)
\rput{0}(68,16){\psellipse[linewidth=0.25,linestyle=none,fillstyle=solid](0,0)(1,1)}
\rput{0}(94,16){\psellipse[linewidth=0.25,linestyle=none,fillstyle=solid](0,0)(1,1)}
\rput{0}(107,16){\psellipse[linewidth=0.25,linestyle=none,fillstyle=solid](0,0)(1,1)}
\rput{0}(81,16){\psellipse[linewidth=0.25,linestyle=none,fillstyle=solid](0,0)(1,1)}
\rput{0}(94,3){\psellipse[linewidth=0.25,linestyle=none,fillstyle=solid](0,0)(1,1)}
\rput{0}(107,3){\psellipse[linewidth=0.25,linestyle=none,fillstyle=solid](0,0)(1,1)}
\psline[linewidth=0.1](115,6)(115,0)
\psline[linewidth=0.1](115,13)(65,13)
\psline[linewidth=0.1](115,19)(65,19)
\psline[linewidth=0.1](115,19)(115,13)
\psline[linewidth=0.1](65,32)(65,26)
\psline[linewidth=0.25,fillcolor=white,fillstyle=solid](94,29)(107,16)
\psline[linewidth=0.25,fillcolor=white,fillstyle=solid](81,16)(81,29)
\psline[linewidth=0.25,fillcolor=white,fillstyle=solid](94,29)(81,16)
\psline[linewidth=0.25,fillcolor=white,fillstyle=solid](107,29)(107,16)
\psline[linewidth=0.25,fillcolor=white,fillstyle=solid](94,16)(81,29)
\rput{0}(107,29){\psellipse[linewidth=0.25,linestyle=none,fillstyle=solid](0,0)(1,1)}
\rput{0}(94,29){\psellipse[linewidth=0.25,linestyle=none,fillstyle=solid](0,0)(1,1)}
\rput{0}(81,29){\psellipse[linewidth=0.25,linestyle=none,fillstyle=solid](0,0)(1,1)}
\psline[linewidth=0.1](115,26)(65,26)
\psline[linewidth=0.1](115,32)(65,32)
\psline[linewidth=0.1](115,32)(115,26)
\psline[linewidth=0.1](65,6)(65,0)
\psline[linewidth=0.25,fillcolor=white,fillstyle=solid](94,16)(94,29)
\rput[B](99,0){}
\rput[Bl](96,2){\p6L}
\rput[Bl](108.5,15){\m2M}
\rput[Bl](96,15){\m1M}
\rput[Bl](109,2){\p7L}
\rput[Bl](83,2){\p5L}
\rput[Bl](109,28){\p4T}
\rput[Bl](96.5,28){\p3T}
\rput[Bl](83.5,28){\p2T}
\rput[Bl](69.5,15){\p1M}
\rput[Bl](31.5,8.5){\p6L}
\rput[Bl](44,8.5){\p7L}
\rput[Bl](18,8.5){\p5L}
\rput[Bl](44,21.5){\p4T}
\rput[Bl](31.5,21.5){\p3T}
\rput[Bl](18.5,21.5){\p2T}
\rput[Bl](5,21.5){\p1T}
\rput[Bl](116.5,27.5){\T}
\rput[Bl](116.5,1.5){\L}
\rput[Bl](116.5,14.5){\M}
\psline[linewidth=0.1,linecolor=white](120,24.5)(120,18.5)
\rput{0}(3,22.5){\psellipse[linewidth=0.25,linestyle=none,fillstyle=solid](0,0)(1,1)}
\psline[linewidth=0.75,fillcolor=white,fillstyle=solid](3,9.5)(3,22.5)
\psline[linewidth=0.1](0,25.5)(0,19.5)
\rput[Bl](5,8.5){\p1L}
\rput{0}(68,29){\psellipse[linewidth=0.25,linestyle=none,fillstyle=solid](0,0)(1,1)}
\rput{0}(68,3){\psellipse[linewidth=0.25,linestyle=none,fillstyle=solid](0,0)(1,1)}
\rput{0}(81,3){\psellipse[linewidth=0.25,linestyle=none,fillstyle=solid](0,0)(1,1)}
\psline[linewidth=0.75,fillcolor=white,fillstyle=solid](81,16)(81,3)
\psline[linewidth=0.75,fillcolor=white,fillstyle=solid](68,29)(68,16)
\psline[linewidth=0.75,fillcolor=white,fillstyle=solid](68,16)(68,3)
\rput[Bl](70,28){\p1T}
\rput[Bl](70,2){\p1L}
\rput[Bl](82.5,15){\p5M}
\end{pspicture}
\caption{Posets with inflating points $p_1$ and $p_5$. }\label{F:Example Local tr2}
\end{figure} 
In order to obtain a multicoloring $\chains^{\plus}$ of $m_1$ and $m_2$ we take the multicoloring $\chainso^{\plus}$ of $m_1^M$ and $m_2^M$, respectively. The values $\fCo{\plus} {p_1^M}$ and $\fCo{\plus}{p_5^M}$ are of no help, as $\fC{\plus}{p_1}$ and $\fC{\plus}{p_5}$ can be already deduced from $\fC{}{p_1}$ and $\fC{}{p_5}$.

\bigskip

Before describing the above transformation in a formal way we need some preparation. First of all we need to describe how a two-level structure $\brackets{L, T, \leq}$ or a three-level structure $\brackets{L, M, T, \leq}$ can be turn into $\brackets{\ol{L}, \ol{T}, \leq}$ or $\brackets{\ol{L}, \ol{M}, \ol{T},\leq}$, respectively. We describe how to proceed in the three-level case, as it should be obvious what to do in the two-level case. Recall that
\begin{itemize}
\item the poset $\brackets{L\cup M\cup T,\leq}$ has width $w$,
\item $L,M,T$ are antichains of size $w$,
\item $L\al M\al T$ in the lattice $\fMA{L\cup M\cup T,\leq}$.
\end{itemize}
The antichains $L, M, T$ are transformed into
\begin{align*}
\ol{L}&\ =\ \{\ l^L\ : l\in L\ \},\\
\ol{M}&\ =\ \{m^M\!:m\in M\},\\
\ol{T}&\ =\ \{\ t^T\ : t\in T\ \},
\end{align*}
so that the sets $\ol{L}, \ol{M}, \ol{T}$ are now pairwise disjoint. The ordering of $\ol{L} \cup \ol{M} \cup \ol{T}$ is defined by
\[
x^A \leq y^B\quad \textrm{iff}\quad x \leq y\ \ \textrm{and}\ \ A\aleq B.
\]

In the following we will switch between two games:
\begin{itemize}
 \item an \emph{unrestricted game} described in Definition \ref{D:local game}. Their players are to be called \emph{Unrestricted Spoiler} and \emph{Unrestricted Algorithm}.
\item a \emph{disjoint game} described in Definition \ref{D:disjoint game}. Their players are to be called \emph{Disjoint Spoiler} and \emph{Disjoint Algorithm}.
\end{itemize}
To color the initial poset $\brackets{L, T, \leq}$ of the first round, note that since $L \cap T = \emptyset$, passing from $(L, T, \leq)$ to $\brackets{\ol{L}, \ol{T}, \leq}$ relies only on renaming the elements of $L\cup T$ without actually duplicating them. Then $\brackets{\ol{L}, \ol{T}, \leq}$ is presented to Disjoint Algorithm, who returns a colored board $\brackets{\ol{L}, \ol{T}, \leq, \chainso}$. The multicoloring Unrestricted Algorithm has to return for $(L, T, \leq)$ can be produced from $\chainso$ in the obvious way, namely $\fC{}{x} = \fCo{}{x^A}$. For the further performance of Unrestricted Algorithm not only $\brackets{L, T, \leq, \chains}$ is stored but actually, in each round from now on, a structure $\brackets{L, T, \leq, \chains, \chainso}$ is maintained.
To be more precise $\brackets{L, T, \leq, \chains}$ is a board for the unrestricted game while $\chainso$ is a multicoloring of $\ol{L} \cup \ol{T}$ returned by Disjoint Algorithm and then restricted by Disjoint Spoiler to the two levels of his choice. Note that after the first round we have
\begin{equation*}
 \chainso\mspace{-1.5mu}\big(x^A\big)\ =\ \fC{}{x}\qquad\textrm{for}\ \ A \in \set{L, T}.
\end{equation*}
This equality can not be maintained in the future but we will keep the following inclusion as an invariant.
\begin{equation}\label{E:DG inv}
 \chainso\mspace{-1.5mu}\big(x^A\big)\ \subseteq\ \fC{}{x}\qquad\textrm{for}\ \ A \in \set{L, T}.
\end{equation}
To describe how to proceed in the consecutive rounds of the unrestricted game, suppose that Unrestricted Spoiler presents the middle antichain $M$ to Unrestricted Algorithm who maintains $\brackets{L, T, \leq, \chains, \chainso}$. Unrestricted Algorithm produces $\brackets{\ol{L}, \ol{M}, \ol{T}, \leq}$ and together with the multicoloring $\chainso$ of $\ol{L} \cup \ol{T}$ presents it to Disjoint Algorithm. He extends $\chainso$ to a multicoloring $\chainso^{\plus}$ of entire $\ol{L} \cup \ol{M} \cup \ol{T}$ and makes Disjoint Spoiler to set $\big(\ol{L}^{\plus}, \ol{T}^{\plus}\big)$ to be either $\brackets{\ol{L}, \ol{M}}$ or $\brackets{\ol{M}, \ol{T}}$ depending whether $\brackets{L, M}$ or $\brackets{M, T}$ was chosen by Unrestricted Spoiler for $\brackets{L^{\plus}, T^{\plus}}$. Now the board $\big(\ol{L}^{\plus}, \ol{T}^{\plus}, \leq, \chainso^{\plus}\big)$, with $\chainso^{\plus}$ restricted to $\ol{L}^{\plus}, \ol{T}^{\plus}$, is returned to Unrestricted Algorithm. All Unrestricted Algorithm has to do now is to extend $\chains$ to a multicoloring $\chains^{\plus}$ of $M\setminus\brackets{L \cup T}$. We will show that putting $\fC{\plus}{m} = \fCo{\plus}{m^M}$ fulfills the requirement of Definition \ref{D:local game}.
To show that described transformation works we need to show the following three properties.
\begin{enumerate}
 \item presenting $\big(\ol{L}, \ol{M}, \ol{T}, \leq, \chainso\big)$ to Disjoint Algorithm is legal, i.e.
\begin{enumerate}
\item $\ol{M}$ is an antichain of width $w$,\label{Enu:DG to do - olM is an ant. of width w}
\item $\ol{L}, \ol{M}, \ol{T}$ are pairwise disjoint,\label{Enu:DG to do - olL,olM,olT disjoint}
\item $\fWidth{\ol{L} \cup \ol{M} \cup \ol{T}}=w$,
\item $\ol{L}\al \ol{M} \al \ol{T}$ holds in $\fMA{\ol{L}\cup \ol{M} \cup \ol{T},\leq}$,\label{Enu:DG to do - olL al olM alolT}
\end{enumerate}
\item the multicoloring $\chains^{\plus}$ is legal, i.e.\label{Enu:DG to do - c+ is legal}
\begin{enumerate}
 \item for each color $\gamma$ the set $\set{p \in L\cup M \cup T : \gamma \in \fC{\plus}{p}}$ is a chain,
\item $\fC{\plus}{m} \subseteq \fC{}{L}\cap\fC{}{T}$ for $m \in M \setminus\brackets{L \cup T}$,\label{Enu:DG to do - c+ sub sum c(L) inter sum c(T)}
\end{enumerate}
\item the invariant (\ref{E:DG inv}) is kept, i.e. $\fC{\plus}{x} \supseteq \fCo{\plus}{x^A}$ for each antichain $A=L,M,T$ and each $x \in A$.\label{Enu:DG to do - inv}
\end{enumerate}
The properties (\ref{Enu:DG to do - olM is an ant. of width w}) and (\ref{Enu:DG to do - olL,olM,olT disjoint}) and the inequality $\fWidth{\ol{L} \cup \ol{M} \cup \ol{T}} \geq w$ are obvious. The converse inequality as well as (\ref{Enu:DG to do - olL al olM alolT}) will follow from the next two claims.

\begin{clm}\label{C:X<Y}
$\fWidth{\ol{L}\cup\ol{M}\cup\ol{T}}\leq w$.
\end{clm}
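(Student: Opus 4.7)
The plan is to prove $\fWidth{\ol{L}\cup\ol{M}\cup\ol{T}}\leq w$ by taking any antichain $\ol{X}$ in $\ol{L}\cup\ol{M}\cup\ol{T}$ and producing, via the obvious projection $\pi(x^A)=x$, an antichain $\pi(\ol{X})$ in $L\cup M\cup T$ of the same cardinality. Since $\fWidth{L\cup M\cup T}=w$, this yields $\abs{\ol{X}}=\abs{\pi(\ol{X})}\leq w$.

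First I would establish the following key observation: if $A,B\in\set{L,M,T}$ with $B\al A$ in $\fMA{L\cup M\cup T}$, then there are no $x\in A$, $y\in B$ with $x<y$. Indeed, $B\aleq A$ means every $y\in B$ lies below some $x'\in A$; thus $x<y\leq x'$ would give two comparable elements $x,x'$ in the antichain $A$, a contradiction. This is the main (very short) technical step and it rests squarely on the fact that $L\al M\al T$ holds in the antichain lattice.

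Next I would check that the projection $\pi$ is injective on $\ol{X}$. If $x^A,x^B\in\ol{X}$ with $A\neq B$, then WLOG $A\al B$, and then the very definition of $\leq$ on $\ol{L}\cup\ol{M}\cup\ol{T}$ gives $x^A\leq x^B$ (since $x\leq x$ and $A\aleq B$), contradicting the fact that $\ol{X}$ is an antichain.

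Finally I would verify that $\pi(\ol{X})$ is an antichain in $L\cup M\cup T$. Take distinct $x^A,y^B\in\ol{X}$ and suppose $x<y$. If $A\aleq B$, then $x^A\leq y^B$ by definition, contradicting the antichain status of $\ol{X}$. If instead $B\al A$, the observation above rules out $x<y$ outright. Hence $x$ and $y$ must be incomparable, and $\pi(\ol{X})$ is an antichain of size $\abs{\ol{X}}$ in the width-$w$ poset $L\cup M\cup T$, giving $\abs{\ol{X}}\leq w$ as required. The main (and essentially only) obstacle is the observation in the first paragraph; everything else is a direct unwinding of definitions.
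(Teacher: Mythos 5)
Your proof is correct, but it takes a genuinely different route from the paper. The paper argues via chain coverings: it invokes Dilworth's Theorem to split $L\cup M\cup T$ into $w$ chains $C_1,\ldots,C_w$, lifts each $C_i$ to the set $\ol{C}_i=\set{x^A : x\in C_i\cap A,\ A\in\set{L,M,T}}$, and then shows directly that each $\ol{C}_i$ is a chain in $\ol{L}\cup\ol{M}\cup\ol{T}$ (the key step being that if $x\leq y$ in $C_i$ but $x^A\not\leq y^B$, then $A\ag B$ forces $x=y$, whence $x^A\geq x^B=y^B$). You instead bound antichains directly: you project any antichain $\ol{X}$ down via $\pi(x^A)=x$, verify injectivity on $\ol{X}$, and prove $\pi(\ol{X})$ is an antichain, giving $|\ol{X}|\leq w$. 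Your pivotal observation -- that $B\al A$ in $\fMA{L\cup M\cup T}$ forbids any $x\in A$, $y\in B$ with $x<y$, since $y\leq x'$ for some $x'\in A$ would force $x<x'$ inside the antichain $A$ -- is the exact dual of the paper's chain-side computation. Both proofs hinge on the fact that $L\al M\al T$ makes the level labels respect the underlying order; you read this off antichain-by-antichain while the paper reads it off chain-by-chain. Your route avoids Dilworth entirely and is slightly more elementary; the paper's fits more naturally into the chain-covering theme of the section and produces an explicit covering, which is sometimes convenient to have in hand.
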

\begin{proof}
Since $\fWidth{L\cup M \cup T }=w$, Dilworth's Theorem \ref{T:Dilworth} allows us to partition $L\cup M \cup T $ into $w$ chains, say $C_1,\ldots, C_w$. To cover $\ol{L}\cup \ol{M} \cup \ol{T}$ by $w$ chains we let $$\ol{C}_i\ =\ \set{x^A : x \in C_i \cap A\ \textrm{and}\ A \in \set{L,M,T}}.$$ Obviously $\ol{C}_1 \cup \ldots \cup \ol{C}_w = \ol{L} \cup \ol{M} \cup \ol{T}$. To see that two points $x^A\!, y^B$ from $\ol{C}_i$ are comparable, first note that since $x,y$ are taken from a chain $C_i$ without loss of generality we may assume that $x\leq y$. Thus, if $x^A \not\leq y^B$ then $A \ag B$. This together with $x \in A$, $y\in B$ and $x \leq y$ gives $x=y$. Consequently $x^A \geq x^B = y^B$. 
\end{proof}

Since $\ol{L}, \ol{M}, \ol{T}$ have the same width $w$ as the entire poset \mbox{$\ol{L} \cup \ol{M} \cup \ol{T}$}, we know that $\ol{L}, \ol{M}, \ol{T} \in \fMA{\ol{L} \cup \ol{M} \cup \ol{T}, \leq}$. Since $L \al M \al T$, property (\ref{Enu:DG to do - olL al olM alolT}) follows directly from the following claim. 

\begin{clm}
For $A,B \in \set{L,M,T}$ with $A\aleq B$ we have $\ol{A}\aleq \ol{B}$.
\end{clm}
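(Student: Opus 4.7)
The plan is to unfold the definitions directly. Recall that $\ol{A}\aleq \ol{B}$ means that every element of $\ol{A}$ lies below some element of $\ol{B}$, and the order on the inflated poset is given by $x^A\leq y^B$ iff $x\leq y$ in the original poset \emph{and} $A\aleq B$ in the lattice $\fMA{L\cup M\cup T}$. So the goal reduces to: given $x^A\in\ol{A}$, exhibit $y\in B$ with $x\leq y$ and observe that the ``label'' side of the comparison is supplied by the hypothesis.

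First I would fix an arbitrary $x^A\in\ol{A}$; by definition of $\ol{A}$ this comes from a genuine element $x\in A$. Applying the hypothesis $A\aleq B$ (which, by the original definition of $\aleq$ between antichains, says that for every element of $A$ there is some element of $B$above it) I obtain a witness $y\in B$ with $x\leq y$. Then $y^B\in\ol{B}$, and both clauses defining the inflated order are satisfied: $x\leq y$ by choice of $y$, and $A\aleq B$ by hypothesis. Hence $x^A\leq y^B$, which is exactly what is needed to conclude $\ol{A}\aleq\ol{B}$.

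There is essentially no obstacle here: the whole statement is a one-line unwinding of the definition of the order on $\ol{L}\cup\ol{M}\cup\ol{T}$, since the label condition $A\aleq B$ appearing in that definition coincides exactly with the hypothesis of the claim. The only mild point worth flagging in the write-up is that the witness $y$ really does belong to $B$ (and thus $y^B$ to $\ol{B}$), rather than to some other level, which is immediate from the definition of $\aleq$ between antichains applied in the ambient poset $\brackets{L\cup M\cup T,\leq}$.
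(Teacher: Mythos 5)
Your proof is correct and is essentially identical to the paper's: both simply unfold the definition of the inflated order, use $A\aleq B$ to pick a witness $y\in B$ above $x$, and note that the label clause is supplied directly by the hypothesis.
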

\begin{proof}
To see that an element $x^A$ of $\ol{A}$ is below some $y^B \in \ol{B}$, simply choose $y\in B$ with $x \leq y$, witnessing that $x \in A$ and $A \aleq B$.
\end{proof}

To show (\ref{Enu:DG to do - c+ is legal}) we first prove a condition that is stronger than (\ref{Enu:DG to do - c+ sub sum c(L) inter sum c(T)}).

\begin{clm}\label{C:lmt}
If $m\in M \setminus\brackets{L\cup T}$ and $\gamma\in\fC{\plus}{m}$ then there 
are $l\in L$ and $t\in T$ such that $\gamma\in \fC{}{l}\cap\fC{}{t}$ and $l\leq m\leq t$.
\end{clm}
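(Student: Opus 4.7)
The plan is to trace the color $\gamma$ through the two layers of indirection built into the transformation. By the construction of Unrestricted Algorithm's response, $\fC{\plus}{m}=\chainso^{\plus}(m^M)$ for $m\in M\setminus(L\cup T)$, so $\gamma\in\chainso^{\plus}(m^M)$. By the rules of the disjoint game (Definition \ref{D:disjoint game}), Disjoint Algorithm is forced to choose $\chainso^{\plus}(m^M)\subseteq\chainso(\ol{L})\cap\chainso(\ol{T})$, since at the moment he colors $m^M$ the current levels are $\ol{L}$ and $\ol{T}$. Hence there exist $l\in L$ and $t\in T$ with $\gamma\in\chainso(l^L)$ and $\gamma\in\chainso(t^T)$. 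Invoking the invariant \eqref{E:DG inv} then gives $\gamma\in\fC{}{l}\cap\fC{}{t}$, which is the first half of the claim.

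For the comparability $l\leq m\leq t$, the point is that $\chainso^{\plus}$ is a legal multicoloring of the inflated poset $(\ol{L}\cup\ol{M}\cup\ol{T},\leq)$, so the set of points assigned color $\gamma$ must form a chain there. In particular the three points $l^L, m^M, t^T$ are pairwise comparable. Since $m\notin L\cup T$, these three labels correspond to genuinely distinct elements of the inflated poset. Recalling that by definition $x^A\leq y^B$ iff $x\leq y$ and $A\aleq B$, and that $L\al M\al T$ are strict in the lattice $\fMA{\cdot}$ so that neither $M\aleq L$ nor $T\aleq M$ holds, the only way to realize the pairwise comparabilities among $l^L,m^M,t^T$ is $l^L\leq m^M\leq t^T$. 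Projecting back to the original order yields $l\leq m\leq t$.

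The main subtle point is the ``direction'' of comparability: one must rule out the reverse inequality $m^M\leq l^L$ (and symmetrically $t^T\leq m^M$), which is where the strictness $L\al M$ (and $M\al T$) in the antichain lattice enters essentially, and where the hypothesis $m\notin L\cup T$ prevents the degenerate case of $m^M$ coinciding under inflation with $l^L$ or $t^T$. Beyond this, the argument is a straightforward unpacking of the definitions of the transformation and of $\chainso^{\plus}$, so no genuine obstacle is expected.
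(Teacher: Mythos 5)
Your proof is correct and follows essentially the same route as the paper: trace $\gamma$ through $\chainso^{\plus}(m^M)$, use the disjoint-game rules to land it in $\chainso(l^L)\cap\chainso(t^T)$, invoke the chain condition for $\gamma$ in the inflated poset to get $l^L < m^M < t^T$, project down, and finish with invariant \eqref{E:DG inv}. The only difference is that you explicitly justify the \emph{direction} of the comparabilities via $L\al M\al T$ in the antichain lattice and the definition of the inflated order, whereas the paper leaves that step implicit; this is a minor elaboration, not a different argument.
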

\begin{proof}
Since $\fC{\plus}{m}$ was set by Unrestricted Algorithm to $\fC{\plus}{m^M\mspace{-1mu}}$ we have $\gamma\in\fCo{\plus}{m^M}$. On the other hand the rules of the disjoint game allow Disjoint Algorithm to use for $\fCo{\plus}{m^M}$ a subset of $\fCo{}{\ol{L}}\cap\fCo{}{\ol{T}}$. Therefore $\gamma\in \fCo{}{l^L}\cap \fCo{}{t^T}$ for some $l\in L$ and $t\in T$. Moreover, since $\set{x^A\in \ol{L} \cup \ol{M} \cup \ol{T} : \gamma\in \fCo{\plus}{x^A}}$ is a chain we know that $l^L < m^M < t^T$. These inequalities imply that $l\leq m\leq t$. Finally, using our invariant (\ref{E:DG inv}) for $\chainso$ and $\chains$ we have $\chainso\mspace{-1.5mu}\big(l^L\big)\subseteq \fC{}{l}$ and $\chainso\mspace{-1.5mu}\big(t^T\big)\subseteq \fC{}{t}$, so that $\gamma \in \fC{}{l} \cap \fC{}{t}$.
\end{proof}

From the next claim we obtain that for every color $\gamma$ the points in $L\cup M\cup T$ colored by $\gamma$ form a chain.

\begin{clm}
Every two points $p,q\in L\cup M\cup T$ with $\fC{\plus}{p}\cap\fC{\plus}{q}\neq\emptyset$ are comparable.
\end{clm}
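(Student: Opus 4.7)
The plan is to proceed by cases on the location of the two points $p,q \in L \cup M \cup T$, depending on whether each lies in the ``old'' set $L \cup T$ or in the genuinely new set $M \setminus (L \cup T)$. The coloring $\chains^{\plus}$ is defined piecewise: it agrees with $\chains$ on $L \cup T$ and equals $\fCo{\plus}{\cdot^M}$ on $M \setminus (L \cup T)$, so the three cases I would analyze are: both points in $L \cup T$; both in $M \setminus (L \cup T)$; and one in each.

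The first case is immediate by the inductive assumption that the board $\brackets{L, T, \leq, \chains}$ from the previous round satisfies the multicoloring property, so a shared color already forces comparability. For the second case, a shared color $\gamma$ gives $\gamma \in \fCo{\plus}{p^M} \cap \fCo{\plus}{q^M}$; since Disjoint Algorithm returns a legal multicoloring on $\ol{L} \cup \ol{M} \cup \ol{T}$, the points $p^M, q^M$ must be comparable in the inflated poset, whence $p \leq q$ or $q \leq p$ in the original; but $M$ is an antichain, so in fact $p = q$.

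The mixed case is where I would apply Claim~\ref{C:lmt}: assuming without loss of generality that $p \in M \setminus (L \cup T)$ and $q \in L \cup T$, the common color $\gamma \in \fC{\plus}{p}$ yields $l \in L$ and $t \in T$ with $\gamma \in \fC{}{l} \cap \fC{}{t}$ and $l \leq p \leq t$. Since $\gamma \in \fC{\plus}{q} = \fC{}{q}$ as well, the multicoloring property of $\chains$ forces $q$ to be comparable with both $l$ and with $t$. If $q \in L$, then $q$ and $l$ lie in the antichain $L$ and share $\gamma$, hence $q = l \leq p$; if $q \in T$, the symmetric argument gives $q = t \geq p$ (and if $q \in L \cap T$ either argument suffices).

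The only step requiring real care is the mixed case, where one must bridge between the new multicoloring on $\ol{M}$ (produced by Disjoint Algorithm) and the old multicoloring on $L \cup T$; the bridge is precisely Claim~\ref{C:lmt}, combined with the fact that the levels $L$ and $T$ are antichains so any two color-sharing points inside one of them must coincide.
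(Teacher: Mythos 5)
Your proof is correct and follows essentially the same route as the paper: the identical three-way case split on whether each point lies in $L\cup T$ or in $M\setminus(L\cup T)$, with the mixed case handled via Claim~\ref{C:lmt}. The only cosmetic difference is in the middle case, where the paper directly observes that $\ol{M}$ being an antichain makes the sets $\fCo{\plus}{m^M}$ pairwise disjoint, while you reach the same conclusion $p=q$ via a small detour through comparability; both are fine.
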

\begin{proof}
Obviously if $p,q\in L\cup T$ then $\rest{\chains^{\plus}}{L\cup T} = \chains$ together with the chain condition for $\chains$ does the job. Now suppose $p,q \in \linebreak M \setminus\brackets{L\cup T}$. Since $\set{m^M: m\in M}$ is an antichain the sets of the form $\fCo{\plus}{m^M}$ are pairwise disjoint. As for $m\in M \setminus\brackets{L\cup T}$ we have $\fC{\plus}{m} = \fCo{\plus}{m^M}$, we know that $\fC{\plus}{p}\cap \fC{\plus}{q}= \emptyset$ unless $p=q$. Finally suppose that $p\in L\cup T$ while $q \in M \setminus\brackets{L\cup T}$ and pick $\gamma \in \fC{\plus}{q}$. Claim \ref{C:lmt} supplies us with a three-element chain \mbox{$L\ni l< q < t\in T$} such that $\gamma \in \fC{}{l}\cap \fC{}{t}\cap \fC{\plus}{q}$. This color $\gamma$ cannot be used on any other element $p$ of $L\cup T$ unless $p=l$ or $p=t$.
\end{proof}

Before we proceed with the proof of (\ref{Enu:DG to do - inv}) a few words clarifying the situation may be needed to understand why in (\ref{Enu:DG to do - inv}), and therefore in (\ref{E:DG inv}), only the inclusion can be kept. This is because the Disjoint Algorithm can sometimes get two copies of a point $x\in L\cup M\cup T$, say $x^L \in \ol{L}$ and $x^M\in \ol{M}$. As Disjoint Algorithm has no choice for $\fCo{\plus}{x^L}$, but to set it to $\fCo{}{x^L}$, he has a choice for $\fCo{\plus}{x^M}$. We will see in the proof of the next claim that his freedom is restricted to $\fCo{\plus}{x^M}\subseteq \fCo{\plus}{x^L}$. However if Unrestricted Spoiler returns $(L^{\plus}, T^{\plus})=(M,T)$ forcing $\big(\ol{L}^{\plus}, \ol{T}^{\plus}\big) =\big(\ol{M}, \ol{T}\big)$, only the inclusion $\fCo{\plus}{x^M}\subseteq\fC{\plus}{x}$ can survive. 

\begin{clm}
$\fCo{\plus}{x^A}\subseteq\fC{\plus}{x}$ for all points $x^A$ of $\ol{L}\cup\ol{M}\cup\ol{T}$. 
\end{clm}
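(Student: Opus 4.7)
The plan is to split the argument on which label $A \in \{L, M, T\}$ the point $x^A$ carries, and to handle each case separately using the inductive invariant $(\chainso(x^A) \subseteq \fC{}{x}$ for $A \in \{L,T\})$ that has been maintained from the previous round.

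First I would dispose of the cases $A = L$ and $A = T$. In both cases the point $x^A$ sits in $\ol{L} \cup \ol{T}$, which is exactly the part of the disjoint board on which Disjoint Algorithm is not permitted to recolor: he only assigns colors to the fresh points of $\ol{M}$. Therefore $\fCo{\plus}{x^A} = \fCo{}{x^A}$. On the unrestricted side, Unrestricted Algorithm sets $\rest{\chains^{\plus}}{L \cup T} = \chains$, so $\fC{\plus}{x} = \fC{}{x}$. The required inclusion then follows immediately from the inductive form of invariant (\ref{E:DG inv}).

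Next I would handle $A = M$ in two subcases. If $x \notin L \cup T$, then by the very definition of $\chains^{\plus}$ we have $\fC{\plus}{x} = \fCo{\plus}{x^M}$, giving the inclusion trivially (in fact with equality). The substantive case is when $x \in M \cap (L \cup T)$; say $x \in L$ (the argument for $x \in T$ is symmetric). Here $\fC{\plus}{x} = \fC{}{x}$, and I need to show that every $\gamma \in \fCo{\plus}{x^M}$ already lies in $\fC{}{x}$. The disjoint game rules force $\fCo{\plus}{x^M} \subseteq \fCo{}{\ol{L}} \cap \fCo{}{\ol{T}}$, so $\gamma$ colors some $l^L \in \ol{L}$. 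Because $\chainso^{\plus}$ is a multicoloring, the set of points bearing $\gamma$ is a chain; hence $l^L$ and $x^M$ are comparable. Using $\ol{L} \al \ol{M}$ (established earlier in the subsection), the only admissible comparison is $l^L \leq x^M$, which translates to $l \leq x$ in the original poset. Since $l, x$ both lie in the antichain $L$, this forces $l = x$, so $\gamma \in \fCo{}{x^L}$. The invariant (\ref{E:DG inv}) applied to $x^L$ then yields $\gamma \in \fC{}{x}$, as wanted.

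The main obstacle is the last subcase, where $x$ has two copies $x^L$ and $x^M$ in the disjoint board and Disjoint Algorithm has freedom in how he colors $x^M$. The key leverage comes from combining the level-ordering $\ol{L} \al \ol{M}$ with the chain property of $\chainso^{\plus}$: these together pin down that any witness in $\ol{L}$ for a color used on $x^M$ must actually be $x^L$ itself, at which point the invariant closes the loop. Once that observation is in hand, the claim is routine.
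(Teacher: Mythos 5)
Your proof is correct and follows essentially the same three-case split and the same underlying argument as the paper's; the only difference is that you handle the delicate third subcase directly (tracing a color $\gamma$ from $x^M$ back to a comparable witness in $\ol{L}$ and forcing $l=x$), whereas the paper phrases the same step as a short contradiction, but the content is identical.
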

\begin{proof} 
Again, our argument splits into three cases:
\begin{itemize}
 \item $x^A\in \ol{L}\cup \ol{T}$,
\item $x^A\in \ol{M}$ and $x\in M \setminus\brackets{L\cup T}$ (in particular $A=M)$,
\item $x^A\in \ol{M}$ and $x\in M\cap \brackets{L\cup T}$ (in particular $A=M$).
\end{itemize}
In the first case, since $\rest{\chains^{\plus}}{L\cup T}=\chains$ and $\rest{\chainso^{\plus}}{\ol{L}\cup\ol{T}}=\chainso$ the invariant (\ref{E:DG inv}) does the job.
Also the second case is easy, as then $\fC{\plus}{x}= \fCo{\plus}{x^M}$. Actually the only possibility when the equality between $\fC{\plus}{x}$ and $\fCo{\plus}{x^A}$ cannot be kept arises in the third case. In this setting without loss of generality we may assume that $x\in M\cap L$. The rules of the disjoint game guarantee that $\fCo{\plus}{x^M}\subseteq \fCo{}{\ol{L}}\cap\fCo{}{\ol{T}}$. Actually we have much more, namely $\fCo{\plus}{x^M}\subseteq \fCo{}{x^L}= \fCo{\plus}{x^L}$, as otherwise $\fCo{\plus}{x^M}\cap \fCo{\plus}{l^L}\neq \emptyset$ for some $l^L\in \ol{L}\setminus \set{x^L}$. This however cannot happen as $l,x \in L$ are incomparable, so that the incomparable points $l^L$ and $x^M$ cannot share a common color. On the other hand we have $\fCo{\plus}{x^L}\subseteq \fC{\plus}{x}$ as $x^L\in \ol{L}$ is treated in the first case. Summing up we get $\fCo{\plus}{x^M}\subseteq \fCo{\plus}{x^L}\subseteq \fC{\plus}{x}$, as required. 
\end{proof}

%\clearpage

\subsection{Cores in the board}\label{SS:core} 
In this section we show that the value of the local game with both Properties \ref{Pr:disjoint} and \ref{Pr:core} is bounded from below by $\fLCP{w}$. We have already seen (in Subsection \ref{SS:different}) that restricting Spoiler's power by imposing Property \ref{Pr:disjoint} does not help Algorithm. Thus, all we have to do is to show that an additional restriction (Property \ref{Pr:core}) put on the top of the disjoint game does not help Algorithm as well. This new game is to be called a \coredisjoint{} game.

Now, all that has to be done, is to transform each round of the disjoint game according to the following rules:
\begin{itemize}
\item Starting with a regular bipartite poset $\brackets{L,T,\leq}$ and an antichain $M$ which is disjoint with $L\cup T$ we have to produce a regular bipartite poset $\brackets{L,T,\leq'}$ which is a core (in particular $\rest{\leq'}{L\cup T}$ is contained in $\rest{\leq}{L\cup T} $\ ) and to modify $\leq$ on entire $L\cup M\cup T$ to get subrelation $\leq'$ of $\leq$ so that both $\brackets{L, M, \leq'}$ and $\brackets{M, T, \leq'}$ are cores.
\item This modified structure $\brackets{L,M,T,\leq'}$ is presented to \CoreDisjoint{} Algorithm to color $M$.
\item After this multicoloring is done, Disjoint Spoiler decides to choose $(L^{\plus},T^{\plus})$ to be either $(L,M)$ or $(M,T)$. 
\end{itemize}
The proper choice for $\leq'$ is essential here. Indeed, it may seem that taking $\leq'$ such that the corresponding $\prec '$ determines perfect matching in both $(L,M)$ and $(M,T)$ would be the best one, as it makes it extremely easy for Algorithm to color $M$ along this matching. For example, suppose $(L,T,\leq)$ is the poset presented on Figure \ref{F:Example omission}.a
\begin{figure}[hbt]
\renewcommand{\l}[1]{\begin{small}$l_{#1}$\end{small}}
\newcommand{\m}[1]{\begin{small}$m_{#1}$\end{small}}
\renewcommand{\t}[1]{\begin{small}$t_{#1}$\end{small}}
\newcommand{\T}{$T$}
\renewcommand{\L}{$L$}
\newcommand{\M}{$M$}
\centering\ifx\JPicScale\undefined\def\JPicScale{1}\fi
\psset{unit=\JPicScale mm}
\psset{linewidth=0.3,dotsep=1,hatchwidth=0.3,hatchsep=1.5,shadowsize=1,dimen=middle}
\psset{dotsize=0.7 2.5,dotscale=1 1,fillcolor=black}
\psset{arrowsize=1 2,arrowlength=1,arrowinset=0.25,tbarsize=0.7 5,bracketlength=0.15,rbracketlength=0.15}
\begin{pspicture}(0,0)(92,32)
\psline[linewidth=0.25,fillcolor=white,fillstyle=solid](22,22.5)(22,9.5)
\psline[linewidth=0.25,fillcolor=white,fillstyle=solid](9,9.5)(9,22.5)
\rput[Bl](0,6.5){a.}
\rput{0}(9,9.5){\psellipse[linewidth=0.25,linestyle=none,fillstyle=solid](0,0)(1,1)}
\rput{0}(22,22.5){\psellipse[linewidth=0.25,linestyle=none,fillstyle=solid](0,0)(1,1)}
\rput{0}(9,22.5){\psellipse[linewidth=0.25,linestyle=none,fillstyle=solid](0,0)(1,1)}
\rput{0}(22,9.5){\psellipse[linewidth=0.25,linestyle=none,fillstyle=solid](0,0)(1,1)}
\psline[linewidth=0.25,fillcolor=white,fillstyle=solid](58,16)(58,3)
\psline[linewidth=0.25,fillcolor=white,fillstyle=solid](71,16)(71,3)
\rput{0}(58,16){\psellipse[linewidth=0.25,linestyle=none,fillstyle=solid](0,0)(1,1)}
\rput{0}(71,16){\psellipse[linewidth=0.25,linestyle=none,fillstyle=solid](0,0)(1,1)}
\rput{0}(58,3){\psellipse[linewidth=0.25,linestyle=none,fillstyle=solid](0,0)(1,1)}
\rput{0}(71,3){\psellipse[linewidth=0.25,linestyle=none,fillstyle=solid](0,0)(1,1)}
\psline[linewidth=0.25,fillcolor=white,fillstyle=solid](71,29)(71,16)
\rput{0}(71,29){\psellipse[linewidth=0.25,linestyle=none,fillstyle=solid](0,0)(1,1)}
\rput{0}(58,29){\psellipse[linewidth=0.25,linestyle=none,fillstyle=solid](0,0)(1,1)}
\psline[linewidth=0.25,fillcolor=white,fillstyle=solid](58,16)(58,29)
\rput[Bl](72.5,15){\m2}
\rput[Bl](60,15){\m1}
\rput[Bl](24.5,8.5){\l2}
\rput[Bl](11,8.5){\l1}
\rput[Bl](24.5,21.5){\t2}
\rput[Bl](11.5,21.5){\t1}
\rput[Bl](60.5,28){\t1}
\rput[Bl](73.5,28){\t2}
\rput[Bl](60.5,2){\l1}
\rput[Bl](73.5,2){\l2}
\rput[Bl](49,0){b.}
\psline[linewidth=0.25,fillcolor=white,fillstyle=solid](35,22.5)(35,9.5)
\psline[linewidth=0.25,linestyle=dashed,dash=1 1,fillcolor=white,fillstyle=solid](35,9.5)(22,22.5)
\rput{0}(35,22.5){\psellipse[linewidth=0.25,linestyle=none,fillstyle=solid](0,0)(1,1)}
\rput{0}(35,9.5){\psellipse[linewidth=0.25,linestyle=none,fillstyle=solid](0,0)(1,1)}
\psline[linewidth=0.25,linestyle=dashed,dash=1 1,fillcolor=white,fillstyle=solid](35,9.5)(9,22.5)
\psline[linewidth=0.25,fillcolor=white,fillstyle=solid](84,16)(84,3)
\rput{0}(84,16){\psellipse[linewidth=0.25,linestyle=none,fillstyle=solid](0,0)(1,1)}
\rput{0}(84,3){\psellipse[linewidth=0.25,linestyle=none,fillstyle=solid](0,0)(1,1)}
\psline[linewidth=0.25,fillcolor=white,fillstyle=solid](84,29)(84,16)
\rput{0}(84,29){\psellipse[linewidth=0.25,linestyle=none,fillstyle=solid](0,0)(1,1)}
\rput[Bl](85.5,15){\m3}
\psline[linewidth=0.25,linestyle=dashed,dash=1 1,fillcolor=white,fillstyle=solid](84,3)(71,16)
\rput[Bl](37.5,8.5){\l3}
\rput[Bl](37.5,21.5){\t3}
\rput[Bl](86.5,28){\t3}
\rput[Bl](86.5,2){\l3}
\psline[linewidth=0.1](42,25.5)(6,25.5)
\psline[linewidth=0.1](42,19.5)(6,19.5)
\psline[linewidth=0.1](6,25.5)(6,19.5)
\psline[linewidth=0.1](42,25.5)(42,19.5)
\psline[linewidth=0.1](42,12.5)(6,12.5)
\psline[linewidth=0.1](42,6.5)(6,6.5)
\psline[linewidth=0.1](6,12.5)(6,6.5)
\psline[linewidth=0.1](42,12.5)(42,6.5)
\psline[linewidth=0.1](92,32)(55,32)
\psline[linewidth=0.1](92,26)(55,26)
\psline[linewidth=0.1](92,32)(92,26)
\psline[linewidth=0.1](92,19)(55,19)
\psline[linewidth=0.1](92,13)(55,13)
\psline[linewidth=0.1](92,19)(92,13)
\psline[linewidth=0.1](92,6)(55,6)
\psline[linewidth=0.1](92,0)(55,0)
\psline[linewidth=0.1](92,6)(92,0)
\psline[linewidth=0.1](55,32)(55,26)
\psline[linewidth=0.1](55,19)(55,13)
\psline[linewidth=0.1](55,6)(55,0)
\psline[linewidth=0.25,linestyle=dashed,dash=1 1,fillcolor=white,fillstyle=solid](35,22.5)(22,9.5)
\psline[linewidth=0.25,linestyle=dashed,dash=1 1,fillcolor=white,fillstyle=solid](71,3)(84,16)
\psbezier[linewidth=0.25,linestyle=dashed,dash=1 1](58,29)(64,17)(71,9)(84,3)
\end{pspicture}
\caption{(a) After Algorithm deleted the edges $\brackets{l_2, t_3}$, $\brackets{l_3, t_1}$, $\brackets{l_3,t_2}$, Spoiler presents $m_1,m_2,m_3$ with solid and doted lines. (b) Doted lines have to be removed to keep comparabilities between $L$ and $T$ as the solid lines on (a) show.}\label{F:Example omission}
\end{figure} with both solid and doted lines, while $\leq'$ consists only of solid lines. Presenting $M=\set{m_1, m_2, m_3}$ with comparabilities described on Figure \ref{F:Example omission}.b by both solid and doted lines has to be transformed into $\leq'$ on $L\cup M\cup T$ e.g. by removing doted lines. This however may lead into future troubles, as it may happen that the further rounds cannot be transformed any more with keeping the same width (see Figure \ref{F:Example del}).
Therefore this will only show that the value of the disjoint game on width $w$ is bounded by the value of the \coredisjoint{} game with some width $w'\geq w$, which is not very helpful.
\begin{figure}[hbt]
\renewcommand{\l}[1]{\begin{small}$l_{#1}^{\plus}$\end{small}}
\newcommand{\m}[1]{\begin{small}$m_{#1}^{\plus}$\end{small}}
\renewcommand{\t}[1]{\begin{small}$t_{#1}^{\plus}$\end{small}}
\newcommand{\T}{$T^{\plus}$}
\renewcommand{\L}{$L^{\plus}$}
\newcommand{\M}{$M^{\plus}$}
\centering\ifx\JPicScale\undefined\def\JPicScale{1}\fi
\psset{unit=\JPicScale mm}
\psset{linewidth=0.3,dotsep=1,hatchwidth=0.3,hatchsep=1.5,shadowsize=1,dimen=middle}
\psset{dotsize=0.7 2.5,dotscale=1 1,fillcolor=black}
\psset{arrowsize=1 2,arrowlength=1,arrowinset=0.25,tbarsize=0.7 5,bracketlength=0.15,rbracketlength=0.15}
\begin{pspicture}(0,0)(92,32)
\psline[linewidth=0.25,fillcolor=white,fillstyle=solid](22,22.5)(22,9.5)
\psline[linewidth=0.25,fillcolor=white,fillstyle=solid](9,9.5)(9,22.5)
\rput[Bl](0,6.5){a.}
\rput{0}(9,9.5){\psellipse[linewidth=0.25,linestyle=none,fillstyle=solid](0,0)(1,1)}
\rput{0}(22,22.5){\psellipse[linewidth=0.25,linestyle=none,fillstyle=solid](0,0)(1,1)}
\rput{0}(9,22.5){\psellipse[linewidth=0.25,linestyle=none,fillstyle=solid](0,0)(1,1)}
\rput{0}(22,9.5){\psellipse[linewidth=0.25,linestyle=none,fillstyle=solid](0,0)(1,1)}
\psline[linewidth=0.25,fillcolor=white,fillstyle=solid](58,16)(58,3)
\psline[linewidth=0.25,linestyle=dashed,dash=1 1,fillcolor=white,fillstyle=solid](71,16)(84,3)
\rput{0}(58,16){\psellipse[linewidth=0.25,linestyle=none,fillstyle=solid](0,0)(1,1)}
\rput{0}(71,16){\psellipse[linewidth=0.25,linestyle=none,fillstyle=solid](0,0)(1,1)}
\rput{0}(58,3){\psellipse[linewidth=0.25,linestyle=none,fillstyle=solid](0,0)(1,1)}
\rput{0}(71,3){\psellipse[linewidth=0.25,linestyle=none,fillstyle=solid](0,0)(1,1)}
\psline[linewidth=0.25,fillcolor=white,fillstyle=solid](71,29)(71,16)
\rput{0}(71,29){\psellipse[linewidth=0.25,linestyle=none,fillstyle=solid](0,0)(1,1)}
\rput{0}(58,29){\psellipse[linewidth=0.25,linestyle=none,fillstyle=solid](0,0)(1,1)}
\psline[linewidth=0.25,fillcolor=white,fillstyle=solid](58,16)(58,29)
\rput[Bl](72.5,15){\m2}
\rput[Bl](60,15){\m1}
\rput[Bl](24.5,8.5){\l2}
\rput[Bl](11,8.5){\l1}
\rput[Bl](24.5,21.5){\t2}
\rput[Bl](11.5,21.5){\t1}
\rput[Bl](60.5,28){\t1}
\rput[Bl](73.5,28){\t2}
\rput[Bl](60.5,2){\l1}
\rput[Bl](73.5,2){\l2}
\rput[Bl](49,0){b.}
\psline[linewidth=0.25,fillcolor=white,fillstyle=solid](35,22.5)(35,9.5)
\psline[linewidth=0.25,linestyle=dashed,dash=1 1,fillcolor=white,fillstyle=solid](35,9.5)(22,22.5)
\rput{0}(35,22.5){\psellipse[linewidth=0.25,linestyle=none,fillstyle=solid](0,0)(1,1)}
\rput{0}(35,9.5){\psellipse[linewidth=0.25,linestyle=none,fillstyle=solid](0,0)(1,1)}
\psline[linewidth=0.25,linestyle=dashed,dash=1 1,fillcolor=white,fillstyle=solid](84,16)(71,3)
\rput{0}(84,16){\psellipse[linewidth=0.25,linestyle=none,fillstyle=solid](0,0)(1,1)}
\rput{0}(84,3){\psellipse[linewidth=0.25,linestyle=none,fillstyle=solid](0,0)(1,1)}
\psline[linewidth=0.25,fillcolor=white,fillstyle=solid](84,29)(84,16)
\rput{0}(84,29){\psellipse[linewidth=0.25,linestyle=none,fillstyle=solid](0,0)(1,1)}
\rput[Bl](85.5,15){\m3}
\rput[Bl](37.5,8.5){\l3}
\rput[Bl](37.5,21.5){\t3}
\rput[Bl](86.5,28){\t3}
\rput[Bl](86.5,2){\l3}
\psline[linewidth=0.1](42,25.5)(6,25.5)
\psline[linewidth=0.1](42,19.5)(6,19.5)
\psline[linewidth=0.1](6,25.5)(6,19.5)
\psline[linewidth=0.1](42,25.5)(42,19.5)
\psline[linewidth=0.1](42,12.5)(6,12.5)
\psline[linewidth=0.1](42,6.5)(6,6.5)
\psline[linewidth=0.1](6,12.5)(6,6.5)
\psline[linewidth=0.1](42,12.5)(42,6.5)
\psline[linewidth=0.1](92,32)(55,32)
\psline[linewidth=0.1](92,26)(55,26)
\psline[linewidth=0.1](92,32)(92,26)
\psline[linewidth=0.1](92,19)(55,19)
\psline[linewidth=0.1](92,13)(55,13)
\psline[linewidth=0.1](92,19)(92,13)
\psline[linewidth=0.1](92,6)(55,6)
\psline[linewidth=0.1](92,0)(55,0)
\psline[linewidth=0.1](92,6)(92,0)
\psline[linewidth=0.1](55,32)(55,26)
\psline[linewidth=0.1](55,19)(55,13)
\psline[linewidth=0.1](55,6)(55,0)
\psline[linewidth=0.25,linestyle=dashed,dash=1 1,fillcolor=white,fillstyle=solid](35,22.5)(22,9.5)
\psbezier[linewidth=0.25](84,29)(80,19)(80,13)(84,3)
\psbezier[linewidth=0.25](71,29)(67,19)(67,13)(71,3)
\end{pspicture}
\caption{Spoiler chooses $(L^{\plus},T^{\plus})=(L,M)$ for the next round (a) and presents $m_1^{\plus},m_2^{\plus},m_3^{\plus}$. Because \mbox{$t_2^{\plus}\parallel l_3^{\plus}$} and \mbox{$t_3^{\plus}\parallel l_2^{\plus}$} there cannot be chains \mbox{$l_3^{\plus}<m_2^{\plus}<t_2^{\plus}$} and \mbox{$l_2^{\plus}<m_3^{\plus}<t_3^{\plus}$}. In our example there is an antichain of $5$ elements \mbox{$l_1^{\plus},l_2^{\plus},l_3^{\plus},m_2^{\plus},m_3^{\plus}$}.}\label{F:Example del}
\end{figure}

However, if $\leq '$ is chosen to be not too small (as in \textit{single} perfect matching cases) there is enough room for a transformation that keeps the width not only at the very moment, but also could keep it in an unpredictable future. It appears that taking $\prec '$ for two levels, e.g. $(L,T)$, to be the sum of \textit{all} perfect matchings between $L$ and $T$ will suffice. This leads to the following definition.
\begin{defn}
Let $\brackets{A,B,\leq}$ be a regular bipartite poset and $\descPM = \fPM{A,B,\leq}$ be the family of all perfect matchings in the bipartite graph $\brackets{A,B,\prec}$.
By the \emph{core} of $\brackets{A,B,\leq}$ we mean the triple $\brackets{A,B,\fCore{A}{B}{\leq}}$ such that $\fCore{A}{B}{\leq}=\set{(x,x):x\in A\cup B}\cup\mspace{2mu} \bigcup \mspace{-2mu}\descPM$.
\end{defn}
It should be obvious that the core $\brackets{A,B,\fCore{A}{B}{\leq}}$ of a poset \linebreak $\brackets{A,B,\leq}$ is a poset again. Sometimes we will write $\cleq$ for $\fCore{A}{B}{\leq}$ if both $A$ and $B$ are clear from the context. Also we will refer to the partial order $\cleq$ itself to be the core of the order $\leq$. 

\begin{figure}[hbt]
\centering\ifx\JPicScale\undefined\def\JPicScale{1}\fi
\psset{unit=\JPicScale mm}
\psset{linewidth=0.3,dotsep=1,hatchwidth=0.3,hatchsep=1.5,shadowsize=1,dimen=middle}
\psset{dotsize=0.7 2.5,dotscale=1 1,fillcolor=black}
\psset{arrowsize=1 2,arrowlength=1,arrowinset=0.25,tbarsize=0.7 5,bracketlength=0.15,rbracketlength=0.15}
\begin{pspicture}(0,0)(107,24.5)
\psline[linewidth=0.25,fillcolor=white,fillstyle=solid](3,8.5)(3,21.5)
\rput[B](22.5,0){$\brackets{A, B,\leq}$}
\psline[linewidth=0.25,fillcolor=white,fillstyle=solid](16,8.5)(3,21.5)
\rput{0}(3,8.5){\psellipse[linewidth=0.25,linestyle=none,fillstyle=solid](0,0)(1,1)}
\rput{0}(16,21.5){\psellipse[linewidth=0.25,linestyle=none,fillstyle=solid](0,0)(1,1)}
\rput{0}(3,21.5){\psellipse[linewidth=0.25,linestyle=none,fillstyle=solid](0,0)(1,1)}
\rput{0}(16,8.5){\psellipse[linewidth=0.25,linestyle=none,fillstyle=solid](0,0)(1,1)}
\psline[linewidth=0.25,fillcolor=white,fillstyle=solid](3,8.5)(16,21.5)
\psline[linewidth=0.1](45,24.5)(0,24.5)
\psline[linewidth=0.1](45,18.5)(0,18.5)
\psline[linewidth=0.1](0,24.5)(0,18.5)
\psline[linewidth=0.1](45,24.5)(45,18.5)
\psline[linewidth=0.1](45,11.5)(0,11.5)
\psline[linewidth=0.1](45,5.5)(0,5.5)
\psline[linewidth=0.1](0,11.5)(0,5.5)
\psline[linewidth=0.1](45,11.5)(45,5.5)
\rput[Bl](46.5,7){$A$}
\rput[Bl](46.5,20){$B$}
\psline[linewidth=0.25,fillcolor=white,fillstyle=solid](29,8.5)(29,21.5)
\psline[linewidth=0.25,fillcolor=white,fillstyle=solid](42,8.5)(29,21.5)
\rput{0}(29,8.5){\psellipse[linewidth=0.25,linestyle=none,fillstyle=solid](0,0)(1,1)}
\rput{0}(42,21.5){\psellipse[linewidth=0.25,linestyle=none,fillstyle=solid](0,0)(1,1)}
\rput{0}(29,21.5){\psellipse[linewidth=0.25,linestyle=none,fillstyle=solid](0,0)(1,1)}
\rput{0}(42,8.5){\psellipse[linewidth=0.25,linestyle=none,fillstyle=solid](0,0)(1,1)}
\psline[linewidth=0.25,fillcolor=white,fillstyle=solid](16,21.5)(42,8.5)
\psline[linewidth=0.25,fillcolor=white,fillstyle=solid](42,8)(42,21.5)
\psline[linewidth=0.25,fillcolor=white,fillstyle=solid](3,21.5)(29,8.5)
\psline[linewidth=0.25,fillcolor=white,fillstyle=solid](16,8.5)(16,21.5)
\psline[linewidth=0.25,fillcolor=white,fillstyle=solid](61,8.5)(61,21.5)
\psline[linewidth=0.25,fillcolor=white,fillstyle=solid](74,8.5)(61,21.5)
\rput{0}(61,8.5){\psellipse[linewidth=0.25,linestyle=none,fillstyle=solid](0,0)(1,1)}
\rput{0}(74,21.5){\psellipse[linewidth=0.25,linestyle=none,fillstyle=solid](0,0)(1,1)}
\rput{0}(61,21.5){\psellipse[linewidth=0.25,linestyle=none,fillstyle=solid](0,0)(1,1)}
\rput{0}(74,8.5){\psellipse[linewidth=0.25,linestyle=none,fillstyle=solid](0,0)(1,1)}
\psline[linewidth=0.25,fillcolor=white,fillstyle=solid](61,8.5)(74,21.5)
\psline[linewidth=0.1](103,24.5)(58,24.5)
\psline[linewidth=0.1](103,18.5)(58,18.5)
\psline[linewidth=0.1](58,24.5)(58,18.5)
\psline[linewidth=0.1](103,24.5)(103,18.5)
\psline[linewidth=0.1](103,11.5)(58,11.5)
\psline[linewidth=0.1](103,5.5)(58,5.5)
\psline[linewidth=0.1](58,11.5)(58,5.5)
\psline[linewidth=0.1](103,11.5)(103,5.5)
\psline[linewidth=0.25,fillcolor=white,fillstyle=solid](87,8.5)(87,21.5)
\rput{0}(87,8.5){\psellipse[linewidth=0.25,linestyle=none,fillstyle=solid](0,0)(1,1)}
\rput{0}(100,21.5){\psellipse[linewidth=0.25,linestyle=none,fillstyle=solid](0,0)(1,1)}
\rput{0}(87,21.5){\psellipse[linewidth=0.25,linestyle=none,fillstyle=solid](0,0)(1,1)}
\rput{0}(100,8.5){\psellipse[linewidth=0.25,linestyle=none,fillstyle=solid](0,0)(1,1)}
\psline[linewidth=0.25,fillcolor=white,fillstyle=solid](100,8)(100,21.5)
\psline[linewidth=0.25,fillcolor=white,fillstyle=solid](74,8.5)(74,21.5)
\rput[B](78.5,0){$\brackets{A, B,\cleq}$}
\psline[linewidth=0.25,fillcolor=white,fillstyle=solid](3,21.5)(42,8.5)
\rput[Bl](104.5,20){$B$}
\rput[Bl](104.5,7){$A$}
\psline[linecolor=white](107,17.5)(107,14)
\end{pspicture}
\caption{Example of a core.}\label{F:Example core}
\end{figure}
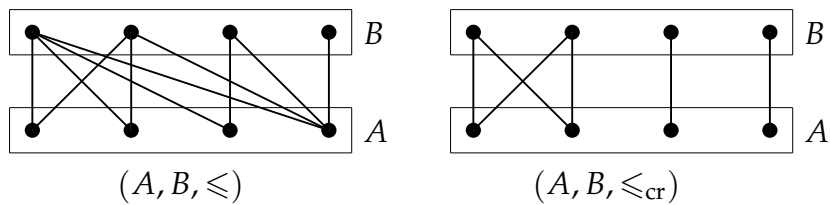 

As each perfect matching from $\fPM{A,B,\leq}$ can be viewed as an increasing bijection from $A$ to $B$ we immediately get the following observation.

\begin{obs}\label{O:old def of core}
Let $(A,B,\leq)$ be a regular bipartite poset. Then
$p\cleq q$ iff $p=q$ or there is a bijection $f:A\tto B$ satisfying:
\begin{enumerate}
\item $a < f\!\brackets{a}$ for all $a\in A$,
\item $f\!\brackets{p}=q$.\hfill$\square$
\end{enumerate}
\end{obs}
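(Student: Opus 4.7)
The statement is essentially an unpacking of the definition of the core, so the plan is to verify that perfect matchings in the bipartite graph $(A, B, \prec)$ are in natural bijection with the functions $f : A \to B$ described in the observation. The only subtlety worth flagging is that in this bipartite setting the cover relation $\prec$ between $A$ and $B$ coincides with the strict comparability $<$: if $a \in A$ and $b \in B$ satisfy $a < b$, then no element can lie strictly between them, since $A$ and $B$ are antichains and $A \al B$ forces every element of $A \cup B$ into one of the two antichains. Hence edges of the bipartite graph $(A, B, \prec)$ are exactly the pairs $(a,b) \in A \times B$ with $a < b$.

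For the ``if'' direction, assume $p \neq q$ and fix a bijection $f : A \to B$ with $a < f(a)$ for all $a \in A$ and $f(p) = q$. The set $M = \{(a, f(a)) : a \in A\}$ is then a perfect matching in $(A, B, \prec)$ by the preceding remark, so $(p, q) \in M \subseteq \bigcup \descPM \subseteq \fCore{A}{B}{\leq}$, i.e.\ $p \cleq q$.

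For the ``only if'' direction, suppose $p \cleq q$ and $p \neq q$. By definition of the core, $(p, q)$ belongs to some perfect matching $M \in \descPM$, so in particular $p \in A$ and $q \in B$. Reading off $M$ as a function $f : A \to B$ by sending each $a \in A$ to the unique $b \in B$ with $(a,b) \in M$ yields a bijection (since $M$ is a perfect matching) satisfying $a \prec f(a)$, and hence $a < f(a)$, for every $a \in A$, together with $f(p) = q$. The diagonal case $p = q$ is immediate from $\{(x,x) : x \in A \cup B\} \subseteq \cleq$, which takes care of the remaining direction.

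The only potential obstacle is sanity-checking that the edge set of the bipartite graph really is given by $<$ rather than some stricter relation; this is handled by the antichain observation in the first paragraph, so the rest reduces to unwinding definitions.
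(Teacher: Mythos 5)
Your proof is correct and takes essentially the same route as the paper, which treats the observation as immediate from the remark that each perfect matching in $(A,B,\prec)$ is the same thing as an increasing bijection $A\to B$. The one extra detail you supply --- that $\prec$ and $<$ coincide on $A\times B$ because $A$ and $B$ are disjoint antichains covering the poset --- is correct and is exactly the observation the paper leaves implicit.
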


Since together with a poset $\brackets{P,\rel{S}}$ we will consider posets of the form $(P,\rel{R})$ with $\rel{R}\subseteq\rel{S}$ we need the following observation.

\begin{obs}\label{O:R al S}
Let $(P,\rel{R})$ and $(P,\rel{S})$ be the posets such that
\begin{itemize}
\item $\rel{R}\subseteq\rel{S}$,
\item $\fWidth{P,\rel{R}}=\fWidth{P,\rel{S}}$.
\end{itemize}
Then for all $A,B\in \fMA{P,\rel{S}}$ with $A\aleq_{\rel{S}} B$ we have $A\aleq_{\rel{R}} B$.
\end{obs}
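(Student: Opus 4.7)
The plan is to show the contrapositive-flavoured statement: if some $a\in A$ has no $\rel{R}$-upper bound in $B$, then $A\not\aleq_{\rel{S}} B$, contradicting the hypothesis. The natural setup observation is that both $A$ and $B$ are maximum antichains with respect to $\rel{R}$ as well: since $\rel{R}\subseteq\rel{S}$, any $\rel{S}$-antichain is an $\rel{R}$-antichain, and the widths agree by assumption, so $\abs{A}=\abs{B}=\fWidth{P,\rel{R}}$. In particular, because $B$ is a maximum antichain in $(P,\rel{R})$, the standard maximality argument (as used in the proof of Observation~\ref{O:another def order ant}) shows that every point of $P$ is either in $B$ or $\rel{R}$-comparable to some element of $B$; otherwise $B$ could be enlarged by that point, contradicting maximality.

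Now suppose for contradiction that some $a\in A$ has no $b\in B$ with $a\rel{R} b$. Then in particular $a\notin B$, so by the remark just made there exists $b\in B$ with $b\neq a$ and $b\rel{R}a$. On the other hand, $A\aleq_{\rel{S}} B$ furnishes some $b'\in B$ with $a\rel{S}b'$. Combining $b\rel{R}a\rel{S}b'$ and using $\rel{R}\subseteq\rel{S}$ we get $b\rel{S}b'$ with $b,b'\in B$; since $B$ is an $\rel{S}$-antichain this forces $b=b'$. But then $a\rel{S}b$ and $b\rel{R}a\subseteq\rel{S}$ imply $a=b$ by antisymmetry of $\rel{S}$, whence $a\in B$, contradicting $a\notin B$.

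The only subtle point — and the one small obstacle to guard against — is that $\rel{R}$-comparability alone does not tell us the direction of the inequality; the argument resolves this by using $A\aleq_{\rel{S}} B$ as a directional anchor, together with the fact that $B$ is an antichain in the \emph{coarser} relation $\rel{S}$, to rule out $b<_{\rel{R}} a$. Everything else is routine bookkeeping with the hypotheses $\rel{R}\subseteq\rel{S}$ and $\fWidth{P,\rel{R}}=\fWidth{P,\rel{S}}$.
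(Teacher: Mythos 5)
Your proof is correct. It differs from the paper's in where it lands the contradiction. The paper argues that if some $a\in A$ has no $\rel{R}$-upper bound in $B$, then $a$ is $\rel{R}$-\emph{incomparable} to all of $B$ (using the observation that $A\aleq_{\rel{S}}B$ prevents $a$ from lying strictly $\rel{S}$-above, hence strictly $\rel{R}$-above, any point of $B$), so $B\cup\set{a}$ is an $\rel{R}$-antichain of size $\fWidth{P,\rel{R}}+1$ --- a width contradiction in $(P,\rel{R})$. You instead invoke the $\rel{R}$-maximality of $B$ to extract an explicit $b\in B$ with $b<_{\rel{R}}a$, combine it with the $b'\in B$ supplied by $A\aleq_{\rel{S}}B$, and derive an antichain or antisymmetry contradiction inside $(P,\rel{S})$. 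Both routes hinge on the same two facts (that $B$ is already a maximum antichain for $\rel{R}$ and that $A\aleq_{\rel{S}}B$ keeps $A$ from straying above $B$); the paper packages them into a single enlarged-antichain punch line while you trace a concrete three-point chain. The paper's version is marginally slicker, yours is more explicit; there is nothing to fix.
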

\begin{proof}
Since $A\aleq_{\rel{S}} B$, we know that, in the order $\rel{S}$, no point of $A$ can lie strictly above some point of $B$. Now, if $A\anleq_{\rel{R}} B$ then there is $a\in A$ with $a\parallel_{\rel{R}} b$ for all $b\in B$. This however leads to an antichain $B\cup\set{a}$ of size $\fWidth{P,\rel{R}}+1$.
\end{proof}

Directly from the definition we know that the order $\cleq$ is contained in the original order $\leq$. An important feature of the core is that containing less comparable pairs it still keeps the width of the poset.

\begin{obs}\label{C: w core = w}
If $(A,B,\leq)$ is a regular bipartite poset of width $w$ then its core $(A,B,\cleq)$ is a regular bipartite poset of width $w$, as well.
\end{obs}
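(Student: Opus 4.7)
The plan is to verify the four defining properties of a regular bipartite poset for $(A,B,\cleq)$: that it is a partial order, that $A$ and $B$ are disjoint antichains with $A\al_{\cleq} B$, that $|A|=|B|$, and that $\fWidth{A\cup B,\cleq}=w$. Equality of sizes is inherited from $(A,B,\leq)$, and since $\cleq\subseteq\leq$ and $A,B$ are antichains with respect to $\leq$, they remain antichains with respect to $\cleq$.

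For the partial order axioms, reflexivity is built into the definition. Antisymmetry and transitivity are essentially vacuous: any non-identity pair in $\cleq$ comes from a perfect matching of the bipartite graph $(A,B,\prec)$, hence it has its first coordinate in $A$ and its second in $B$. Therefore no pair of distinct elements admits comparability in both directions, and no chain of three distinct elements exists in $\cleq$.

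The crux is to show that $A\al_{\cleq} B$, i.e.\ that for every $a\in A$ some $b\in B$ with $a\cleq b$ exists, which by definition of $\cleq$ means that $a$ is matched to some $b$ in at least one perfect matching of $(A,B,\prec)$. I expect this to be the main obstacle, and I would resolve it via Dilworth's Theorem~\ref{T:Dilworth}. Since $\fWidth{A\cup B,\leq}=w$, there exists a partition of $A\cup B$ into $w$ chains. Each such chain contains at most one element of $A$ and at most one of $B$ (because both are antichains), hence at most two elements, so the $2w$ points of $A\cup B$ are distributed into $w$ chains of exactly two points each, producing a perfect matching $M_0\subseteq\prec$. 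This proves that at least one perfect matching exists. To obtain a matching through a prescribed $a\in A$, fix any $b\in B$ with $a<b$ (which exists because $A\al B$ in the lattice of maximum antichains of $(A\cup B,\leq)$) and apply the same Dilworth argument to $(A\setminus\set{a})\cup(B\setminus\set{b})$, whose width is again bounded by $w$ and which contains the antichain $A\setminus\set{a}$ of size $w-1$; concatenating the resulting matching with the edge $(a,b)$ yields a perfect matching of $(A,B,\prec)$ that uses $a$. Consequently $a\cleq b$, and since $a$ was arbitrary, $A\al_{\cleq} B$.

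Finally, for width: any perfect matching obtained above partitions $A\cup B$ into $w$ two-element chains in $(A,B,\cleq)$, so Dilworth's Theorem applied in the reverse direction gives $\fWidth{A\cup B,\cleq}\leq w$; and the antichain $A$ of size $w$ witnesses $\fWidth{A\cup B,\cleq}\geq w$. Hence $\fWidth{A\cup B,\cleq}=w$, completing the verification that $(A,B,\cleq)$ is a regular bipartite poset of width $w$.
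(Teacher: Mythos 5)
Your overall plan coincides with the paper's: apply Dilworth's Theorem to $(A\cup B,\leq)$ to obtain $w$ two-element chains, read off the perfect matching $M_0\subseteq\prec$, and use it to bound $\fWidth{A\cup B,\cleq}$ from above (the paper deduces the relation $A\al B$ in $(A,B,\cleq)$ from Observation~\ref{O:R al S} once the widths are shown equal, while you read it directly from $M_0$; both routes rest on the same chain partition). However, the passage beginning ``To obtain a matching through a prescribed $a\in A$\ldots'' should be deleted: it is both unnecessary and incorrect. It is unnecessary because a perfect matching between the equinumerous sets $A$ and $B$ covers every element of $A$, so $M_0$ alone already gives $a\cleq M_0(a)\in B$ for every $a\in A$, which is all that is required. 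It is incorrect because the auxiliary poset $(A\setminus\set{a})\cup(B\setminus\set{b})$ need not have width $w-1$; it may still have width $w$, and then a Dilworth partition distributes the $2(w-1)$ points into $w$ chains and does not yield a matching of $A\setminus\set{a}$ with $B\setminus\set{b}$, so the claim $a\cleq b$ for a freely chosen $b$ above $a$ is false in general. For example, take $A=\set{a,a_1,a_2}$ and $B=\set{b,b_1,b_2}$ with relations $a<b,b_1,b_2$, $a_1<b$ and $a_2<b,b_1$: this is a regular bipartite poset of width $3$; the set $\set{a_1,a_2,b_2}$ is a $3$-element antichain inside $(A\setminus\set{a})\cup(B\setminus\set{b})$, and no perfect matching of $(A,B,\prec)$ contains the edge $(a,b)$, since $b$ is the only upper cover of $a_1$. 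Once this digression is removed, your proof is correct and essentially the paper's.
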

\begin{proof}
The fact that $(A\cup B,\cleq)$ is a poset is obvious, so we will show that $\fWidth{A\cup B,\leq}=\fWidth{A\cup B,\cleq}$.
Since each \mbox{$\cleq$-comparable} pair is also \mbox{$\leq$-comparable}, every antichain in \linebreak
\mbox{$(A\cup B,\leq )$} is also an antichain in $\brackets{A\cup B,\cleq}$. Thus, $w\leq \linebreak \fWidth{A\cup B,\cleq}$.
By Dilworth's Theorem \ref{T:Dilworth} we know that there is a chain partition of $(A\cup B,\leq )$ into $w$ chains $C_1,C_2,\ldots,C_w$. Both antichains $A$ and $B$ contain $w$ elements, so that each $C_i$ intersects both $A$ and $B$, in fact $\abs{A\cap C_i}=\abs{B\cap C_i}=1$. Moreover for each $a\in A$ there is $C^a\in\set{C_1,\ldots,C_w}$ such that $\set{a}= A\cap C^a$. Now let $f:A\tto B$ sends each $a\in A$ to the unique element of $B\cap C^a$. This means that $C^a=\set{a, f(a)}$ and actually each $C_i$ is of the form $\set{a,f(a)}$ for some $a\in A$. Thus $f$ serves as an uniform witness for $a\cleq f(a)$ with $a$ ranging over $A$. This means that the $C_i$'s are also chains in $\cleq$, so that $\fWidth{A\cup B,\cleq}\leq w$, as required.

Moreover, the fact that $A\al B$ in $\brackets{A,B,\cleq}$ follows from Observation \ref{O:R al S}.
\end{proof}

The core $\fCore{A}{B}{\rel{R}}$ is contained in a partial order $\rel{R}$. Moreover, \linebreak the operation of taking a core is idempotent and monotone, i.e. \linebreak $\fCore{A}{B}{\fCore{A}{B}{\rel{R}}}=\fCore{A}{B}{\rel{R}}$ and $\rel{R}\subseteq \rel{S}$ implies $\fCore{A}{B}{\rel{R}}\subseteq \fCore{A}{B}{\rel{S}}$. This leads to the following definition.

\begin{defn}
A poset $\brackets{A,B,\rel{R}}$ is a \emph{core} if it is a regular bipartite poset and $\fCore{A}{B}{\rel{R}}=\rel{R}$.
\end{defn}
As an example illustrating the notion of cores we list all cores of width at most $3$.
\begin{exm}\label{C:char}
Let $\brackets{A, B,\leq}$ be a core of width $w\leq 3$.
Then $\brackets{A, B, \leq}$ is isomorphic to one of the posets listed by Figure \ref{F:Pii}.
\end{exm}
\begin{figure}[hbt]
%\begin{scriptsize}
\centering\input{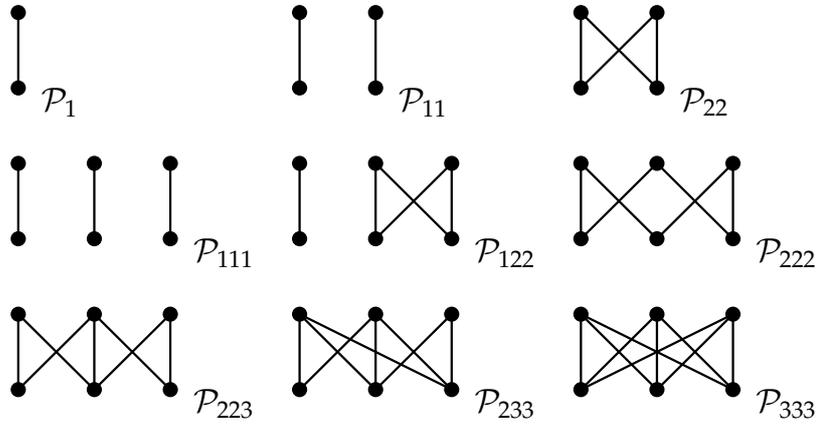}
%\end{scriptsize}
\caption{Complete list of cores of width at most $3$.}\label{F:Pii}
\end{figure} 
\begin{proof}
One can easily check that posets $\poset{P}_1$--$\poset{P}_{333}$ can be covered by $1$, $1$, $2$, $1$, $2$, $2$, $3$, $4$ and $3$ perfect matchings, respectively. This shows that these posets are cores. To see that there is no other one we will use the notion of degree in the bipartite digraph $\brackets{A,B,\prec}$. More formally $\delta(a)=\abs{a\upseto}$ for $a\in A$ and $\delta(b)=\abs{b\downseto}$ if $b\in B$. We also define $\Delta$ to be the multiset $\delta(A)=\set{\delta(a):a\in A}$. In particular for the posets $\poset{P}_1$--$\poset{P}_{333}$ these multisets $\Delta$ are $\set{1}$, $\set{1,1}$, $\set{2,2}$, $\set{1,1,1}$, $\set{1,2,2}$, $\set{2,2,2}$, $\set{2,3,2}$, $\set{2,3,3}$, $\set{3,3,3}$ respectively%
\footnote{
The reader should be warned here that the multisets $\delta(A)$ and $\delta(B)=\set{\delta(b):b\in B}$ that coincides for cores of width at most $3$, do not have to coincide in general. For example
\begin{center}
\noindent%%Created by jPicEdt 1.4.1_03: mixed JPIC-XML/LaTeX format
\ifx\JPicScale\undefined\def\JPicScale{1}\fi
\psset{unit=\JPicScale mm}
\psset{linewidth=0.3,dotsep=1,hatchwidth=0.3,hatchsep=1.5,shadowsize=1,dimen=middle}
\psset{dotsize=0.7 2.5,dotscale=1 1,fillcolor=black}
\psset{arrowsize=1 2,arrowlength=1,arrowinset=0.25,tbarsize=0.7 5,bracketlength=0.15,rbracketlength=0.15}
\begin{pspicture}(0,0)(32,19.75)
\rput{0}(2,14.25){\psellipse[linestyle=none,fillstyle=solid](0,0)(0.75,0.75)}
\psline[linewidth=0.2](2,14.25)(2,5.25)
\rput{0}(2,5.25){\psellipse[linestyle=none,fillstyle=solid](0,0)(0.75,0.75)}
\rput{0}(11,14.25){\psellipse[linestyle=none,fillstyle=solid](0,0)(0.75,0.75)}
\rput{0}(11,5.25){\psellipse[linestyle=none,fillstyle=solid](0,0)(0.75,0.75)}
\psline[linewidth=0.2](11,5.25)(2,14.25)
\rput[b](2,17){$2$}
\rput{0}(20,14.25){\psellipse[linestyle=none,fillstyle=solid](0,0)(0.75,0.75)}
\psline[linewidth=0.2](20,14.25)(11,5.25)
\rput{0}(20,5.25){\psellipse[linestyle=none,fillstyle=solid](0,0)(0.75,0.75)}
\rput{0}(29,14.25){\psellipse[linestyle=none,fillstyle=solid](0,0)(0.75,0.75)}
\rput{0}(29,5.25){\psellipse[linestyle=none,fillstyle=solid](0,0)(0.75,0.75)}
\psline[linewidth=0.2](29,5.25)(20,14.25)
\psline[linewidth=0.2](29,5.25)(29,14.25)
\psline[linewidth=0.2](20,5.25)(29,14.25)
\psline[linewidth=0.2](20,5.25)(11,14.25)
\psline[linewidth=0.2](2,5.25)(11,14.25)
\psline[linewidth=0.2](20,5.5)(20,14.25)
\psline[linewidth=0.2](2,5.25)(20,14.25)
\psline[linewidth=0.2,linecolor=white](0,19.75)(31,19.75)
\rput[b](11,17){$2$}
\rput[b](29,17){$2$}
\rput[b](20,17){$4$}
\rput[t](2,2.5){$3$}
\rput[t](20,2.5){$3$}
\rput[t](11,2.5){$2$}
\rput[t](29,2.5){$2$}
\psline[linewidth=0.2,linecolor=white](0,0)(31,0)
\psline[linewidth=0.1](30.5,6.75)(0.5,6.75)
\psline[linewidth=0.1](30.5,3.75)(0.5,3.75)
\psline[linewidth=0.1](30.5,12.75)(0.5,12.75)
\psline[linewidth=0.1](30.5,15.75)(0.5,15.75)
\psline[linewidth=0.1](30.5,15.75)(30.5,12.75)
\psline[linewidth=0.1](30.5,6.75)(30.5,3.75)
\psline[linewidth=0.1](0.5,15.75)(0.5,12.75)
\psline[linewidth=0.1](0.5,6.75)(0.5,3.75)
\rput[l](32,5.25){$A$}
\rput[l](32,14.25){$B$}
\end{pspicture}

\end{center}
 $\delta(A)=\set{3,2,3,2}$ and $\delta(B)=\set{2,2,4,2}$ are different.
}. %
Before we proceed with our classification we observe that:
\begin{texteqn}\label{TE:forb1}
In a core $(A,B,\leq)$ there is no configuration $A\ni a,a'<b\in B$ with $\delta(a)=1$.
\end{texteqn}
Indeed, to witness that $a'<b$ is in the core we need a bijection \mbox{$f:A\tto B$} that sends $a'$ to $b$. But then $f(a)$ has to be bigger than $a$ which is impossible as the only element $b$ in $a\upseto$ has been already taken \mbox{by $a'$}.\\
Analogously we have
\begin{texteqn}\label{TE:forb2}
In a core $(A,B,\leq)$ there is no configuration $A\ni a<b,b'\in B$ with $\delta(b)=1$.
\end{texteqn}

Now we are ready for our classification. Since $\delta(a)\leq w\leq 3$ the only possibilities for $\Delta$ are
\begin{itemize}
\item $\set{1}$ for $w=1$;
\item $\set{1,1}$, $\set{1,2}$, $\set{2,2}$ for $w=2$;
\item $\set{1,1,1}$, $\set{1,1,2}$, $\set{1,1,3}$, $\set{1,2,2}$, $\set{1,2,3}$, $\set{1,3,3}$, $\set{2,2,2}$, $\set{2,2,3}$, $\set{2,3,3}$, $\set{3,3,3}$ for $w=3$.
\end{itemize}
Now we proceed by cases, with an additional listing of elements $A=\set{a_1,\ldots,a_w}$ and $B=\set{b_1,\ldots,b_w}$. Moreover we assume that the multiset $\Delta$ is listed exactly in the order $\delta(a_1),\ldots,\delta(a_w)$. %Next we consider every possibilities.

\begin{description}
\item[$\set{1}$] This leads to the poset $\poset{P}_1$.
\item[$\set{1,1}$] This leads to the poset $\poset{P}_{11}$, as otherwise we will have a configuration forbidden in (\ref{TE:forb1}).
\item[$\set{1,2}$] Without loss of generality let $b_1$ be the only neighbor of $a_1$. Then $a_2$ has $b_1$ as a neighbor as well. This produces a configuration forbidden in (\ref{TE:forb1}) so that $\set{1,2}$ cannot be realized by a core.
\item[$\set{2,2}$] This leads to the poset $\poset{P}_{22}$.
\item[$\set{1,1,1}$] This leads to the poset $\poset{P}_{111}$, as otherwise some $b_i$ will have two neighbors of degree $1$ which is forbidden by (\ref{TE:forb1}).
\item[$\set{1,1,2} and \set{1,1,3}$] Since $\delta(a_1)=\delta(a_2)=1$, by (\ref{TE:forb1}) the points $a_1$ and $a_2$ cannot share a neighbor with any other point. Without loss of generality $a_1<b_1$ and $a_2<b_2$. But now there is no room for neighbors of $a_3$.
\item[$\set{1,2,2}$] Again, by (\ref{TE:forb1}), $a_1$ has its ``private'' neighbor, say $b_1$. To realize $\delta(a_2)=\delta(a_3)=2$ we need to have $a_2,a_3<b_2,b_3$ which gives $\poset{P}_{122}$.
\item[$\set{1,2,3} and \set{1,3,3}$] These multisets are excluded, as again by (\ref{TE:forb1}), $a_3$ taking all of the $b_i$'s leaves no room for a ``private'' neighbor for $a_1$.
\item[$\set{2,2,2}$] First we argue that the upsets $a_1\upseto$, $a_2\upseto$, $a_3\upseto$ have to be pairwise different. Indeed, suppose e.g. $a_1\upseto=a_2\upseto=\set{b_1,b_2}$. Then $a_3<b_3$ as otherwise $\set{a_1,a_2,a_3,b_3}$ would be a $4$-element antichain. Now to realize $\delta(a_3)=2$ without loss of generality we may assume that $a_3<b_2$ so that we are in the situation presented by Figure \ref{F:not core}.
\begin{figure}[hbt]
%\begin{scriptsize}
\centering\input{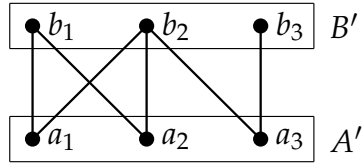}
%\end{scriptsize}
\caption{Poset which is not a core.}\label{F:not core}
\end{figure} 
But then $a_3<b_3,b_2$ is a situation forbidden by (\ref{TE:forb2}). Thus $a_1\upseto$, $a_2\upseto$, $a_3\upseto$ are pairwise different and then the poset $\poset{P}_{222}$ appears.
\item[$\set{2,2,3}$] By the same token as in case $\set{2,2,2}$ the sets $a_1\upseto$, $a_2\upseto$, $a_3\upseto$ are pairwise different. The only way to realize degrees $2,2,3$ on $a_1,a_3,a_2$ is by a poset isomorphic to $\poset{P}_{223}$.
\item[$\set{2,3,3}$] Poset $\poset{P}_{233}$ is the only way to realize this multiset.
\item[$\set{3,3,3}$] Poset $\poset{P}_{333}$ is the only way to realize this multiset.\hfill\qedsymbol
\end{description}
\renewcommand{\qedsymbol}{}
\end{proof}

Now we are ready to reduce the local game with Property \ref{Pr:disjoint} to the game which in addition satisfies the following property.

\begin{pr}\label{Pr:core}
$\brackets{L, M,\leq}$, $\brackets{M, T,\leq}$ and $\brackets{L, T,\leq}$ are cores during entire game.
\end{pr}
For example, let $\brackets{L, T, \leq}$ be a core presented on Figure \ref{F:Example Local core}.
In order to fulfill Property \ref{Pr:disjoint}, Spoiler provides $4$ new points which form an antichain $M$. Moreover in order to satisfy Property \ref{Pr:core} he has to do it in a way that $\brackets{L, M,\leq}$ and $\brackets{M, T,\leq}$ are cores. After Algorithm colors $M$, Spoiler redefines $(L^{\plus}\!,T^{\plus})$ to $(M,T)$.

\begin{figure}[hbt]
\centering\ifx\JPicScale\undefined\def\JPicScale{1}\fi
\psset{unit=\JPicScale mm}
\psset{linewidth=0.3,dotsep=1,hatchwidth=0.3,hatchsep=1.5,shadowsize=1,dimen=middle}
\psset{dotsize=0.7 2.5,dotscale=1 1,fillcolor=black}
\psset{arrowsize=1 2,arrowlength=1,arrowinset=0.25,tbarsize=0.7 5,bracketlength=0.15,rbracketlength=0.15}
\begin{pspicture}(0,0)(119,30)
\psline[linewidth=0.25,fillcolor=white,fillstyle=solid](2,23)(2,13)
\psline[linewidth=0.25,fillcolor=white,fillstyle=solid](22,23)(22,13)
\psline[linewidth=0.25,fillcolor=white,fillstyle=solid](22,23)(32,13)
\psline[linewidth=0.25,fillcolor=white,fillstyle=solid](12,23)(32,13)
\psline[linewidth=0.25,fillcolor=white,fillstyle=solid](12,13)(12,23)
\psline[linewidth=0.25,fillcolor=white,fillstyle=solid](22,23)(12,13)
\psline[linewidth=0.25,fillcolor=white,fillstyle=solid](32,23)(32,13)
\rput[l](35,23){$T$}
\rput[l](35,13){$L$}
\psline[linewidth=0.25,fillcolor=white,fillstyle=solid](22,13)(12,23)
\psline[linewidth=0.25,fillcolor=white,fillstyle=solid](32,23)(22,13)
\psline[linewidth=0.1](34,15)(0,15)
\psline[linewidth=0.1](34,11)(0,11)
\psline[linewidth=0.1](0,15)(0,11)
\rput{0}(2,23){\psellipse[linewidth=0.25,linestyle=none,fillstyle=solid](0,0)(1,1)}
\rput{0}(12,13){\psellipse[linewidth=0.25,linestyle=none,fillstyle=solid](0,0)(1,1)}
\rput{0}(22,23){\psellipse[linewidth=0.25,linestyle=none,fillstyle=solid](0,0)(1,1)}
\rput{0}(32,23){\psellipse[linewidth=0.25,linestyle=none,fillstyle=solid](0,0)(1,1)}
\rput{0}(2,13){\psellipse[linewidth=0.25,linestyle=none,fillstyle=solid](0,0)(1,1)}
\rput{0}(12,23){\psellipse[linewidth=0.25,linestyle=none,fillstyle=solid](0,0)(1,1)}
\rput{0}(22,13){\psellipse[linewidth=0.25,linestyle=none,fillstyle=solid](0,0)(1,1)}
\rput{0}(32,13){\psellipse[linewidth=0.25,linestyle=none,fillstyle=solid](0,0)(1,1)}
\psline[linewidth=0.1](34,15)(34,11)
\psline[linewidth=0.1](34,21)(0,21)
\psline[linewidth=0.1](34,25)(0,25)
\psline[linewidth=0.1](34,25)(34,21)
\psline[linewidth=0.1](0,25)(0,21)
\psline[linewidth=0.25,fillcolor=white,fillstyle=solid](44,18)(44,8)
\psline[linewidth=0.25,fillcolor=white,fillstyle=solid](64,18)(64,8)
\psline[linewidth=0.25,fillcolor=white,fillstyle=solid](64,18)(74,8)
\psline[linewidth=0.25,fillcolor=white,fillstyle=solid](54,8)(54,18)
\psline[linewidth=0.25,fillcolor=white,fillstyle=solid](74,18)(74,8)
\rput[l](77,18){$M$}
\rput[l](77,8){$L$}
\psline[linewidth=0.25,fillcolor=white,fillstyle=solid](74,18)(64,8)
\psline[linewidth=0.1](76,10)(42,10)
\psline[linewidth=0.1](76,6)(42,6)
\psline[linewidth=0.1](42,10)(42,6)
\rput{0}(44,18){\psellipse[linewidth=0.25,linestyle=none,fillstyle=solid](0,0)(1,1)}
\rput{0}(54,8){\psellipse[linewidth=0.25,linestyle=none,fillstyle=solid](0,0)(1,1)}
\rput{0}(64,18){\psellipse[linewidth=0.25,linestyle=none,fillstyle=solid](0,0)(1,1)}
\rput{0}(74,18){\psellipse[linewidth=0.25,linestyle=none,fillstyle=solid](0,0)(1,1)}
\rput{0}(44,8){\psellipse[linewidth=0.25,linestyle=none,fillstyle=solid](0,0)(1,1)}
\rput{0}(54,18){\psellipse[linewidth=0.25,linestyle=none,fillstyle=solid](0,0)(1,1)}
\rput{0}(64,8){\psellipse[linewidth=0.25,linestyle=none,fillstyle=solid](0,0)(1,1)}
\rput{0}(74,8){\psellipse[linewidth=0.25,linestyle=none,fillstyle=solid](0,0)(1,1)}
\psline[linewidth=0.1](76,10)(76,6)
\psline[linewidth=0.1](76,16)(42,16)
\psline[linewidth=0.1](76,20)(42,20)
\psline[linewidth=0.1](76,20)(76,16)
\psline[linewidth=0.1](42,20)(42,16)
\psline[linewidth=0.25,fillcolor=white,fillstyle=solid](44,28)(44,18)
\psline[linewidth=0.25,fillcolor=white,fillstyle=solid](74,28)(74,18)
\psline[linewidth=0.25,fillcolor=white,fillstyle=solid](54,18)(54,28)
\psline[linewidth=0.25,fillcolor=white,fillstyle=solid](64,28)(54,18)
\psline[linewidth=0.25,fillcolor=white,fillstyle=solid](64,28)(64,18)
\rput[l](77,28){$T$}
\psline[linewidth=0.25,fillcolor=white,fillstyle=solid](64,18)(54,28)
\rput{0}(44,28){\psellipse[linewidth=0.25,linestyle=none,fillstyle=solid](0,0)(1,1)}
\rput{0}(64,28){\psellipse[linewidth=0.25,linestyle=none,fillstyle=solid](0,0)(1,1)}
\rput{0}(74,28){\psellipse[linewidth=0.25,linestyle=none,fillstyle=solid](0,0)(1,1)}
\rput{0}(54,28){\psellipse[linewidth=0.25,linestyle=none,fillstyle=solid](0,0)(1,1)}
\psline[linewidth=0.1](76,26)(42,26)
\psline[linewidth=0.1](76,30)(42,30)
\psline[linewidth=0.1](76,30)(76,26)
\psline[linewidth=0.1](42,30)(42,26)
\rput[l](119,18){$L^+$}
\rput{0}(86,18){\psellipse[linewidth=0.25,linestyle=none,fillstyle=solid](0,0)(1,1)}
\rput{0}(106,18){\psellipse[linewidth=0.25,linestyle=none,fillstyle=solid](0,0)(1,1)}
\rput{0}(116,18){\psellipse[linewidth=0.25,linestyle=none,fillstyle=solid](0,0)(1,1)}
\rput{0}(96,18){\psellipse[linewidth=0.25,linestyle=none,fillstyle=solid](0,0)(1,1)}
\psline[linewidth=0.1](118,16)(84,16)
\psline[linewidth=0.1](118,20)(84,20)
\psline[linewidth=0.1](118,20)(118,16)
\psline[linewidth=0.1](84,20)(84,16)
\psline[linewidth=0.25,fillcolor=white,fillstyle=solid](86,28)(86,18)
\psline[linewidth=0.25,fillcolor=white,fillstyle=solid](96,18)(96,28)
\psline[linewidth=0.25,fillcolor=white,fillstyle=solid](106,28)(96,18)
\psline[linewidth=0.25,fillcolor=white,fillstyle=solid](116,28)(116,18)
\rput[l](119,28){$T^+$}
\psline[linewidth=0.25,fillcolor=white,fillstyle=solid](106,18)(96,28)
\rput{0}(86,28){\psellipse[linewidth=0.25,linestyle=none,fillstyle=solid](0,0)(1,1)}
\rput{0}(106,28){\psellipse[linewidth=0.25,linestyle=none,fillstyle=solid](0,0)(1,1)}
\rput{0}(116,28){\psellipse[linewidth=0.25,linestyle=none,fillstyle=solid](0,0)(1,1)}
\rput{0}(96,28){\psellipse[linewidth=0.25,linestyle=none,fillstyle=solid](0,0)(1,1)}
\psline[linewidth=0.1](118,26)(84,26)
\psline[linewidth=0.1](118,30)(84,30)
\psline[linewidth=0.1](118,30)(118,26)
\psline[linewidth=0.1](84,30)(84,26)
\psline[linewidth=0.25,fillcolor=white,fillstyle=solid](106,18)(106,28)
\rput[B](38,0.5){\begin{footnotesize}putting an antichain $M$\end{footnotesize}}
\psline[linewidth=0.2,fillcolor=white,fillstyle=solid](19,-1)(57,-1)
\psline[linewidth=0.2,fillcolor=white,fillstyle=solid](56,0)(57.07,-1.07)
\psline[linewidth=0.2,fillcolor=white,fillstyle=solid](57.07,-0.93)(56,-2)
\rput(53,0){}
\rput[B](80,0.5){\begin{footnotesize}choosing new structure\end{footnotesize}}
\psline[linewidth=0.2,fillcolor=white,fillstyle=solid](61,-1)(99,-1)
\psline[linewidth=0.2,fillcolor=white,fillstyle=solid](98,0)(99.07,-1.07)
\psline[linewidth=0.2,fillcolor=white,fillstyle=solid](99.07,-0.94)(98,-2)
\rput(95,0){}
\end{pspicture}
\caption{$ $}\label{F:Example Local core}
\end{figure}

\medskip

The local game with Properties \ref{Pr:disjoint} and \ref{Pr:core} is described by the following rules.

\begin{defn}\label{D:core disjoint game}
By a \emph{\coredisjoint{} game} we mean the following two-person game between Spoiler and Algorithm. During the first round:
\begin{itemize}
\item Spoiler sets a natural number $w$ and then introduces two disjoint antichains $L,T$, each with $w$ elements, such that $L<T$.
\item Algorithm determines a finite set $\Gamma$ of colors that may be used in the entire game and then he colors each point $x\in L\cup T$ with some nonempty subset $\fC{}{x}$ of $\Gamma$ such that for each $\gamma\in\Gamma$ the points colored by $\gamma$ form a chain.
\end{itemize}
During next rounds the board $\brackets{L,T,\leq,\chains}$ from the previous round is transformed to a board $\brackets{L^{\plus},T^{\plus},\leq,\chains^{\plus}}$ according to the following rules:
\begin{itemize}
\item Spoiler introduces $w$ new elements that form an antichain $M$ such that the poset $\poset{B}'=\brackets{L\cup M\cup T,\leq}$ has width $w$ and $L\al M\al T$ in the lattice $\fMA{\poset{B}'}$. Moreover, $(L,M,\leq)$ and $(M,T,\leq)$ have to be cores.
\item Algorithm colors each point $m\in M$ with a nonempty set of colors $\fC{\plus}{m}\subseteq\fC{}{L}\cap\fC{}{T}$ and keeps the old multicoloring on $L\cup T$, i.e. $\rest{\chains^{\plus}}{L\cup T}=\chains$.
\item Finally, Spoiler redefines the levels $L, T$ to $L^{\plus}, T^{\plus}$ so that either $(L^{\plus},T^{\plus})=(L,M)$ or $(L^{\plus},T^{\plus})=(M,T)$.
This, after restricting to $L^{\plus}\cup T^{\plus}\!$, creates the new board $\brackets{L^{\plus},T^{\plus},\leq,\chains^{\plus}}$.
\end{itemize}
Again, the goal of Algorithm is to pick minimal number $\abs{\Gamma}$ of colors already during the first round so that he can play with these colors forever.
\end{defn}

Now we will present a reduction of the disjoint game to the \coredisjoint{} game. In this reduction we will switch between two games:
\begin{itemize}
\item a disjoint game described in Definition \ref{D:disjoint game}. Their players are to be called \emph{Disjoint Spoiler} and \emph{Disjoint Algorithm}.
\item a \coredisjoint{} game described in Definition \ref{D:core disjoint game}. Their players are to be called \emph{\CoreDisjoint{} Spoiler} and \emph{\CoreDisjoint{} Algorithm}. 
\end{itemize}
We will copy multicoloring from the \coredisjoint{} game to the disjoint game. After the first move of Spoiler the poset $\brackets{L, T,\leq}$ is a core. Thus a multicoloring $\chains$ of $L\cup T$ returned by \CoreDisjoint{} Algorithm is also a correct multicoloring in the disjoint game. For the further performance of Disjoint Algorithm in each round from now on, a structure $\brackets{L,T,\leq,\chains}$ is a board for the disjoint game, where the multicoloring $\chains$ is exactly the one returned by \CoreDisjoint{} Algorithm on $(L\cup T,\cleq)$.

Now we focus on a round which is not the first one. Let $\brackets{L,T,\leq,\chains}$ be a board returned by the previous round. According to Definition \ref{D:disjoint game} let $M$ be an antichain introduced by Disjoint Spoiler.
We transform $\poset{B}'=\brackets{\,L\cup M\cup T,\; \leq\,}$ to 
\(\ol{\poset{B}}\ =\ \brackets{\,L\cup M\cup T,\ \fDCore{L}{M}{T}{\leq}\,}
\),
where $\fDCore{L}{M}{T}{\leq}=\fCore{L}{T}{\leq}\,\cup\,\fCore{L}{M}{\leq}\,\cup\,\fCore{M}{T}{\leq}$.
Restricting the order $\leq$ of $\poset{B}'$ to $\fDCore{L}{M}{T}{\leq}$ was made carefully enough, so that on any two levels $A,B \in \set{L,M,T}$ the ordering $\fDCore{L}{M}{T}{\leq}$ is a core of $(A,B,\leq)$. In particular the restriction of $\ol{\poset{B}}$ to $L\cup T$ gives $\brackets{L, T, \fCore{L}{T}{\leq}}$.
All we need to show is that $\ol{\poset{B}}$ is a correct input that can be presented to \CoreDisjoint{} Algorithm.
Since we already know that $\chains$ is a multicoloring of $(L, T,\cleq)$, this amounts in showing that 
\begin{enumerate}
\item $\ol{\poset{B}}$ is a partially ordered set,\label{P:olB}
\item $\fWidth{\ol{\poset{B}}}=w$,\label{P:width}
\item $L \al M \al T$ in the lattice $\fMA{\ol{\poset{B}}}$.\label{P:fMA}
\end{enumerate}
To see (\ref{P:olB}) note that the only nontrivial part is to show that $\fDCore{L}{M}{T}{\leq}$ is transitive. Due to the fact that each $3$-element chain intersect all $3$ levels $L \al M \al T$ it suffices to show the following claim.

\begin{clm}\label{C:core tr}
If \mbox{$\brackets{l,m}\in\fCore{L}{M}{\leq}$} and \mbox{$\brackets{m,t}\in\fCore{M}{T}{\leq}$} then 
$\brackets{l,t}\in\fCore{L}{T}{\leq}$.
\end{clm}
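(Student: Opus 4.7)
The plan is to invoke Observation \ref{O:old def of core}, compose the two perfect matchings we get, and use transitivity of $\leq$.

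First I would unpack the hypothesis. Since Property \ref{Pr:disjoint} forces $L,M,T$ to be pairwise disjoint (so in particular $l\neq m$ and $m\neq t$), the identity case of the core definition is excluded, and Observation \ref{O:old def of core} (applied to the regular bipartite posets $(L,M,\leq)$ and $(M,T,\leq)$, which are regular bipartite because each of $L,M,T$ is a $w$-element antichain inside the width-$w$ poset $\poset{B}'$, and $L\al M\al T$) yields two perfect matchings: a bijection $f:L\to M$ with $a<f(a)$ for every $a\in L$ and $f(l)=m$, together with a bijection $g:M\to T$ with $b<g(b)$ for every $b\in M$ and $g(m)=t$.

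Next I would set $h:=g\circ f:L\to T$. This is a bijection because $f$ and $g$ are, and by transitivity of $\leq$, for every $a\in L$ we have $a<f(a)<g(f(a))=h(a)$, so $a<h(a)$. Moreover $h(l)=g(f(l))=g(m)=t$. Since $L$ and $T$ are antichains with $L\al T$ inside $\mathcal{B}'$, the relation $\prec$ restricted to pairs $(a,h(a))\in L\times T$ coincides with $<$, so $h$ is a perfect matching in the digraph $(L,T,\prec)$ sending $l$ to $t$. Applying the reverse direction of Observation \ref{O:old def of core} to the regular bipartite poset $(L,T,\leq)$ (again of width $w$) concludes $(l,t)\in\fCore{L}{T}{\leq}$.

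There is essentially no obstacle — the content is just the observation that composition of order-preserving bijections stays order-preserving; the only thing worth being careful about is verifying that $(L,M,\leq)$, $(M,T,\leq)$ and $(L,T,\leq)$ are genuine regular bipartite posets of width $w$, so that Observation \ref{O:old def of core} is legitimately applicable, and that the disjointness ensured by Property \ref{Pr:disjoint} really does rule out the degenerate identity case of the core.
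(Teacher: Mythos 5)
Your proof is correct and follows the same route as the paper: both compose the two perfect matchings $f:L\to M$ and $g:M\to T$ into a perfect matching $g\circ f$ in $(L,T,\prec)$ containing $(l,t)$, then invoke the fact that the core contains every perfect matching. The only cosmetic difference is how the degenerate case is dispatched — the paper simply notes that for $l=m$ or $m=t$ there is nothing to prove, whereas you argue that the pairwise disjointness of $L,M,T$ excludes the identity part of the core; that reasoning is a touch circular as written (the relation $\fCore{L}{M}{\leq}$ lives on all of $L\cup M$ and does contain the diagonal, so disjointness of $L$ and $M$ alone does not forbid $l=m$ until one has already reduced to the case $l\in L$, $m\in M$), but since the trivial case is indeed trivial this has no effect on the substance.
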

%\begin{figure}[hbt]
%\renewcommand{\l}{\begin{small}$l$\end{small}}
%\newcommand{\m}{\begin{small}$m$\end{small}}
%\renewcommand{\t}{\begin{small}$t$\end{small}}
%\centering\input{chapter-02-pict-17}
%\caption{Cores and Matchings in a poset $\brackets{L\cup M\cup T,\leq}$.}\label{F:matching}
%\end{figure} 
\begin{proof}
Since for $l=m$ or $m=t$ there is nothing to be proved, we may assume that $l\in L$, $m\in M$ and $t\in T$. Now we start with a perfect matching $\fMatch{L}{M}\!\subseteq\fCore{L}{M}{\leq}$ in the digraph $\brackets{L,M,\prec}$ which contains the edge $\brackets{l,m}$ and a perfect matching $\fMatch{M}{T}\!\subseteq\fCore{M}{T}{\leq}$ in the digraph $\brackets{M,T,\prec}$ which contains $\brackets{m,t}$ to construct a perfect matching $\fMatch{L}{T}\!\subseteq\fCore{L}{T}{\leq}$ which contains $\brackets{l,t}$. 
Such a matching $\fMatch{L}{T}$ can be constructed e.g. by a simple superposition of the matchings $\fMatch{L}{M}$ and $\fMatch{M}{T}$, i.e.:
\begin{multline*}
(l',t')\in\fMatch{L}{T}\quad\textrm{iff}\\
(l',m')\in\fMatch{L}{M}\ \,\textrm{and}\,\ (m',t')\in\fMatch{M}{T}\ \ \textrm{for some}\,\ m'\in M.
\end{multline*}
Obviously $\fMatch{L}{T}$, as a superposition of two bijections, is a perfect matching in the digraph $(L,T,\prec)$.
Because the core contains all perfect matchings, we get $\brackets{l,t}\in\fMatch{L}{T}\subseteq\fCore{L}{T}{\leq}$.
\end{proof}

To see (\ref{P:width}) note first that $L,M,T$ are $w$-element antichains in $\ol{\poset{B}}$. Thus we get $\fWidth{\ol{\poset{B}}}\geq w$.
On the other hand, from Claim \ref{C: w core = w} we know that $\fWidth{L\cup M,\fCore{L}{M}{\leq}}=w=\fWidth{M\cup T,\fCore{M}{T}{\leq}}$ so that Dilworth's Theorem \ref{T:Dilworth} supplies us with coverings of \linebreak \mbox{$\brackets{L\cup M,\,\fCore{L}{M}{\leq}}$} and \mbox{$\brackets{M\cup T,\,\fCore{M}{T}{\leq}}$} by $w$ chains $D_1,\ldots, D_w$ and $U_1,\ldots, U_w$, respectively. As in the Perles' proof of Dilworth's Theorem we note that the maximum antichain $M$ has to meet each of the $D_i$'s and $U_i$'s at exactly one point. Therefore after renumbering $U_i$'s we may arrange that $\abs{ D_{i} \cap U_{i}}=1$. This shows that $\ol{\poset{B}}$ can be covered by $w$ chains $D_1\cup U_1,\ldots,D_w\cup U_w$, so that $\fWidth{\ol{\poset{B}}}\leq w$.

\medskip

Finally, the property (\ref{P:fMA}) follows directly from Observation \ref{O:R al S}.

\bigskip

After establishing that $\ol{\poset{B}}$ is a correct input that can be presented to \CoreDisjoint{} Algorithm, he colors each point $m\in M$ by a nonempty set of colors $\fC{\plus}{m}\subseteq\fC{}{L}\cap\fC{}{T}$.
This has to be done in a way that, together with $\rest{\chains^{\plus}}{L\cup T}=\chains$, the sets $C_{\gamma}=\set{p\in L\cup M\cup T:\gamma\in\fC{\plus}{p}}$
are chains in $\ol{\poset{B}}$ for all $\gamma \in \Gamma$. Obviously each $C_{\gamma}$ is also a chain in $(L\cup M\cup T,\leq)$, as $\leq$ has more comparable pairs than $\fDCore{L}{M}{T}{\leq}$ does.

\bigskip

This finishes our reduction of the disjoint game to the \coredisjoint{} game and shows that the value of the \coredisjoint{} game bounds from above the value of the disjoint game. Together with the reduction of Subsection \ref{SS:different} we get that the value of the \coredisjoint{} game on posets of width $w$ bounds from above the value $\fLCP{w}$ of an unrestricted game. As the converse inequality is obvious, we finally get the following theorem.

\begin{thm}\label{T:CDG}
The value of the \coredisjoint{} game on posets of width $w$ is $\fLCP{w}$.
\end{thm}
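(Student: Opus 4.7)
The plan is to wrap up the theorem as a clean consolidation of the reductions already developed, together with the trivial inclusion in the other direction. Let $k_{\text{cd}}(w)$ denote the value of the core-disjoint game on width $w$; the aim is to show $k_{\text{cd}}(w)=\fLCP{w}$.

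First I would dispatch the easy direction $k_{\text{cd}}(w)\leq\fLCP{w}$. Every move Spoiler can make in a core-disjoint game is, by Definition \ref{D:core disjoint game}, also a legal move in the unrestricted local game of Definition \ref{D:local game}, because Properties \ref{Pr:disjoint} and \ref{Pr:core} only restrict Spoiler. Hence any Algorithm strategy using $\fLCP{w}$ colors in the unrestricted game works verbatim against Core-Disjoint Spoiler.

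For the non-trivial direction $\fLCP{w}\leq k_{\text{cd}}(w)$, I would chain together the two reductions developed above. Subsection \ref{SS:different} explicitly transforms any strategy of Disjoint Algorithm using $k$ colors into a strategy of Unrestricted Algorithm using $k$ colors, establishing $\fLCP{w}\leq k_{\text{d}}(w)$, where $k_{\text{d}}(w)$ is the value of the disjoint game. The content of Subsection \ref{SS:core} --- the construction of the poset $\ol{\poset{B}}=\brackets{L\cup M\cup T,\fDCore{L}{M}{T}{\leq}}$, the transitivity of $\fDCore{L}{M}{T}{\leq}$ verified in Claim \ref{C:core tr}, the width computation using Dilworth's Theorem together with Perles' matching argument, and the preservation of $L\al M\al T$ supplied by Observation \ref{O:R al S} --- assembles precisely into a transformation that turns a winning strategy of Core-Disjoint Algorithm using $k$ colors into a winning strategy of Disjoint Algorithm with $k$ colors: Disjoint Algorithm, when presented with $M$, hands the structure $\ol{\poset{B}}$ to his simulated Core-Disjoint Algorithm, reads off the multicoloring of $M$ and returns it as his own. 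This coloring remains valid because every chain in $\ol{\poset{B}}$ is a chain in $\brackets{L\cup M\cup T,\leq}$. Thus $k_{\text{d}}(w)\leq k_{\text{cd}}(w)$.

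Combining the three inequalities, $\fLCP{w}\leq k_{\text{d}}(w)\leq k_{\text{cd}}(w)\leq\fLCP{w}$, which forces equality. The only place where real work is required is the proof that $\ol{\poset{B}}$ is a valid board for the core-disjoint game, and this has been carried out in the preceding pages; the theorem itself is then a one-line summary.
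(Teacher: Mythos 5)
Your proposal is correct and takes essentially the same route as the paper: the theorem is indeed a one-line consolidation of the two reductions (Subsection \ref{SS:different} giving $\fLCP{w}\leq k_{\text{d}}(w)$, Subsection \ref{SS:core} giving $k_{\text{d}}(w)\leq k_{\text{cd}}(w)$) together with the obvious monotonicity $k_{\text{cd}}(w)\leq\fLCP{w}$, exactly as you chain them.
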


\section{Lower Bound of Local Problem} \label{S:lower bounds}

In this section we focus on a lower bound of the value of the local on-line coloring game. This will be done by analyzing how \CoreDisjoint{} Algorithm can use colors from $\fC{}{L}\cap \fC{}{T}$ to color $M$. First note that the multicoloring $\chains$ determines a covering of the vertices of the poset by chains $C_{\gamma}= \set{p: \gamma\in \fC{}{p}}$. However, as we will see in the next lemma, such a multicoloring actually determines a covering of all edges of the digraph $(L,T,\prec)$ corresponding to the core $(L,T,\cleq)$, e.g. for each edge $l\prec t$ we have $\fC{}{l}\cap \fC{}{t} \neq \emptyset$. This in particular means, that for each $\gamma\in\fC{}{l}\cap \fC{}{t}$ the chain $C_{\gamma}$ is exactly $\set{l,t}$ and therefore $\gamma$ cannot be used on any other point. This motivates the following definition.
\begin{defn}
Let $\brackets{L,T,\leq,\chains}$ be a board. For $l\in L$ and $t\in T$ with $l\prec t$ the set $\fC{}{l}\cap \fC{}{t}$ is called the set of \emph{private colors} of the edge $(l,t)$ of the digraph $\brackets{L,T,\prec}$.
\end{defn}
Obviously if $\gamma \in \fC{}{L}\cap \fC{}{T}$ then $C_{\gamma}$ is a two element chain and therefore $\gamma$ is a private color for some edge $l\prec t$. The converse implication does not hold in general, but it does hold for cores.
\begin{lem}\label{L:private}
Let $\brackets{L,T,\leq,\chains}$ be a board for which \CoreDisjoint{} Algorithm has a strategy to play forever. Then 
any edge of the digraph $\brackets{L,T,\prec}$ has a private color.
\end{lem}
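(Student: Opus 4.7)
The plan is to argue by contradiction: suppose some edge $l\prec t$ of $(L,T,\prec)$ satisfies $\fC{}{l}\cap\fC{}{t}=\emptyset$, and construct a legal move for Spoiler in the \coredisjoint{} game that Algorithm cannot answer, contradicting the hypothesis that Algorithm has a strategy to play forever.

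Since $(L,T,\leq)$ is a core containing the pair $l\prec t$, there is a perfect matching $f\colon L\to T$ in the digraph $(L,T,\prec)$ with $f(l)=t$. Spoiler's move is to introduce a fresh antichain $M=\set{m_{l'}:l'\in L}$ of $w$ points and declare the comparabilities so that $l'\prec m_{l''}$ iff $l'=l''$ and $m_{l'}\prec t'$ iff $t'=f(l')$, leaving the relation on $L\cup T$ unchanged. Each $m_{l'}$ then sits in the three-element chain $l'\prec m_{l'}\prec f(l')$, and the key isolation property holds: $L\cap m_l\mspace{-1mu}\downseto\ =\set{l}$ and $T\cap m_l\mspace{-1mu}\upseto\ =\set{t}$, because the only direct $L$-neighbour of $m_l$ is $l$, the only direct $T$-neighbour is $t=f(l)$, and $L,T$ are antichains so no transitive detours produce new $L$- or $T$-elements in these sets.

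Next I verify that the move is legal. The $w$ chains $\set{l',m_{l'},f(l')}$ partition $L\cup M\cup T$, so by Dilworth's theorem the width is exactly $w$ and $L,M,T$ are maximum antichains with $L\al M\al T$ witnessed pointwise. The bipartite digraph $(L,M,\prec)$ consists of the identity edges alone; this is its own unique perfect matching and hence equals the union of all perfect matchings, so $(L,M,\leq)$ is a core. Analogously $(M,T,\prec)$ consists of the single perfect matching $m_{l'}\mapsto f(l')$ and is therefore a core as well.

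Finally I exploit the isolation of $m_l$. By the rules Algorithm must assign $m_l$ a nonempty $\fC{\plus}{m_l}\subseteq\fC{}{L}\cap\fC{}{T}$ while every colour class remains a chain. Pick $\gamma\in\fC{\plus}{m_l}$; from $\gamma\in\fC{}{L}$ there is $l''\in L$ with $\gamma\in\fC{}{l''}$, and since $l''$ and $m_l$ share the colour $\gamma$ they must be comparable, forcing $l''<m_l$ because $L\al M$ and $l''\neq m_l$. Thus $l''\in L\cap m_l\mspace{-1mu}\downseto\ =\set{l}$, giving $\gamma\in\fC{}{l}$; symmetrically $\gamma\in\fC{}{t}$. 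Hence $\fC{\plus}{m_l}\subseteq\fC{}{l}\cap\fC{}{t}=\emptyset$, contradicting that Algorithm must choose a nonempty colour set for $m_l$. The one spot that requires care is the legality of the constructed $M$ — in particular that the prescribed bipartite relations give cores on both $(L,M)$ and $(M,T)$ while keeping the width at $w$; the colour-chasing step afterwards is essentially forced by the isolation of $m_l$ together with the multicoloring chain condition.
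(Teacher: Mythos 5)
Your proposal is correct and follows essentially the same route as the paper: extract a perfect matching through the bad edge (using the core property), have Spoiler insert $w$ middle points each sitting only between matched pairs, check that the resulting bipartite posets are single-matching cores of width $w$, and conclude that the isolated middle point over the bad edge has no admissible color. The only cosmetic difference is that you spell out the color-chasing step (why the color must lie in $\fC{}{l}\cap\fC{}{t}$) a bit more explicitly than the paper does.
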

\begin{proof}
To the contrary suppose that there is an edge $\brackets{l_1,t_1}$ in the digraph $\brackets{L,T,\prec}$ such that $\fC{}{l_1}\cap\fC{}{t_1}=\emptyset$. We will present a possible move of \CoreDisjoint{} Spoiler for which \CoreDisjoint{} Algorithm has no correct response.

The poset $\brackets{L, T, \leq}$ is a core, thus we can put the edge $\brackets{l_1,t_1}$ into a matching $\set{\brackets{l_1,t_1},\ldots,\brackets{l_w,t_w}}$ in the digraph $\brackets{L,T,\prec}$ (see Figure \ref{F:matching and move}.a for an example).
\begin{figure}[hbt]
\renewcommand{\l}[1]{\begin{footnotesize}$l_{#1}$\end{footnotesize}}
\newcommand{\m}[1]{\begin{footnotesize}$m_{#1}$\end{footnotesize}}
\renewcommand{\t}[1]{\begin{footnotesize}$t_{#1}$\end{footnotesize}}
\centering\ifx\JPicScale\undefined\def\JPicScale{1}\fi
\psset{unit=\JPicScale mm}
\psset{linewidth=0.3,dotsep=1,hatchwidth=0.3,hatchsep=1.5,shadowsize=1,dimen=middle}
\psset{dotsize=0.7 2.5,dotscale=1 1,fillcolor=black}
\psset{arrowsize=1 2,arrowlength=1,arrowinset=0.25,tbarsize=0.7 5,bracketlength=0.15,rbracketlength=0.15}
\begin{pspicture}(0,0)(113.5,28)
\psline[linewidth=0.75,fillcolor=white,fillstyle=solid](9,19.5)(9,8)
\psline[linewidth=0.25,fillcolor=white,fillstyle=solid](32,19.5)(32,8)
\psline[linewidth=0.25,fillcolor=white,fillstyle=solid](32,19.5)(43.5,8)
\psline[linewidth=0.75,fillcolor=white,fillstyle=solid](20.5,19.5)(43.5,8)
\psline[linewidth=0.25,fillcolor=white,fillstyle=solid](20.5,8)(20.5,19.5)
\psline[linewidth=0.75,fillcolor=white,fillstyle=solid](32,19.5)(20.5,8)
\psline[linewidth=0.25,fillcolor=white,fillstyle=solid](43.5,19.5)(43.5,8)
\rput[l](51.5,19.5){$T$}
\rput[l](51.5,8){$L$}
\psline[linewidth=0.25,fillcolor=white,fillstyle=solid](9,8)(20.5,19.5)
\psline[linewidth=0.75,fillcolor=white,fillstyle=solid](43.5,19.5)(32,8)
\psline[linewidth=0.1](49.5,10.5)(6.5,10.5)
\psline[linewidth=0.1](49.5,5.5)(6.5,5.5)
\psline[linewidth=0.1](6.5,10.5)(6.5,5.5)
\rput{0}(9,19.5){\psellipse[linewidth=0.25,linestyle=none,fillstyle=solid](0,0)(1,1)}
\rput{0}(20.5,8){\psellipse[linewidth=0.25,linestyle=none,fillstyle=solid](0,0)(1,1)}
\rput{0}(32,19.5){\psellipse[linewidth=0.25,linestyle=none,fillstyle=solid](0,0)(1,1)}
\rput{0}(43.5,19.5){\psellipse[linewidth=0.25,linestyle=none,fillstyle=solid](0,0)(1,1)}
\rput{0}(9,8){\psellipse[linewidth=0.25,linestyle=none,fillstyle=solid](0,0)(1,1)}
\rput{0}(20.5,19.5){\psellipse[linewidth=0.25,linestyle=none,fillstyle=solid](0,0)(1,1)}
\rput{0}(32,8){\psellipse[linewidth=0.25,linestyle=none,fillstyle=solid](0,0)(1,1)}
\rput{0}(43.5,8){\psellipse[linewidth=0.25,linestyle=none,fillstyle=solid](0,0)(1,1)}
\psline[linewidth=0.1](49.5,10.5)(49.5,5.5)
\psline[linewidth=0.1](49.5,17)(6.5,17)
\psline[linewidth=0.1](49.5,22)(6.5,22)
\psline[linewidth=0.1](49.5,22)(49.5,17)
\psline[linewidth=0.1](6.5,22)(6.5,17)
\psline[linewidth=0.75,fillcolor=white,fillstyle=solid](68.5,14)(68.5,2.5)
\psline[linewidth=0.75,fillcolor=white,fillstyle=solid](91.5,14)(103,2.5)
\psline[linewidth=0.75,fillcolor=white,fillstyle=solid](80,2.5)(80,14)
\rput[l](111.5,14){$M$}
\rput[l](111.5,2.5){$L$}
\psline[linewidth=0.75,fillcolor=white,fillstyle=solid](103,14)(91.5,2.5)
\psline[linewidth=0.1](109.5,5)(66,5)
\psline[linewidth=0.1](109.5,0)(66,0)
\psline[linewidth=0.1](66,5)(66,0)
\rput{0}(68.5,14){\psellipse[linewidth=0.25,linestyle=none,fillstyle=solid](0,0)(1,1)}
\rput{0}(80,2.5){\psellipse[linewidth=0.25,linestyle=none,fillstyle=solid](0,0)(1,1)}
\rput{0}(91.5,14){\psellipse[linewidth=0.25,linestyle=none,fillstyle=solid](0,0)(1,1)}
\rput{0}(103,14){\psellipse[linewidth=0.25,linestyle=none,fillstyle=solid](0,0)(1,1)}
\rput{0}(68.5,2.5){\psellipse[linewidth=0.25,linestyle=none,fillstyle=solid](0,0)(1,1)}
\rput{0}(80,14){\psellipse[linewidth=0.25,linestyle=none,fillstyle=solid](0,0)(1,1)}
\rput{0}(91.5,2.5){\psellipse[linewidth=0.25,linestyle=none,fillstyle=solid](0,0)(1,1)}
\rput{0}(103,2.5){\psellipse[linewidth=0.25,linestyle=none,fillstyle=solid](0,0)(1,1)}
\psline[linewidth=0.1](109.5,5)(109.5,0)
\psline[linewidth=0.1](109.5,11.5)(66,11.5)
\psline[linewidth=0.1](109.5,16.5)(66,16.5)
\psline[linewidth=0.1](109.5,16.5)(109.5,11.5)
\psline[linewidth=0.1](66,16.5)(66,11.5)
\psline[linewidth=0.75,fillcolor=white,fillstyle=solid](68.5,25.5)(68.5,14)
\psline[linewidth=0.75,fillcolor=white,fillstyle=solid](103,25.5)(103,14)
\psline[linewidth=0.75,fillcolor=white,fillstyle=solid](91.5,25.5)(80,14)
\rput[l](111.5,25.5){$T$}
\psline[linewidth=0.75,fillcolor=white,fillstyle=solid](91.5,14)(80,25.5)
\rput{0}(68.5,25.5){\psellipse[linewidth=0.25,linestyle=none,fillstyle=solid](0,0)(1,1)}
\rput{0}(91.5,25.5){\psellipse[linewidth=0.25,linestyle=none,fillstyle=solid](0,0)(1,1)}
\rput{0}(103,25.5){\psellipse[linewidth=0.25,linestyle=none,fillstyle=solid](0,0)(1,1)}
\rput{0}(80,25.5){\psellipse[linewidth=0.25,linestyle=none,fillstyle=solid](0,0)(1,1)}
\psline[linewidth=0.1](109.5,23)(66,23)
\psline[linewidth=0.1](109.5,28)(66,28)
\psline[linewidth=0.1](109.5,28)(109.5,23)
\psline[linewidth=0.1](66,28)(66,23)
\rput[Bl](11,18.5){\t2}
\rput[Bl](23.5,18.5){\t1}
\rput[Bl](34,18.5){\t3}
\rput[Bl](45.5,18.5){\t4}
\psline[linewidth=0.25,fillcolor=white,fillstyle=solid](20.5,8)(9,19.5)
\rput[Bl](11,7){\l2}
\rput[Bl](45.5,7){\l1}
\rput[Bl](34.5,7){\l4}
\rput[Bl](23,7){\l3}
\rput[Bl](70.5,24.5){\t2}
\rput[Bl](82.5,24.5){\t1}
\rput[Bl](93.5,24.5){\t3}
\rput[Bl](105,24.5){\t4}
\rput[Bl](70.5,1.5){\l2}
\rput[Bl](105,1.5){\l1}
\rput[Bl](94,1.5){\l4}
\rput[Bl](82,1.5){\l3}
\psline[linewidth=0.25,fillcolor=white,fillstyle=solid](68.5,25.5)(80,2.5)
\psline[linewidth=0.25,fillcolor=white,fillstyle=solid](80,25.5)(68.5,2.5)
\psbezier[linewidth=0.25](80,25.5)(77.5,17.5)(77.5,10.5)(80,2.5)
\psline[linewidth=0.25,fillcolor=white,fillstyle=solid](91.5,25.5)(103,2.5)
\psbezier[linewidth=0.25](91.5,25.5)(89,17.5)(89,10.5)(91.5,2.5)
\psbezier[linewidth=0.25](103,25.5)(100.5,17.5)(100.5,10.5)(103,2.5)
\rput[Bl](70,13){\m2}
\rput[Bl](81.5,13){\m3}
\rput[Bl](93.5,13){\m1}
\rput[Bl](104.5,13){\m2}
\rput[Bl](0,5.5){a.}
\rput[Bl](59.5,0){b.}
\psline[linewidth=0.1,linecolor=white](113.5,12)(113.5,7)
\end{pspicture}
\caption{Spoiler's move based on a matching in $\brackets{L,T,\prec}$.}\label{F:matching and move}
\end{figure}
Spoiler introduces $w$ points $m_1,\ldots,m_w$ such that each point $m_i$ lies only between $l_i$ and $t_i$ (see Figure \ref{F:matching and move}.b). The constructed poset $\poset{B}^{\plus}$ has width $w$ as $\set{l_1,m_1,t_1},\ldots,\set{l_w,m_w,t_w}$ form a partition of $\poset{B}^{\plus}$ into $w$ chains. The antichain $M\mspace{-2mu}=\mspace{-2mu}\set{m_1,\ldots,m_w}$ is disjoint with $L\cup T$ and moreover $L\al M\al T$ in the lattice $\fMA{\poset{B}^{\plus}}$. Each of $\brackets{L, M,\leq}$ and $\brackets{M, T,\leq}$ consists of a single matching, thus they are cores. This certifies that $M$ could be a move of \CoreDisjoint{} Spoiler. Now \CoreDisjoint{} Algorithm falls into troubles, as $m_1$ can be colored only by colors used on both $m_1\downseto=\set{l_1}$ and $m_1\upseto=\set{t_1}$, while
 there is no such color, as $\fC{}{l_1}\cap\fC{}{t_1}=\emptyset$.
\end{proof}

\begin{cor}\label{Co:lval>=w2}
$\fLCP{w}\geq w^2$.
\end{cor}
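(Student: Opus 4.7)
The plan is to exhibit a first move of Spoiler for which the colorings produced by any \emph{successful} strategy of Algorithm (i.e.\ a strategy allowing him to continue playing forever) must already use $w^{2}$ distinct colors. By Theorem~\ref{T:CDG} it suffices to argue this inside the \coredisjoint{} game.

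Let Spoiler open by declaring the width $w$ and presenting two disjoint $w$-antichains $L=\{l_{1},\ldots,l_{w}\}$ and $T=\{t_{1},\ldots,t_{w}\}$ with the \emph{complete} relation $l_{i}<t_{j}$ for all $i,j$. First, I would check that this is a legal opening: $L$ and $T$ are antichains of size~$w$ and every antichain of $L\cup T$ is contained either in $L$ or in $T$ (since every $L$-$T$ pair is comparable), so $\fWidth{L\cup T}=w$; moreover $(L,T,\leq)$ is a core because for any pair $(l_{i},t_{j})$ one may complete $l_{i}\mapsto t_{j}$ to a perfect matching by an arbitrary bijection $L\setminus\{l_{i}\}\to T\setminus\{t_{j}\}$, so every edge of the cover digraph $(L,T,\prec)$ lies in a perfect matching.

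Now suppose Algorithm has a strategy that lets him play forever using a color set $\Gamma$. By Lemma~\ref{L:private}, every edge of $(L,T,\prec)$ has a private color; in our board this digraph is complete bipartite, so there are $w^{2}$ edges, and I choose for each edge $(l_{i},t_{j})$ some $\gamma_{ij}\in\fC{}{l_{i}}\cap\fC{}{t_{j}}$. The key observation is that the $\gamma_{ij}$ are pairwise distinct: if $\gamma_{ij}=\gamma_{i'j'}$ then the chain $C_{\gamma_{ij}}$ would contain $\{l_{i},l_{i'},t_{j},t_{j'}\}$, forcing $l_{i},l_{i'}$ (and $t_{j},t_{j'}$) to be comparable, and since $L$ and $T$ are antichains this gives $(i,j)=(i',j')$. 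Hence $\abs{\Gamma}\geq w^{2}$, which yields $\fLCP{w}\geq w^{2}$.

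The argument is essentially a counting argument once Lemma~\ref{L:private} is in hand; the only nontrivial point to verify is that the opening board is indeed a core (so that Lemma~\ref{L:private} is applicable), and this is handled by the perfect-matching completion noted above. I do not anticipate any real obstacle beyond keeping the role of the core condition explicit.
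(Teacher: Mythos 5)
Your proposal is correct and follows essentially the same route as the paper: apply Lemma~\ref{L:private} to the initial board, where $(L,T,\prec)$ is complete bipartite with $w^2$ edges, and use distinctness of private colors to conclude. You spell out the core verification and the distinctness of the $\gamma_{ij}$'s, both of which the paper leaves implicit (and the appeal to Theorem~\ref{T:CDG} is harmless but unnecessary, since only the easy inequality between the two games' values is needed).
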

\begin{proof}
Lemma \ref{L:private} tells us that $\fLCP{w}$ is bounded from below by the maximal number of edges in the digraph $(L,T,\prec)$. However, already at the very beginning there are $w^2$ such edges in the complete bipartite digraph.
\end{proof}

Corollary \ref{Co:lval>=w2} supplies us with the lower bounds of $1$, $4$ and $9$ for $\fLCP{w}$ with $w=1,2,3$ respectively. Obviously \mbox{$\fLCP{1}=1$}.
\linebreak In the next Section we will see that $\fLCP{2}=4$. However \linebreak\mbox{$\fLCP{3}\geq 11$} as the following two Observations show. 
\begin{obs}\label{O:lval3>=10}
$\fLCP{3}\geq 10$.
\end{obs}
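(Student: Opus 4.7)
Plan. We push the edge-counting argument of Corollary~\ref{Co:lval>=w2} by one more round. Assume toward contradiction that Algorithm can play the \coredisjoint{} game on width $3$ forever using only $9$ colors. Spoiler opens by playing the complete bipartite core $K_{3,3}$, i.e.\ the poset $\poset{P}_{333}$ from Example~\ref{C:char}. Its digraph $(L,T,\prec)$ has $9$ edges, each of which needs a distinct private color by Lemma~\ref{L:private}; since only $9$ colors are available, every color is used as the private color of exactly one edge. Writing $\gamma_{ij}$ for the color private to $(l_i,t_j)$ and relabeling, the coloring is forced: $\fC{}{l_i}=\{\gamma_{i1},\gamma_{i2},\gamma_{i3}\}$ and $\fC{}{t_j}=\{\gamma_{1j},\gamma_{2j},\gamma_{3j}\}$. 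The nine colors thus sit in a $3\times 3$ grid whose rows are the $\fC{}{l_i}$'s and columns the $\fC{}{t_j}$'s.

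In the second round Spoiler introduces a carefully chosen antichain $M=\{m_1,m_2,m_3\}$ so that $(L,M,\leq)$ and $(M,T,\leq)$ are both cores, the poset $(L\cup M\cup T,\leq)$ has width $3$, and $L\al M\al T$. The chain condition restricts each $\fC{\plus}{m_i}$ to a nonempty subset of the sub-grid $\{\gamma_{kj}:l_k\leq m_i\leq t_j\}$, and since $m_1,m_2,m_3$ are incomparable the three sets $\fC{\plus}{m_1},\fC{\plus}{m_2},\fC{\plus}{m_3}$ must be pairwise disjoint. Spoiler observes Algorithm's response $\fC{\plus}$ and commits to whichever of $(L^+,T^+)\in\{(L,M),(M,T)\}$ leaves some edge of the new core with empty private-color intersection; Lemma~\ref{L:private} applied to that board yields the contradiction.

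The main obstacle, and the heart of the argument, is to choose $M$ so that this pincer always closes. If Spoiler commits to $(L,M)$, then every $(L,M)$-core edge $(l_k,m_i)$ demands $\fC{\plus}{m_i}\cap\fC{}{l_k}\neq\emptyset$, i.e.\ $\fC{\plus}{m_i}$ must meet every row of its allowed sub-grid that corresponds to an $(L,M)$-neighbor of $m_i$; symmetrically for $(M,T)$ and columns. The idea is to make each $m_i$'s allowed sub-grid a proper $2\times 2$ block---for instance choosing $M$ so that the three blocks pairwise overlap in exactly one cell---and to arrange the cores of $(L,M)$ and $(M,T)$ so that the row-coverage and column-coverage demands together rule out every pairwise disjoint nonempty selection of cells from the three blocks. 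Spoiler then picks whichever of $(L,M)$ or $(M,T)$ exposes the uncovered edge.

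The remaining work, and the delicate step, is to exhibit an explicit $M$ with this property and to verify by a finite case analysis over the few essentially distinct legal $\fC{\plus}$'s that at least one of Spoiler's two follow-ups always defeats Algorithm. Since each $\fC{\plus}{m_i}$ lies in its $2\times 2$ block and must either transversally hit the prescribed rows or the prescribed columns, the counting reduces to a small enumeration of partial Latin-square--type patterns in the $3\times 3$ grid, and pigeonhole on the $9$ cells delivers the required incompatibility.
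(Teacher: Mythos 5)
Your first-round analysis is correct and matches the paper exactly: on the initial $\poset{P}_{333}$ board with only $9$ colors, Lemma~\ref{L:private} forces a distinct private color per edge, hence the multicoloring is completely determined up to relabeling, with $\fC{}{l_i}$ and $\fC{}{t_j}$ as the rows and columns of a $3\times 3$ grid. The general shape of your round-two plan --- present an $M$ for which no legal multicoloring survives both of Spoiler's follow-up choices --- is also right.

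But the proposal does not actually prove the statement. The entire content of the observation is precisely the ``delicate step'' you defer: you never specify $M$, never verify that $(L,M,\leq)$ and $(M,T,\leq)$ are cores of the correct width with $L \al M \al T$, and never carry out the promised ``finite case analysis over the few essentially distinct legal $\chains^{\plus}$'s.'' Announcing that such an $M$ exists and that a Latin-square-type enumeration will close it is a plan, not an argument, and as written it would fail review. Moreover, the $2\times 2$-block/row--column coverage framework you sketch suggests a more complicated argument than is needed.

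The paper's actual move is a clean pigeonhole with no case analysis. It picks $M$ so that $m_1$ and $m_2$ are both above $l_2$ and $l_3$, $m_3$ is below all of $t_1,t_2,t_3$, and $l_1\parallel m_3$. Then the seven edges $(l_2,m_1)$, $(l_2,m_2)$, $(l_3,m_1)$, $(l_3,m_2)$, $(m_3,t_1)$, $(m_3,t_2)$, $(m_3,t_3)$ each have $l_1$ incomparable to one endpoint. Any private color of such an edge lies in $\fC{}{L}$ but cannot lie in $\fC{}{l_1}$ (that would color $l_1$ and the incomparable endpoint the same), so it lives in $\fC{}{l_2}\cup\fC{}{l_3}$, a set of only $6$ colors. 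Pairwise, these seven edges always have either two incomparable $M$-vertices as endpoints or a shared $M$-vertex with incomparable partners, so they need seven \emph{distinct} private colors, which is impossible. The key observation you are missing is that the whole argument should be anchored on the single ``quiet'' vertex $l_1$ --- not distributed over rows and columns of the grid.
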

\begin{proof}
Suppose that \CoreDisjoint{} Algorithm uses only $9$ colors on the initial board $(L,T,\leq, \chains)$ shown on Figure \ref{F:LCP3 > 9}. Due to Lemma \ref{L:private} each of the $9$ edges has a private color, so that each edge has only one private color: $\abs{\fC{}{l_i}\cap \fC{}{t_j}}=1$ for $i,j=1,2,3$. Now, \CoreDisjoint{} Spoiler introduces the middle level $\set{m_1,m_2,m_3}$ as shown by both thin and thick edges on Figure \ref{F:LCP3 > 9}.b. 
Again by Lemma \ref{L:private} all edges between $L$ and $M$ as well as between $M$ and $T$ must have some private color.
Focusing on $7$ thick edges $\brackets{l_2,m_1}$, $\brackets{l_2,m_2}$, $\brackets{l_3,m_1}$, $\brackets{l_3,m_2}$, $\brackets{m_3,t_1}$, $\brackets{m_3,t_2}$, $\brackets{m_3,t_3}$ we learn that $l_1$ is incomparable with some endpoint of each of these edges.
Therefore these thick edges can have private color only from the set $\fC{}{L}\setminus\fC{}{l_1}=\linebreak\fC{}{l_2}\cup \fC{}{l_3}$. Unfortunately $\abs{\fC{}{l_i}}=3$ (for $i=2,3$) and there is not enough colors to be spread over the $7$ edges.
\end{proof}

\begin{figure}[hbt]
\renewcommand{\l}[1]{\begin{footnotesize}$l_{#1}$\end{footnotesize}}
\newcommand{\m}[1]{\begin{footnotesize}$m_{#1}$\end{footnotesize}}
\renewcommand{\t}[1]{\begin{footnotesize}$t_{#1}$\end{footnotesize}}
\centering\ifx\JPicScale\undefined\def\JPicScale{1}\fi
\psset{unit=\JPicScale mm}
\psset{linewidth=0.3,dotsep=1,hatchwidth=0.3,hatchsep=1.5,shadowsize=1,dimen=middle}
\psset{dotsize=0.7 2.5,dotscale=1 1,fillcolor=black}
\psset{arrowsize=1 2,arrowlength=1,arrowinset=0.25,tbarsize=0.7 5,bracketlength=0.15,rbracketlength=0.15}
\begin{pspicture}(0,0)(96.5,28)
\psline[linewidth=0.25,fillcolor=white,fillstyle=solid](9,19.5)(9,8)
\psline[linewidth=0.25,fillcolor=white,fillstyle=solid](32,19.5)(32,8)
\psline[linewidth=0.25,fillcolor=white,fillstyle=solid](20.5,8)(20.5,19.5)
\psline[linewidth=0.25,fillcolor=white,fillstyle=solid](32,19.5)(20.5,8)
\rput[l](40.5,19.5){$T$}
\rput[l](40.5,8){$L$}
\psline[linewidth=0.25,fillcolor=white,fillstyle=solid](9,8)(20.5,19.5)
\psline[linewidth=0.1](38.5,10.5)(6.5,10.5)
\psline[linewidth=0.1](38.5,5.5)(6.5,5.5)
\psline[linewidth=0.1](6.5,10.5)(6.5,5.5)
\rput{0}(9,19.5){\psellipse[linewidth=0.25,linestyle=none,fillstyle=solid](0,0)(1,1)}
\rput{0}(20.5,8){\psellipse[linewidth=0.25,linestyle=none,fillstyle=solid](0,0)(1,1)}
\rput{0}(32,19.5){\psellipse[linewidth=0.25,linestyle=none,fillstyle=solid](0,0)(1,1)}
\rput{0}(9,8){\psellipse[linewidth=0.25,linestyle=none,fillstyle=solid](0,0)(1,1)}
\rput{0}(20.5,19.5){\psellipse[linewidth=0.25,linestyle=none,fillstyle=solid](0,0)(1,1)}
\rput{0}(32,8){\psellipse[linewidth=0.25,linestyle=none,fillstyle=solid](0,0)(1,1)}
\psline[linewidth=0.1](38.5,10.5)(38.5,5.5)
\psline[linewidth=0.1](38.5,17)(6.5,17)
\psline[linewidth=0.1](38.5,22)(6.5,22)
\psline[linewidth=0.1](38.5,22)(38.5,17)
\psline[linewidth=0.1](6.5,22)(6.5,17)
\psline[linewidth=0.75,fillcolor=white,fillstyle=solid](61,14)(84,2.5)
\psline[linewidth=0.75,fillcolor=white,fillstyle=solid](72.5,2.5)(72.5,14)
\rput[l](93,14){$M$}
\rput[l](93,2.5){$L$}
\psline[linewidth=0.1](90.5,5)(58.5,5)
\psline[linewidth=0.1](90.5,0)(58.5,0)
\psline[linewidth=0.1](58.5,5)(58.5,0)
\rput{0}(61,14){\psellipse[linewidth=0.25,linestyle=none,fillstyle=solid](0,0)(1,1)}
\rput{0}(72.5,2.5){\psellipse[linewidth=0.25,linestyle=none,fillstyle=solid](0,0)(1,1)}
\rput{0}(84,14){\psellipse[linewidth=0.25,linestyle=none,fillstyle=solid](0,0)(1,1)}
\rput{0}(61,2.5){\psellipse[linewidth=0.25,linestyle=none,fillstyle=solid](0,0)(1,1)}
\rput{0}(72.5,14){\psellipse[linewidth=0.25,linestyle=none,fillstyle=solid](0,0)(1,1)}
\rput{0}(84,2.5){\psellipse[linewidth=0.25,linestyle=none,fillstyle=solid](0,0)(1,1)}
\psline[linewidth=0.1](90.5,5)(90.5,0)
\psline[linewidth=0.1](91,11.5)(58.5,11.5)
\psline[linewidth=0.1](91,16.5)(58.5,16.5)
\psline[linewidth=0.1](91,16.5)(91,11.5)
\psline[linewidth=0.1](58.5,16.5)(58.5,11.5)
\psline[linewidth=0.75,fillcolor=white,fillstyle=solid](84,14)(61,25.5)
\psline[linewidth=0.75,fillcolor=white,fillstyle=solid](84,25.5)(84,14)
\rput[l](93,25.5){$T$}
\psline[linewidth=0.75,fillcolor=white,fillstyle=solid](84,14)(72.5,25.5)
\rput{0}(61,25.5){\psellipse[linewidth=0.25,linestyle=none,fillstyle=solid](0,0)(1,1)}
\rput{0}(84,25.5){\psellipse[linewidth=0.25,linestyle=none,fillstyle=solid](0,0)(1,1)}
\rput{0}(72.5,25.5){\psellipse[linewidth=0.25,linestyle=none,fillstyle=solid](0,0)(1,1)}
\psline[linewidth=0.1](90.5,23)(58.5,23)
\psline[linewidth=0.1](90.5,28)(58.5,28)
\psline[linewidth=0.1](90.5,28)(90.5,23)
\psline[linewidth=0.1](58.5,28)(58.5,23)
\rput[Bl](12,18.5){\t1}
\rput[Bl](22.5,18.5){\t2}
\rput[Bl](34,18.5){\t3}
\psline[linewidth=0.25,fillcolor=white,fillstyle=solid](20.5,8)(9,19.5)
\rput[Bl](12,7){\l1}
\rput[Bl](34.5,7){\l3}
\rput[Bl](23,7){\l2}
\rput[Bl](63.5,25){\t1}
\rput[Bl](75,24.5){\t2}
\rput[Bl](86,24.5){\t3}
\rput[Bl](63,1.5){\l1}
\rput[Bl](86,1.5){\l3}
\rput[Bl](74.5,1.5){\l2}
\psline[linewidth=0.25,fillcolor=white,fillstyle=solid](61,25.5)(72.5,14)
\psline[linewidth=0.25,fillcolor=white,fillstyle=solid](72.5,25.5)(61,14)
\rput[Bl](63.5,13.5){\m1}
\rput[Bl](74.5,13){\m2}
\rput[Bl](85.5,13){\m3}
\rput[Bl](0,5.5){a.}
\rput[Bl](52,0){b.}
\psline[linewidth=0.1,linecolor=white](96.5,12)(96.5,7)
\psline[linewidth=0.25,fillcolor=white,fillstyle=solid](20.5,19.5)(32,8)
\psline[linewidth=0.25,fillcolor=white,fillstyle=solid](32,19.5)(9,8)
\psline[linewidth=0.25,fillcolor=white,fillstyle=solid](9,19.5)(32,8)
\psline[linewidth=0.25,fillcolor=white,fillstyle=solid](84,25.5)(72.5,14)
\psline[linewidth=0.25,fillcolor=white,fillstyle=solid](72.5,25.5)(72.5,14)
\psline[linewidth=0.25,fillcolor=white,fillstyle=solid](61,25.5)(61,14)
\psline[linewidth=0.25,fillcolor=white,fillstyle=solid](84,14)(84,2.5)
\psline[linewidth=0.75,fillcolor=white,fillstyle=solid](72.5,14)(84,2.5)
\psline[linewidth=0.75,fillcolor=white,fillstyle=solid](61,14)(72.5,2.5)
\psline[linewidth=0.25,fillcolor=white,fillstyle=solid](84,14)(72.5,2.5)
\psline[linewidth=0.25,fillcolor=white,fillstyle=solid](72.5,14)(61,2.5)
\psline[linewidth=0.25,fillcolor=white,fillstyle=solid](61,14)(61,2.5)
\end{pspicture}
\caption{$ $}\label{F:LCP3 > 9}
\end{figure}

\begin{obs}\label{O:locval>=11}
$\fLCP{3}\geq 11$.
\end{obs}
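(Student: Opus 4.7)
The plan is to extend Observation \ref{O:lval3>=10} by letting Spoiler choose \emph{adaptively} which element of $L$ will play the blocked role. Assume for contradiction that \CoreDisjoint{} Algorithm has a strategy using only $10$ colors, and have Spoiler offer the core $K_{3,3}$ as initial board $\brackets{L,T,\leq,\chains}$ with $L=\set{l_1,l_2,l_3}$, $T=\set{t_1,t_2,t_3}$ and $l_i<t_j$ for all $i,j$.

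I would first bound the color sets $X_i:=\fC{}{l_i}$. Since $L$ is an antichain the $X_i$'s are pairwise disjoint, and by Lemma \ref{L:private} each of the nine edges of $\brackets{L,T,\prec}$ has a private color. Two distinct edges cannot share a private color, since that would place two incomparable points of $L$ or $T$ onto the same chain; in particular the three edges incident to $l_i$ contribute three distinct private colors to $X_i$, giving $|X_i|\geq 3$. Combined with $|X_1|+|X_2|+|X_3|\leq 10$ this forces the multiset $\set{|X_1|,|X_2|,|X_3|}$ to be either $\set{3,3,3}$ or $\set{3,3,4}$. After relabelling so that $|X_1|=\max_i |X_i|$, I obtain in both cases
\[
|X_2\cup X_3|\;=\;|X_2|+|X_3|\;\leq\;6.
\]

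Spoiler's next move will be the antichain $M=\set{m_1,m_2,m_3}$ defined by $l_1<m_1,m_2$, $l_1\parallel m_3$, $l_2,l_3<m_1,m_2,m_3$ and $m_1,m_2,m_3<t_1,t_2,t_3$. I would verify that the resulting poset on $L\cup M\cup T$ has width $3$ — the only nontrivial mixed incomparable pair is $\set{l_1,m_3}$, which cannot be extended to a four-element antichain — and that $L\al M\al T$ holds in the maximum antichain lattice. For the core conditions, by Example \ref{C:char}, $\brackets{L,M,\leq}$ is isomorphic to $\poset{P}_{233}$ (out-degree multiset $\set{2,3,3}$) and $\brackets{M,T,\leq}$ to $\poset{P}_{333}$, so this is a legal move in the \coredisjoint{} game. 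By Lemma \ref{L:private} each of the seven edges
\[
\brackets{l_2,m_1},\ \brackets{l_2,m_2},\ \brackets{l_3,m_1},\ \brackets{l_3,m_2},\ \brackets{m_3,t_1},\ \brackets{m_3,t_2},\ \brackets{m_3,t_3}
\]
then carries a private color which lies in $\fC{}{m}\subseteq\fC{}{L}\cap\fC{}{T}$ for its middle endpoint $m\in M$. For the four edges incident to $l_2$ or $l_3$ the color lies in $X_2\cup X_3$ by disjointness of the $X_i$'s; for the three edges incident to $m_3$ the color lies in $\fC{}{m_3}$, and the relation $l_1\parallel m_3$ forces $\fC{}{m_3}\cap X_1=\emptyset$, so once again the color lies in $X_2\cup X_3$. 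As in Observation \ref{O:lval3>=10}, any two of these seven private colors must be distinct — otherwise two incomparable points from one of $L$, $M$, $T$ would sit on a common chain (for instance $\brackets{l_2,m_1}$ and $\brackets{m_3,t_j}$ sharing a color would place $m_1\parallel m_3$ on one chain). This exhibits seven distinct colors in $X_2\cup X_3$, contradicting $|X_2\cup X_3|\leq 6$.

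The only mildly delicate step is the legality check for Spoiler's middle move, in particular confirming that $\brackets{L,M,\leq}$ is a core; I would handle this by matching its out-degree multiset $\set{2,3,3}$ directly to the classification in Example \ref{C:char}.
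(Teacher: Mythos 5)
Your argument is correct and takes essentially the same route as the paper: invoke Lemma \ref{L:private}, choose $l_1$ adaptively so that the seven critical edges of the $\poset{P}_{233}$/$\poset{P}_{333}$ move must draw their pairwise-distinct private colors from $X_2\cup X_3$, then count. Your size bookkeeping (each $|X_i|\geq 3$, hence $|X_2\cup X_3|\leq 6$ after relabelling) is a slightly more careful rendering of the paper's ``the extra tenth color sits on a single edge, say $(l_1,t_3)$'', but the underlying idea and the contradiction are the same.
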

\begin{proof}
With Observation \ref{O:lval3>=10} \CoreDisjoint{} Algorithm already knows that he has to start with at least $10$ colors on the initial board.
Thus suppose that \CoreDisjoint{} Algorithm uses exactly $10$ colors on the initial board $(L,T,\leq, \chains)$ presented on Figure \ref{F:LCP3 > 9}.a.
Due to Lemma \ref{L:private} each of the $9$ edges has a private color, so that all edges beside one, say $(l_1,t_3)$, have exactly one private color: $\abs{\fC{}{l_i}\cap \fC{}{t_j}}=1$ for $(l_i,t_j)\neq (l_1,t_3)$. The edge $(l_1,t_3)$ has two private colors, i.e. \linebreak $\abs{\fC{}{l_1}\cap \fC{}{t_3}}=2$.
If \CoreDisjoint{} Spoiler introduces the middle level \linebreak $\set{m_1,m_2,m_3}$ as shown on Figure \ref{F:LCP3 > 9}.b,
\CoreDisjoint{} Algorithm can not correctly color points $m_1,m_2,m_3$ by the very same token as in the proof of Observation \ref{O:lval3>=10}.
\end{proof}

\section{Orders of Width at most $3$} \label{S:w<=3}

Being equipped with Corollary \ref{R:CP sum LCP} and Theorem \ref{T:CDG} one can work on bounding (from above) the value $\fCP{w}$ of the on-line chain partitioning problem.
In this section we will do it for $w\leq3$ by determining that

\begin{itemize}
\item $\fLCP{1} = 1$,
\item $\fLCP{2}\leq 4$,
\item $\fLCP{3}\leq 11$.
\end{itemize}

The first item is obvious. Before proving the other two note that the second bound of $4$ is optimal as otherwise $\fCP{2}\leq \fLCP{1}+\fLCP{2}<5$, contrary to Kierstead \cite{Kierstead} and Felsner \cite{Felsner} works that show $\fCP{2}=5$. Also in view of Observation \ref{O:locval>=11} the third bound is optimal.
For showing these $3$ bounds we are going to present a strategy for \CoreDisjoint{} Algorithm playing with $1$, $4$ or $11$ colors respectively. 
This strategy will keep the following conditions as an invariant.
\begin{inv}\label{I:width3}
If $(L,T,\leq, \chains)$ is a board during the \coredisjoint{} game then:
\begin{enumerate}
\item each edge $l\prec t$ in the digraph $(L,T,\prec)$ has at least one private color,
\item if $L<T$ and $w=3$ then there are two disjoint edges (i.e. with no common endpoints) each having at least two private colors.
\end{enumerate}
\end{inv}
Imposing the first condition of Invariant \ref{I:width3} is clear in view of Lemma \ref{L:private} to ensure that \CoreDisjoint{} Algorithm can play forever. The second condition is partially motivated by Observation \ref{O:locval>=11}.
Surprisingly we will see that it is enough to keep only these two conditions.

To keep Invariant \ref{I:width3} after the first round Core Disjoin Algorithm has to color (up to the permutation of vertices and colors) the initial board as presented on Figure \ref{F: first move}. Two thick edges are those that have two private colors.

\begin{figure}[hbt]
\renewcommand{\a}[1]{\begin{footnotesize}$#1$\end{footnotesize}}
\renewcommand{\b}[2]{\begin{footnotesize}$#1,#2$\end{footnotesize}}
\renewcommand{\c}[3]{\begin{footnotesize}$#1,#2,#3$\end{footnotesize}}
\renewcommand{\d}[4]{\begin{footnotesize}$#1,#2,#3,#4$\end{footnotesize}}
\newcommand{\w}[1]{$w=#1$}
\centering\input{chapter-02-pict-22}
\caption{Algorithm's move in the first round.}\label{F: first move}
\end{figure}

In any other round Algorithm receives poset $\poset{B}^{\plus}=\brackets{L\cup M\cup T,\leq}$ of width $w\leq 3$.
Moreover, the antichains $L=\set{l_1,\ldots,l_w}$, $M=\set{m_1,\ldots,m_w}$, $T=\set{t_1,\ldots,t_w}$ are pairwise disjoint and $L\al M\al T$ in the lattice $\fMA{\poset{B}^{\plus}}$.
Property \ref{Pr:core} says that $\brackets{L, M,\leq}$ and $\brackets{M, T, \leq}$ are cores, thus by Example \ref{C:char}, we know that each of $\brackets{L, M,\leq}$ and $\brackets{M, T, \leq}$ is isomorphic to one of the posets $\poset{P}_1,\ldots,\poset{P}_{333}$ listed on Figure \ref{F:Pii}.

Together with $\poset{B}^{\plus}$ Algorithm knows a multicoloring $\chains$ of $L\cup T$. Invariant \ref{I:width3} allows to pick a private color $\gamma^i_j\in\fC{}{l_j}\cap\fC{}{t_i}$ for each edge $(l_j, t_i)$ in $(L, T, \prec)$. We know that all of the $\gamma^i_j$'s are pairwise different. The collection of all such selected $\gamma^i_j$'s is to be denoted by $\Gamma_{\!0}$. The set $\Gamma_{\!0}$ is changing during the game, however in case $w=3$ there are always at least two additional colors in $\Gamma\setminus\Gamma_{\!0}$.

The strategy for \CoreDisjoint{} Algorithm will be presented by cases. In most of them the multicoloring $\chains^{\plus}$ of the middle level uses only colors from $\Gamma_{\!0}$ and satisfy: 
\begin{description}
\item[\cnone] The sets of the form $\fC{\plus}{m}$, with $m\in M$, are nonempty and pairwise disjoint.
\item[\cntwo] If $\gamma^i_j\in \fC{\plus}{m}$ then $l_j<m<t_i$.
\item[\cninv] Each edge of either $(L,M,\prec)$ or $(M,T,\prec)$ has a private color from $\Gamma_{\!0}$.
\end{description}
The conditions \cnone{} and \cntwo{} guarantee that each $\chains^{\plus}$, that extends $\chains$ and satisfies $\fC{\plus}{M}\subseteq\Gamma_{\!0}$, is allowed by the rules of the \coredisjoint{} game.
The third condition proves Invariant \ref{I:width3}.(1).

The cases are organized first by the width of the poset and then by isomorphism types $(\poset{P}_{\Delta},\poset{P}_{\nabla})$ of $(L,M,\leq)$ and $(M,T,\leq)$, meaning that $(L,M,\leq)\iso \poset{P}_{\Delta}$ and $(M,T,\leq)\iso \poset{P}_{\nabla}$, where $\Delta$, $\nabla$ are appropriate multisets of degrees defined in Example \ref{C:char}. An obvious exchange of $L$ with $T$ together with going to the dual poset of $(L,M,T,\leq)$ allows us to describe only one of two cases $(\poset{P}_{\Delta},\poset{P}_{\nabla})$ and $(\poset{P}_{\nabla},\poset{P}_{\Delta})$.

\casewidth{1} (See Figure \ref{F: Alg: P1 P1}.) 
In this case the only possibility is \linebreak $\brackets{L, M,\leq}\iso\poset{P}_{1}$ and $\brackets{M, T,\leq}\iso\poset{P}_{1}$.
Spoiler introduces only one point $m_1$ between $l_1$ and $t_1$. Algorithm colors $\fC{\plus}{m_1}=\set{\gamma_1^1}$.
\begin{figure}[hbt]
\renewcommand{\l}[1]{$l_{#1}$}
\newcommand{\m}[1]{$m_{#1}$}
\renewcommand{\t}[1]{$t_{#1}$}
\newcommand{\g}[2]{\begin{gammasize}$\gamma_{#1}^{#2}$\end{gammasize}}
\centering\input{chapter-02-pict-1_1}
\caption{$\brackets{L, M,\leq}\iso\poset{P}_{1}$ and $\brackets{M, T,\leq}\iso\poset{P}_{1}$.}\label{F: Alg: P1 P1}
\end{figure}

\casewidth{2}

%%%%%%%%%%%%%%%%%%%%%%%%%%%%%%%%%%%%%%%%%%%%%%%%%
\begin{case}{11}{11}{Figure \ref{F: Alg: P11 P11}}
Without loss of generality we may assume that $l_1<m_1<t_1$ and $l_2<m_2<t_2$. The colors $\gamma_1^1\in\fC{}{l_1}\cap\fC{}{t_1}$ and $\gamma_2^2\in\fC{}{l_2}\cap\fC{}{t_2}$ are reused so that $\fC{\plus}{m_1}=\set{\gamma_1^1}$ and $\fC{\plus}{m_2}=\set{\gamma_2^2}$ satisfy \cnall.
\end{case}

\ec

\begin{figure}[hbt]
\renewcommand{\l}[1]{$l_{#1}$}
\newcommand{\m}[1]{\$m_{#1}$}
\renewcommand{\t}[1]{$t_{#1}$}
\newcommand{\g}[2]{\begin{gammasize}$\gamma_{#1}^{#2}$\end{gammasize}}
\centering\input{chapter-02-pict-11_11}
\caption{$\brackets{L, M,\leq}\iso\poset{P}_{11}$ and $\brackets{M, T,\leq}\iso\poset{P}_{11}$.}\label{F: Alg: P11 P11}
\end{figure}

%%%%%%%%%%%%%%%%%%%%%%%%%%%%%%%%%%%%%%%%%%%%%%%%%
\begin{case}{11}{22}{Figure \ref{F: Alg: P11 P22}}
As $M<T$ we have had $L<T$, so that $\Gamma_0=\set{\gamma_1^1, \gamma_1^2, \gamma_2^1, \gamma_2^2}$. Obviously $\fC{\plus}{m_1}=\set{\gamma_1^1,\gamma_1^2}$ and $\fC{\plus}{m_2}=\set{\gamma_2^1,\gamma_2^2}$ satisfy \cnall.
\end{case}

\ec

\begin{figure}[hbt]
\renewcommand{\l}[1]{$l_{#1}$}
\renewcommand{\t}[1]{$t_{#1}$}
\newcommand{\m}{\begin{gammasize}$\gamma_1^1,\gamma_1^2$\end{gammasize}}
\newcommand{\mm}{\begin{gammasize}$\gamma_2^1,\gamma_2^2$\end{gammasize}}
\centering\input{chapter-02-pict-11_22}
\caption{$\brackets{L, M,\leq}\iso\poset{P}_{11}$ and $\brackets{M, T,\leq}\iso\poset{P}_{22}$.}\label{F: Alg: P11 P22}
\end{figure}

%%%%%%%%%%%%%%%%%%%%%%%%%%%%%%%%%%%%%%%%%%%%%%%%%
\begin{case}{22}{22}{Figure \ref{F: Alg: P22 P22}}
Now $L<M<T$, but again $\Gamma_0=\set{\gamma_1^1, \gamma_1^2, \gamma_2^1, \gamma_2^2}$. One can easily check that $\fC{\plus}{m_1}=\set{\gamma_2^1,\gamma_1^2}$ and $\fC{\plus}{m_2}=\set{\gamma_1^1,\gamma_2^2}$ satisfy \cnall.
\end{case}

\ec

\begin{figure}[hbt]
\renewcommand{\l}[1]{$l_{#1}$}
\renewcommand{\t}[1]{$t_{#1}$}
\newcommand{\m}{\begin{gammasize}$\gamma_2^1,\gamma_1^2$\end{gammasize}}
\newcommand{\mm}{\begin{gammasize}$\gamma_1^1,\gamma_2^2$\end{gammasize}}
\centering\input{chapter-02-pict-22_22}
\caption{$\brackets{L, M,\leq}\iso\poset{P}_{22}$ and $\brackets{M, T,\leq}\iso\poset{P}_{22}$.}\label{F: Alg: P22 P22}
\end{figure}

\casewidth{3}
Before proceeding with $w=3$ we need a tool that ensures us that Invariant \ref{I:width3}.(2) is kept.
This invariant allows Algorithm to use two extra colors that are not in $\Gamma_{\!0}$. These two extra colors, say $\alpha$ and $\beta$, are kept as long as $L<T$. This is to help Algorithm to manage situations where $9$ or even $10$ colors do not suffice. We have seen such situations in the proofs of Observation \ref{O:lval3>=10} and \ref{O:locval>=11}, see Figure \ref{F:LCP3 > 9}. Note that if \mbox{$(L,T,\leq) \not\mspace{-5.5mu}\iso \poset{P}_{333}$} then $M$ cannot be inserted to get $\brackets{L,M,\leq}$ or $\brackets{M,T,\leq}$ be isomorphic to $\poset{P}_{333}$. Indeed, if for example \mbox{$\brackets{M,T,\leq}\iso \poset{P}_{333}$} then since in $\brackets{L,T,\prec}$ there is at least one perfect matching, we have $L<T$. Therefore once $\poset{P}_{333}$ does not occur as $(L,T,\leq)$, it can not occur any later. On the other hand, if $\brackets{L,T,\leq}\iso\poset{P}_{333}$ then Invariant \ref{I:width3}.(2) guarantees that $\abs{\fC{}{L}\cap \fC{}{T}}=11$.
In particular $\abs{\Gamma_{\!0}}=9$. After inserting level $M$, to keep Invariant \ref{I:width3}.(1) it obviously suffices to fulfill \cnall\ by using only colors from $\Gamma_{0}$. But if $(L,M,\leq)$ or $(M,T,\leq)$ is $\poset{P}_{333}$, the troubles may arrive later on. Therefore Algorithm has to keep these extra colors $\alpha$, $\beta$ for the future use.
In many cases, when $L<T$, we will actually present a multicoloring $\chainso$ of $M$ by $\Gamma_{\!0}$ and then enrich $\chainso$ to $\chains^{\plus}$ that uses all $11$ colors.

\begin{clm}\label{C:P333}
Suppose that a level $M$ is added to the board \mbox{$(L,T,\leq, \chains)$} of width $3$ in a way that $L<M$ or $M<T$. Then a multicoloring \mbox{$\chainso : M \tto \powerp{\Gamma_{\!0}}$} that satisfies \cnall{} can be enriched to a correct multicoloring \mbox{$\chains^{\plus} : M \tto \powerp{\Gamma}$} such that
\begin{itemize}
\item $\chains^{\plus}$ satisfies \cnall,
\item there are $6$ different points $l_1,l_2,m_1,m_2,t_1,t_2$ with \mbox{$l_1 \prec m_1 \prec t_1$} and \mbox{$l_2 \prec m_2 \prec t_2$} and moreover \mbox{$\abs{\fC{\plus}{l_1}\cap\fC{\plus}{m_1}} =$} \linebreak \mbox{$\abs{\fC{\plus}{m_1}\cap\fC{\plus}{t_1}}\mspace{-1mu} =\mspace{-1mu} \abs{\fC{\plus}{l_2}\cap\fC{\plus}{m_2}}\mspace{-1mu} =\mspace{-1mu} \abs{\fC{\plus}{m_2}\cap\fC{\plus}{t_2}}\mspace{-1mu}=\!2$}.
\end{itemize}
\end{clm}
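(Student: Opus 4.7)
My plan is to exploit the two extra colors $\alpha,\beta\in\Gamma\setminus\Gamma_{\!0}$ that Invariant~\ref{I:width3}.(2) makes available. First, I want to observe that the hypothesis $L<M$ or $M<T$ forces $L<T$ on the original board: if $L<M$, then for any $l\in L$ and $t\in T$ the relation $M\al T$ in $\fMA{L\cup M\cup T,\leq}$ (via Observation~\ref{O:another def order ant}) supplies $m\in M$ with $m\leq t$, and $L<M$ gives $l<m\leq t$; the case $M<T$ is symmetric. Hence $w=3$ and $L<T$ invoke Invariant~\ref{I:width3}.(2), producing two disjoint edges $(l_1,t_1),(l_2,t_2)$ in the bipartite poset $(L,T,\prec)$ each with at least two private colors. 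I will fix $\alpha\in\fC{}{l_1}\cap\fC{}{t_1}$ and $\beta\in\fC{}{l_2}\cap\fC{}{t_2}$ lying outside $\Gamma_{\!0}$ (these are exactly the additional private colors that distinguish $|\Gamma_{\!0}|=9$ from the full palette of size $11$).

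Next, I need to select distinct $m_1,m_2\in M$ realizing chains $l_1\prec m_1\prec t_1$ and $l_2\prec m_2\prec t_2$ in the new board. Because $L,M,T$ are pairwise-disjoint antichains with $L\al M\al T$, the cover relations $\prec$ between adjacent levels coincide with strict inequality. Assume first $L<M$ (the case $M<T$ is dual). Then $l_i<m$ is automatic for every $m\in M$, so the task reduces to picking distinct $m_1,m_2\in M$ with $m_i<t_i$ in the core $(M,T,\leq)$, i.e.\ matching $\{t_1,t_2\}$ into $M$ in the digraph $(M,T,\prec)$. By Hall's condition I need only verify $|N(t_1)\cup N(t_2)|\geq 2$, where $N(\cdot)$ is the set of predecessors. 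The forbidden configuration~(\ref{TE:forb2}) from the proof of Example~\ref{C:char} rules out the only threat, namely two successors of in-degree one sharing the same predecessor; a quick case split on $\delta(t_1),\delta(t_2)\in\{1,2,3\}$ then yields the bound. In the dual case $M<T$, the analogous argument uses (\ref{TE:forb1}) on $(L,M)$.

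Finally, I define the extension by $\fC{\plus}{m_1}:=\chainso(m_1)\cup\{\alpha\}$, $\fC{\plus}{m_2}:=\chainso(m_2)\cup\{\beta\}$, and $\fC{\plus}{m_3}:=\chainso(m_3)$ for the remaining $m_3\in M$. Since the color class of $\alpha$ under $\chains$ is $\{l_1,t_1\}$, adjoining $m_1$ (with $l_1<m_1<t_1$) keeps it a chain; likewise for $\beta$. The three conditions \cnone, \cntwo, \cninv{} transfer from $\chainso$ to $\chains^{\plus}$ essentially for free: \cnone{} holds because $\alpha\neq\beta$ and both lie outside $\Gamma_{\!0}\supseteq\chainso(M)$; \cntwo{} is about the distinguished palette $\Gamma_{\!0}$ only and is unchanged; \cninv{} is preserved because the private $\Gamma_{\!0}$-colors witnessing it for edges of $(L,M,\prec)$ and $(M,T,\prec)$ under $\chainso$ survive verbatim. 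For each of the four edges $(l_i,m_i)$ and $(m_i,t_i)$, \cninv{} furnishes one private color in $\Gamma_{\!0}$ while the construction furnishes $\alpha$ or $\beta$ in $\Gamma\setminus\Gamma_{\!0}$, giving two distinct common colors per edge as required.

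The main obstacle is the combinatorial matching step in the middle paragraph: the two three-element chains must share no middle vertex, and the natural temptation to put $\alpha,\beta$ on arbitrary $m\in M$ fails when the core $(M,T)$ (or $(L,M)$) has low degree at $t_1,t_2$. The forbidden configurations built into the notion of a core are exactly what make Hall's condition succeed, so the reduction to cores proved in Subsection~\ref{SS:core} is doing essential work here.
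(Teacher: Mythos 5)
Your proof is correct and follows essentially the same strategy as the paper's: derive $L<T$ from the hypothesis, extract the two extra private colors $\alpha,\beta$ on disjoint edges of $(L,T,\prec)$ from Invariant~\ref{I:width3}.(2), locate suitable middle vertices $m_1,m_2$, and adjoin $\alpha$ to $\fCo{}{m_1}$ and $\beta$ to $\fCo{}{m_2}$. The one place where you take a detour is the matching step: you verify Hall's condition for matching $\{t_1,t_2\}$ into $M$ via a degree case-split using the forbidden configurations (\ref{TE:forb1})/(\ref{TE:forb2}). The paper's route is shorter and worth internalizing: since $(M,T,\leq)$ is a \emph{core}, the digraph $(M,T,\prec)$ is by definition a union of perfect matchings, so one simply picks any perfect matching $\fMatch{M}{T}$ and takes $m_1,m_2$ to be the preimages of $t_1,t_2$ under it — disjointness is immediate from injectivity, and no Hall/degree analysis is needed. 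Both arguments are sound, but the Hall detour is extra work that the definition of a core already hands you for free. (One minor caveat shared by both your write-up and the paper's: the Claim states the private-color counts equal $2$, but ``take any matching'' only guarantees $\geq 2$ in general; this does not matter since Invariant~\ref{I:width3}.(2) only requires ``at least two''.)
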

\begin{proof}
Without loss of generality we may assume that $L<M$. By \cninv{} for $\chainso$, we already know that each edge in $(L, M, \prec)$ has a private color from $\Gamma_{\!0}$. To secure additional private colors $\alpha$, $\beta$ for two disjoint chains $l_1 \prec m_1 \prec t_1$ and $l_2 \prec m_2 \prec t_2$ note first that $M \al T$ gives $L<T$. Thus Invariant \ref{I:width3} gives that $\abs{\fC{}{L}\cap \fC{}{T}}=11$ and there are two edges, say $l_1 \prec t_1$ and $l_2 \prec t_2$ with $\alpha \in\fC{}{l_1}\cap\fC{}{t_1}$ and $\beta \in \fC{}{l_2}\cap\fC{}{t_2}$, where $\set{\alpha, \beta}=\fC{}{L}\cap \fC{}{T}\setminus \Gamma_{\!0}$. To get the required points $m_1, m_2\in M$ take any
matching $\fMatch{M}{T}$ in the digraph $(M,T,\prec)$ and let $m_1, m_2$ be such that $(m_1, t_1),(m_2,t_2)\in \fMatch{M}{T}$. Putting $\fC{\plus}{m_1}=\fCo{}{m_1}\cup \set{\alpha}$, 
$\fC{\plus}{m_2}=\fCo{}{m_2}\cup \set{\beta}$ and $\fC{\plus}{x}= \fCo{}{x}$ for $x \neq m_1, m_2$ we see that the second item of our Claim is fulfilled. Since $\chains^{\plus}$ uses colors from $\Gamma_{\!0}$ in the very same way as $\chainso$ did, we get $\cnall$ for 
$\chains^{\plus}$. To see that $\chains^{\plus}$ is a correct 
coloring it remains to note that the points colored by $\alpha$ or $\beta$ form the chains $l_1 \prec m_1 \prec t_1$ and $l_2 \prec m_2 \prec t_2$, respectively.
\end{proof}
By the conditions imposed on the enrichment $\chains^{\plus}$ of $\chainso$ described in Claim \ref{C:P333}, we know that $\chains^{\plus}$ satisfies Invariant \ref{I:width3}.

Now we are ready to proceed with all $12$ cases for $w=3$.

%%%%%%%%%%%%%%%%%%%%%%%%%%%%%%%%%%%%%%%%%%%%%%%%%

\begin{case}{111}{\nabla}{See Figure \ref{F: Alg: P111 P233} for $\poset{P}_{\nabla}=\poset{P}_{233}$}
Independently what $\nabla$ is, we can arrange that $l_1<m_1$, $l_2<m_2$ and $l_3<m_3$. What does depend on $\nabla$ is the set $\Gamma_0$. Note however that if $m_k<t_i$ then $l_k<t_i$ and $\Gamma_0$ contains $\gamma_k^i$. To see that the sets $\fC{\plus}{m_k}=\set{\gamma_k^i:m_k<t_i}$ are nonempty, note that otherwise $\set{m_k,t_1,t_2,t_3}$ would be a $4$-element antichain. Since the $\fC{\plus}{m_k}$'s consist of colors $\gamma_k^i$ with different subscripts $k$, they are pairwise disjoint. This shows \cnone{}. Moreover $l_k<m_k$ together with $\gamma_k^i\in\fC{\plus}{m_k}$ gives $l_k<m_k<t_i$ as required in \cntwo. To see \cninv, note that each ``lower'' edge (i.e. in $(L, M,\leq)\iso\poset{P}_{111}$) is of the form $l_k\prec m_k$ and each color in $\fC{\plus}{m_k}$ is its private one. Moreover each ``upper'' edge $m_k\prec t_i$ has $\gamma_k^i$ as a private color. If $\brackets{M, T,\leq}\iso\poset{P}_{333}$ then, to secure Invariant \ref{I:width3}.(2), we simply invoke Claim \ref{C:P333}.
\end{case}

\ec

%\pagebreak

\FloatBarrier 

\begin{figure}[hbt]
\renewcommand{\l}[1]{$l_{#1}$}
\renewcommand{\t}[1]{$t_{#1}$}
\newcommand{\m}{\begin{gammasize}$\gamma_1^1,\gamma_1^2,\gamma_1^3$\end{gammasize}}
\newcommand{\mm}{\begin{gammasize}$\gamma_2^1,\gamma_2^2,\gamma_2^3$\end{gammasize}}
\newcommand{\mmm}{\begin{gammasize}$\gamma_3^2,\gamma_3^3$\end{gammasize}}
\centering\input{chapter-02-pict-111_233}
\caption{$\brackets{L, M,\leq}\iso\poset{P}_{111}$ and $\brackets{M, T,\leq}\iso\poset{P}_{233}$.}\label{F: Alg: P111 P233}
\end{figure}

\FloatBarrier

%%%%%%%%%%%%%%%%%%%%%%%%%%%%%%%%%%%%%%%%%%%%%%%%%
\begin{case}{122}{\nabla}{See Figure \ref{F: Alg: P122 P122} for $\poset{P}_{\nabla}=\poset{P}_{122}$ and Figure \ref{F: Alg: P122 P233} for $\poset{P}_{\nabla}=\poset{P}_{233}$}
Again, independently what $\nabla$ is, we arrange that $l_1<m_1$ and $l_2,l_3<m_2,m_3$, as presented on Figure \ref{F: Alg: P122}. Moreover, without loss of generality we may assume that $m_1<t_1$, $m_2<t_2$ and $m_3<t_3$.
Now our proof splits into two cases depending on whether $m_2$ and $m_3$ have a common neighbor in $T$.
\begin{figure}[hbt]
\newcommand{\m}[1]{$m_{#1}$}
\renewcommand{\l}[1]{$l_{#1}$}
\centering\input{chapter-02-pict-122}
\caption{$\brackets{L, M,\leq}\iso\poset{P}_{122}$.}\label{F: Alg: P122}
\end{figure}

First, suppose that $m_2,m_3$ have no common neighbor in $T$.
Analyzing $6$ cores of width $3$ (see Figure \ref{F:Pii}) this can happen only if either $\poset{P}_{\nabla} = \poset{P}_{111}$ or $\poset{P}_{\nabla} = \poset{P}_{122}$ and $\poset{P}_{\nabla}$ is 
related to $\poset{P}_{\Delta}$ as
shown on Figure \ref{F: Alg: P122 P122}. For $(M,T,\leq)\iso \poset{P}_{111}$ we are actually in
a case dual to \nc{111}{\nabla}. 
If $\brackets{M, T,\leq}\iso \poset{P}_{122}$ then \mbox{$m_1,m_2<t_1,t_2$}, so that 
the set $\Gamma_{\!0}$ of available colors is \mbox{$\set{\gamma_1^1,\gamma_1^2,\gamma_2^1,\gamma_2^2,\gamma_2^3,\gamma_3^1,\gamma_3^2,\gamma_3^3}$}. Obviously the multicoloring $\chains^{\plus}\!$, given by $\fC{\plus}{m_1}=\set{\gamma_1^1,\gamma_1^2}$, $\fC{\plus}{m_2}=\set{\gamma_2^1,\gamma_3^2}$ and $\fC{\plus}{m_3}=\set{\gamma_2^3,\gamma_3^3}$, satisfies \cnall.
\begin{figure}[hbt]
\renewcommand{\l}[1]{$l_{#1}$}
\renewcommand{\t}[1]{$t_{#1}$}
\newcommand{\m}{\begin{gammasize}$\gamma_1^1,\gamma_1^2$\end{gammasize}}
\newcommand{\mm}{\begin{gammasize}$\gamma_2^1,\gamma_3^2$\end{gammasize}}
\newcommand{\mmm}{\begin{gammasize}$\gamma_2^3,\gamma_3^3$\end{gammasize}}
\centering\input{chapter-02-pict-122_122}
\caption{$\brackets{L, M,\leq}\iso\poset{P}_{122}$ and $\brackets{M, T,\leq}\iso\poset{P}_{122}$.}\label{F: Alg: P122 P122}
\end{figure}

Now suppose that $m_2$ and $m_3$ have a common neighbor, say \mbox{$t_s\in T$}. As usual $\Gamma_{\!0}$ depends on $\nabla\!$. But note that whenever $m_1<t_i$, we have $l_1<t_i$ so that $\Gamma_0$ contains $\gamma_1^i$. 
Moreover if $m_2<t_i$ or $m_3<t_i$ then $l_2,l_3<t_i$ and $\Gamma_0$ contains both $\gamma_2^i$ and $\gamma_3^i$. 
All Algorithm has to do, is to follow the behavior in Case \nc{111}{\nabla} with exchanging the role of $\gamma^s_2$ and $\gamma^s_3$:
\begin{align*}
\fC{\plus}{m_1}&=\big\{\gamma_1^i:m_1<t_i\big\},\\
\fC{\plus}{m_2}&=\brackets{\big\{\gamma_2^i:m_2<t_i\big\}\setminus\big\{\gamma_2^s\big\}}\cup\big\{\gamma_3^s\big\},\\
\fC{\plus}{m_3}&=\brackets{\big\{\gamma_3^i:m_3<t_i\big\}\setminus\big\{\gamma_3^s\big\}}\cup\big\{\gamma_2^s\big\}.
\end{align*}
\begin{figure}[hbt]
\renewcommand{\l}[1]{$l_{#1}$}
\renewcommand{\t}[1]{$t_{#1}$}
\newcommand{\n}{\begin{gammasize}$\gamma_1^1,\gamma_1^2,\gamma_1^3$\end{gammasize}}
\newcommand{\nn}{\begin{gammasize}$\gamma_2^1,\gamma_2^2,\gamma_3^3$\end{gammasize}}
\newcommand{\nnn}{\begin{gammasize}$\gamma_3^2,\gamma_2^3$\end{gammasize}}
\newcommand{\m}{\begin{gammasize}$\gamma_1^1,\gamma_1^2$\end{gammasize}}
\newcommand{\mm}{\begin{gammasize}$\gamma_3^1,\gamma_2^2,\gamma_2^3$\end{gammasize}}
\newcommand{\mmm}{\begin{gammasize}$\gamma_2^1,\gamma_3^2,\gamma_3^3$\end{gammasize}}
\centering\input{chapter-02-pict-122_233}
\caption{$\brackets{L, M,\leq}\iso\poset{P}_{122}$ and $\brackets{M, T,\leq}\iso\poset{P}_{233}$ (for $s=1$ or $3$ respectively).}\label{F: Alg: P122 P233}
\end{figure}
To see that the set $\fC{\plus}{m_1}$ is nonempty, note that otherwise $m_1\upseto=\emptyset$ so that $\set{m_1,t_1,t_2,t_3}$ would be a $4$-element antichain. Since the sets $\fC{\plus}{m_k}\setminus\{\gamma_2^s,\gamma_3^s\}\subseteq\{\gamma_k^i:m_k<t_i\}$ consist of colors with different subscripts $k$ and $\gamma_2^s\in\fC{\plus}{m_3}\setminus\fC{\plus}{m_2}$, $\gamma_3^s\in\fC{\plus}{m_2}\setminus\fC{\plus}{m_3}$, the sets $\fC{\plus}{m_k}$ are pairwise disjoint. This proves \cnone. To see \cntwo, note that $\gamma_k^i\in\fC{\plus}{m_k}\setminus\{\gamma_3^s,\gamma_2^s\}$ together with $l_k<m_k$ gives $l_k<m_k<t_i$.
Moreover, for $\gamma_k^i$ being one of $\gamma_2^s$ or $\gamma_3^s$ there are chains $l_2<m_3<t_s$ and $l_3<m_2<t_s$ of color $\gamma_2^s$ and $\gamma_3^s$, respectively.
To see \cninv, note that each ``lower'' edge (i.e. in $(L, M,\leq)\iso\poset{P}_{122}$) is of the form $l_k\prec m_k$ or $l_2\prec m_3$, $l_3\prec m_2$. However
\begin{align*}
\fC{\plus}{l_1}\, \cap\, \fC{\plus}{m_1}\ \ni\ &\ \gamma_1^1,\\
\fC{\plus}{l_2}\, \cap\, \fC{\plus}{m_3}\ \ni\ &\ \gamma_2^s,\\
\fC{\plus}{l_3}\, \cap\, \fC{\plus}{m_2}\ \ni\ &\ \gamma_3^s.
\end{align*}
For the edges $l_2\prec m_2$ and $l_3\prec m_3$ note first that a matching connecting $m_3$ with $t_s$ has to connect $m_2$ with $t_{i_2}\neq t_s$. 
Moreover a matching connecting $m_2$ with $t_s$ has to connect $m_3$ with $t_{i_3}\neq t_s$. 
This gives $m_2 < t_{i_2}\neq t_s$ and $m_3< t_{i_3}\neq t_s$ so that 
\begin{align*}
\fC{\plus}{l_2}\, \cap\, \fC{\plus}{m_2}\ \ni\ &\ \gamma_2^{i_2},\\
\fC{\plus}{l_3}\, \cap\, \fC{\plus}{m_3}\ \ni\ &\ \gamma_3^{i_3}.
\end{align*}
Each ``upper'' edge $m_k\prec t_i$, except $m_2\prec t_s$, $m_3\prec t_s$, has $\gamma_k^i$ as a private color, while $m_2\prec t_s$, $m_3\prec t_s$ are colored by $\gamma_3^s$, $\gamma_2^s$, respectively. 
As usual, if $\brackets{M, T,\leq}\iso\poset{P}_{333}$ then, to secure Invariant \ref{I:width3}.(2), we simply invoke Claim \ref{C:P333}.
\end{case}

\ec

\bigskip

Before considering cases in which $\brackets{L, M,\leq}\iso\poset{P}_{222}$, note that up to renumbering, we may arrange that $l_1<m_1,m_2$; $l_2<m_1,m_3$ and $l_3<m_2,m_3$ (see Figure \ref{F: Alg: P222}). Moreover we may exclude $\poset{P}_{\nabla}$ being $\poset{P}_{111}$ or $\poset{P}_{122}$, as they are already covered by their duals \nc{111}{222} and \nc{122}{222}, respectively. For all other $\poset{P}_{\nabla}$'s we have $L<T$, so that $\Gamma_0$ has all nine $\gamma_j^i$'s.

\FloatBarrier 

\begin{figure}[hbt]
\newcommand{\m}[1]{$m_{#1}$}
\renewcommand{\l}[1]{$l_{#1}$}
\centering\input{chapter-02-pict-222}
\caption{$\brackets{L, M,\leq}\iso\poset{P}_{222}$.}\label{F: Alg: P222}
\end{figure}

\FloatBarrier

%%%%%%%%%%%%%%%%%%%%%%%%%%%%%%%%%%%%%%%%%%%%%%%%%
\begin{case}{222}{222}{Figure \ref{F: Alg: P222 P222}}
Again, up to renumbering of the $t_i$'s we have $m_1<t_1,t_2$; $m_2<t_1,t_3$ and $m_3<t_2,t_3$. Putting $\fC{\plus}{m_1}=\set{\gamma_1^1,\gamma_2^2}$, $\fC{\plus}{m_2}=\set{\gamma_1^3,\gamma_3^1}$ and $\fC{\plus}{m_3}=\set{\gamma_3^2,\gamma_2^3}$ the conditions \cnall\ are fulfilled.
\end{case}

\ec

\FloatBarrier 

\begin{figure}[hbt]
\renewcommand{\l}[1]{$l_{#1}$}
\renewcommand{\t}[1]{$t_{#1}$}
\newcommand{\m}{\begin{gammasize}$\gamma_1^1,\gamma_2^2$\end{gammasize}}
\newcommand{\mm}{\begin{gammasize}$\gamma_1^3,\gamma_3^1$\end{gammasize}}
\newcommand{\mmm}{\begin{gammasize}$\gamma_3^2,\gamma_2^3$\end{gammasize}}
\centering\input{chapter-02-pict-222_222}
\caption{$\brackets{L, M,\leq}\iso\poset{P}_{222}$ and $\brackets{M, T,\leq}\iso\poset{P}_{222}$.}\label{F: Alg: P222 P222}
\end{figure}

\FloatBarrier

%%%%%%%%%%%%%%%%%%%%%%%%%%%%%%%%%%%%%%%%%%%%%%%%%
\begin{case}{222}{223}{Figure \ref{F: Alg: P222 P223}}
We may assume that $m_1<t_1,t_2$; $m_2<T$ and $m_3<t_2,t_3$. One can easily check that the sets $\fC{\plus}{m_1}=\set{\gamma_1^1,\gamma_2^2}$, $\fC{\plus}{m_2}=\set{\gamma_1^3,\gamma_1^2,\gamma_3^1}$ and $\fC{\plus}{m_3}=\set{\gamma_3^2,\gamma_2^3}$ satisfy the conditions \cnall{}.
\end{case}

\ec

\FloatBarrier 

\begin{figure}[hbt]
\renewcommand{\l}[1]{$l_{#1}$}
\renewcommand{\t}[1]{$t_{#1}$}
\newcommand{\m}{\begin{gammasize}$\gamma_1^1,\gamma_2^2$\end{gammasize}}
\newcommand{\mm}{\begin{gammasize}$\gamma_1^3,\gamma_1^2,\gamma_3^1$\end{gammasize}}
\newcommand{\mmm}{\begin{gammasize}$\gamma_3^2,\gamma_2^3$\end{gammasize}}
\centering\input{chapter-02-pict-222_223}
\caption{$\brackets{L, M,\leq}\iso\poset{P}_{222}$ and $\brackets{M, T,\leq}\iso\poset{P}_{223}$.}\label{F: Alg: P222 P223}
\end{figure}

\FloatBarrier

%%%%%%%%%%%%%%%%%%%%%%%%%%%%%%%%%%%%%%%%%%%%%%%%%
\begin{case}{222}{233}{Figure \ref{F: Alg: P222 P233}}
We may assume that $m_1,m_2<T$ and $m_3<t_2,t_3$. Obviously the sets $\fC{\plus}{m_1}=\set{\gamma_1^1,\gamma_1^3,\gamma_2^2}$, $\fC{\plus}{m_2}=\set{\gamma_1^2,\gamma_3^1,\gamma_3^3}$ and $\fC{\plus}{m_3}=\set{\gamma_2^3,\gamma_3^2}$ satisfy \cnall.
\end{case}

\ec

\FloatBarrier 

\begin{figure}[hbt]
\renewcommand{\l}[1]{$l_{#1}$}
\renewcommand{\t}[1]{$t_{#1}$}
\newcommand{\m}{\begin{gammasize}$\gamma_1^1,\gamma_1^3,\gamma_2^2$\end{gammasize}}
\newcommand{\mm}{\begin{gammasize}$\gamma_1^2,\gamma_3^1,\gamma_3^3$\end{gammasize}}
\newcommand{\mmm}{\begin{gammasize}$\gamma_2^3,\gamma_3^2$\end{gammasize}}
\centering\input{chapter-02-pict-222_233}
\caption{$\brackets{L, M,\leq}\iso\poset{P}_{222}$ and $\brackets{M, T,\leq}\iso\poset{P}_{233}$.}\label{F: Alg: P222 P233}
\end{figure}

\FloatBarrier 

%%%%%%%%%%%%%%%%%%%%%%%%%%%%%%%%%%%%%%%%%%%%%%%%%
\begin{case}{222}{333}{Figure \ref{F: Alg: P222 P333}}
Now we have $M<T$. The sets $\fCo{}{m_1}=\set{\gamma_1^2,\gamma_1^3,\gamma_2^1}$, $\fCo{}{m_2}=\set{\gamma_1^1,\gamma_3^2,\gamma_3^3}$ and $\fCo{}{m_3}=\set{\gamma_2^2,\gamma_2^3,\gamma_3^1}$ satisfy \cnall.
Claim \ref{C:P333} allows us to expand $\chainso$ to $\chains^{\plus}$ which satisfies Invariant \ref{I:width3}.
\end{case}

\ec

\FloatBarrier 

\begin{figure}[hbt]
\renewcommand{\l}[1]{$l_{#1}$}
\renewcommand{\t}[1]{$t_{#1}$}
\newcommand{\m}{\begin{gammasize}$\gamma_1^2,\gamma_1^3,\gamma_2^1$\end{gammasize}}
\newcommand{\mm}{\begin{gammasize}$\gamma_1^1,\gamma_3^2,\gamma_3^3$\end{gammasize}}
\newcommand{\mmm}{\begin{gammasize}$\gamma_2^2,\gamma_2^3,\gamma_3^1$\end{gammasize}}
\centering\input{chapter-02-pict-222_333}
\caption{$\brackets{L, M,\leq}\iso\poset{P}_{222}$ and $\brackets{M, T,\leq}\iso\poset{P}_{333}$.}\label{F: Alg: P222 P333}
\end{figure}

\FloatBarrier 

%\pagebreak

\bigskip

Being done with $\poset{P}_{\Delta}= \poset{P}_{222}$ we switch to $\poset{P}_{\Delta}=\poset{P}_{223}$. 
Without loss of generality we may assume that $l_1,l_2<m_1$; $L<m_2$ and $l_2,l_3<m_3$ as shown on Figure \ref{F: Alg: P223}.
As in Cases \nc{222}{\nabla} we may exclude $\poset{P}_{\nabla}=\poset{P}_{111},\poset{P}_{122}$ so that $L<T$ and in consequence $\abs{\Gamma_0}=9$.

\FloatBarrier 

\begin{figure}[hbt]
\newcommand{\m}[1]{$m_{#1}$}
\renewcommand{\l}[1]{$l_{#1}$}
\centering\input{chapter-02-pict-223}
\caption{$\brackets{L, M,\leq}\iso\poset{P}_{223}$.}\label{F: Alg: P223}
\end{figure}

\FloatBarrier

%%%%%%%%%%%%%%%%%%%%%%%%%%%%%%%%%%%%%%%%%%%%%%%%%
\begin{cas}{223}{223}{Figure \ref{F: Alg: P223 P223}}
There is only one point in $M$, say $m$, that has three neighbors in $T$. There are essentially two cases: $m=m_2$ and $m=m_1$.
\begin{enumerate}
\item If $m=m_2$ then $L<m_2<T$ and up to renumbering of $T$ we may assume that $m_1<t_1,t_2$ and $m_3<t_2,t_3$. The sets $\fC{\plus}{m_1}=\set{\gamma_1^1,\gamma_2^2}$, $\fC{\plus}{m_2}=\set{\gamma_1^2,\gamma_2^1,\gamma_3^3}$ and $\fC{\plus}{m_3}=\set{\gamma_2^3,\gamma_3^2}$ satisfy \cnall\ (see Figure \ref{F: Alg: P223 P223}.a).
\item If $m=m_1$, we may arrange $m_2<t_1, t_3$ and $m_3<t_2,t_3$. The sets $\fC{\plus}{m_1}=\set{\gamma_1^2,\gamma_1^3,\gamma_2^1}$, $\fC{\plus}{m_2}= \set{\gamma_1^1,\gamma_2^3,\gamma_3^1}$ and $\fC{\plus}{m_3}=\set{\gamma_2^2,\gamma_3^3}$ satisfy \cnall\ (see Figure \ref{F: Alg: P223 P223}.b).\hfill$\square$
\end{enumerate}
\end{cas}

\ec

\FloatBarrier 

\begin{figure}[hbt]
\renewcommand{\l}[1]{$l_{#1}$}
\renewcommand{\t}[1]{$t_{#1}$}
\newcommand{\m}{\begin{gammasize}$\gamma_1^1,\gamma_2^2$\end{gammasize}}
\newcommand{\mm}{\begin{gammasize}$\gamma_1^2,\gamma_2^1,\gamma_3^3$\end{gammasize}}
\newcommand{\mmm}{\begin{gammasize}$\gamma_2^3,\gamma_3^2$\end{gammasize}}
\newcommand{\n}{\begin{gammasize}$\gamma_1^2,\gamma_1^3,\gamma_2^1$\end{gammasize}}
\newcommand{\nn}{\begin{gammasize}$\gamma_1^1,\gamma_2^3,\gamma_3^1$\end{gammasize}}
\newcommand{\nnn}{\begin{gammasize}$\gamma_2^2,\gamma_3^3$\end{gammasize}}
\centering\input{chapter-02-pict-223_223}
\caption{$\brackets{L, M,\leq}\iso\poset{P}_{223}$ and $\brackets{M, T,\leq}\iso\poset{P}_{223}$.}\label{F: Alg: P223 P223}
\end{figure}

\FloatBarrier

%%%%%%%%%%%%%%%%%%%%%%%%%%%%%%%%%%%%%%%%%%%%%%%%%
\begin{cas}{223}{233}{Figures \ref{F: Alg: P223 P233 a} and \ref{F: Alg: P223 P233 b}}
There is only one point in $M$, say $m$, that has two neighbors in $T$. There are essentially two cases: $m=m_3$ and $m=m_2$.
\begin{enumerate}
\item For $m=m_3$ we may arrange $m_3<t_2,t_3$ and $m_1,m_2<T$. \label{Case:223 233 a} The sets $\fC{\plus}{m_1}=\set{\gamma_1^2,\gamma_1^3,\gamma_2^1}$, $\fC{\plus}{m_2}= \set{\gamma_1^1,\gamma_2^3,\gamma_3^2}$ and $\fC{\plus}{m_3}=\set{\gamma_2^2,\gamma_3^3}$ satisfy \cnall\ (see Figure \ref{F: Alg: P223 P233 a}).
\begin{figure}[hbt]
\renewcommand{\l}[1]{$l_{#1}$}
\renewcommand{\t}[1]{$t_{#1}$}
\newcommand{\m}{\begin{gammasize}$\gamma_1^2,\gamma_1^3,\gamma_2^1$\end{gammasize}}
\newcommand{\mm}{\begin{gammasize}$\gamma_1^1,\gamma_2^3,\gamma_3^2$\end{gammasize}}
\newcommand{\mmm}{\begin{gammasize}$\gamma_2^2,\gamma_3^3$\end{gammasize}}
\centering\input{chapter-02-pict-223_233_a}
\caption{$\brackets{L, M,\leq}\iso\poset{P}_{223}$, $\brackets{M, T,\leq}\iso\poset{P}_{233}$ and $m=m_3$.}\label{F: Alg: P223 P233 a}
\end{figure}
\item If $m_2$ has two neighbors in $T$, say $t_1,t_3$, then $m_1,m_3<T$.\label{Case:223 233 b}\linebreak
Recall that \mbox{$L<T$} gives not only \mbox{$\abs{\Gamma_0}=9$} but also 
\linebreak\mbox{$\abs{\fC{}{L}\cap \fC{}{T}}=11$}, by Invariant \ref{I:width3}.(2). In fact Algorithm must use both these additional colors \mbox{$\alpha, \beta\in\fC{}{L}\cap \fC{}{T}-\Gamma_{\!0}$}.
Invariant \ref{I:width3}.(2) tells us that there is no point $x$ in $L\cup T$ with $\alpha, \beta \in \fC{}{x}$. In particular $t_1$ or $t_3$ must be colored by one of $\alpha,\beta$. We may assume that $\alpha\in\fC{}{t_1}$. There are two cases depending on whether $\alpha\in\fC{}{l_3}$.
\begin{enumerate}
\item If $\alpha\notin\fC{}{l_3}$ then Algorithm colors $\fC{\plus}{m_1}=\set{\gamma_1^3,\gamma_2^2,\alpha}$, $\fC{\plus}{m_2}= \set{\gamma_1^1,\gamma_2^3,\gamma_3^1}$ and $\fC{\plus}{m_3}=\set{\gamma_2^1,\gamma_3^2,\gamma_3^3}$ (see Figure \ref{F: Alg: P223 P233 b}.a).\label{ain}
\item If $\alpha\in\fC{}{l_3}$ then Algorithm colors $\fC{\plus}{m_1}=\set{\gamma_1^2,\gamma_1^3,\gamma_2^1}$, $\fC{\plus}{m_2}= \set{\gamma_1^1,\gamma_2^3,\gamma_3^1}$ and $\fC{\plus}{m_3}=\set{\gamma_2^2,\gamma_3^3,\alpha}$ (see Figure \ref{F: Alg: P223 P233 b}.b).\label{anotin}
\end{enumerate}
It is easy to see that in both cases (\ref{ain}) and (\ref{anotin}) the multicoloring $\chains^{\plus}$ satisfies all conditions imposed by the rules of the \coredisjoint{} game as well as keeps Invariant \ref{I:width3}.\hfill$\square$
\end{enumerate}
\end{cas}

\ec

\FloatBarrier 

\begin{figure}[hbt]
\renewcommand{\a}{\begin{gammasize}$\ldots,\alpha$\end{gammasize}}
\newcommand{\na}{\begin{gammasize}$\ldots,\not\!\alpha$\end{gammasize}}
\renewcommand{\l}[1]{$l_{#1}$}
\renewcommand{\t}[1]{$t_{#1}$}
\newcommand{\m}{\begin{gammasize}$\gamma_1^3,\gamma_2^2,\alpha$\end{gammasize}}
\newcommand{\mm}{\begin{gammasize}$\gamma_1^1,\gamma_2^3,\gamma_3^1$\end{gammasize}}
\newcommand{\mmm}{\begin{gammasize}$\gamma_2^1,\gamma_3^2,\gamma_3^3$\end{gammasize}}
\newcommand{\n}{\begin{gammasize}$\gamma_1^2,\gamma_1^3,\gamma_2^1$\end{gammasize}}
\newcommand{\nn}{\begin{gammasize}$\gamma_1^1,\gamma_2^3,\gamma_3^1$\end{gammasize}}
\newcommand{\nnn}{\begin{gammasize}$\gamma_2^2,\gamma_3^3,\alpha$\end{gammasize}}
\centering\input{chapter-02-pict-223_233_b}
\caption{$\brackets{L, M,\leq}\iso\poset{P}_{223}$, $\brackets{M, T,\leq}\iso\poset{P}_{233}$ and $m=m_2$. Dotted lines indicate edges that have no private color in $\Gamma_{\!0}$.}\label{F: Alg: P223 P233 b}
\end{figure}

\FloatBarrier 

%\pagebreak

%%%%%%%%%%%%%%%%%%%%%%%%%%%%%%%%%%%%%%%%%%%%%%%%%
\begin{case}{223}{333}{Figure \ref{F: Alg: P223 P333}}
As in Case \nc{222}{333} we have $M<T$. The very same coloring that was used there, namely
$\fCo{}{m_1}=\set{\gamma_1^2,\gamma_1^3,\gamma_2^1}$, $\fCo{}{m_2}=\set{\gamma_1^1,\gamma_2^2,\gamma_3^3}$ and $\fCo{}{m_3}=\set{\gamma_2^3,\gamma_3^1,\gamma_3^2}$ satisfies \cnall.
Claim \ref{C:P333} allows us to expand $\chainso$ to $\chains^{\plus}$ which satisfies Invariant \ref{I:width3}.(2).
\end{case}

\ec

\FloatBarrier 

\begin{figure}[hbt]
\renewcommand{\l}[1]{$l_{#1}$}
\renewcommand{\t}[1]{$t_{#1}$}
\newcommand{\m}{\begin{gammasize}$\gamma_1^2,\gamma_1^3,\gamma_2^1$\end{gammasize}}
\newcommand{\mm}{\begin{gammasize}$\gamma_1^1,\gamma_2^2,\gamma_3^3$\end{gammasize}}
\newcommand{\mmm}{\begin{gammasize}$\gamma_2^3,\gamma_3^1,\gamma_3^2$\end{gammasize}}
\centering\input{chapter-02-pict-223_333}
\caption{$\brackets{L, M,\leq}\iso\poset{P}_{223}$ and $\brackets{M, T,\leq}\iso\poset{P}_{333}$.}\label{F: Alg: P223 P333}
\end{figure}

\FloatBarrier

%\bigskip

%\begin{figure}[hbt]
%\newcommand{\m}[1]{$m_{#1}$}
%\renewcommand{\l}[1]{$l_{#1}$}
%\centering\input{chapter-02-pict-233}
%\caption{$\brackets{L\cup M,\leq}\iso\poset{P}_{233}$.}\label{F: Alg: P233}
%\end{figure}

%%%%%%%%%%%%%%%%%%%%%%%%%%%%%%%%%%%%%%%%%%%%%%%%%
\begin{cas}{233}{233}{Figure \ref{F: Alg: P233 P233}}
Again we have $L<T$, so that $\abs{\Gamma_0}=9$.
We may assume that $l_1,l_2<M$ and $l_3<m_2,m_3$. Only one point from $M$, say $m$, has two neighbors in $T$. 
There are essentially two cases: $m=m_1$ and $m=m_3$.
\begin{enumerate}
\item If $m=m_1$ then, without loss of generality, we may assume that the only missing edge in $(M,T,\prec)$ is $(m_1, t_3)$. The sets $\fC{\plus}{m_1}=\set{\gamma_1^1,\gamma_2^2}$, $\fC{\plus}{m_2}= \set{\gamma_1^3,\gamma_2^3,\gamma_3^1,\gamma_3^2}$ and $\fC{\plus}{m_3}=\set{\gamma_1^2,\gamma_2^1,\gamma_3^3}$ (see Figure \ref{F: Alg: P233 P233}.a) satisfy \cnall.
\item If $m=m_3$ then again without loss of generality we may assume that the only missing edge in $(M,T,\prec)$ is $(m_3, t_1)$. \label{Case:233 233 b} In this case Algorithm has to use additional colors \mbox{$\alpha,\beta\notin\Gamma_{\!0}$} which are available due to Invariant \ref{I:width3}.(2) and \mbox{$L<T$}.
Moreover we know by Invariant \ref{I:width3}.(2) that there is no point $x\in L\cup T$ with $\alpha, \beta \in \fC{}{x}$.
Since $\alpha , \beta \in \fC{}{L} \cap \fC{}{T}$ we know that at least one point of degree $3$ from $L$ and one point of degree $3$ from $T$ must be colored by either $\alpha$ or $\beta$.
Without loss of generality we may assume that $\fC{}{l_2} \cap \set{\alpha, \beta} \neq \emptyset \neq \fC{}{t_2} \cap \set{\alpha, \beta}$.
Since $L<m_2<T$, any color can be used for $m_2$. Thus Algorithm puts $\alpha$ and $\beta$ into $\fC{\plus}{m_2}$. This gives that both edges $l_2 \prec m_2$ and $m_2 \prec t_2$ have private colors in $\set{\alpha, \beta}$. Removing these two edges gives rise to posets isomorphic to $\poset{P}_{223}$, namely $(L, M, \leq)$ without $l_2 \prec m_2$ and $(M,T, \leq)$ without $m_2 \prec t_2$. Distributing colors from $\Gamma_{\!0}$ on these two copies of $\poset{P}_{223}$ follows Case \nc{223}{223} by exchanging $l_1$ with $l_2$, $t_1$ with $t_2$ and $m_2$ with $m_3$, see Figures \ref{F: Alg: P223 P223}.b.\!\! and \ref{F: Alg: P233 P233}.b. This guarantees that each edge has a private color and therefore \cnall{} are fulfilled. Moreover points that got $\alpha$ or $\beta$ form chains going through $m_2$.\hfill$\square$
\end{enumerate}
\end{cas}

\ec

%\pagebreak

\FloatBarrier 

\begin{figure}[hbt]
\renewcommand{\a}{\begin{gammasize}$\ldots,\alpha\ \textrm{or}\ \beta$\end{gammasize}}
\newcommand{\na}{\begin{gammasize}$\ldots,\not\!\alpha$\end{gammasize}}
\renewcommand{\l}[1]{$l_{#1}$}
\renewcommand{\t}[1]{$t_{#1}$}
\newcommand{\m}{\begin{gammasize}$\gamma_1^1,\gamma_2^2$\end{gammasize}}
\newcommand{\mm}{\begin{gammasize}$\gamma_1^3,\gamma_2^3,\gamma_3^1,\gamma_3^2$\end{gammasize}}
\newcommand{\mmm}{\begin{gammasize}$\gamma_1^2,\gamma_2^1,\gamma_3^3$\end{gammasize}}
\newcommand{\n}{\begin{gammasize}$\gamma_1^2,\gamma_2^1,\gamma_2^3$\end{gammasize}}
\newcommand{\nn}{\begin{gammasize}$\gamma_1^1,\gamma_3^3,\alpha,\beta$\end{gammasize}}
\newcommand{\nnn}{\begin{gammasize}$\gamma_1^3,\gamma_2^2,\gamma_3^2$\end{gammasize}}
\centering\input{chapter-02-pict-233_233}
\caption{$\brackets{L, M,\leq}\iso\poset{P}_{233}$ and $\brackets{M, T,\leq}\iso\poset{P}_{233}$. Dotted lines indicate edges that have no private color in $\Gamma_{\!0}$.}\label{F: Alg: P233 P233}
\end{figure}

\FloatBarrier

%%%%%%%%%%%%%%%%%%%%%%%%%%%%%%%%%%%%%%%%%%%%%%%%%

\begin{case}{233}{333}{Figure \ref{F: Alg: P233 P333}} We may assume that $l_1,l_2<M$ and $l_3<m_2,m_3$.
Again we have here $L<T$, so that $\abs{\Gamma_0}=9$ and Algorithm has two additional colors $\alpha,\beta\notin\Gamma_{\!0}$. 
As previously, by Invariant \ref{I:width3}.(2) we know that there is no $x \in L \cup T$ with $\alpha, \beta \in \fC{}{x}$. Since $\alpha, \beta \in \fC{}{L} \cap \fC{}{T}$, at least one point of degree $3$ from $L$ must be colored by either $\alpha$ or $\beta$. Without loss of generality we may assume that $\alpha \in \fC{}{l_2}$. Putting $\fC{\plus}{m_1}=\set{\gamma_1^3,\gamma_2^1,\gamma_2^2}$, $\fC{\plus}{m_2}= \set{\gamma_1^2,\gamma_3^1,\gamma_3^3,\alpha}$ and $\fC{\plus}{m_3}=\set{\gamma_1^1,\gamma_2^3,\gamma_3^2,\beta}$ we fulfill \cnall{} as well as the requirement that the points colored by $\alpha$ (or $\beta$) form a chain. Since \mbox{$(M,T,\leq)\iso \poset{P}_{333}$} we have to check Invariant \ref{I:width3}.(2) on \mbox{$(M,T,\leq)$}. Each of its $9$ edges has exactly one color from $\Gamma_{\!0}$. Moreover $\alpha$ and $\beta$ were used on $4$ different points.
\end{case}

\ec

%\pagebreak

\FloatBarrier

\begin{figure}[hbt]
\renewcommand{\l}[1]{$l_{#1}$}
\renewcommand{\t}[1]{$t_{#1}$}
\renewcommand{\a}{\begin{scriptsize}$\ldots,\alpha$\end{scriptsize}}
\newcommand{\m}{\begin{gammasize}$\gamma_1^3,\gamma_2^1,\gamma_2^2$\end{gammasize}}
\newcommand{\mm}{\begin{gammasize}$\gamma_1^2,\gamma_3^1,\gamma_3^3,\alpha$\end{gammasize}}
\newcommand{\mmm}{\begin{gammasize}$\gamma_1^1,\gamma_2^3,\gamma_3^2,\beta$\end{gammasize}}
\centering\input{chapter-02-pict-233_333}
\caption{$\brackets{L, M,\leq}\iso\poset{P}_{233}$ and $\brackets{M, T,\leq}\iso\poset{P}_{333}$. %
Dotted line indicates edge that has no private color in~$\Gamma_{\!0}$.%
}\label{F: Alg: P233 P333}
\end{figure}

\FloatBarrier

%%%%%%%%%%%%%%%%%%%%%%%%%%%%%%%%%%%%%%%%%%%%%%%%%
\begin{case}{333}{333}{Figure \ref{F: Alg: P333 P333}}
We have $L<M<T$, so that $\abs{\Gamma_{\!0}}=9$. Algorithm initially colors $\fCo{}{m_1}=\set{\gamma_1^1,\gamma_2^2,\gamma_3^3}$, $\fCo{}{m_2}=\set{\gamma_1^2,\gamma_2^3,\gamma_3^1}$ and $\fCo{}{m_3}=\set{\gamma_1^3,\gamma_2^1,\gamma_3^2}$.
Obviously $\chainso$ satisfies \cnall.
Claim \ref{C:P333} allows Algorithm to expand $\chainso$ to a correct multicoloring $\chains^{\plus}$ which satisfies Invariant \ref{I:width3}.
\end{case}

\ec

%\pagebreak

\FloatBarrier

\begin{figure}[hbt]
\renewcommand{\l}[1]{$l_{#1}$}
\renewcommand{\t}[1]{$t_{#1}$}
\newcommand{\m}{\begin{gammasize}$\gamma_1^1,\gamma_2^2,\gamma_3^3$\end{gammasize}}
\newcommand{\mm}{\begin{gammasize}$\gamma_1^2,\gamma_2^3,\gamma_3^1$\end{gammasize}}
\newcommand{\mmm}{\begin{gammasize}$\gamma_1^3,\gamma_2^1,\gamma_3^2$\end{gammasize}}
\centering\input{chapter-02-pict-333_333}
\caption{$\brackets{L, M,\leq}\iso\poset{P}_{333}$ and $\brackets{M, T,\leq}\iso\poset{P}_{333}$.}\label{F: Alg: P333 P333}
\end{figure}

\FloatBarrier

%%%%%%%%%%%%%%%%%%%%%%%%%%%%%%%%%%%%%%%%%%%%%%%%%

\bigskip

This finishes our analysis of all possible cases in the \coredisjoint{} game. Recalling Theorem \ref{T:CDG} and Observations \ref{O:lval3>=10}, \ref{O:locval>=11} from Section \ref{S:lower bounds} we get the following theorem.
\begin{thm}\label{Thm:123}
$\fLCP{1} =1,\ \ \fLCP{2} = 4,\ \ \fLCP{3} = 11$.
\end{thm}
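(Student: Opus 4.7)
The plan is to combine the lower bounds already established in Section \ref{S:lower bounds} with strategies for Algorithm in the core-disjoint game, and then invoke Theorem \ref{T:CDG} (which identifies the value of the core-disjoint game with $\fLCP{w}$). The lower bound $\fLCP{1}\geq 1$ is immediate, $\fLCP{2}\geq 4$ follows from Corollary \ref{Co:lval>=w2}, and $\fLCP{3}\geq 11$ is Observation \ref{O:locval>=11}. So the real content is producing, for $w=1,2,3$, a strategy for \CoreDisjoint{} Algorithm that plays forever with $1$, $4$, and $11$ colors, respectively.

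First I would set up the bookkeeping. Algorithm plays the first round exactly as displayed on Figure \ref{F: first move}, which simultaneously fixes the color set $\Gamma$ (of size $1$, $4$, or $11$) and establishes Invariant \ref{I:width3} on the initial board: each edge in $(L,T,\prec)$ receives at least one private color, and for $w=3$ two disjoint edges receive two private colors each (supplying the reserve colors $\alpha,\beta$). From any later board $(L,T,\leq,\chains)$ satisfying Invariant \ref{I:width3}, Spoiler's move introduces a middle antichain $M$ with both $(L,M,\leq)$ and $(M,T,\leq)$ being cores of width $w$; by Example \ref{C:char} these cores fall into finitely many isomorphism types, and it suffices to color $M$ so as to re-establish Invariant \ref{I:width3} on the next board $(L^{\plus},T^{\plus},\leq,\chains^{\plus})$.

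Next, the strategy itself is the case analysis already carried out above. For each isomorphism pair $(\poset{P}_\Delta,\poset{P}_\nabla)$ — reducing by duality and by the symmetry that swaps $L$ with $T$ — Algorithm assigns $\fC{\plus}{m}$ for each $m\in M$ using only the ``labeled'' private colors $\gamma^i_j\in \fC{}{l_j}\cap \fC{}{t_i}$ picked from $\Gamma_0$, supplemented, when necessary and when $w=3$, by the reserve colors $\alpha,\beta$. In every case one verifies three local conditions: (\cnone) the sets $\fC{\plus}{m}$ are nonempty and disjoint across $m\in M$; (\cntwo) $\gamma^i_j\in \fC{\plus}{m}$ forces $l_j<m<t_i$, so that the chain condition is preserved globally; and (\cninv) every edge of $(L,M,\prec)$ and of $(M,T,\prec)$ acquires a private color. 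Conditions (\cnone) and (\cntwo) guarantee legality under the rules of Definition \ref{D:core disjoint game}, while (\cninv), combined with Claim \ref{C:P333} whenever $\poset{P}_\nabla=\poset{P}_{333}$ (or dually), re-establishes Invariant \ref{I:width3}. The verifications for $w=1$ and $w=2$ are one-line checks; for $w=3$ they are the twelve displayed cases (modulo duality).

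The main obstacle is the case $\poset{P}_{333}$ together with cases where the only reasonable multicoloring of $M$ by $\Gamma_0$ leaves some edge without a private color in $\Gamma_0$, forcing Algorithm to spend one of the reserve colors. The critical instances are \nc{223}{233} with $m=m_2$ and \nc{233}{233} with $m=m_3$: there two different edges cannot be covered by $\Gamma_0$ alone, and Algorithm must insert $\alpha$ (and/or $\beta$) into $\fC{\plus}{m_2}$. The delicate point is that after doing so Algorithm must still guarantee, for future rounds, that Invariant \ref{I:width3}.(2) will hold — which is why Claim \ref{C:P333} is structured so that the enriched multicoloring $\chains^{\plus}$ places $\alpha$ and $\beta$ on disjoint two-step chains through $M$, ensuring that whichever choice Spoiler makes for $(L^{\plus},T^{\plus})$, two disjoint edges with two private colors each survive on the new board. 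Once Claim \ref{C:P333} is in hand and the twelve cases are checked, the three upper bounds $\fLCP{1}\leq 1$, $\fLCP{2}\leq 4$, $\fLCP{3}\leq 11$ follow, matching the lower bounds and completing the proof.
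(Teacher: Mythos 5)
Your proposal is correct and mirrors the paper's own argument: lower bounds from Corollary \ref{Co:lval>=w2} and Observation \ref{O:locval>=11}, and upper bounds via a strategy for the \coredisjoint{} game that maintains Invariant \ref{I:width3} through the case analysis on core types from Example \ref{C:char}, with Claim \ref{C:P333} managing the reserve colors and Theorem \ref{T:CDG} transferring the result to $\fLCP{w}$. The only small imprecision is in the description of where $\alpha$ is spent in Case \nc{223}{233}: there $\alpha$ is placed on $m_1$ or $m_3$ depending on whether $\alpha\in\fC{}{l_3}$, not on $m_2$ (and Case \nc{233}{333} also consumes both reserve colors), but this does not affect the soundness of the outlined argument.
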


\bigskip

As a corollary from Theorem \ref{Thm:123} and Corollary \ref{R:CP sum LCP} we have: 
\begin{thm}\label{T:Bosek}
$\fCP{3} \leq 16$.
\end{thm}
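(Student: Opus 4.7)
The plan is to observe that this bound is a direct corollary of the machinery developed throughout the chapter, so the proof is essentially a matter of invoking the right results and performing a trivial arithmetic. Specifically, Corollary \ref{R:CP sum LCP} converts the reduction from the global on-line chain partitioning game to local games into the explicit inequality
\[
\fCP{w} \; \leq \; \sum_{v=1}^{w} \fLCP{v},
\]
so all that I need to do is substitute $w=3$ and plug in the values of $\fLCP{v}$ established in Theorem \ref{Thm:123}.

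First I would specialize Corollary \ref{R:CP sum LCP} to $w=3$, obtaining $\fCP{3} \leq \fLCP{1} + \fLCP{2} + \fLCP{3}$. Next I would quote Theorem \ref{Thm:123}, which gives $\fLCP{1} = 1$, $\fLCP{2} = 4$, and $\fLCP{3} = 11$. Finally I would add: $1 + 4 + 11 = 16$, yielding the claimed bound. No case analysis, no constructions, and no further invariants are required at this stage — all of the genuine work has already been carried out: the three-step reduction of Section \ref{S:Main Reduction} (localization, disjointness in Subsection \ref{SS:different}, and passage to cores in Subsection \ref{SS:core}, culminating in Theorem \ref{T:CDG}), the combinatorial case analysis for width at most $3$ in Section \ref{S:w<=3}, and the matching lower bounds from Section \ref{S:lower bounds}.

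There is essentially no obstacle here, as the statement is a formal consequence of results already proved. The only thing worth double-checking is that the pairing between Corollary \ref{R:CP sum LCP} and Theorem \ref{Thm:123} is applied cleanly — that is, that the local-game values used in the sum are indeed the ones that bound $\fCP{w}$ from above. This is exactly the content of Corollary \ref{R:CP sum LCP}, so nothing further is needed.
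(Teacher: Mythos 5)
Your proof is correct and matches the paper exactly: the paper also presents this theorem as an immediate corollary of Corollary \ref{R:CP sum LCP} together with Theorem \ref{Thm:123}, with the same arithmetic $1+4+11=16$. Nothing more is needed.
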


\chapter{Chain Partitioning of  Up-growing Interval Orders}\label{Ch:On-line Interval}
In \cite{Felsner} Felsner considered a well motivated and interesting variant of the on-line chain partitioning problem. In this variant on-line posets are supposed to be presented in a special way. As we see in Theorem \ref{Th:Felsner}, this restriction for Spoiler allows Algorithm to perform much better than in general case, when only an exponential upper bound is known.

\begin{defn}
An \emph{on-line up-growing ordered set} (or \emph{on-line up-growing poset}) is the on-line ordered set $\poset{P}^{\ll}=\brackets{P,\leq,\ll}$ such that  each point $p_n\in P$ is maximal at the moment of its arrival, i.e. $p_n$ is maximal in the initial segment $\set{p_1,\ldots,p_n}\subseteq P$.
\end{defn}

As before, we define the \emph{value of the on-line chain partitioning problem for up-growing posets} to be the least integer $k$ such that there is an on-line algorithm that never uses more than $k$ chains on posets of width $w$.

\begin{thm}[Felsner \cite{Felsner}]\label{Th:Felsner} The value of the on-line chain partitioning problem for up-growing posets of width $w$ is $\binom{w+1}{2}$.
\end{thm}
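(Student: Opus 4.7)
I will prove both inequalities $\fCP{w} \le \binom{w+1}{2}$ and $\fCP{w} \ge \binom{w+1}{2}$ separately.

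For the upper bound, I plan to design an on-line algorithm that labels each incoming (maximal) point $p$ with a pair $(i(p), j(p))$ satisfying $1 \le i(p) \le j(p) \le w$ and places $p$ on the chain indexed by that label. The outer index is the natural quantity $j(p) := \fWidth{p\downsetc}$, which is at most $w$ since the whole poset has width at most $w$. The inner index $i(p)$ is the least $i \in \set{1,\ldots,j(p)}$ such that no previously presented point with label $(i, j(p))$ is incomparable with $p$; by construction, points sharing a label are then pairwise comparable, so each label indexes a genuine chain, and the total number of labels used is at most $\binom{w+1}{2}$. The principal step is to verify that such an $i$ always exists. I plan to argue by contradiction: if for every $i = 1, \ldots, j(p)$ there were a point $q_i$ labeled $(i, j(p))$ with $q_i \parallel p$, then each $q_i$ came with a witnessing antichain $A_i \subseteq q_i\downsetc$ of size $j(p)$ and, because each $q_i$ was maximal at its arrival, $A_i$ consists of $j(p)$ points strictly below $q_i$. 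Combining $p$ with elements extracted from the $A_i$'s, together with the maximum antichain of size $j(p)$ in $p\downsetc$, one produces an antichain in the current poset of size exceeding $w$, contradicting the width hypothesis.

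For the lower bound, I intend to proceed by induction on $w$; the base $w = 1$ is immediate. Assuming Spoiler has a strategy $\mathcal{S}_{w-1}$ forcing $\binom{w}{2}$ chains in the $(w-1)$-game, Spoiler first plays $\mathcal{S}_{w-1}$ to produce a width-$(w-1)$ poset $P_0$ in which Algorithm has committed to $\binom{w}{2}$ distinct chains with current tops $t_1,\ldots,t_{\binom{w}{2}}$. Spoiler then presents $w$ further maximal points $q_1,\ldots,q_w$ in sequence, forming an antichain, with comparabilities to $P_0$ chosen so that: (i) each $q_k$ is maximal at the moment of arrival, (ii) the resulting poset has width exactly $w$, and (iii) for every chain already opened (including those opened for earlier $q_l$'s) there is some $q_k$ incomparable with its current top. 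Under (iii) no $q_k$ can extend any previous chain, and since the $q_k$'s are pairwise incomparable they cannot share chains either, so $w$ new chains are forced, yielding $\binom{w}{2} + w = \binom{w+1}{2}$ in total.

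The main obstacle on the upper-bound side is the contradiction step: the $q_i$'s are not assumed pairwise incomparable and the witnessing antichains $A_i$ may share elements, so a careful selection is needed to assemble an antichain of size $w+1$. The case $j(p)=1$ is separately easy because then $p\downsetc$ is a chain. On the lower-bound side, the delicate point is the adaptive construction of the $w$ final moves: since $\binom{w}{2}$ can exceed $w$, the tops $t_1, \ldots, t_{\binom{w}{2}}$ do not in general lie in a single maximum antichain of size $w-1$, and Spoiler must designate, for each chain, a specific top that at least one $q_k$ will avoid, all while keeping the global width equal to $w$ and maintaining maximality of each $q_k$ at its moment of arrival --- the choices being made adaptively based on Algorithm's responses.
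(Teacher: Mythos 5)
This theorem is cited in the paper from Felsner \cite{Felsner} without proof, so there is no in-text argument to compare against; I will therefore assess your proposal on its own terms.

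On the upper-bound side the scheme is not merely under-justified but incorrect. With $j(p) = \fWidth{p\downsetc}$ and inner indices $1 \le i(p) \le j(p)$, take width $w \ge 2$ and present two incomparable minimal points $a,b$ in that order. Then $j(a) = j(b) = 1$ because $a\downsetc = \set{a}$ and $b\downsetc = \set{b}$, so the only admissible label for each is $(1,1)$; yet $a \parallel b$, so they cannot share a chain, and no antichain of size $w+1$ appears to save the contradiction argument (the poset has width $2 \le w$). Thus the case $j(p)=1$, which you flag as ``separately easy,'' is precisely where the algorithm breaks. Reversing the allocation to $1 \le i(p) \le w+1-j(p)$ would fix minimal points but fails at the top: for $w=2$ present $a \parallel b$ minimal, then $c,d$ each above both $a,b$ with $c \parallel d$; now $j(c)=j(d)=2$, yet that level is allotted a single chain. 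So no fixed allocation based on $j(p) = \fWidth{p\downsetc}$ can work, and Felsner's actual index is constructed adaptively during the game rather than read off the downset width.

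On the lower-bound side your inductive outline is reasonable, but you correctly identify --- and do not resolve --- the crucial step. The $\binom{w}{2}$ chain tops produced by $\mathcal{S}_{w-1}$ generally do not lie in a common antichain, so with only $w$ new pairwise-incomparable maximal points each one must block many tops at once, while the width constraint forces that no top can be incomparable to all $w$ new points (else that top together with the new points is an antichain of size $w+1$). Reconciling these counting constraints with maximality at arrival and width exactly $w$ is the entire content of the lower bound, and the induction does not close without that argument.
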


In this chapter we are interested in on-line chain partitioning of up-growing \textit{interval orders}. These interval orders, as we have already seen in the Introduction, can sometimes be used to model scheduling of jobs with predefined time intervals in which they have to be executed.

\begin{defn}
An \emph{on-line interval order} is the on-line ordered set $\poset{P}^{\ll}=\brackets{P,\leq,\ll}$ such that the poset $\brackets{P,\leq}$ is an interval order.
\end{defn}

For interval orders the exponential bound $(5^w-1)/4$ of Theorem \ref{Th:Kie Sz} can be narrowed to a linear one, as Kiearstead and Trotter had shown.

\begin{thm}[Kierstead, Troter \cite{KiersteadTrotter}]\label{Thm:KiersteadTrotter}
The value of the on-line chain partitioning problem for interval orders of width $w$ is
\(
3w-2
\).
\end{thm}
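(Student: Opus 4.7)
The plan is to establish the equality by two matching arguments: an on-line strategy for Algorithm using at most $3w-2$ chains on any interval order of width $w$, and an adversary strategy for Spoiler that forces any deterministic on-line algorithm to use at least $3w-2$ chains. Both halves will hinge on the characterization from Theorem \ref{theorem fisburn}, namely that the open upsets (and, dually, downsets) of the points of any interval order are linearly ordered by inclusion; this is the structural feature that makes interval orders dramatically easier on-line than general posets.

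For the upper bound I would design the algorithm around two ``ranks'' assigned to each newly presented point $x$ inside the current initial segment $\poset{P}^{(n)}$: the down-rank $d(x)=\fWidth{x\downseto_{\poset{P}^{(n)}}}$ and the up-rank $u(x)=\fWidth{x\upseto_{\poset{P}^{(n)}}}$. Because $\poset{P}^{(n)}$ is $(\II)$-free, if $x<y$ and both are present, then either $d(x)<d(y)$ or $u(x)>u(y)$, i.e.\ the pair $(d,u)$ is strictly monotone along chains; this is the key lemma I would prove first. I would then split the palette into three blocks of sizes $w-1$, $w$, $w-1$ (totalling $3w-2$): points with $u(x)=w$ (respectively $d(x)=w$) are colored within the first (respectively third) block using the value of $d(x)$ (respectively $u(x)$) as index, while the remaining points with $d(x),u(x)<w$ are colored in the middle block using the value of $d(x)+u(x)$. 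Since comparable points have nested open upsets/downsets, the monotonicity of $(d,u)$ along chains forces points sharing a color to be pairwise comparable, which is precisely what is needed to certify the coloring as a chain partition.

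For the lower bound I would describe Spoiler's strategy in two phases. In a startup phase Spoiler presents $w$ pairwise incomparable intervals, forcing $w$ distinct chains. Then, in $2w-2$ subsequent phases, Spoiler presents a single new interval $x$ whose position on the real line is chosen so that, for one particular chain $C$ already used by Algorithm, the top element of $C$ and $x$ together with two earlier points would form a forbidden $\II$-pattern if Algorithm tried to add $x$ to $C$, while at the same time $x$ is forced (by its placement) to be incomparable with the top elements of all other already-used chains of the relevant type. Using the linear order on open upsets, Spoiler can arrange an explicit interval representation in which such a witness $x$ exists at every step without ever creating a $(w+1)$-antichain, so every such $x$ demands a fresh color; counting gives $w+2(w-1)=3w-2$.

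The main obstacle is the lower bound: designing the interval representation so that at every phase Spoiler simultaneously (i) preserves width $w$, (ii) excludes exactly one ``target'' chain from receiving $x$ by creating a hidden $\II$-obstruction, and (iii) leaves himself enough room to continue for the remaining phases. Keeping all three constraints compatible requires careful bookkeeping of which chains have already been ``used up'' as obstruction-donors; the linearity of open upsets is what makes such bookkeeping possible, but verifying that the construction runs for the full $2w-2$ extra rounds is the delicate combinatorial core of the proof. The upper bound, by contrast, reduces essentially to the monotonicity lemma for $(d,u)$ and a routine chain-property verification within each color block.
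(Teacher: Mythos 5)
The paper cites Theorem \ref{Thm:KiersteadTrotter} without proof, so there is no internal argument to compare against; I therefore assess your proposal on its own terms, and the upper bound has a genuine error. Your ``key lemma'' --- that for $x<y$ one of $d(x)<d(y)$ or $u(x)>u(y)$ must hold --- is false even off-line: take a $w$-antichain $A$ and a $w$-antichain $B$ with $A<x<y<B$; then $d(x)=d(y)=w$ and $u(x)=u(y)=w$. More importantly, strict monotonicity \emph{along chains} is not the property your coloring scheme actually needs; you need incomparable points to receive \emph{distinct} colors, and nothing forces that. A decisive counterexample: present $x$, then $y$, with $x\parallel y$ and nothing else yet; at arrival both have $(d,u)=(0,0)$, land in the same block with the same index $d+u=0$, and get the same color despite being incomparable. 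So the chain property fails. (The arithmetic is also off: when $u(x)=w$, the value $d(x)$ can range over $0,\dots,w$, which does not fit in a block of $w-1$ colors.) The actual Kierstead--Trotter algorithm does not assign colors from a static invariant of each point; it stratifies points into on-line ``levels'' chosen so that each level induces a structure of bounded degree, and then colors each level greedily with a small constant number of colors. That layering idea is what your proposal is missing.

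Your lower bound sketch also rests on a misconception. A chain $C$ cannot receive $x$ precisely when $x$ is incomparable to some point of $C$ --- this is the one and only obstruction. The claim that adding $x$ to $C$ would ``create a forbidden $\II$-pattern'' is not meaningful: the underlying poset is fixed and already an interval order; a chain partition cannot introduce a $\II$. Moreover, it is not true that a single new interval per round can always be made to force a fresh chain while keeping the width at $w$; Spoiler's strategy in the Kierstead--Trotter (and Chrobak--\'Slusarek) lower bound is a recursive construction, essentially running a width-$(w-1)$ adversary and using its output as a platform for additional forcing moves, not a flat sequence of $2w-2$ one-point attacks. You correctly identified that keeping the three constraints compatible is ``the delicate combinatorial core,'' but the mechanism you propose for it does not close the argument.
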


Our goal is to consider on-line posets with both restrictions: we assume they are interval orders and they are presented in an up-growing way.

\begin{defn}
An \emph{on-line up-growing interval order}  is the on-line ordered set $\poset{P}^{\ll}=\brackets{P,\leq,\ll}$ such that
\begin{itemize}
\item the poset $\brackets{P,\leq}$ is an interval order,
\item  each point $p_n\in P$ is maximal at the moment of its arrival, i.e. $p_n$ is maximal in the initial segment $\set{p_1,\ldots,p_n}\subseteq P$.
\end{itemize}
\end{defn}

\begin{defn}
The \emph{value of the on-line chain partitioning problem for up-growing interval orders}, $\fCPUI{w}$, is the least integer $k$, such that there is an on-line algorithm that never uses more than $k$ chains on interval orders of width $w$.
\end{defn}

On-line chain partitioning of up-growing interval orders can also be viewed as the following two-person game. \emph{Spoiler} builds an interval order in a consecutive way and \emph{Algorithm} responds by partitioning it into chains. During each round:
\begin{itemize}
\item Spoiler introduces a new point $p$ with its comparability status to the previously presented points. This has to be done  in such a way that $p$ is maximal at the moment of its arrival.
\item Algorithm adds $p$ to some chain.
\end{itemize}
The goal of Algorithm is to use minimal number of chains. The goal of Spoiler is to force Algorithm to use as many chains as possible.

\bigskip

In Section \ref{section_Lower_bound} we present a strategy for Spoiler forcing $2w-1$ chains on interval orders of width $w$ (see Theorem \ref{twierdzenie lowerbound}). Moreover in Section \ref{section_Upper_bound} we present an on-line algorithm using $2w-1$ chains on interval orders of width $w$ (see Theorem \ref{twierdzenie upper}). In consequence we prove the following result.

\begin{thm}
The value of the on-line chain partitioning problem for up-growing interval orders of width $w$ is
\(
2w-1
\).
\end{thm}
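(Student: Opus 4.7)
The statement reduces immediately to establishing the matching bounds $\fCPUI{w}\geq 2w-1$ (Theorem~\ref{twierdzenie lowerbound}) and $\fCPUI{w}\leq 2w-1$ (Theorem~\ref{twierdzenie upper}). My plan is therefore to sketch an approach to each direction separately.

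For the lower bound I would describe an adaptive Spoiler strategy running in two phases. In Phase~1, Spoiler presents $w$ pairwise overlapping intervals with strictly ordered right endpoints, producing an antichain of size $w$; this alone forces Algorithm to open $w$ distinct chains, one for each point $a_1,\ldots,a_w$ with $r_{a_1}<\cdots<r_{a_w}$. In Phase~2, Spoiler plays $w-1$ further maximal points $b_1,\ldots,b_{w-1}$, each realized by an interval whose left endpoint is just to the right of $r_{a_j}$ and whose right endpoint lies far to the right, so that $b_j>a_1,\ldots,a_j$, $b_j\parallel a_{j+1},\ldots,a_w$, and the $b_i$'s pairwise overlap and so are mutually incomparable. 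The interval-order property and the up-growing requirement are automatic from the interval representation, and the width stays at $w$ since each new antichain meets the old one in a compatible way. The crux is then a combinatorial/Hall-type argument showing that no matter how Algorithm distributes the $b_j$'s among the existing chains, at least $w-1$ of them must be placed in genuinely new chains, giving a total of $2w-1$.

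For the upper bound I would exhibit an algorithm keyed to the linear order of open downsets guaranteed by Theorem~\ref{theorem fisburn}. When a maximal point $p$ arrives, its downset $p\downseto$ is comparable under inclusion to every downset already seen, so all previously-seen downsets fit into a chain $D_1\subsetneq D_2\subsetneq\cdots$; each chain $C$ maintained by the algorithm has a well-defined current top with its own downset, and these tops inherit the same linear order. The algorithm reserves a pool of $w$ ``base'' chain labels (intended to carry the lowest antichain) and $w-1$ ``extension'' labels (used whenever the downset of an incoming point genuinely exceeds all currently active downset levels). The assignment rule is: if $p$ can legitimately extend a chain whose top has downset strictly contained in $p\downseto$, append $p$ to such a chain (choosing among eligible ones by a LIFO rotation on labels); otherwise open a fresh extension chain. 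Correctness comes from the inductive invariants that (i) each $C$ remains a chain in the on-line poset, and (ii) the extension pool is never exhausted, which reduces to showing that each new extension label is matched by a later ``retirement'' of another label when a lower antichain becomes dominated.

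The principal obstacle in both directions is the interplay between the up-growing constraint and the $(\mathbb{II})$-free condition. For the lower bound I expect the subtle step to be verifying that Spoiler's second-phase intervals can be realized while simultaneously blocking every possible re-assignment Algorithm could make to avoid opening a new chain; the adaptivity of Spoiler—choosing $b_j$'s interval only \emph{after} observing Algorithm's coloring of $b_1,\ldots,b_{j-1}$—must do the work. For the upper bound the hard step is pinning down the precise potential function (or its equivalent invariant on the extension pool) that reproduces exactly $2w-1$ chains rather than $3w-2$; here the up-growing assumption is essential, since it is precisely the fact that downsets of already-presented points never grow that allows the extension pool to be bounded by $w-1$ rather than by a function proportional to the total number of level changes.
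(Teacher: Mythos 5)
There is a genuine gap in both directions, and the lower bound as written is not merely incomplete but actually fails. In Phase~2 you present $b_1,\ldots,b_{w-1}$ with $b_j>a_1,\ldots,a_j$ and $b_j\parallel a_{j+1},\ldots,a_w$; but then Algorithm can simply place $b_j$ on the chain already containing $a_j$ (this is legal since $b_j>a_j$ and nothing else is on that chain yet), and it finishes with only $w$ chains. No Hall-type argument forces $w-1$ new chains, because the bipartite graph of ``which $b_j$ can sit on which chain'' has a perfect matching into the old chains. You correctly sense that adaptivity ``must do the work,'' but the adaptive strategy is where all the substance lives, and it is not a small perturbation of your linear escalation: the paper's strategy $S(v,M)$ is a recursion in $v$ that first builds an extension $Q'$ of width $v$ on top of $M$, then recursively builds $Q''$ on top of the $v$ maximal intervals of $Q'$, and then plays a final point $x$ whose comparabilities to $M$ depend on whether Algorithm reused at least $v$ of the chains of $M$ inside $Q'\cup Q''$; the inductive invariants \inall{} (in particular the right-endpoint separation condition \insep{}, which lets Spoiler retroactively re-sort right endpoints before placing $x$) are exactly what make the final move legal in the up-growing interval setting. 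None of this is visible in your sketch.

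The upper bound proposal is also underspecified at exactly the point you flag as ``the hard step.'' Basing the algorithm on the Fishburn linear order of downsets is the right starting instinct, but ``append to a chain whose top's downset is strictly contained in $p\mspace{-1mu}\downseto$, with a LIFO rotation, else open an extension chain'' does not come with an invariant that bounds the extension pool by $w-1$; as stated it is close to First-Fit restricted to a slice, and you give no mechanism ensuring a ``retirement'' of a label whenever one is spent. The paper's algorithm instead maintains a chain of high antichains $L_1,\ldots,L_w$ with $\abs{L_i}=i$ and $L_1\upsetc\subseteq\cdots\subseteq L_w\upsetc$, together with chains $\alpha_i\subseteq L_i\downsetc\setminus L_{i-1}\upsetc$ and $\beta_i\subseteq L_i\downseto$; a new non-width-increasing point $x$ is absorbed by $\beta_{i_0}$ for $i_0=\min\set{i : x\in L_i\upseto}$, and the crucial step is the \emph{swap} $\alpha_{i_0}\leftrightarrow\beta_{i_0}$ together with re-computing $L_i^{\plus}=\fHMA{L_i\upsetc}$, after which the $(\II)$-free condition is what guarantees that $\beta_{i_0}\cup\set{x}$ is still a chain. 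The count $2w-1$ (not $2w$) then comes from the observation that $\beta_1$ is never actually needed. Your sketch contains none of these pieces, so the claim that the extension pool is bounded by $w-1$ is at present unsupported.
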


%This paper is devoted to determine the analogous value of the on-line chain partitioning of up-growing interval orders denoted by $\fCPUI{w}$. We prove that
%Instead of referring directly to the definition of an interval order there are settings in which we use the following characterization theorem, a proof of which can be found in \cite{Trotter}.

\section{Lower bound}\label{section_Lower_bound}
\begin{thm}\label{twierdzenie lowerbound}
There is no on-line algorithm for chain partitioning of up-growing interval orders using less than $2w-1$ chains to cover posets of width $w$, i.e., $2w-1\leqslant \fCPUI{w}$.
\end{thm}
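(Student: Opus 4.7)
The plan is to exhibit an adversarial strategy for Spoiler that, against any on-line Algorithm, constructs an up-growing interval order of width $w$ on which Algorithm is forced to open at least $2w-1$ chains. Throughout, Spoiler internally maintains an interval representation on the real line: Spoiler picks intervals for each new point and reveals only the induced comparabilities to Algorithm. Since the resulting poset admits an interval representation by construction, it is automatically an interval order, which disposes of the $(\II)$-freeness condition for free.

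I plan to proceed by induction on $w$. The base $w=1$ is immediate: Spoiler presents one point and Algorithm is forced to use $1=2\cdot 1-1$ chain. For the inductive step, assume Spoiler has a strategy $S_{w-1}$ forcing $2w-3$ chains at width $w-1$. Spoiler's strategy $S_w$ at width $w$ will proceed in three phases. In the opening phase, Spoiler presents one very wide interval $a=[0,N]$, to which Algorithm must assign a first chain $C_0$. In the recursive phase, Spoiler applies $S_{w-1}$ entirely inside the interior $(1,N-1)$; each new interval overlaps $a$ and hence is incomparable with $a$, so it cannot be placed in $C_0$. Moreover, the sub-poset on the new intervals must have width at most $w-1$, since any antichain among new intervals together with $a$ would form an antichain of the whole order. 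By induction, Algorithm opens $2w-3$ additional chains on these new intervals, bringing the total to at least $1+(2w-3)=2w-2$. In the finalizing phase, Spoiler presents one last interval $f$ chosen so that $f$ overlaps every current chain-maximum interval but still respects the width bound $w$; this forces a $(2w-1)$-th chain.

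The main obstacle lies in the finalizing phase. For the interval $f$ to be simultaneously incomparable with all of the $2w-2$ chain maxima without creating an antichain of size $w+1$, these maxima must have a restricted structure (roughly: they should contain only a small antichain, the rest being comparable). To obtain this, the inductive strategy $S_{w-1}$ must be strengthened to maintain an invariant on the chain maxima it produces --- for example, that the chain maxima produced by $S_{w-1}$ together with $a$ form an antichain of size at most $2$ in the overall poset. This is arranged by having $S_{w-1}$ place the relevant intervals in a ``staircase'' whose right endpoints strictly increase, so that all but the topmost chain maxima become dominated by later maxima.

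The remaining step is to verify that $f$ can indeed be positioned as an interval overlapping each of the (at most $2$) antichain chain-maxima, lying to the right of all the ``dominated'' chain maxima (so as to be above them), and with the width of the full poset remaining $w$. The width check reduces to seeing that no new antichain of size $w+1$ is created by adding $f$, which follows from the invariant on chain-maxima. Spoiler's strategy is adaptive throughout: whenever Algorithm's placement threatens the invariant, Spoiler inserts an auxiliary interval to restore it before continuing the recursive construction, and the hardest part of the argument is showing that only finitely many such auxiliary rounds are needed before the next forcing move becomes available.
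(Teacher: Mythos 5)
Your approach is genuinely different from the paper's, and unfortunately it contains a gap that cannot be patched as sketched. The paper's strategy $S(v,M)$ is built by a \emph{doubling} recursion: $S(v+1,M)$ first runs $S(v,M)$ to produce $Q'$, then identifies the antichain $A$ of the $v$ maximal points of $Q'$ and runs $S(v,A)$ again, and only then plays one final interval $x$, with a careful case split on how many of $M$'s chains the Algorithm reused in $Q'\cup Q''$. Your proposal replaces this by a single recursion $w-1\to w$ framed by an ``opening'' interval $a$ containing the recursive construction and a ``finalizing'' interval $f$.

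The finalizing step is where the argument breaks. Since the presentation is up-growing, the new point $f$ is maximal, so $f$ is forced onto a new chain precisely when $f$ is incomparable with the current top element of \emph{every} chain. Your opening interval $a$ is one such top element, and every chain used inside the recursion has its top element strictly inside $a$, hence incomparable with $a$. Thus $f$ must be incomparable with $a$ and with at least one chain maximum $m$ from the recursion; as $a\parallel m$ as well, the set $\{a,m,f\}$ is already an antichain. At $w=2$ this is a width-$3$ antichain in a poset that is supposed to have width $2$, so the construction is inconsistent already at the smallest nontrivial case. More generally, since the recursion's chain maxima are pairwise incomparable to $a$ and to $f$, you would need those $2w-3$ maxima to form a poset of width at most $w-2$ before $f$ is played; you acknowledge this (the ``restricted structure'' / ``staircase'' remark), but the invariant you propose (chain maxima together with $a$ have width at most $2$) is both stronger than needed for $w>3$ and impossible at $w=2$, and no mechanism is given to enforce it against an adversarial Algorithm. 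The concluding appeal to ``auxiliary rounds'' that restore the invariant, terminating after finitely many steps, is exactly the part that carries all the difficulty, and it is left unargued.

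By contrast, the paper never wraps the construction in one big interval: it starts from a bare antichain $M$ of size $w$, and the width control for the final forcing point $x$ comes from choosing $x$ to be \emph{above} the set $N$ of $M$-intervals whose chains were reused (Case 1, when $|N|\geq v$), which simultaneously makes $x$ unplaceable on any existing chain and keeps the largest antichain through $x$ at size $|M\setminus N|+1+(v-1)\leq w$. In Case 2 (few chains of $M$ reused) the count is already high enough that $x$ placed above all of $M$ suffices. The doubling recursion $S(v,M)$ followed by $S(v,A)$ is what produces enough reused/new chains for this dichotomy to close. Your single-recursion scheme produces fewer constraints on the Algorithm, and the resulting shortfall is exactly what the failed width check at the finalization reveals.
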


To prove Theorem \ref{twierdzenie lowerbound} one should provide a strategy for Spoiler building a poset of width $w$ and forcing Algorithm to use at least $2w-1$ chains. For the convenience, we will consider points presented by Spoiler as intervals representing these points. Such interval representation of an interval order gives too much information to Algorithm and therefore we relax the rules how this representation is given: before introducing a new interval Spoiler may change the position of right endpoints of intervals as long as after this change the corresponding poset  is the same. This means that Spoiler may  change the position of the right endpoint $r_a$ of an interval $I(a)=[l_a,r_a]$ whenever the modified interval $I'(a)=[l_a,r_a']$ does intersect the same intervals as $I(a)$ did before the change. This can be expressed by saying that for every interval $I(x)=[l_x,r_x]$ we have
\[
r_a\ <\ l_x\quad\ \textrm{iff}\ \quad r'_a\ <\ l_x.
\]
This relaxation is both necessary and allowed because Spoiler is not obliged to present a representation of the interval order and such modified family of intervals still represents the same interval order.

Our proof of Theorem \ref{twierdzenie lowerbound} heavily relies on Spoilers strategies described in the following claim.

\begin{clm} Let $M$ be an antichain and let $2\leqslant v\leqslant \abs{M}$. There is a strategy $S(v,M)$ for Spoiler to build, in an up-growing way, an extension $M\cup Q$ of $M$ such that
\begin{description}
\item[\inwidth] $Q$ has width $v$ and has $v$ maximal intervals.\\ Moreover $\fWidth{M\cup Q}=\abs{M}$.
\item[\inchains] Algorithm is forced to use at least $2v-2$ chains in $Q$, among them at least $v-1$ chains not used in $M$.
\item[\inq] For some minimal interval $q$ in $Q$ we have $q>M$.
\item[\insep] All intervals in $M$ that are not covered by chains used in $Q$ have the same right endpoint $r$ while the other intervals in $M$ have right endpoints to the left of $r$.
\end{description}
\end{clm}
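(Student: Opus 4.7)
The plan is to prove the claim by induction on $v \geq 2$.

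For the base case $v = 2$, Spoiler's strategy introduces two maximal intervals $q_1, q_2$ both placed above all of $M$ (their left endpoints exceed every right endpoint of $M$) but whose intervals overlap each other (so $q_1 \parallel q_2$); Algorithm must then assign them to two distinct chains. To force at least one of these chains to be new (not used in $M$), Spoiler exploits the freedom in positioning left endpoints: using an auxiliary phase, he first introduces sufficiently many intervals above strict subsets of $M$ whose downsets in $M$ collectively force, via a Hall-type matching obstruction, that some interval in $Q$ cannot be absorbed into an existing $M$-chain. The resulting $Q$ has width $2$ with $2$ maximal intervals, and $\fWidth{M \cup Q} = |M|$ since each $q \in Q$ lies above enough of $M$ to prevent any antichain of size $|M|+1$. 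Property $(iii)$ is witnessed by either of $q_1, q_2$, and property $(iv)$ follows by invoking the relaxation rule to align the uncovered right endpoints to a common value $r$ below $\min(l_{q_1}, l_{q_2})$.

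For the inductive step with $v \geq 3$, Spoiler first applies $S(v-1, M)$ to obtain a partial extension $Q_1 \subseteq Q$ of width $v-1$ with $v-1$ maximal intervals, so that Algorithm has been forced to use $2v-4$ chains in $Q_1$, at least $v-2$ of them not used in $M$. By $(iii)$ there is a minimal interval $q^* \in Q_1$ with $q^* > M$, and by $(iv)$ the $M$-intervals uncovered at this stage share a common right endpoint $r_1$. Spoiler then enlarges $Q_1$ to $Q$ by introducing two further maximal intervals $q^\dagger, q^\ddagger$, placing their left endpoints just above the right endpoints of all covered $M$-intervals and of $q^*$ (hence $\leq r_1$), and positioning their intervals so that they overlap each other as well as the $v - 2$ maximal intervals of $Q_1$ other than $q^*$. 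The new maximal antichain of $Q$ is $\set{q^\dagger, q^\ddagger} \cup (\max(Q_1) \setminus \set{q^*})$, of size $v$, giving the required width and number of maximal intervals. Since $q^\dagger \parallel q^\ddagger$ and each is incomparable with the $v-2$ maximal intervals of $Q_1$ different from $q^*$, Algorithm must use two further chains, distinct from each other and from those committed to these $v-2$ maximal intervals. A counting argument — using that at most $v-2$ of the chains in $Q_1$ can also be $M$-chains by the inductive hypothesis — shows at least one of these two new chains is not used in $M$, upgrading the totals to $2v-2$ chains and $v-1$ new chains. Property $(iii)$ is inherited from $Q_1$ via $q^*$.

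The main obstacle is restoring property $(iv)$ at the new level. The addition of $q^\dagger, q^\ddagger$ modifies which $M$-intervals are covered by chains used in $Q$: a previously uncovered $m \in M$ may end up in a chain containing $q^\dagger$ or $q^\ddagger$, and conversely. Spoiler applies the relaxation rule a final time, choosing a new common right endpoint $r < \min(l_{q^\dagger}, l_{q^\ddagger})$ for the intervals uncovered in the final configuration and pushing the covered ones strictly below $r$. The delicate point is showing that this re-alignment is simultaneously compatible with every interval in $Q_1$ and with $q^\dagger, q^\ddagger$; this follows from the choice $l_{q^\dagger}, l_{q^\ddagger} \leq r_1$ and from the linear ordering of open downsets in interval orders (Theorem \ref{theorem fisburn}), which guarantees that the required system of right-endpoint constraints admits a consistent solution.
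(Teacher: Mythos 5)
Your proposal diverges fundamentally from the paper's proof and contains gaps at both the base case and the inductive step.

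The paper's strategy for $S(2,M)$ is \emph{adaptive}: Spoiler first presents a single interval $x>M$, observes whether Algorithm colours it with a chain already present in $M$ or with a fresh one, and only then decides where to place the second interval $y$ --- above only one carefully chosen $m\in M$ in the first case, above all of $M$ in the second. Your base case instead presents two overlapping intervals $q_1,q_2>M$ simultaneously and then appeals to an ``auxiliary phase'' in which Spoiler introduces ``sufficiently many intervals above strict subsets of $M$'' to obtain a Hall-type obstruction. This cannot be made to satisfy \inwidth{}: those auxiliary intervals are elements of $Q$, they lie above only parts of $M$, and nothing prevents several of them from being mutually incomparable, so $\fWidth{Q}$ would exceed $2$. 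You have not explained how to force one new chain while keeping $Q$ of width $2$ with two maximal intervals, and I do not believe the static Hall-matching idea can do so; the adaptivity is what makes the base case work.

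The inductive step has a different shape from the paper's and also does not close. The paper goes from $v$ to $v+1$ by calling $S(v,M)$ to produce $Q'$, then calling $S(v,A)$ on the antichain $A$ of maximal intervals of $Q'$ to produce $Q''$ (this double recursion is the key device: the $v$ chains through $A$ are necessarily distinct, and the $v-1$ chains of $Q''$ not used on $A$ are disjoint from them, giving $2v-1$ chains in $Q'\cup Q''$ for free), and only then adds a \emph{single} point $x$ whose placement is chosen adaptively according to whether the number of $M$-chains reused in $Q'\cup Q''$ is large or small. You instead call $S(v-1,M)$ once and add \emph{two} points $q^\dagger, q^\ddagger$. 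Several things go wrong. First, $q^*$ is a minimal interval of $Q_1$, not generally a maximal one, so the set $\set{q^\dagger,q^\ddagger}\cup(\max(Q_1)\setminus\set{q^*})$ typically has $v+1$ elements, not $v$, and your maximal-antichain count for \inwidth{} is off. Second, the counting argument that forces at least one of the two chains used on $q^\dagger,q^\ddagger$ to be new is not given and is not obvious: you rule out absorption into chains of the uncovered $M$-intervals by placing left endpoints $\leq r_1$, but you have not ruled out absorbing $q^\dagger$ into a chain of a covered $M$-interval whose $Q_1$ content happens to lie below $q^\dagger$, and with two new points rather than one you need to block two absorptions simultaneously --- a substantially harder task than the paper faces. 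The double recursion $S(v,M)$, $S(v,A)$ followed by one adaptively placed point is precisely what avoids this difficulty, and your proposal does not recover an equivalent mechanism.
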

\begin{proof}
Inducting on $v$ we construct recursively a strategy $S(v,M)$ for Spoiler satisfying all four conditions \inall{}. First, we provide a strategy $S(2,M)$ for an arbitrary antichain $M$ of width $w$:

Without loss of generality let $M$ be covered by the chains $1,\ldots,w$. First, Spoiler makes equal all right endpoints of intervals from $M$. Then Spoiler puts a new interval $x$ to the right of all the intervals of $M$. Algorithm will either decide to use one of the chains covering $M$ or covers $x$ with a new chain $w+1$.
In the first case we may assume that $x$ is covered by the chain $1$ containing interval $m\in M$.
Then Spoiler decreases the right endpoint of the interval $m$ and introduces a new interval $y$ only above $m$ (see the left part of Figure \ref{rys strategia s(2,m)}).
In the second case where $x$ is covered by a new chain $w+1$
Spoiler puts $y$ above each interval of $M$ (see the right part of Figure \ref{rys strategia s(2,m)}). No matter which chain Algorithm uses for $y$, the invariants are satisfied.
%\begin{center}
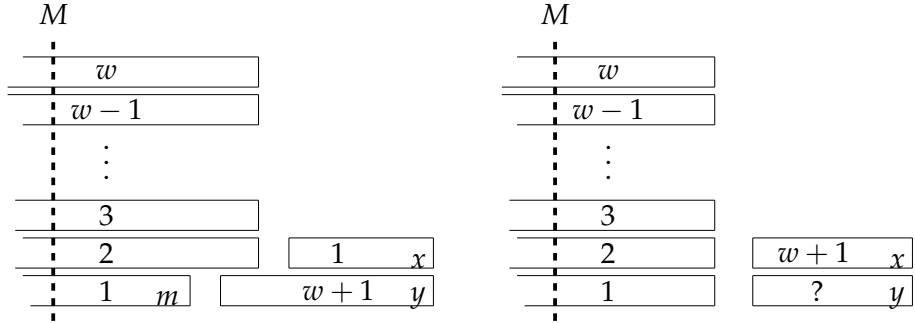
\begin{figure}[hbt]
\begin{small}
\psset{xunit=1mm,yunit=1mm,runit=1mm}
\psset{linewidth=0.3,dotsep=1,hatchwidth=0.3,hatchsep=1.5,shadowsize=1}
\psset{dotsize=0.7 2.5,dotscale=1 1,fillcolor=black}
\psset{arrowsize=1 2,arrowlength=1,arrowinset=0.25,tbarsize=0.7 5,bracketlength=0.15,rbracketlength=0.15}
\begin{pspicture}(0,0)(119,43)
\psline[linewidth=0.1](0,31)(33,31)
\psline[linewidth=0.1](2,35)(33,35)
\psline[linewidth=0.1](3,2)(24,2)
\psline[linewidth=0.1](2,6)(24,6)
\psline[linewidth=0.1](1,7)(33,7)
\psline[linewidth=0.1](2,11)(33,11)
\psline[linewidth=0.1](1,12)(33,12)
\psline[linewidth=0.1](1,16)(33,16)
\psline[linewidth=0.1](0,30)(33,30)
\psline[linewidth=0.1](2,26)(33,26)
\psline[linewidth=0.1](28,6)(56,6)
\psline[linewidth=0.1](28,2)(56,2)
\psline[linewidth=0.1](37,7)(56,7)
\psline[linewidth=0.1](37,11)(56,11)
\psline[linewidth=0.1](37,11)(37,7)
\psline[linewidth=0.1](33,11)(33,7)
\psline[linewidth=0.1](24,6)(24,2)
\psline[linewidth=0.1](28,6)(28,2)
\psline[linewidth=0.1](33,35)(33,31)
\psline[linewidth=0.1](56,6)(56,2)
\psline[linewidth=0.1](33,16)(33,12)
\psline[linewidth=0.1](33,30)(33,26)
\rput(13,19){$.$}
\rput(13,21){$.$}
\rput(13,23){$.$}
\psline[linewidth=0.5,linestyle=dashed,dash=1 1](6,0)(6,37)
\rput[t](6,42){$M$}
\rput(10,43){}
\psline[linewidth=0.1](56,11)(56,7)
\psline[linewidth=0.1](65,31)(93,31)
\psline[linewidth=0.1](67,35)(93,35)
\psline[linewidth=0.1](68,2)(93,2)
\psline[linewidth=0.1](67,6)(93,6)
\psline[linewidth=0.1](66,7)(93,7)
\psline[linewidth=0.1](67,11)(93,11)
\psline[linewidth=0.1](66,12)(93,12)
\psline[linewidth=0.1](66,16)(93,16)
\psline[linewidth=0.1](65,30)(93,30)
\psline[linewidth=0.1](67,26)(93,26)
\psline[linewidth=0.1](98,6)(119,6)
\psline[linewidth=0.1](98,2)(119,2)
\psline[linewidth=0.1](98,7)(119,7)
\psline[linewidth=0.1](98,11)(119,11)
\psline[linewidth=0.1](98,11)(98,7)
\psline[linewidth=0.1](93,11)(93,7)
\psline[linewidth=0.1](93,6)(93,2)
\psline[linewidth=0.1](98,6)(98,2)
\psline[linewidth=0.1](93,35)(93,31)
\psline[linewidth=0.1](119,6)(119,2)
\psline[linewidth=0.1](93,16)(93,12)
\psline[linewidth=0.1](93,30)(93,26)
\rput(79,19){$.$}
\rput(79,21){$.$}
\rput(79,23){$.$}
\psline[linewidth=0.5,linestyle=dashed,dash=1 1](72,0)(72,37)
\rput[t](72,42){$M$}
\rput(75,43){}
\psline[linewidth=0.1](119,11)(119,7)
\rput(13,33){$w$}
\rput(13,28){$w-1$}
\rput(13,14){$3$}
\rput(13,9){$2$}
\rput(13,4){$1$}
\rput(79,4){$1$}
\rput(79,9){$2$}
\rput(79,14){$3$}
\rput(79,28){$w-1$}
\rput(79,33){$w$}
\rput(43.5,9){$1$}
\rput(43.5,4){$w+1$}
\rput(106,9){$w+1$}
\rput(106.5,4){$?$}
\rput(54,3){$y$}
\rput(54,8){$x$}
\rput(21,3){$m$}
\rput(117,8){$x$}
\rput(117,3){$y$}
\end{pspicture}
\end{small}
\caption{The strategy $S(2,M)$.}\label{rys strategia s(2,m)}
\end{figure}
%\end{center}

The induction step proceeds from $v$ to $v+1$ for $v<w$ and  an arbitrary antichain $M$ of width $w$. First, Spoiler extends $M$ by $Q'$ calling recursively the procedure $S(v,M)$. Thus, Algorithm uses at least $2v-2$ chains to cover $Q'$ and at least $v-1$ among them are not used in $M$. Let $A$ be the antichain of all maximal points in $Q'$. \linebreak By \inwidth{} for $Q'$ we have $\abs{A}=v$. Now, Spoiler runs $S(v,A)$ producing an extension $Q''$ of $A$. According to the induction hypothesis Algorithm is forced to use in $Q''$ at least $v-1$ chains not used in $A$. All these together yield that
\begin{texteqn}\label{ET:QQ}
at least $(v-1)+v=2v-1$ chains are used in $Q'\cup Q''$
\end{texteqn}
and among them at least $v-1$ chains are not used in $M$. Let $N$ be the subset of intervals of $M$ whose chains are used in $Q'\cup Q''$; $M_1$ be the subset of $M$ whose chains are used in $Q'$ and $M_2:=M-M_1$. Note that $M_1\subseteq N$. Now, there are two cases:

\smallskip

\noindent\textbf{Case 1.} Algorithm used at least $v$ chains from $M$ in $Q'\cup Q''$, i.e. $\abs{N}\geqslant v$.

In this case Spoiler introduces an interval $x$ above all intervals in $N$ and incomparable to other intervals.
\begin{figure}[hbt]
\unitlength 1mm
\begin{picture}(115,66)(0,0)
\linethickness{0.1mm}
\put(1,44){\line(1,0){53}}
\linethickness{0.1mm}
\put(3,48){\line(1,0){51}}
\linethickness{0.1mm}
\put(4,8){\line(1,0){21}}
\linethickness{0.1mm}
\put(3,12){\line(1,0){22}}
\linethickness{0.1mm}
\put(3,56){\line(1,0){51}}
\linethickness{0.1mm}
\put(1,60){\line(1,0){53}}
\linethickness{0.1mm}
\put(1,20){\line(1,0){34}}
\linethickness{0.1mm}
\put(2,24){\line(1,0){33}}
\linethickness{0.1mm}
\put(3,26){\line(1,0){42}}
\linethickness{0.1mm}
\put(2,30){\line(1,0){43}}
\linethickness{0.1mm}
\put(1,42){\line(1,0){44}}
\linethickness{0.1mm}
\put(4,38){\line(1,0){41}}
\linethickness{0.1mm}
\put(29,13){\line(1,0){43}}
\linethickness{0.1mm}
\put(29,9){\line(1,0){45}}
\linethickness{0.1mm}
\put(39,21){\line(1,0){36}}
\linethickness{0.1mm}
\put(39,25){\line(1,0){38}}
\linethickness{0.1mm}
\put(61,6){\line(1,0){12}}
\linethickness{0.1mm}
\put(61,2){\line(1,0){14}}
\linethickness{0.1mm}
\put(39,21){\line(0,1){4}}
\linethickness{0.1mm}
\put(35,20){\line(0,1){4}}
\linethickness{0.1mm}
\put(25,8){\line(0,1){4}}
\linethickness{0.1mm}
\put(29,9){\line(0,1){4}}
\linethickness{0.1mm}
\put(54,56){\line(0,1){4}}
\linethickness{0.1mm}
\put(54,44){\line(0,1){4}}
\linethickness{0.1mm}
\put(61,2){\line(0,1){4}}
\linethickness{0.1mm}
\put(45,26){\line(0,1){4}}
\linethickness{0.1mm}
\put(45,38){\line(0,1){4}}
\put(48,50){\makebox(0,0)[cc]{$.$}}

\linethickness{0.5mm}
\multiput(18,6)(0,2){10}{\line(0,1){1}}
\put(48,52){\makebox(0,0)[cc]{$.$}}

\put(48,54){\makebox(0,0)[cc]{$.$}}

\linethickness{0.5mm}
\multiput(65,0)(0,2){15}{\line(0,1){1}}
\put(23,14){\makebox(0,0)[cc]{$.$}}

\put(23,16){\makebox(0,0)[cc]{$.$}}

\put(23,18){\makebox(0,0)[cc]{$.$}}

\put(40,32){\makebox(0,0)[cc]{$.$}}

\put(40,34){\makebox(0,0)[cc]{$.$}}

\put(40,36){\makebox(0,0)[cc]{$.$}}

\linethickness{0.1mm}
\put(98,5){\line(1,0){17}}
\linethickness{0.1mm}
\put(99,9){\line(1,0){16}}
\linethickness{0.1mm}
\put(100,18){\line(1,0){15}}
\linethickness{0.1mm}
\put(101,22){\line(1,0){14}}
\linethickness{0.1mm}
\put(100,23){\line(1,0){15}}
\linethickness{0.1mm}
\put(98,27){\line(1,0){17}}
\linethickness{0.1mm}
\put(115,5){\line(0,1){4}}
\linethickness{0.1mm}
\put(115,18){\line(0,1){4}}
\linethickness{0.1mm}
\put(115,23){\line(0,1){4}}
\put(110,12){\makebox(0,0)[cc]{$.$}}

\put(110,14){\makebox(0,0)[cc]{$.$}}

\put(110,16){\makebox(0,0)[cc]{$.$}}

\linethickness{0.5mm}
\multiput(105,0)(0,2){15}{\line(0,1){1}}
\linethickness{0.5mm}
\multiput(65,29)(1.95,0){21}{\line(1,0){0.98}}
\linethickness{0.5mm}
\multiput(65,0)(1.95,0){21}{\line(1,0){0.98}}
\linethickness{0.1mm}
\put(49,35){\line(1,0){66}}
\put(85,12){\makebox(0,0)[cc]{$Q'\cup Q''$}}

\linethickness{0.5mm}
\multiput(8,6)(0,1.96){29}{\line(0,1){0.98}}
\put(8,66){\makebox(0,0)[tc]{$M$}}

\put(11,49){\makebox(0,0)[cc]{}}

\put(25,66){\makebox(0,0)[tc]{$M_2$}}

\linethickness{0.1mm}
\put(49,39){\line(1,0){66}}
\linethickness{0.1mm}
\put(115,35){\line(0,1){4}}
\linethickness{0.1mm}
\put(49,35){\line(0,1){4}}
\put(83,36){\makebox(0,0)[bc]{$x$}}

\put(70,2){\makebox(0,0)[bc]{$q$}}

\linethickness{0.5mm}
\multiput(13,6)(0,2){19}{\line(0,1){1}}
\linethickness{0.5mm}
\multiput(25,25)(0,2){19}{\line(0,1){1}}
\put(50,15){\makebox(0,0)[cc]{$.$}}

\put(50,17){\makebox(0,0)[cc]{$.$}}

\put(50,19){\makebox(0,0)[cc]{$.$}}

\put(13,5){\makebox(0,0)[tc]{$N$}}

\put(19,5){\makebox(0,0)[tc]{$M_1$}}

\end{picture}
\caption{The strategy $S(v+1,M)$ in case 1.}\label{F: Str S(v+1,M)}
\end{figure}
To do that Spoiler first rearranges the right endpoints of intervals in $M$ so that each interval in $N$ ends before any interval in $M-N$ does (see Figure \ref{F: Str S(v+1,M)}). To see that such rearrangement is possible, first note that by \insep{} for $Q'$, intervals from $M_1$ are already enough to the left. Moreover, by \insep{} for $Q'$, the right endpoints of intervals from $M_2$ coincide. Thus, Spoiler may shorten intervals in $M_2\cap N$ to end them to the left of right endpoints of intervals $M_2-N$.

Now, knowing that such an $x$ can be introduced by Spoiler we put $Q:=Q'\cup Q''\cup\set{x}$ and we argue that $M$ extended with $Q$ satisfies conditions \inall{}.
\begin{description}
\item[ad \inwidth{}] $M\cup Q'$ has width $w$, $Q'$ has width $v$ and $v$ maximal points, forming the set $A$,  which is the base for $Q''$. This together with the fact that $A\cup Q''$ has width $v$ implies that $\fWidth{M\cup Q'\cup Q''}=w$, $\fWidth{Q'\cup Q''}=v$ and therefore $\fWidth{Q'\cup Q''\cup\set{x}}=v+1$. Obviously, \mbox{$Q'\mspace{-1mu}\cup\mspace{-1mu} Q''\mspace{-1mu}\cup\mspace{-1mu}\set{x}$} has $v+1$ maximal points: all $v$ maximal points of $Q''$ and the point $x$. It remains to prove that
$$\fWidth{M\cup Q'\cup Q''\cup\set{x}}=w.$$ The only region where $x$ could increase the width to become greater than $w$ is where $x\parallel M-N$. But since $q$ is above \mbox{$M-N$} the size of such an antichain is bounded by $\abs{M-N}+1+(v-1)\leqslant (w-v)+1+(v-1)=w$.
\item[ad \inchains{}] Algorithm has to cover $x$ by a completely new chain, i.e. one not used in $M\cup Q'\cup Q''$. Thus, there are at least
\linebreak $v+(v-1)+1=2(v+1)-2$ chains used in $Q$ and \linebreak $(v-1)+1=(v+1)-1$ chains in $Q$ and not used in $M$.
\item[ad \inq{}] The point witnessing \inq{} for $Q'$ witnesses \inq{} also for $Q$.
\item[ad \insep{}] The rearrangement of right endpoints of intervals in $M$ made by Spoiler before introducing $x$, guarantees that \insep{} still holds.
\end{description}
\begin{figure}[hbt]
\unitlength 1mm
\begin{picture}(115,62)(0,0)
\linethickness{0.1mm}
\put(1,44){\line(1,0){44}}
\linethickness{0.1mm}
\put(3,48){\line(1,0){42}}
\linethickness{0.1mm}
\put(4,8){\line(1,0){21}}
\linethickness{0.1mm}
\put(3,12){\line(1,0){22}}
\linethickness{0.1mm}
\put(3,56){\line(1,0){42}}
\linethickness{0.1mm}
\put(1,60){\line(1,0){44}}
\linethickness{0.1mm}
\put(1,20){\line(1,0){34}}
\linethickness{0.1mm}
\put(2,24){\line(1,0){33}}
\linethickness{0.1mm}
\put(3,26){\line(1,0){42}}
\linethickness{0.1mm}
\put(2,30){\line(1,0){43}}
\linethickness{0.1mm}
\put(1,42){\line(1,0){44}}
\linethickness{0.1mm}
\put(4,38){\line(1,0){41}}
\linethickness{0.1mm}
\put(29,13){\line(1,0){43}}
\linethickness{0.1mm}
\put(29,9){\line(1,0){45}}
\linethickness{0.1mm}
\put(39,21){\line(1,0){36}}
\linethickness{0.1mm}
\put(39,25){\line(1,0){38}}
\linethickness{0.1mm}
\put(61,6){\line(1,0){12}}
\linethickness{0.1mm}
\put(61,2){\line(1,0){14}}
\linethickness{0.1mm}
\put(39,21){\line(0,1){4}}
\linethickness{0.1mm}
\put(35,20){\line(0,1){4}}
\linethickness{0.1mm}
\put(25,8){\line(0,1){4}}
\linethickness{0.1mm}
\put(29,9){\line(0,1){4}}
\linethickness{0.1mm}
\put(45,56){\line(0,1){4}}
\linethickness{0.1mm}
\put(45,44){\line(0,1){4}}
\linethickness{0.1mm}
\put(61,2){\line(0,1){4}}
\linethickness{0.1mm}
\put(45,26){\line(0,1){4}}
\linethickness{0.1mm}
\put(45,38){\line(0,1){4}}
\put(40,50){\makebox(0,0)[cc]{$.$}}

\put(40,52){\makebox(0,0)[cc]{$.$}}

\put(40,54){\makebox(0,0)[cc]{$.$}}

\linethickness{0.5mm}
\multiput(65,0)(0,2){15}{\line(0,1){1}}
\put(23,14){\makebox(0,0)[cc]{$.$}}

\put(23,16){\makebox(0,0)[cc]{$.$}}

\put(23,18){\makebox(0,0)[cc]{$.$}}

\put(40,32){\makebox(0,0)[cc]{$.$}}

\put(40,34){\makebox(0,0)[cc]{$.$}}

\put(40,36){\makebox(0,0)[cc]{$.$}}

\linethickness{0.1mm}
\put(98,5){\line(1,0){17}}
\linethickness{0.1mm}
\put(99,9){\line(1,0){16}}
\linethickness{0.1mm}
\put(100,18){\line(1,0){15}}
\linethickness{0.1mm}
\put(101,22){\line(1,0){14}}
\linethickness{0.1mm}
\put(100,23){\line(1,0){15}}
\linethickness{0.1mm}
\put(98,27){\line(1,0){17}}
\linethickness{0.1mm}
\put(115,5){\line(0,1){4}}
\linethickness{0.1mm}
\put(115,18){\line(0,1){4}}
\linethickness{0.1mm}
\put(115,23){\line(0,1){4}}
\put(110,12){\makebox(0,0)[cc]{$.$}}

\put(110,14){\makebox(0,0)[cc]{$.$}}

\put(110,16){\makebox(0,0)[cc]{$.$}}

\linethickness{0.5mm}
\multiput(105,0)(0,2){15}{\line(0,1){1}}
\linethickness{0.5mm}
\multiput(65,29)(1.95,0){21}{\line(1,0){0.98}}
\linethickness{0.5mm}
\multiput(65,0)(1.95,0){21}{\line(1,0){0.98}}
\linethickness{0.1mm}
\put(49,35){\line(1,0){66}}
\put(85,12){\makebox(0,0)[cc]{$Q'\cup Q''$}}

\linethickness{0.5mm}
\multiput(8,6)(0,1.96){29}{\line(0,1){0.98}}
\put(8,5){\makebox(0,0)[tc]{$M$}}

\put(11,49){\makebox(0,0)[cc]{}}

\linethickness{0.1mm}
\put(49,39){\line(1,0){66}}
\linethickness{0.1mm}
\put(115,35){\line(0,1){4}}
\linethickness{0.1mm}
\put(49,35){\line(0,1){4}}
\put(83,36){\makebox(0,0)[bc]{$x$}}

\put(70,2){\makebox(0,0)[bc]{$q$}}

\linethickness{0.5mm}
\multiput(15,6)(0,2){19}{\line(0,1){1}}
\put(50,15){\makebox(0,0)[cc]{$.$}}

\put(50,17){\makebox(0,0)[cc]{$.$}}

\put(50,19){\makebox(0,0)[cc]{$.$}}

\put(15,5){\makebox(0,0)[tc]{$N$}}

\end{picture}
\caption{The strategy $S(v+1,M)$ in case 2.}\label{F: case2 : S(v+1,M)}
\end{figure}
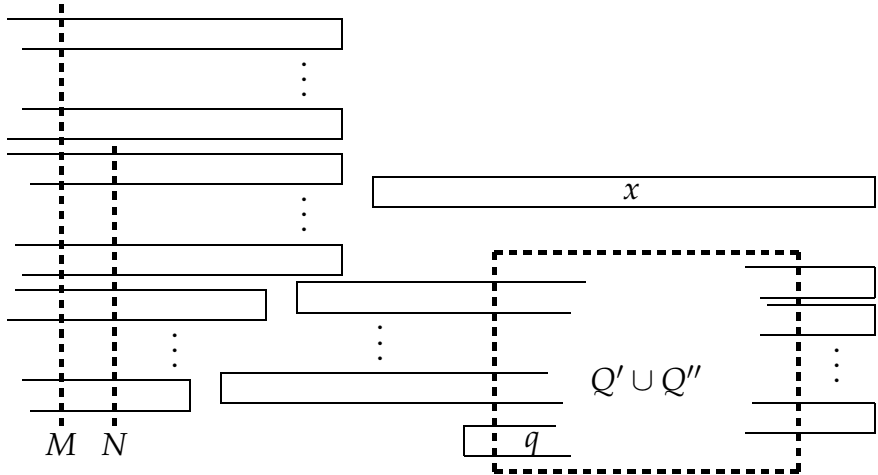

\medskip

\noindent\textbf{Case 2.} Algorithm used at most $v-1$ chains from $M$ to cover \mbox{$Q'\cup Q''$}.

By (\ref{ET:QQ}), this means that at least $v$ chains not used in $M$ are used in \mbox{$Q'\cup Q''$}. In this case, Spoiler puts new point $x$ above all intervals in $M$ and only above them (see Figure \ref{F: case2 : S(v+1,M)}). Put $Q:=Q'\cup Q''\cup\set{x}$. Invariants \inwidth{} and \inq{} may be proved in a very similar way as in \mbox{Case 1}. To prove \inchains{}, observe that while covering $x$, Algorithm may use a completely new chain or a chain used in $M$ and not used in $Q$. In both cases we get $2v-1+1=2(v+1)-2$ chains in $Q'\cup Q''\cup\set{x}$ and already knowing that there are at least $v$ chains used in $Q'\cup Q''$ not from $M$, we are done. The condition \insep{} is trivially kept.
\end{proof}

Now, to prove Theorem \ref{twierdzenie lowerbound}, Spoiler starts with an antichain $M$ with $w$ points. Next he applies strategy $S(w,M)$ to build $Q$ that extends $M$ and forces Algorithm to use at least $w-1$ chains not used in $M$. This shows that at least $2w-1$ chains have to be used, so that $\fCPUI{w}\geqslant 2w-1$.

\section{Upper bound}\label{section_Upper_bound}
\begin{thm}\label{twierdzenie upper}
There is an on-line algorithm for chain partitioning of up-growing interval orders that uses at most $2w-1$ chains, where $w$ is the width of the poset, i.e., $\fCPUI{w}\leqslant 2w-1$.
\end{thm}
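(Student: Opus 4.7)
The plan is to design an on-line labeling algorithm: when a new maximal point $x$ arrives, I would assign it a label $\ell(x)$ drawn from a fixed pool of size $2w-1$, in such a way that points sharing a label form a chain in $\poset{P}$. Reading ``label $=$ chain name'' then produces the desired partition. Two structural ingredients make this possible. First, Fishburn's Theorem \ref{theorem fisburn} tells us that in an interval order the family $\{p\downseto\}_{p\in P}$ is linearly ordered by inclusion, so the lattice $\fMA{X}$ of maximum antichains of any induced subposet $X$ is itself a chain. Second, the up-growing hypothesis guarantees that once $p$ has been presented, its downset $p\downseto$ and every invariant computable from it are frozen for the remainder of the game.

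The concrete labels I would use are pairs $\ell(p)=\brackets{d(p),\varepsilon(p)}$, where $d(p)=\fWidth{p\downsetc}$ takes values in $\{1,\ldots,w\}$ and $\varepsilon(p)\in\{\textsc{low},\textsc{high}\}$ records whether $p$ lies in the smallest or in the largest maximum antichain of $p\downsetc$ (both well defined since $\fMA{p\downsetc}$ is a chain). When $d(p)=1$ the point $p$ is the unique element of its own closed downset, so only one flag value is meaningful and the pair collapses to a single label. This gives a pool of exactly $1+2(w-1)=2w-1$ labels.

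To argue that same-labeled points form a chain, I would proceed by contradiction: suppose $p\parallel q$ share a label $\brackets{d,\varepsilon}$, with $p$ presented before $q$. The up-growing rule forces $p\in q\downsetc$ or $p\parallel q$ with both maximal at the time of arrival; combined with the linear ordering of downsets we get a controlled comparison between $p\downsetc$ and $q\downsetc$. A case analysis using $(\II)$-freeness then shows that equality of $d$ forces nested downsets, and that equality of $\varepsilon$ on top of that forces $p<q$ or $q<p$, contradicting $p\parallel q$. The matching lower bound $2w-1\leq\fCPUI{w}$ of Theorem \ref{twierdzenie lowerbound} then yields the equality $\fCPUI{w}=2w-1$.

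The hard part of this plan, in my view, will be making the flag $\varepsilon$ behave well as new points arrive. The notion ``$p$ lies in the largest maximum antichain of $p\downsetc$'' is fixed as soon as $p$ is placed, but later points can enlarge the ambient maximum antichains of the whole poset and create apparent conflicts between points that share $\brackets{d,\varepsilon}$ but were assigned this label at very different moments. Controlling this temporal drift — so that the chain property is an $\II$-free consequence of the invariants at arrival time and not a statement about the final poset — is the technical obstacle, and it is precisely where the interval-order hypothesis must be used beyond the mere linearity of $\fMA{p\downsetc}$.
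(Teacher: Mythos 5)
Your proposal has a genuine gap: the labeling $\ell(p)=\brackets{d(p),\varepsilon(p)}$ cannot distinguish enough points. For one thing, $\varepsilon(p)$ is not well defined as stated: $p$ is the unique maximal element of $p\downsetc$ and is comparable to every other element of $p\downsetc$, so whenever $\fWidth{p\downsetc}>1$ every maximum antichain of $p\downsetc$ lies in $p\downseto$ and $p$ belongs to none of them. But the fatal failure is simpler. If Spoiler begins by presenting a $w$-element antichain $a_1,\ldots,a_w$, then $a_i\downsetc=\set{a_i}$ and $d(a_i)=1$ for every $i$ (the order is up-growing, so nothing ever appears below a point already placed), so all $w$ points receive the single label you reserve for $d=1$, yet they are pairwise incomparable. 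More fundamentally, no labeling that is a function only of (the isomorphism type of) $p\downsetc$ can succeed: present $a$, then $b>a$, then $c>a$ with $c\parallel b$; then $b\downsetc$ and $c\downsetc$ are isomorphic two-element chains, so they would get the same label, but $b$ and $c$ need different chains. The algorithm must remember where $b$ went — it must be stateful.

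The paper's algorithm is stateful in exactly this way. It maintains a family of high antichains $L_1,\ldots,L_w$ with $L_1\upsetc\subseteq\ldots\subseteq L_w\upsetc$, $\abs{L_i}=i$, and $2w$ chains $\alpha_i,\beta_i$ with $\alpha_i\subseteq L_i\downsetc\setminus L_{i-1}\upsetc$ and $\beta_i\subseteq L_i\downseto$. When a new point $x$ arrives without increasing the width, the algorithm picks $i_0=\min\set{i:x\in L_i\upseto}$ and — this is the step no static labeling can mimic — swaps the two chains at index $i_0$: $x$ is appended to the old $\beta_{i_0}$, which is renamed $\alpha_{i_0}^{\plus}$, while the old $\alpha_{i_0}$ becomes the new $\beta_{i_0}^{\plus}$, ready to secure the rebuilt level $L_{i_0}^{\plus}=\fHMA{L_{i_0}\upsetc}$. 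The $\II$-free condition is invoked precisely to show that $\beta_{i_0}\cup\set{x}$ is a chain: every $b\in\beta_{i_0}$ has some $l\in L_{i_0}$ with $b<l$ and $x$ has some $l'\in L_{i_0}$ with $l'<x$, so $b\parallel x$ would yield a forbidden $\II$ on $\set{b,l,l',x}$. Removing the redundant $\beta_1$ brings the count to $2w-1$. Your first structural observation (linearity of open downsets in interval orders) is indeed used, via high antichains and Observation \ref{O:HM}, but the second ingredient your plan requires — a downset-determined invariant with at most $2w-1$ values that separates all incomparable pairs — does not exist, and the paper avoids needing it by letting the names of the chains themselves evolve.
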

\begin{proof}
Our algorithm maintains an auxiliary data structure $\str$ that depends on an up-growing interval order presented as an input $\poset{P}=(P,\leqslant)$ as well as on the already built covering of $\poset{P}$. When $\poset{P}$ expands to $\poset{P}^{\plus}=(P\cup\set{x},\leqslant)$ by a new maximal point $x$ then our algorithm modifies $\str$ to get a new structure $\strp$ for $\poset{P^{\plus}}$. The chain for $x$ can be easily read from $\strp$. Put%
\[
\str\ =\ (P,\leqslant,w,L_1,\ldots,L_w,\alpha_1,\ldots,\alpha_w,\beta_1,\ldots,\beta_w),
\]%
where $w=\fWidth{P}$ and:
\begin{description}
\item[\jnlevels{}] $L_1,\dots,L_w$ are high antichains in $P$ such that \mbox{$L_1\upsetc\subseteq\ldots\subseteq L_w\upsetc$} and $\abs{L_i}=i$,
\item[\jnchains{}] $\alpha_1,\ldots,\alpha_w,\beta_1,\ldots,\beta_w$ forms a chain partition of $P$,
\item[\jnrel{}] $\alpha_i\subseteq L_i\downsetc -L_{i-1}\upsetc$\, and\, $\beta_i\subseteq L_i\downseto$,
\end{description}
where for a further simplification we put additionally $L_0=\emptyset$.

For an on-line algorithm it is important that chains generated to cover $P^{\plus}$ expand those for $P$. This will be secured by the condition that $\alpha_i's$ and $\beta_i's$ grow in time as shown in lines \ref{algorytm CaseA alfai+=alfai}, \ref{algorytm CaseA betai+=betai}, \ref{algorytm CaseB if dla alfai+}, \ref{algorytm CaseB if dla beta+} of the algorithm we construct. Antichains $L_1,\ldots,L_w$ may be seen as levels of the poset $P$. Two consecutive levels $L_{i-1}$, $L_i$ determine our chains $\alpha_i$ and $\beta_i$ as described in \jnrel{}.

Before describing our algorithm we note that all segments of the form $X\downsetc, X\downseto, X\upsetc, X\upseto$ are considered in $\poset{P^{\plus}}$. Whenever we need to refer to an upset in $P$ we write $X\upsetc\cap P$, while for $X\subseteq P$ downsets $X\downsetc$ are the same in $P$ and $P^{\plus}$.

\clearpage

\begin{alg}{$\ $} \label{A:Int.main}
\renewcommand*\thelstnumber{(A\oldstylenums{\the\value{lstnumber}})}
\begin{lstlisting}[mathescape]
	if $\fWidth{P^{\plus}}>\fWidth{P}$ then (*@\textbf{Case A}@*) else (*@\textbf{Case B}@*)	(*@\label{alg if w(P)<w(P+) then Case A} @*)(*@\vspace{3.5mm}@*)
	(*@\textbf{Case A}@*)
	$w^{\plus}:=w+1$
	for $i:=1$ to $w$ do
		$L_i^{\plus}:=L_i$						(*@\label{algorytm CaseA Li+=Li} @*)
		$\alpha_i^{\plus}:=\alpha_i$					(*@\label{algorytm CaseA alfai+=alfai} @*)
		$\beta_i^{\plus}:=\beta_i$					(*@\label{algorytm CaseA betai+=betai} @*)
	$L_{w+1}^{\plus}:=L_w\cup\set{x}$					(*@\label{algorytm CaseA L+w+1 = Lw suma x} @*)
	$\alpha_{w+1}^{\plus}:=\set{x}$					(*@\label{algorytm CaseA alfaw+1+=x} @*)
	$\beta_{w+1}^{\plus}:=\emptyset$					(*@\label{algorytm CaseA betaw+1+=pusty} @*)(*@\vspace{3.5mm}@*)
	(*@\textbf{Case B}@*) (*@\hfill\it{if $\fWidth{P}=\fWidth{P^{\plus}}$ then $x\in L_w\upseto$} @*)
	$w^{\plus}:=w$
	$i_0 := \descMin\set{i \in \N : x \in L_i\upseto}$		(*@\label{algorytm CaseB definicja i0} @*)
	for $i:=1$ to $w$ do
		$L_i^{\plus}:=\fHMA{L_i\upsetc}$ 				(*@\label{algorytm CaseB Li+=hmaLi} @*)
		if $i\neq i_0$ then $\alpha_i^{\plus}:=\alpha_i$ else $\alpha_i^{\plus}:=\beta_i\cup\set{x}$ (*@\label{algorytm CaseB if dla alfai+}@*)
		if $i\neq i_0$ then $\beta_i^{\plus}:=\beta_i$ else $\beta_i^{\plus}:=\alpha_i$ (*@\label{algorytm CaseB if dla beta+} @*)
\end{lstlisting}
\end{alg}

First of all our algorithm checks whether the new point $x$ enlarges the width of the poset. If it does then the algorithm may use two new chains (see \ref{algorytm CaseA alfaw+1+=x},\ref{algorytm CaseA betaw+1+=pusty}). We prove that our algorithm upgrading $\str$ to $\strp=(P^{\plus},\leqslant,w^{\plus},L_1^{\plus},\ldots,L_{w^{\plus}}^{\plus},\alpha_1^{\plus},\ldots,\alpha_{w^{\plus}}^{\plus},\beta_1^{\plus},\ldots,\beta_{w^{\plus}}^{\plus})$ keeps the properties \jnall{} invariant. The proof of this splits into two parts corresponding to cases A and B, respectively.

\bigskip

\noindent\textbf{Case A: } In this setting, according to \ref{alg if w(P)<w(P+) then Case A}, we have $\fWidth{P^{\plus}}>\fWidth{P}$. Since one point may increase the width of the poset at most by 1, we have
\[
\fWidth{P^{\plus}}=\fWidth{P}+1.
\]
An additional new level $L_{w+1}^{\plus}$ is defined by \ref{algorytm CaseA L+w+1 = Lw suma x}. The other levels are unchanged, see \ref{algorytm CaseA Li+=Li}. The new point $x$ is covered by a new chain $\alpha_{w+1}^{\plus}$, see \ref{algorytm CaseA alfaw+1+=x}. Chain $\beta_{w+1}^{\plus}$ is defined as an empty set, see \ref{algorytm CaseA betaw+1+=pusty}.

By \jnlevels{} we know that $L_w=\fHMA{P}$. Since point $x$ increased the
width, we know that $x$ is incomparable with some maximum antichain
$A$ in $P$. Of course, we have $L_w=\fHMA{P}\subseteq A\upsetc$. Thus,
from $x\notin L_w\upseto$ and from the fact that $x$ is maximal in
$P^{\plus}$ (since presented poset must be up-growing) we get that $L_{w+1}^{\plus}=L_w\cup\set{x}$ is a high antichain in $P^{\plus}$.

Obviously, $L_{i}^{+}\upsetc\subseteq L_{i+1}^{+}\upsetc$ for $i=1,\ldots,w$. Combining this and $x\notin L_w\upseto$ we obtain $x\notin L_i\upseto$ for all $i=1,\ldots,w$. Thus,
\[
L_i^{\plus}\upsetc=L_i\upsetc=L_i\upsetc\cap P.
\]
Since the $L_i$'s are high in $P$ (see \jnlevels{} for $\str$) we immediately get that the $L_i^{\plus}$'s are high in $P^{\plus}$.

The condition describing the cardinalities of $L_i^{\plus}$'s as well as these concerning sets $\alpha_i^{\plus}$ and $\beta_i^{\plus}$ trivially follow from those for $\str$.

\bigskip

\noindent\textbf{Case B: } By \ref{alg if w(P)<w(P+) then Case A} we have $\fWidth{P^{\plus}}=\fWidth{P}$. Thus, there is a point in $L_w$ comparable with $x$. Since $x$ is maximal in $P^{\plus}$ we get that $x\in L_w\upseto$. Now, we know that $i_0$ in line \ref{algorytm CaseB definicja i0} is well-defined as the set under $\min$ function is not empty.
\begin{clm}\label{C:L1Lw} $L_1^{\plus},\dots,L_w^{\plus}$ are high antichains in $P^{\plus}$. Moreover,\linebreak  \mbox{$L_1^{\plus}\upsetc\subseteq\ldots\subseteq L_w^{\plus}\upsetc$} and $\abs{L_i^{\plus}}=i$.
\end{clm}
\begin{proof}
Since $L_i^{\plus}\!=\!\descHMA(L_i\upsetc)$ we get that the levels \mbox{$L_1^{\plus},\ldots,L_w^{\plus}$} of the poset $P^{\plus}$ are high antichains in $P^{\plus}\!$. To prove that
\mbox{$L_1^{\plus}\upsetc\subseteq\ldots\subseteq L_w^{\plus}\upsetc$} we observe that
\begin{align*}
L_i^{\plus}&\ =\ \fHMA{L_i\upsetc}\qquad&\textrm{by \ref{algorytm CaseB Li+=hmaLi}}\\
&\ \subseteq\ L_i\upsetc&\\
&\ \subseteq\ L_{i+1}\upsetc.&\textrm{by \jnlevels{} for $\str$}
\end{align*}
By \jnlevels{} we already know that $L_i^{\plus}$ is high in $P^{\plus}$. But $L_i^{\plus}\subseteq L_{i+1}\upsetc$ so that $L_i^{\plus}$ is high in $L_{i+1}\upsetc$. Applying Observation \ref{O:HM} to the poset \mbox{$(L_{i+1}\upsetc,\leq)$} we get that $L_i^{\plus}\subseteq\fHMA{L_{i+1}\upsetc}\upsetc=L_{i+1}^{\plus}\upsetc$.

To prove that $\abs{L_i^{\plus}}=\abs{L_i}$ (i.e. \jnlevels{} for $\strp$) we first consider the width of $L_i\upsetc$. By \jnlevels{} for $\str$ we know that $L_i$ is high in $P$. Thus for any antichain $A\subseteq L_i\upsetc$ we have $A=L_i$ or $\abs{A-\set{x}}<\abs{L_i}$. This implies that $\fWidth{L_i\upsetc}=\abs{L_i}$, and consequently
\begin{align*}
\abs{L_i^{\plus}}&\ =\ \abs{\fHMA{L_i\upsetc}}\qquad&\textrm{by \ref{algorytm CaseB Li+=hmaLi}}\\
&\ =\ \fWidth{L_i\upsetc}&\\
&\ =\ \abs{L_i}&\\
&\ =\ i.&\textrm{by \jnlevels{} for $\str$}
\end{align*}
\end{proof}
\begin{clm}$\alpha_1^{\plus},\ldots,\alpha_w^{\plus},\beta_1^{\plus},\ldots,\beta_w^{\plus}$ forms a chain partition of $P^{\plus}$.
\end{clm}
\begin{proof}
After updating $L_i$'s to $L_i^{\plus}$'s algorithm defines the sets\\
$\alpha_1^{\plus},\ldots,\alpha_w^{\plus},\beta_1^{\plus},\ldots,\beta_w^{\plus}$. It turns out (see \ref{algorytm CaseB if dla alfai+},\ref{algorytm CaseB if dla beta+}) that the only chains modified are those at $i_0$:
\begin{align*}
\alpha_{i_0}^{\plus}&\ =\ \beta_{i_0}\cup\set{x},\\
\beta_{i_0}^{\plus}&\ =\ \alpha_{i_0}.
\end{align*}
The Claim is obvious except the fact that $\alpha_{i_0}^{\plus}$ is a chain. Thus, we need to prove that \begin{equation}\label{b<x dla dow b nalezacego do Bi0}
b<x,\qquad\textrm{for all $b\in\beta_{i_0}$}.
\end{equation}
By \jnrel{} for $\str$ we have
\begin{equation}\label{betai0 zawiera sie w Li0 stozek dolny}
\beta_{i_0}\subseteq L_{i_0}\downseto.
\end{equation}
Let $b\in\beta_{i_0}$. By (\ref{betai0 zawiera sie w Li0 stozek
  dolny}) we obtain an $l\in L_{i_0}$ such that $b<l$. On the other hand since $x\in L_{i_0}\upseto$ (see \ref{algorytm CaseB definicja i0}) we get $l'\in L_{i_0}$ such that $l'<x$. If $b\parallel x$ then points $b,l,l',x$ would form a $\II$ configuration, which is forbidden in an interval order (see Figure \ref{rys 2+2}). This proves (\ref{b<x dla dow b nalezacego do Bi0}).
\end{proof}
\begin{figure}[hbt]
\input{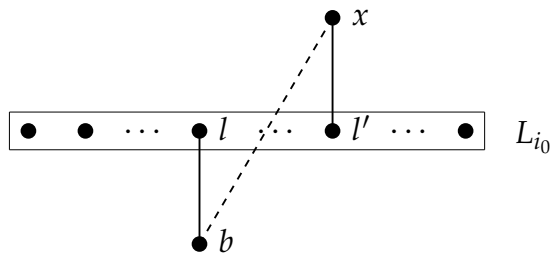}
\caption{$\brackets{2\!+\!2}$-free condition yields $b<x$.}\label{rys 2+2}
\end{figure}
As we can see the idea of the algorithm is to ensure that the chain $\beta_i$ is ready to able to cover any point over the level $L_i$ (we may say that $\beta_i$ secures $L_i$). Thus, when a new point $x$ arrives over $L_i$ then chain $\beta_i$ may be used to cover it. But the algorithm must choose such $i$ that chain $\beta_i^{\plus}=\alpha_i$ (see \ref{algorytm CaseB if dla beta+}) will be able to secure the new level $L_i^{\plus}$. It will turn out that the choice of $i_0$ as in \ref{algorytm CaseB definicja i0} is a good one.

Before proving \jnrel{} for $\strp$ we describe how the open upsets of old ($L_i$'s) and new ($L_i^{\plus}$'s) levels are related.
\begin{clm}\label{C:diagram}
\begin{small}
\[
\begin{array}{ccccccccccc}
%L_1\upsetc\subseteq\ldots\subseteq L_{i_0-1}\upsetc\subseteq L_{i_0-1}\upsetc\cup\set{x}\subseteq L_{i_0}\upsetc\subseteq L_{i_0+1}\upsetc\subseteq\ldots\subseteq L_w\upsetc
L_1\upsetc\!&\!\subseteq\ldots\subseteq\!&\! L_{i_0-1}\upsetc\!&\!\subseteq\!&\! L_{i_0-1}\upsetc\cup\set{x}\!&\!\subseteq\!&\! L_{i_0}\upsetc\!&\!\subseteq\!&\! L_{i_0+1}\upsetc\!&\!\subseteq\ldots\subseteq\!&\! L_w\upsetc\\
\shortparallel\!&\!\!&\!\shortparallel\!&\!\!&\!\shortparallel\!&\!\!&\!\!&\!\!&\!\cup\!&\!\!&\!\cup\\
L_1^{+}\upsetc\!&\!\subseteq\ldots\subseteq\!&\!
L_{i_0-1}^{+}\upsetc\!&\!\subseteq\!&\!
L_{i_0}^{+}\upsetc\!&\!\!&\!\!&\!\subseteq\!&\!
L_{i_0+1}^{+}\upsetc\!&\!\subseteq\ldots\subseteq\!&\! L_w^{+}\upsetc
\end{array}
\]
\end{small}
\end{clm}
To be consistent with the order in $\poset{P}^{\plus}\!$, all inclusions of the claim are presented upside down in Figure  \ref{rys diagram}.
\begin{figure}[hbt]
\ifx\JPicScale\undefined\def\JPicScale{1}\fi
\psset{unit=\JPicScale mm}
\psset{linewidth=0.3,dotsep=1,hatchwidth=0.3,hatchsep=1.5,shadowsize=1,dimen=middle}
\psset{dotsize=0.7 2.5,dotscale=1 1,fillcolor=black}
\psset{arrowsize=1 2,arrowlength=1,arrowinset=0.25,tbarsize=0.7 5,bracketlength=0.15,rbracketlength=0.15}
\begin{pspicture}(0,0)(115,82)
\rput[l](9.5,80){$L_1^+\upsetc = L_1\upsetc$}
\rput[l](19.5,70){$L_2^+\upsetc = L_2\upsetc$}
\psline[linewidth=0.25](17,70)(7,80)
\psline[linewidth=0.25,linestyle=dashed,dash=1 1](32,55)(17,70)
\rput[r](29.5,55){$L_{i_0-1}^+\upsetc = L_{i_0-1}\upsetc$}
\rput[l](44.5,45){$L_{i_0}^+\upsetc = L_{i_0-1}\upsetc\cup\set{x}=\left(L_{i_0-1}\cup\set{x}\right)\upsetc$}
\psline[linewidth=0.25](42,45)(32,55)
\psline[linewidth=0.25](42,45)(52,55)
\rput[l](54.5,55){$\set{x}$}
\psline[linewidth=0.25](42,25)(32,35)
\psline[linewidth=0.25,linestyle=dashed,dash=1 1](57,10)(42,25)
\psline[linewidth=0.25](42,25)(52,35)
\psline[linewidth=0.25](52,35)(42,45)
\psline[linewidth=0.25](32,35)(42,45)
\rput[r](29.5,35){$L_{i_0}\upsetc$}
\rput[l](54.5,35){$L_{i_0+1}^+\upsetc$}
\rput[r](39.5,25){$L_{i_0+1}\upsetc$}
\psline[linewidth=0.25,linestyle=dashed,dash=1 1](67,20)(52,35)
\psline[linewidth=0.25](57,10)(67,20)
\rput[r](54.5,10){$L_{w-1}\upsetc$}
\rput[l](69.5,20){$L_{w-1}^+\upsetc$}
\psline[linewidth=0.25](67,0)(57,10)
\psline[linewidth=0.25](67,0)(77,10)
\psline[linewidth=0.25](77,10)(67,20)
\rput[l](79.5,10){$L_w^+\upsetc$}
\rput[r](64.5,0){$L_w\upsetc$}
\rput[bl](75,36.5){\begin{footnotesize}(in fact,  $L_{i_0}^+=L_{i_0-1}\cup\set{x}$)\end{footnotesize}}
\rput{0}(7,80){\psellipse[fillstyle=solid](0,0)(0.88,0.88)}
\rput{0}(17,70){\psellipse[fillstyle=solid](0,0)(0.88,0.88)}
\rput{0}(32,55){\psellipse[fillstyle=solid](0,0)(0.88,0.88)}
\rput{0}(52,55){\psellipse[fillstyle=solid](0,0)(0.88,0.88)}
\rput{0}(42,45){\psellipse[fillstyle=solid](0,0)(0.88,0.88)}
\rput{0}(32,35){\psellipse[fillstyle=solid](0,0)(0.88,0.88)}
\rput{0}(52,35){\psellipse[fillstyle=solid](0,0)(0.88,0.88)}
\rput{0}(42,25){\psellipse[fillstyle=solid](0,0)(0.88,0.88)}
\rput{0}(67,20){\psellipse[fillstyle=solid](0,0)(0.88,0.88)}
\rput{0}(57,10){\psellipse[fillstyle=solid](0,0)(0.88,0.88)}
\rput{0}(67,0){\psellipse[fillstyle=solid](0,0)(0.88,0.88)}
\rput{0}(77,10){\psellipse[fillstyle=solid](0,0)(0.88,0.88)}
\psline[linecolor=white](115,61)(115,50)
\psline[linecolor=white](0,48)(0,40)
\psline[linecolor=white](74.5,82)(86,82)
\end{pspicture}
\caption{$L_i\upsetc$ and $L_i^{\plus}\upsetc$.}\label{rys diagram}
\end{figure}
\begin{proof}
From \jnlevels{} for $\str$ it follows that $L_1\upsetc\subseteq\ldots\subseteq L_w\upsetc$. Moreover, we have already proved in Claim \ref{C:L1Lw} that $L_1^{\plus}\upsetc\subseteq\ldots\subseteq L_w^{\plus}\upsetc$. By \ref{algorytm CaseB definicja i0} we also know that $x\in L_{i_0}\upseto$ and $x\notin L_{i_0-1}\upseto$.

Directly from the definition of $L_i^{\plus}$(see \ref{algorytm CaseB Li+=hmaLi}) we obtain that $L_i^{\plus}\upsetc\subseteq L_i\upsetc$ for $i=1,\ldots,w$. The inclusions $L_i^{\plus}\upsetc\supseteq L_i\upsetc$ for $i\leqslant i_0-1$ do not influence our considerations but they are really helpful to understand the dynamic construction of levels of posets $P$ and $P^{\plus}$. 
To see them, note that the choice of $i_0$ in \ref{algorytm CaseB definicja i0} secures that nothing has changed in $L_{i_0-1}\upsetc$.
%Since $x\notin L_{i_0-1}\upseto$ and \jnlevels{} for $\str$ we get that $x\notin L_i\upsetc\subseteq L_{i_0-1}\upsetc$ for $i\leqslant i_0-1$. Thus, $L_i\upsetc=L_i\upsetc\cap P$ for $i\leqslant i_0-1$. Now, using the fact that $L_i$ are high in $\poset{P}$ (\jnlevels{} for $\str$) we obtain that for $i\leqslant i_0-1$\[L_i^{\plus}=\fHMA{L_i\upsetc}=\fHMA{L_i\upsetc\cap P}=L_i.\]
To get all other relations in the diagram all we need to prove is:
\begin{equation}\label{Li0+=Li0-1 suma x}
L_{i_0}^{\plus}=L_{i_0-1}\cup\set{x}.
\end{equation}
Since $L_{i_0}^{\plus}=\fHMA{L_{i_0}\upsetc}$ it suffices to show that $L_{i_0-1}\cup\set{x}\subseteq L_{i_0}\upsetc$ and that $L_{i_0-1}\cup\set{x}$ is high in $P^{\plus}$. The inclusion we get from \jnlevels{} for $\str$ and \ref{algorytm CaseB definicja i0}. The second statement follows from the fact that $L_{i_0-1}$ is high in $P$ (see \jnlevels{} for $\str$) and that $x$ is maximal in $P^{\plus}$.
\end{proof}
Before proving \jnrel{} for $\strp$ note that
\begin{equation}\label{prawda o stozkach dolnych L}
L_i\downsetc\subseteq L_i^{\plus}\downsetc.
\end{equation}
This easily follows from $L^{\plus}_i\upsetc\subseteq L_i\upsetc$ by Observation \ref{O:another def order ant} applied to the maximum antichains $L_i$ and $L_i^{\plus}$ of the poset \mbox{$\brackets{L_i\upsetc,\leq}$}.
%
%\bigskip
%\begin{rem} Let A and B be maximum antichains in a subposet $Q$ of $\poset{R}$. Then $A\subseteq B\upsetc_{\poset{R}}$ iff $B\subseteq A\downsetc_{\poset{R}}$.
%\end{rem}
%\begin{proof}
%Since $A$ and $B$ are maximum antichains in $Q$ we know that
%\begin{equation}\label{lemat o odwracaniu stozkow podzial}
%Q\subseteq A\upseto_{\poset{R}}\cup A\downsetc_{\poset{R}}\qquad\textrm{and}\qquad Q\subseteq B\upseto_{\poset{R}}\cup B\downsetc_{\poset{R}}.
%\end{equation}
%From $A\subseteq B\upsetc_{\poset{R}}$ we get $A\upseto_{\poset{R}}\subseteq B\upseto_{\poset{R}}$. This together with (\ref{lemat o odwracaniu stozkow podzial}) yields
%\[
%B\subseteq B\downsetc_{\poset{R}}\cap Q=Q-B\upseto_{\poset{R}}\subseteq Q-A\upseto_{\poset{R}}\subseteq A\downsetc_{\poset{R}}.
%\]
%The converse follows by symmetry.
%\end{proof}
%
\begin{clm}$\alpha_i^{\plus}\subseteq L_i^{\plus}\downsetc \setminus L_{i-1}^{\plus}\upsetc$\, and\, $\beta_i^{\plus}\subseteq L_i^{\plus}\downseto$.
\end{clm}
\begin{proof}
First we prove our Claim for $i\neq i_0$ where we have $\alpha_i^{\plus}=\alpha_i$ and $\beta_i^{\plus}=\beta_i$ (see \ref{algorytm CaseB if dla alfai+},\ref{algorytm CaseB if dla beta+}). By \jnrel{} for $\str$ we get
\[
\alpha_i^{\plus}=\alpha_i\subseteq(L_i\downsetc-L_{i-1}\upsetc)\cap P\subseteq L_i\downsetc - L_{i-1}\upsetc,\quad\textrm{for $i\neq i_0$},
\]
and
\[
\beta_i^{\plus}=\beta_i\subseteq L_i\downseto\cap P = L_i\downseto,\quad\textrm{for $i\neq i_0$}.
\]
Now $L_i^{\plus}\upsetc\subseteq L_i\upsetc$ (see Claim \ref{C:diagram}) and $L_i\downsetc\subseteq L_i^{\plus}\downsetc$ (see (\ref{prawda o stozkach dolnych L})) yield
\[
\alpha_i^{\plus}\subseteq L_i\downsetc - L_{i-1}\upsetc\subseteq L_i^{\plus}\downsetc-L_{i-1}^{\plus}\upsetc
\]
and
\[
\beta_i^{\plus}\subseteq L_i\downseto\subseteq L_i^{\plus}\downseto,
\]
which finishes the proof for $i\neq i_0$.

Now we consider the chain $\beta_{i_0}^{\plus}=\alpha_{i_0}$ (see \ref{algorytm CaseB if dla beta+}). By \jnrel{} for $\str$ we have $\alpha_{i_0}\subseteq L_{i_0}\downsetc-L_{i_0-1}\upsetc$ and therefore
\begin{equation}\label{cos o betai0+}
\beta^{\plus}_{i_0}=\alpha_{i_0}\subseteq L_{i_0}\downsetc\subseteq L^{\plus}_{i_0}\downsetc.
\end{equation}
Again, using \jnrel{} for $\str$ we get that $\beta_{i_0}^{\plus}\cap L_{i_0-1}\upsetc=\emptyset$. Of course, \mbox{$x\notin\alpha_{i_0}=\beta_{i_0}^{\plus}$}. Thus $\beta_{i_0}^{\plus}$ is disjoint with $L_{i_0}^{\plus}=L_{i_0-1}\cup\set{x}$. Combining this with (\ref{cos o betai0+}) we obtain \jnrel{} for $\beta^{\plus}_{i_0}$.

It remains to prove \jnrel{} for $\alpha_{i_0}^{\plus}=\beta_{i_0}\cup\set{x}$. For the old part $\beta_{i_0}$ of $\alpha_{i_0}^{\plus}$ we have
\begin{align*}
\beta_{i_0}&\ \subseteq\ L_{i_0}\downseto\cap P\qquad&\textrm{by \jnrel{}}\ \textrm{for}\ \str\\
&\ =\ L_{i_0}\downsetc-L_{i_0}\upsetc&\textrm{by}\ x\in L_{i_0}\mspace{-2mu}\upseto\ \textrm{(see \ref{algorytm CaseB definicja i0})}\\
&\ \subseteq\ L_{i_0}\downsetc-L_{i_0-1}\upsetc.&\textrm{by \jnlevels{} for}\ \str
\end{align*}
This together with $L_{i_0-1}^{\plus}\upsetc\subseteq L_{i_0-1}\upsetc$ and $L_{i_0}\downsetc\subseteq L_{i_0}^{\plus}\downsetc$ gives
\[
\beta_{i_0}\ \subseteq\  L_{i_0}\downsetc-L_{i_0-1}\upsetc\ \subseteq\ L^{\plus}_{i_0}\downsetc-L^{\plus}_{i_0-1}\upsetc.
\]
It remains to show that also $x\in L_{i_0}^{\plus}\downsetc-L_{i_0-1}^{\plus}\upsetc$. But this follows from (\ref{Li0+=Li0-1 suma x}), the choice of $i_0$ in \ref{algorytm CaseB definicja i0} and the fact that $L^{\plus}_{i_0-1}\upsetc\subseteq L_{i_0-1}\upsetc$.
\end{proof}

\bigskip

We have just shown that both in Case A and B the conditions \jnall{} are kept invariant by the algorithm when expanding $\str$ to $\strp$. In particular \jnchains{} tells us that in every moment the poset is covered by at most $2w$ chains, namely $\alpha_1,\ldots,\alpha_w,\beta_1,\ldots,\beta_w$.

A careful inspection of the algorithm allows us to eliminate $\beta_1$. Indeed, everytime algorithm uses $\beta_1$ (Case B: $i_0=1$) to cover a new point $x$, it could use chain $\alpha_1$ as well. This immediately follows from the fact that in this setting $x\in L_1\upsetc$ and $\alpha_1\subseteq L_1\downsetc$.
\end{proof}

%\clearpage

%------------------------------------------------------------------------------
\section{Related problems}

A related line of research deals with on-line chain partitioning algorithms which get as an input an interval representation instead of pure interval order.
This variant of on-line chain partitioning problem of interval orders models scheduling of jobs which has to be executed in prescribed time frames.
Its possible application is scheduling in real-time systems mentioned in the Introduction.
The algorithm of Kierstead and Trotter \cite{KiersteadTrotter}  
that they used to prove Theorem \ref{Thm:KiersteadTrotter} gives an obvious upper bound of $3w-2$ for the value of on-line chain partitioning problem for interval orders presented with representation.
On the other hand, Chrobak with \'{S}lusarek proved the very same lower bound.
\begin{thm}[\begin{small}Chrobak, \'{S}lusarek \cite{ChrobakSlusarek}; %\'{S}lusarek \cite{Slusarek};
Kierstead, Trotter \cite{KiersteadTrotter}\end{small}] The value of the on-line chain partitioning problem for interval orders of width $w$  presented with representation is \mbox{$3w-2$}. \end{thm}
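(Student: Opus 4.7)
The plan is to prove the two bounds separately. The upper bound $\fCPUI{w}$-free setting $\leq 3w-2$ comes essentially for free: Theorem \ref{Thm:KiersteadTrotter} (Kierstead--Trotter) supplies an on-line algorithm that uses $3w-2$ chains on interval orders presented \emph{without} a representation. Giving Algorithm an interval representation can only add information, and Algorithm is always free to ignore it. So the very same algorithm, run on the induced poset obtained by forgetting the representation, produces a chain partition into at most $3w-2$ chains. Hence the value of the problem with representation is at most $3w-2$.

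The heart of the proof is the matching lower bound, and here I would present the Chrobak--\'{S}lusarek adversary. The goal is to describe, for each $w$, a strategy for Spoiler that, round after round, hands Algorithm a concrete interval together with its endpoints on the real line, in such a way that any on-line algorithm is forced to open $3w-2$ distinct chains. I would structure this inductively on $w$: the base case $w=1$ is trivial, and the inductive step shows that if Spoiler can force $3(w-1)-2$ chains on a suitable configuration of width $w-1$, then he can extend that configuration on the real line to force $3$ additional chains while keeping the total width at $w$.

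Concretely, the inductive step would proceed in three short ``sub-phases,'' each designed to extort one new chain. First, Spoiler plays the width-$(w-1)$ strategy in some bounded region $R_0$ of the real line, ending with a designated ``top'' interval $I^*$ that sits in a chain $c^*$ among the $3(w-1)-2$ forced there. Then Spoiler presents a fresh interval $J_1$ placed slightly to the right of $R_0$ but still overlapping $I^*$ (so incomparable with $I^*$) and nested inside a wider interval he keeps in reserve; the intervals $J_1$ and $I^*$ together with $w-2$ carefully nested survivors form an antichain of size $w$, so $J_1$ must go to a new chain. A second interval $J_2$ is then chosen to be above part of $R_0$ but incomparable with $J_1$, again forcing a chain not previously used; a third interval $J_3$ is placed to be above $J_1$ but incomparable with $J_2$, and is designed so that the only chains on which it could be placed have been consumed by the previous moves. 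Verifying that these three moves can always be executed while keeping the width equal to $w$, and that none of the chains $c_1,\ldots,c_{3(w-1)-2}$ from the inductive step is available for $J_1$, $J_2$, or $J_3$, is a purely combinatorial check on endpoints.

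The main obstacle, and the part that requires care, is the bookkeeping for the inductive step: one must show both that each new interval $J_i$ is genuinely incomparable with exactly the right subset of previously presented intervals (so that Algorithm really is forced onto a new chain), and that the total width of the constructed poset never exceeds $w$ throughout the game. This is where the freedom to pick arbitrary real endpoints is crucial, since Spoiler uses it to simultaneously place $J_i$ above some intervals and incomparable with others without creating a forbidden $\II$ pattern. Once that invariant is established and the recursion $\fCPUI{w}\geq\fCPUI{w-1}+3$ is verified with $\fCPUI{1}=1$, the lower bound $\fCPUI{w}\geq 3w-2$ follows, which combined with the upper bound gives the claimed equality.
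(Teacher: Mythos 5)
This theorem is stated in the paper as a citation of Chrobak--\'{S}lusarek and Kierstead--Trotter; the paper does not include a proof, so there is nothing internal to compare your proposal against. Evaluating the proposal on its own merits: the upper bound argument is correct and complete. An algorithm working without a representation (Theorem~\ref{Thm:KiersteadTrotter}) can simply discard the representation, so $3w-2$ chains always suffice.

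The lower bound, however, is asserted rather than proved, and this is precisely where the real content of the Chrobak--\'{S}lusarek result lies. Your sketch claims that one can append three intervals $J_1,J_2,J_3$ to a width-$(w-1)$ adversary configuration and force three brand-new chains while keeping the width at $w$, giving a recursion of the form ``value$(w)\geq\,$value$(w-1)+3$.'' But none of the load-bearing steps is carried out: you do not exhibit the endpoints, you do not show that each $J_i$ overlaps at least one interval from every already-open chain (which is what forces a new chain), and you do not verify that those overlaps can coexist without ever creating a $(w+1)$-element antichain. These two requirements pull in opposite directions --- $J_i$ must be incomparable with representatives of many chains, yet no moment on the line may host $w+1$ pairwise overlapping intervals --- and reconciling them is exactly the combinatorial heart of the argument. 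Asserting that ``Verifying $\ldots$ is a purely combinatorial check on endpoints'' without performing it leaves the lower bound unproved. A separate, smaller issue: you reuse the symbol $\fCPUI{w}$, which this paper reserves for the up-growing interval-order game without representation, whose value the paper establishes to be $2w-1$; using it to denote the $3w-2$ quantity for the game with representation (and not up-growing) is a notational collision that should be fixed.
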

In the up-growing case Broniek showed an on-line algorithm \linebreak which is as good as the off-line solution.
\begin{thm}[Broniek \cite{Broniek}]\label{Thm:Broniek}The value of the on-line chain partitioning problem for up-growing interval orders of width $w$ presented with representation is $w$. \end{thm}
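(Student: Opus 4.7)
\noindent The lower bound is immediate from Dilworth's Theorem \ref{T:Dilworth}: any chain partition of a poset of width $w$ requires at least $w$ chains, so no on-line algorithm can do better than $w$ even with the interval representation. I focus on the upper bound, where Algorithm receives, together with each incoming point $x$, its interval $I(x)=[l_x,r_x]$.

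\noindent\textbf{The algorithm.} My plan is to analyze the following ``best fit'' greedy strategy. Maintain chains $C_1,\ldots,C_k$ with current top intervals $y_1,\ldots,y_k$. When a new element $x$ arrives (maximal at its arrival by the up-growing hypothesis), let $S_x=\set{i : r_{y_i}<l_x}$ be the set of chains whose top is strictly below $x$ in the interval order. If $S_x\neq\emptyset$, extend the chain $C_i$ with $i\in S_x$ maximizing $r_{y_i}$; otherwise open a new chain $C_{k+1}=\set{x}$. Correctness of the chain structure is automatic, since $x$ is only ever placed above an element $y$ with $y<x$.

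\noindent\textbf{Bounding the number of chains.} I would prove by induction on time that the number of chains never exceeds $w$. Suppose that at some round a new chain $C_{k+1}$ is opened by $x$. Then for every existing top $y_i$ we have $r_{y_i}\geq l_x$; combined with the up-growing property (which forces $l_{y_i}\leq r_x$ for every $y_i$ already present, otherwise $y_i$ would have failed to be maximal at its own arrival), each $y_i$ is incomparable to $x$. It now suffices to exhibit an antichain of size $k+1$ in the current poset, because its size is bounded by $w$.

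\noindent\textbf{Producing the antichain.} For each chain $C_i$, let $e_i$ be the earliest element of $C_i$ (in chain order) with $r_{e_i}\geq l_x$; such an $e_i$ exists since $y_i$ qualifies. By construction $r_{e_i}\geq l_x$ and, by the up-growing maximality of $x$, also $l_{e_i}\leq r_x$, so $e_i\parallel x$. The core technical step is to show that $\set{e_1,\ldots,e_k}$ is an antichain, giving $\set{e_1,\ldots,e_k,x}$ as an antichain of size $k+1$. The plan is to assume $e_i<e_j$ for some $i\neq j$ and derive a contradiction by tracing Algorithm's Best-Fit decisions: the element of $C_j$ immediately preceding $e_j$ has right endpoint strictly below $l_x$ and below $l_{e_j}$, so at the moment $e_j$ was added to $C_j$ there was an extendable chain whose top had $r$-value larger than the then-current top of $C_j$, violating the Best-Fit selection rule combined with the up-growing property that forbids $e_i<e_j$ when $e_j$ arrives after $e_i$.

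\noindent\textbf{Main obstacle.} The hard part is precisely this last verification: that the specific minimal representatives $e_i$ form an antichain. Although each individual $e_i$ is incomparable to $x$, the chain tops themselves need not be pairwise incomparable under any natural greedy rule, so one cannot simply take $e_i=y_i$; the minimality of $e_i$ and the Best-Fit tie-breaking must be combined carefully with the structural property that in an up-growing setting $y<z$ implies $y$ was presented strictly before $z$. I expect this step to require a short invariant about how chain openings and extensions interact with the $r_y<l_x$ threshold, after which the bound $k+1\leq w$, and thus $\fCPUI{w}\leq w$ with representation, follows.
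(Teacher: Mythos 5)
Your reduction of the upper bound to showing that $\{e_1,\ldots,e_k\}$ is an antichain is set up cleanly, and the argument that each $e_i\parallel x$ is correct. But the antichain claim itself is \emph{false}, so the gap you flagged at the end cannot be filled along the lines you sketch. Run Best-Fit on the up-growing sequence $[0,4],\ [0,10],\ [15,25],\ [20,30],\ [5,40]$. Best-Fit builds $C_1=([0,4],[20,30])$ and $C_2=([0,10],[15,25])$: the third interval goes to $C_2$ because $10>4$, and the fourth must go to $C_1$ because $C_2$'s new top $[15,25]$ is incomparable to $[20,30]$. Then $x=[5,40]$ opens $C_3$, and with $l_x=5$ we get $e_1=[20,30]$ and $e_2=[0,10]$, which are comparable ($10<20$); your witness set is not an antichain.

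The flaw in your proof plan is the assertion that, when $e_j$ was added to $C_j$, chain $C_i$ was extendable with a higher top $r$-value than the then-current top of $C_j$. That tacitly assumes $e_i$ was still the top of $C_i$ at that moment. But some element of $C_i$ strictly above $e_i$ may already be in place; its left endpoint exceeds $r_{e_i}\geq l_x$, yet it may be incomparable to $e_j$ (exactly as $[15,25]$ is to $[20,30]$). Then $C_i$ is not extendable for $e_j$ and Best-Fit is not violated, so no contradiction follows. In the example the width is still $3$ (witnessed by $\{[15,25],[20,30],[5,40]\}$, all containing the point $20$), so the bound $\fCPUI{w}\leq w$ is not refuted by this instance — but establishing it requires a different witness antichain or a different invariant about the chain structure, and it is not a priori clear that Best-Fit is even the right rule. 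Note also that the paper only cites Broniek for this theorem and contains no proof, so there is no internal argument for you to lean on to close the gap.
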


For interval orders the exponential bound $(5^w-1)/4$ of Theorem \ref{Th:Kie Sz} has been narrowed to the linear 
bound $3w-2$ of Theorem \ref{Thm:KiersteadTrotter}.
This gives hope that First-Fit Algorithm, which sometimes (see Theorem \ref{Thm:FFA}) has to use infinitely many chains in general case, will behave better in the case of interval orders.
Indeed, it appears that First-Fit Algorithm uses bounded (in terms of $w$) number of chains to partition an on-line interval order of width at most  $w$.
Actually Kierstead  \cite{Kierstead88} showed a linear upper bound of $40\cdot w$.
Later on the number of chains used by First-Fit Algorithm was put between $4.45 \cdot w$ \cite{ChrobakSlusarek} and $25.72\cdot w$ \cite{KiersteadQin}. Then Pemmaraju, Raman and Varadarajan \cite{PemRamVar} lowered the upper bound to $10\cdot w$. Further improvement, to $8\cdot w$, was announced by Graham Brightwell, Henry Kierstead and William Trotter \cite{KiersteadSl} and another one, to $8w-3$, was published by Narayanaswamy and Babu in \cite{Nara}.
On the other hand Henry Kierstead and William Trotter \cite{KiersteadSl} announced an improvement of the lower bound from $4.45 \cdot w$ to $4.99 \cdot w$ and noticed that their technique cannot be applied to go up to $5 \cdot w$.

\medskip

Also a substantial progress has been done recently for semi-orders, i.e., interval orders that can be represented by intervals of the same length.
Micek, in his Ph.D. thesis \cite{Micek}, observed that the upper bound of $2w-1$ provided by greedy algorithm cannot be decreased. Thus we have:

\begin{thm}\label{Thm:semi}
The value of on-line chain partitioning problem for semi-orders of width $w$ presented without representation is $2w-1$.
\end{thm}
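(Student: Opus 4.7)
The statement combines a matching lower and upper bound of $2w-1$, so the plan naturally splits into two parts, one for each direction.

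For the lower bound, the plan is to recycle the Spoiler strategy already built in Theorem \ref{twierdzenie lowerbound} for up-growing interval orders. That strategy $S(v,M)$ forces $2v-1$ chains starting from an antichain $M$ and, since up-growing semi-orders are a subclass of (not necessarily up-growing) semi-orders, any lower bound obtained there transfers downward provided the produced poset is actually a semi-order. A semi-order is exactly an interval order that is additionally $(1+3)$-free. I would therefore re-examine $S(v,M)$ and verify by induction on $v$ that the poset built never contains a three-element chain together with an incomparable point. Equivalently, I would exhibit an explicit unit-length interval representation that Spoiler can maintain throughout the run of $S(v,M)$: since the strategy only ever introduces intervals that sit strictly above a prescribed subset of previously presented intervals and leaves the rest incomparable, one can normalize all right endpoints so that every interval has the same length, merely by making the new interval sufficiently long and positioned far enough to the right. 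This would certify that the lower bound of $2w-1$ is attained already by an up-growing semi-order.

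For the upper bound, the plan is to analyze First-Fit directly on semi-orders. First-Fit assigns an incoming point $x$ to the least-indexed chain whose current top $t$ satisfies $t<x$, opening a new chain only if no such chain exists. I would suppose that on some semi-order of width $w$ First-Fit is forced to open chain $2w$ on a point $x$, and derive a contradiction. For each of the $2w-1$ earlier chains, let $t_j$ denote the current top at $x$'s arrival; by the choice of First-Fit, $t_j \not< x$, so $t_j$ is either incomparable to $x$ or strictly above $x$. Partition the tops into $T^{\parallel}=\{t_j:t_j\parallel x\}$ and $T^{>}=\{t_j:t_j>x\}$. Since $T^{\parallel}\cup\{x\}$ is an antichain we immediately get $|T^{\parallel}|\le w-1$. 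The heart of the argument is to show $|T^{>}|\le w-1$ as well; this is exactly where the $(\mathbf{1}+\mathbf{3})$-freeness of semi-orders is essential, since for arbitrary interval orders $|T^{>}|$ can blow up (yielding the weaker bound $3w-2$ of Theorem \ref{Thm:KiersteadTrotter}). The plan is to associate to every top $t_j\in T^{>}$ a witness in its chain, below $t_j$, that was placed earlier than $x$ and is not above $x$ (such a witness exists because First-Fit opened chain $j$ when its originating point could not be appended to any lower chain), and then to use $(\mathbf{1}+\mathbf{3})$-freeness to rule out more than $w-1$ distinct tops $t_j>x$ coexisting with the incomparable witness below. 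Summing the two bounds yields at most $2w-2$, contradicting the assumption.

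The main obstacle I expect is precisely the bound $|T^{>}|\le w-1$: for general interval orders the tops above $x$ can form a long chain and thereby be arbitrarily many, so one must extract from $(\mathbf{1}+\mathbf{3})$-freeness a genuine structural restriction — intuitively, that the tops above $x$ cannot be ``stacked'' because each one witnesses a three-element chain that would require an incomparable element somewhere below. Getting this counting argument sharp enough to yield $w-1$ rather than, say, $2w$ or $O(w)$, is the technically delicate part of the proof, and is what distinguishes the semi-order bound $2w-1$ from the interval-order bound $3w-2$.
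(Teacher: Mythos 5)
Both directions of your proposal have essential gaps.

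\textbf{Lower bound.} Reusing the up-growing strategy $S(v,M)$ is doomed for two independent reasons. First, $S(v,M)$ does not produce semi-orders. Trace $S(3,M)$ in the branch where Algorithm always recycles a chain from the previous level: $S(2,M)$ yields $x>M$ and a point $y$ above only $m\in M$, with $y\parallel x$; then $S(2,\{x,y\})$ yields $y'$ above only $x$, with $y'\parallel y$. For any $m'\in M\setminus\{m\}$ we get the $3$-chain $m'<x<y'$ together with $y\parallel m'$, $y\parallel x$, $y\parallel y'$: this is a $(\mathbf{1}+\mathbf{3})$, so the poset is not a semi-order and no renormalization of interval lengths can save it. Second, and decisively, \emph{no} up-growing construction can reach $2w-1$ here: the paper itself records (Felsner--Kloch--Matecki--Micek) that the value for up-growing semi-orders is $\frac{1+\sqrt5}{2}w$, strictly below $2w-1$ for $w\geq3$. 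Any Spoiler strategy forcing $2w-1$ chains on semi-orders must therefore introduce points that are not maximal at arrival; transferring the up-growing bound downward is exactly backwards.

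\textbf{Upper bound.} The algorithm you fix -- append $x$ to the least-indexed chain whose \emph{top} $t$ satisfies $t<x$, else open a new chain -- is the wrong algorithm for a non-up-growing input and already fails at width $1$: present the $3$-chain $a<b<c$ in the order $c,a,b$. Then $c$ opens chain $1$, $a$ cannot be appended above $c$ and opens chain $2$, and $b$ is appended above $a$; we have used $2>2\cdot1-1$ chains. First-Fit for chain partitioning must ask for the least $j$ such that $C_j\cup\{x\}$ is still a chain, i.e.\ $x$ is comparable to \emph{every} element of $C_j$, and may insert $x$ in the interior of the chain. With this fix the analysis you were aiming for goes through cleanly. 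If $x$ receives color $k$, then for each $j<k$ there is a \emph{witness} $y_j\in C_j$ with $y_j\parallel x$. The set $\{y_1,\dots,y_{k-1}\}$ contains no $3$-element chain, for otherwise that chain together with $x$ would be a $(\mathbf{1}+\mathbf{3})$; by Mirsky's theorem it splits into two antichains, and each of these, together with $x$, is an antichain in a width-$w$ poset, hence has at most $w-1$ elements. Thus $k-1\leq 2(w-1)$, i.e.\ $k\leq 2w-1$. Your own count never closes: $T^{\parallel}$ need not be an antichain (two chain tops can be comparable to one another), so $\lvert T^{\parallel}\rvert\leq w-1$ is unjustified, and the separate bound for $T^{>}$ remains a sketch whose very premise -- that looking only at tops captures the obstruction -- the width-$1$ example above already refutes.
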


Much more involved is a proof of Felsner, Kloch, Matecki and Micek that shows the following theorem.
\begin{thm}[Felsner, Kloch, Matecki and Micek \cite{FKMM}] The value of the on-line chain partitioning problem for semi-orders of width $w$ presented in an up-growing way is $\frac{1+\sqrt{5}}{2}w$. \end{thm}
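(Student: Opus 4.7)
The plan is to establish the bound $\phi w$ (with $\phi=\frac{1+\sqrt{5}}{2}$) as the exact asymptotic value, by producing a matching Spoiler strategy and Algorithm strategy. Both sides will be driven by a Fibonacci-style recurrence; the golden ratio arises precisely because semi-order structure converts the crude doubling of the interval-order strategy $S(v,M)$ into a slower two-term recursion.

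\textbf{Lower bound.} I would refine the Spoiler strategy of Theorem~\ref{twierdzenie lowerbound}, exploiting the fact that in a semi-order all intervals have the same length and the poset is $(\mathbf{3}\!+\!\mathbf{1})$-free in addition to being $(\mathbf{2}\!+\!\mathbf{2})$-free. This rules out the aggressive endpoint rearrangement used in $S(v,M)$: when Spoiler places a new maximal unit interval to the right of an antichain $M$, whatever it sees above must be seen above by every ``long enough'' predecessor chain. I would design recursive strategies $T(w)$ and $T'(w)$ each returning an up-growing semi-order of width $w$ together with a count of chains forced. The idea is that Spoiler first builds a sub-instance of width $w-1$ using $T(w-1)$, then caps it with a carefully chosen set of new maximal unit intervals in a way that either (i) reuses the topmost antichain and adds a $T'(w-2)$-style construction on top, or (ii) starts fresh. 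The counts satisfy the recurrence $f(w)=f(w-1)+f(w-2)$ with appropriate base cases, giving $f(w)\sim\phi^{\,w}$; rescaling against the width to keep $w$ fixed (by reusing Dilworth coverings at each recursive step) converts this exponential count into the linear $\phi w$ bound.

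\textbf{Upper bound.} I would adapt the data-structure method of Algorithm~\ref{A:Int.main}, keeping high antichains $L_1\subseteq\cdots\subseteq L_w$ but replacing the pair $(\alpha_i,\beta_i)$ with a variable-length tuple of chains whose total length obeys a Fibonacci growth. The key semi-order lemma to prove is that for unit intervals, if a new maximal point $x$ lies above $L_{i_0}$ but not above $L_{i_0-1}$, then the set of chains that must be ``reserved'' for the region between levels $i_0-1$ and $i_0$ can be bounded in terms of chains reserved for levels $i_0-2$ and $i_0-1$. Concretely I would define $c_i$ to be the number of active chains assigned to the layer between $L_{i-1}$ and $L_i$, and use the semi-order axiom ($(\mathbf{3}\!+\!\mathbf{1})$-freeness) to prove the recurrence $c_i \leq c_{i-1}+c_{i-2}$ on invariant updates; bounding $\sum c_i$ over $i\le w$ gives at most $\lceil \phi w\rceil$ chains. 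This strictly improves the $2w-1$ bound of Theorem~\ref{twierdzenie upper}, where the recursion was $c_i\le c_{i-1}+c_{i-1}$ instead.

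\textbf{Main obstacle.} The delicate point is matching the constant $\phi$ exactly, rather than some larger value like $1.7w$ or $2w-O(\sqrt{w})$. On the Spoiler side this requires showing that the Fibonacci recursion can be packed into width $w$ with zero slack — every antichain of intermediate widths must be simultaneously used to host a smaller recursive instance and to contribute to the top-level width count. On the Algorithm side, the difficulty is in proving the two-term recurrence $c_i\le c_{i-1}+c_{i-2}$ rather than the easier $c_i\le c_{i-1}+1$ or $c_i\le 2c_{i-1}$; this hinges on a careful case analysis of where the new maximal point $x$ can fall relative to the high antichains, using the fact that unit-length representability excludes the $\mathbf{3}\!+\!\mathbf{1}$ configuration and therefore restricts how $x$ interacts with two consecutive layers simultaneously.
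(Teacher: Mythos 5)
The paper does not prove this theorem; it is quoted as a result of Felsner, Kloch, Matecki and Micek \cite{FKMM}. Your proposal must therefore stand on its own, and as written it has a fatal arithmetic gap.

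Your central device on both sides is a two-term additive recurrence of Fibonacci type: $f(w)=f(w-1)+f(w-2)$ for the Spoiler count, and $c_i \le c_{i-1}+c_{i-2}$ for the per-layer chain count on the Algorithm side. But any such recurrence with the usual base cases produces Fibonacci numbers, which grow like $\phi^{\,w}$. In the Algorithm analysis you then assert that "bounding $\sum c_i$ over $i\le w$ gives at most $\lceil \phi w\rceil$ chains," but with $c_1=c_2=1$ and $c_i=c_{i-1}+c_{i-2}$ the sum $\sum_{i=1}^w c_i = F_{w+2}-1$ is exponential in $w$, not $\phi w$. The same problem infects the lower bound: you explicitly acknowledge that $f(w)\sim\phi^{\,w}$ and then claim that "rescaling against the width... converts this exponential count into the linear $\phi w$ bound," but you never say what this rescaling is, and there is no obvious way to convert an exponential chain count into a linear one while keeping the width fixed at $w$. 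The golden ratio appears in the true linear bound because it is the root of the quadratic $x^2=x+1$, which typically surfaces as the fixed point of an optimization or a balance condition between two interacting quantities — not because a Fibonacci sequence is being summed. Until you replace the exponential Fibonacci recursion with a balance argument (e.g.\ an invariant in which each level contributes an amortized $\phi$ chains, or an optimization over a parameter whose optimum satisfies $x^2=x+1$), neither direction of your proposal establishes a bound anywhere near $\phi w$.

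Secondarily, the structural lemma on which your upper-bound recurrence is supposed to rest — that the new maximal point's interaction with levels $L_{i_0-1}$ and $L_{i_0-2}$ is constrained by $(\mathbf{3}{+}\mathbf{1})$-freeness — is stated as a hope, not argued. The interval-order algorithm of Theorem~\ref{twierdzenie upper} already uses $(\mathbf{2}{+}\mathbf{2})$-freeness across a single pair of consecutive levels (Figure~\ref{rys 2+2}); to get a two-level recurrence you need a concrete lemma saying that $(\mathbf{3}{+}\mathbf{1})$-freeness forces a comparability across two consecutive gaps, and you have not identified which four points would form the forbidden configuration. Without that lemma, even the (already incorrect) recurrence is unsupported.
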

The problem is not completely solved when semi-orders are presented together with their representation by unit length intervals.
Obviously in the up-growing case the result of Theorem \ref{Thm:Broniek} transfers to unit length intervals so that the value of the on-line chain partitioning problem for up-growing semi-orders of width $w$ presented with representation is $w$.
However all we know in a not necessarily up-growing case is that $\ffloor{3w/2}$ and $2w-1$ are the lower and upper bounds for the on-line chain partitioning problem for semi-orders of width $w$ presented with representation. Again the lower bound was shown in \cite{Micek}, while the upper one follows from Theorem \ref{Thm:semi}.

\nocite{*}
\bibliographystyle{alpha}
\bibliography{phd-thesis}

\end{document}